\documentclass[reqno]{amsart}
\usepackage{amssymb}
\usepackage[dvips]{epsfig}
\usepackage{graphicx}
\usepackage{color}
\usepackage{ulem}
\usepackage{tikz}
\usepackage{amsmath}
\usepackage{amssymb}
\usepackage{amsfonts}
\usepackage{amsthm}
\usepackage{tikz}
\usepackage{bm}

\newcommand{\e}{{\bm e}}
\newcommand{\R}{\mathbb R}

\newcommand{\Z}{\mathbb Z}

\newcommand{\C}{\mathbb C}

\renewcommand{\l}{\lambda}

\newcommand{\T}{\mathbb{T}}

\newtheorem{thm}{Theorem}[section]
\newtheorem{hp}{\bf Hypothesis}

\newtheorem{lem}[thm]{Lemma}
\newtheorem{prop}[thm]{Proposition}
\newtheorem{cor}{\bf Corollary}[section]
\newtheorem{rem}[thm]{\bf Remark}
\theoremstyle{definition}
\newtheorem{defn}[thm]{Definition}

\newtheorem{claim}[thm]{\bf Claim}
\theoremstyle{statement}

\numberwithin{equation}{section}
\usepackage[colorlinks=true,pdfstartview=FitV,linkcolor=magenta,citecolor=cyan]{hyperref}
\usepackage{bm}
\begin{document}
	\title[Localization for monotone quasi-periodic Schr\"odinger operators]{Localization for  Lipschitz  monotone quasi-periodic Schr\"odinger operators on $\Z^d$ via Rellich functions analysis}
	\author[Cao]{Hongyi Cao}
	\address[H. Cao] {School of Mathematical Sciences,
		Peking University,
		Beijing 100871,
		China}
	\email{chyyy@stu.pku.edu.cn}
	\author[Shi]{Yunfeng Shi}
	\address[Y. Shi] {School of Mathematics,
		Sichuan University,
		Chengdu 610064,
		China}
	\email{yunfengshi@scu.edu.cn}
	
	\author[Zhang]{Zhifei Zhang}
	\address[Z. Zhang] {School of Mathematical Sciences,
		Peking University,
		Beijing 100871,
		China}
	\email{zfzhang@math.pku.edu.cn}
	\date{\today}
	\keywords{Multi-dimensional quasi-periodic Schr\"odinger operators, Lipschitz  monotone potentials, Anderson localization,  Green's function, Multi-scale analysis, Rellich functions,  Schur complement} 
	\maketitle
	\begin{abstract}
		We establish the Anderson localization and exponential dynamical localization for  a class of quasi-periodic Schr\"odinger operators on $\Z^d$ with bounded or unbounded Lipschitz  monotone potentials via multi-scale analysis based on Rellich function analysis in the perturbative regime. We show that at each scale, the resonant  Rellich function uniformly  inherits the Lipschitz monotonicity property of the potential via a novel Schur complement argument.
	\end{abstract}
	\section{Introduction}
	In this paper, we consider a class of quasi-periodic Schr\"odinger operators on $\Z^d$ \begin{align}\label{model}
		H(\theta)=\varepsilon \Delta+v(\theta+ x\cdot{\omega})\delta_{x, y},\ x,y\in\Z^d,
	\end{align}
	where $\varepsilon\geq0$ and  the discrete Laplacian $\Delta$ is defined as
	\begin{align*}
		\Delta(x, y)=\delta_{{\|x- y\|_{1}, 1}},\ \| x\|_{1}:=\sum_{i=1}^{d}\left|x_{i}\right|.
	\end{align*}
	For the diagonal part of \eqref{model},  we call $\theta$ the phase  and $\omega$ the frequency.  We  let $\theta\in\R$ and $\omega$ be Diophantine  (denoted by $\omega\in 	{\rm DC}_{\tau, \gamma}$ for some $\tau>d,\gamma>0$), where 	\begin{equation}\label{DC}
		{\rm DC}_{\tau, \gamma}=\left\{\omega\in[0,1]^d:\ \|x\cdot\omega\|_\T=\inf_{l\in\mathbb{Z}}|l- x\cdot\omega|\geq \frac{\gamma}{\| x\|_1^{\tau}},\ \forall\  x\in\mathbb{Z}^d\setminus\{o\}\right\}
	\end{equation}
	with $x\cdot\omega=\sum\limits_{i=1}^dx_i \omega_i$  and $o$ being the origin of $\Z^d$. We   assume the  function $v(\theta)$ belongs to the bounded Lipschitz  monotone (BLM) class  or the unbounded  Lipschitz  monotone  (UBLM) class, defined as follows: 
	\begin{itemize}
		\item[({\bf BLM})]   $v$ is a 1-periodic, real-valued function defined on $\R$, continuous on $[0,1)$, satisfying
		for some	constant $L>0$, 
		\begin{equation}\label{LC}
			v(\theta_2)-v(\theta_1)\geq L(\theta_2-\theta_1), \ \  0\leq\theta_1\leq \theta_2<1, 
		\end{equation} 
		and 
		$$-\infty<v(0)<v(1-0)<+\infty,$$ where $``-0"$ denotes the  left limit.   
		\begin{rem}
			Through translation and scaling arguments, we can assume  $v(0)=0, v(1-0)=1$.
		\end{rem} 
		\item[({\bf UBLM})] $v$ is a 1-periodic, real-valued function defined on $\R\setminus\Z$, continuous on $(0,1)$, satisfying for some $L>0$, 
		\begin{equation*}\label{LC'}
			v(\theta_2)-v(\theta_1)\geq L(\theta_2-\theta_1), \ \  0<\theta_1\leq \theta_2<1, 
		\end{equation*} 
		and 
		$$v(0+0)=-\infty, \ \ v(1-0)=+\infty,$$ where $``+0"$ denotes  the right  limit. 
		
	\end{itemize}
	The class BLM (resp. UBLM) was first introduced in  \cite{JK19}  (resp. \cite{Kac19}) as a natural generation of the famous Maryland potential $v(\theta)=\tan\pi\theta$ (resp. $v(\theta)=\theta\mod 1$ originated from \cite{Cra83})  dropping  the meromorphy assumption.  Indeed, the Anderson localization (i.e., pure point spectrum with exponentially decaying eigenfunctions) of quasi-periodic Schr\"odinger operator with  tangent  potential (called the Maryland model)  plays a central role in the study of quantum suppression of classical chaos concerning the quantum kicked rotor (c.f. \cite{FGP82}).  The monotonicity property of Maryland type potentials results in the absence of resonances. As a consequence, the perturbative KAM diagonalization method can be employed to prove the Anderson localization \cite{Cra83, BLS83, Pos83}. More importantly,  it turns out that the exact Maryland model is solvable (c.f. \cite{GFP82})  and all couplings Anderson localization holds true \cite{FP84, Sim85}.  The perturbative  localization results proved in  the 1980s  require Diophantine type frequencies, and {\it the uniform in  phase   Anderson localization}\footnote{Anderson localization holds true for all $\theta$ with the coupling perturbation  $\varepsilon$ being independent of $\theta$.}  via the KAM method  \cite{BLS83} even hinges  on the meromorphy property  of monotone potentials.    Recently, many efforts have been made  to investigate  localization for quasi-periodic Schr\"odinger operators with monotone potentials mainly going  beyond the above restrictions.  For the one-dimensional case,  Jitomirskaya-Liu \cite{JL17} gave an almost complete description of the spectral types for the standard Maryland model  for {\it all parameters}   (c.f. \cite{JY21, HJY22} for more results).   Jitomirskaya-Kachkovskiy \cite{JK19,Kac19}  established  all couplings Anderson localization  for  one-dimensional quasi-periodic Schr\"odinger operators with BLM or UBLM potentials   via the non-perturbative method, which was  first developed for  almost Mathieu operators in the breakthrough works of Jitomirskaya \cite{Jit94,Jit99}.   In \cite{JK24}, they even proved the universality of sharp arithmetic localization for all one-dimensional
	quasi-periodic Schr\"odinger operators with Lipschitz monotone potentials.  For the multi-dimensional case, Shi \cite{Shi23} developed a Nash-Moser iteration type diagonalization  method to prove power-law localization, thus extended some results of Aizenman-Molchanov \cite{AM93}  to the quasi-periodic operators case.  The uniform in phase Anderson localization for a   class of  multi-dimensional  quasi-periodic Schr\"odinger   operators with unbounded monotone potentials was established by Kachkovskiy-Parnovski-Shterenberg  \cite{KPS22}.   Their  method, without the need of multiple KAM-type steps in the perturbative setting, is based on the analysis of   the convergence of perturbation series for  eigenvalues and eigenfunctions, and also applies to  some non-strictly monotone potentials with  a flat segment under some additional assumptions on the length of the segment and the frequencies.  However, the full localization  of \cite{KPS22} relies  on the  {\it unboundedness}  of $v$ and does   not cover the  BLM potentials, such as the one given by  $v(\theta)=\theta \mod 1$.

	In this paper, we prove the uniform in phase Anderson localization and exponential dynamical localization for \eqref{model} with bounded or unbounded Lipschitz  monotone $v$ in the  perturbative regime by  extending  the multi-scale analysis scheme of \cite{FSW90,FV21,CSZ23,CSZ24b} to the monotone potentials one. We construct a sequence of increasing blocks  $B_n\subset\Z^d$ and  associates each $B_n$ with  a unique   Rellich function (parameterized eigenvalue) $E_n(\theta)$, the one   resonant with the previous scale one $E_{n-1}(\theta)$. And then we verify  that $E_n(\theta)$ uniformly  inherits the Lipschitz monotonicity property of $v(\theta)$.   The construction and verification  are  based on a  novel  Schur complement type estimate  concerning the energy parameters. As a consequence, resonant points  of each scale are uniformly separated, which yields   localization for all phases.  Specifically, our main results are: 
	\subsection{Main results}
	\begin{thm}\label{mainb}
		Assume that $\omega\in 	{\rm DC}_{\tau, \gamma}$ and $v$ belongs to the BLM class with some Lipschitz parameter $L>0$.	Then there exists some $\varepsilon_0=\varepsilon_0(L,d,\tau,\gamma)>0$ such that for all $0\leq \varepsilon\leq \varepsilon_0$, we have 
		\begin{enumerate}
			\item  For any $\theta\in \R$,	 $H(\theta)$ satisfies the  Anderson localization.
			\item   For any $\theta\in \R$,	$H(\theta)$ satisfies the  exponential dynamical localization,  that is, there exists some $C(\varepsilon)$ depending on $\varepsilon$, such that for any $x,y\in \Z^d$, 
			\begin{equation}\label{1536}
				\sup_{t\in \mathbb{R}}|\langle e^{itH(\theta)}{\bm e}_x, {\bm e}_y\rangle|\leq C(\varepsilon)e^{-\frac{1}{800}|\ln\varepsilon|\cdot\|x-y\|_1}, 
			\end{equation}
			where $\{\bm e_x\}_{x\in\Z^d}$ is  the standard basis of $\ell^2(\Z^d)$.
		\end{enumerate}
	\end{thm}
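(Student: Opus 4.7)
The plan is to run a multi-scale analysis (MSA) in the $(\theta, E)$ parameter space, in the spirit of \cite{FSW90,CSZ23,CSZ24b}, organized around a sequence of boxes $B_n \subset \Z^d$ of side length $L_n$ growing at a controlled super-polynomial rate. At each scale $n$, I would attach to $B_n$ a single distinguished eigenvalue of $H_{B_n}(\theta)$ --- the \emph{Rellich function} $E_n(\theta)$ --- namely the one resonant with $E_{n-1}(\theta)$ from the previous scale, with the convention $B_0 = \{x\}$ and $E_0(\theta) = v(\theta + x\cdot\omega)$. The induction hypothesis at scale $n$ would read: (i) $E_n$ inherits Lipschitz monotonicity in $\theta$ with constant $L_n \geq L/2$, so that in particular \eqref{LC} survives along the cascade; (ii) whenever $|E - E_n(\theta)|$ exceeds a scale-dependent threshold $\kappa_n$, the Green's function $(H_{B_n}(\theta) - E)^{-1}$ is well defined, with off-diagonal decay at rate $\asymp |\ln \varepsilon|$.

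The inductive step is the heart of the matter. Given $(\theta, E)$ with $|E - E_n(\theta)| \leq \kappa_{n+1}$, the Lipschitz monotonicity of $E_n$ together with the Diophantine condition $\omega \in \mathrm{DC}_{\tau,\gamma}$ forces the resonant translates $\{x \in B_{n+1} : E_n(\theta + x\cdot\omega) \text{ close to } E\}$ to be separated by at least $\sim L_n^\tau$, so a single resonant center $x_{n+1}$ suffices; the rest of $B_{n+1}$ is tiled by non-resonant scale-$n$ sub-boxes whose Green's functions are already controlled by the inductive hypothesis. I would then apply Schur complement with respect to the subspace spanned by the scale-$n$ Rellich eigenvector at $x_{n+1}$, reducing $H_{B_{n+1}}(\theta) - E$ to a scalar equation whose root defines $E_{n+1}(\theta)$ as an analytic perturbation of $E_n(\theta + x_{n+1}\cdot\omega)$. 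To propagate monotonicity, I would differentiate this equation in $\theta$ in the a.e.\ sense and estimate each term of the resulting Neumann expansion by a positive power of $\varepsilon$, using the off-diagonal Green's function decay on the tiling. This is the step I expect to be the main obstacle: one must show the Schur correction to $E_n'(\theta)$ is uniformly small in $\theta$ --- including near the discontinuity (BLM) or blow-up (UBLM) of $v$ --- so that $L_{n+1} \geq L_n - O(\varepsilon^{\alpha_n})$ with summable losses, which is precisely the ``novel Schur complement argument'' advertised in the abstract.

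Once the MSA is closed, part (1) of Theorem \ref{mainb} follows by a standard Shnol-type paving argument: for any fixed $\theta \in \R$ and any generalized eigenvalue $E$, the MSA selects a unique resonant site at each scale, and the Poisson-type formula for a generalized eigenfunction $\psi$ combined with the Green's function bounds yields $|\psi(x)| \lesssim e^{-c|\ln\varepsilon|\|x\|_1}$, giving Anderson localization uniformly in $\theta$. For part (2), I would upgrade the scale-by-scale Green's function bounds into off-diagonal decay for matrix elements of spectral projectors by partitioning the spectrum into short intervals and summing resolvent bounds via a contour representation, as in \cite{CSZ23,CSZ24b}; this delivers \eqref{1536} with rate $|\ln\varepsilon|/800$, the constant $1/800$ being the book-keeping residue of the losses accumulated through the MSA. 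Uniformity in $\theta$ is automatic throughout, since the entire construction depends on $\theta$ only through the monotone shift, whose analytic input (Lipschitz constant, Diophantine parameters) is $\theta$-free.
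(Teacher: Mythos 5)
Your overall architecture — a Rellich-function MSA in the spirit of \cite{FSW90,FV21,CSZ24b}, using the Diophantine condition plus Lipschitz monotonicity to force a single resonant site per box, a Schur complement reducing $H_{B_{n+1}}-E$ to a scalar equation whose root is $E_{n+1}$, and Schnol plus resolvent bounds at the end — is indeed the right skeleton, and the separation-of-resonances step is essentially the same as Lemma \ref{1222} in the paper. However, your plan for the central issue, propagating monotonicity, has a genuine gap. You propose to differentiate the scalar equation a.e.\ in $\theta$ and show the Schur correction to $E_n'(\theta)$ is $O(\varepsilon^{\alpha_n})$, concluding $L_{n+1}\geq L_n - O(\varepsilon^{\alpha_n})$. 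But $E_n(\theta)$ is not continuous on $[0,1)$: the BLM potential jumps from $v(1-0)=1$ down to $v(0)=0$, so at each point $\alpha_k=\{-x\cdot\omega\}$ with $x\in B_n$ the operator $H_{B_n}(\theta)$ undergoes a negative rank-one jump. All eigenvalue branches $\lambda_i$ drop at these points, and the candidate Rellich function must jump to a different branch. A derivative bound valid a.e.\ says nothing about the sign of these jumps, and if even one jump were negative the inequality $E_n(\theta_2)-E_n(\theta_1)\geq L_n(\theta_2-\theta_1)$ would be destroyed. Your statement that the obstacle is to show the Schur correction is ``uniformly small near the discontinuity'' misidentifies the difficulty: no smallness of a derivative can control a sign change across a genuine discontinuity, and the jumps here are not small but of macroscopic size (the branch of $E_n$ switches).

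The paper resolves this in a fundamentally non-perturbative way. The Schur complement is taken with respect to the single site $\{o\}$ (not the previous Rellich eigenvector, which the paper's Remark \ref{.} argues cannot work at large scales, since a rank-one perturbation can reorder nearby eigenvalues), giving $s_{\theta,n+1}(z)=z-v(\theta)-r_{\theta,n+1}(z)$. Monotonicity at a discontinuity $\beta_x$ reduces to the sign of $s_{\beta_x,n+1}(z)-s_{\beta_x-0,n+1}(z)$, which by Cramer's rule equals $(c^{\rm T} b(z))^2$ divided by the product $\det(z\,\mathrm{Id}-H_{B^o}(\beta_x-0))\det(z\,\mathrm{Id}-H_{B^o}(\beta_x))$, and the paper proves by a multi-scale induction of its own (Proposition \ref{qiang}, Lemma \ref{1448}, iterated deletion of the deepest resonant site) that this product is strictly negative on the relevant interval. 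This yields non-negative jumps, hence the Lipschitz constant $L$ survives \emph{exactly}, not with summable losses. There is no perturbative surrogate for this sign computation, so your proposal as written cannot close the induction. (As a more minor point, the paper derives dynamical localization from the uniform localization bound via the Jitomirskaya--Kr\"uger lemma rather than via spectral-projection contour estimates, but that part of your plan would also work.)
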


	\begin{thm}\label{mainub}
		Assume that $\omega\in 	{\rm DC}_{\tau, \gamma}$ and $v$ belongs to the UBLM class with some Lipschitz parameter $L>0$.	Then there exists some $\varepsilon_0=\varepsilon_0(L,d,\tau,\gamma)>0$ such that for all $0\leq \varepsilon\leq \varepsilon_0$, we have 
		\begin{enumerate}
			\item   For any $\theta\in \R\setminus(\Z+\Z^d\cdot\omega)$,	 $H(\theta)$ satisfies the  Anderson localization.
			\item   For any $\theta\in \R\setminus(\Z+\Z^d\cdot\omega)$,	$H(\theta)$ satisfies the  exponential dynamical localization with the same estimate as in \eqref{1536}.
		\end{enumerate}
	\end{thm}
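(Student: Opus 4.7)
I would prove Theorem~\ref{mainub} by running the same Rellich-function-based multi-scale analysis outlined for Theorem~\ref{mainb}, with adjustments to accommodate the singularities of the unbounded potential. The restriction $\theta\in\R\setminus(\Z+\Z^d\cdot\omega)$ precisely guarantees that $v(\theta+x\cdot\omega)$ is finite for every $x\in\Z^d$, so $H(\theta)$ is a well-defined (unbounded) self-adjoint operator on $\ell^2(\Z^d)$. Moreover, the Diophantine condition on $\omega$ gives quantitative control: for any such $\theta$, the number of sites $x$ in a box of side $N$ with $|v(\theta+x\cdot\omega)|\ge M$ is bounded by a power of $N/M$ via the Lipschitz monotonicity of $v$, so the ``very singular'' sites form a sparse set uniformly in $\theta$.

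\textbf{Construction of Rellich functions and their monotonicity.} I would inductively build nested blocks $B_n\subset\Z^d$ together with Rellich functions $E_n(\theta)$ exactly as in the BLM case: start with singletons $B_0=\{x\}$, $E_0(\theta)=v(\theta+x\cdot\omega)$, and at each scale pick a unique enlarged block $B_n\supset B_{n-1}$ whose diagonal contains a site $x_n$ resonating with $E_{n-1}(\theta)$, then let $E_n(\theta)$ be the associated eigenvalue of $H|_{B_n}$. A Schur complement with respect to the resonant diagonal entry $x_n$ rewrites $E_n(\theta)$ implicitly as the solution of
\[
v(\theta+x_n\cdot\omega)-E=\varepsilon^2\, S_n(\theta,E),
\]
where $S_n$ is the Schur remainder. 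Implicit differentiation in $\theta$ combined with non-resonance bounds on $S_n$ (and its derivatives) then yields the uniform Lipschitz monotonicity $E_n(\theta_2)-E_n(\theta_1)\ge (L-C\varepsilon)(\theta_2-\theta_1)$, inherited from $v$, for $\varepsilon$ sufficiently small.

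\textbf{Main obstacle and conclusion.} The principal additional difficulty compared to the BLM case is that $S_n(\theta,E)$ involves the resolvent of $H|_{B_n\setminus\{x_n\}}-E$, whose diagonal is no longer a priori bounded. I would handle this with a preliminary Schur reduction: first eliminate the ``very singular'' sites $\{x':|v(\theta+x'\cdot\omega)|\gg 1\}\cap B_n$, for which the resolvent estimate is trivial because the diagonal entries dominate $\varepsilon\Delta$; then perform the Schur complement against $x_n$. Strict monotonicity of $v$ together with the Diophantine condition ensures that $x_n$ is well-separated from the very singular sites at every scale, so the two-step reduction still yields an $O(\varepsilon)$ correction to diagonal monotonicity, uniformly in $\theta$ and $n$. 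Once uniform Lipschitz monotonicity of each $E_n$ is in hand, the uniform separation of resonant phases, the large-deviation estimates for Green's functions, the deduction of Anderson localization from Shnol-type arguments, and the upgrade to exponential dynamical localization via eigenfunction correlator estimates all proceed just as in the BLM proof of Theorem~\ref{mainb} and the references \cite{FSW90,FV21,CSZ23,CSZ24b}.
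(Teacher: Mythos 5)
Your outline captures the broad contours of the multi-scale Rellich scheme, but it misses the central technical point that distinguishes Theorem~\ref{mainub} from Theorem~\ref{mainb}, and as written the argument for uniform Lipschitz monotonicity of $E_n(\theta)$ has a genuine gap.

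The proposed ``implicit differentiation in $\theta$ plus non-resonance bounds on the Schur remainder'' only yields the \emph{local} Lipschitz lower bound on each interval of continuity of $\theta\mapsto H_{B_n}(\theta)$; this is the easy part, already handled by~\eqref{mno}--\eqref{mne} in the paper. The crux of the matter is what happens at the discontinuity points $\beta_x=\{-x\cdot\omega\}$ for $x\in B_n\setminus\{o\}$, where the diagonal entry $v(\theta+x\cdot\omega)$ has a jump (from $1$ to $0$ in BLM, or from $+\infty$ to $-\infty$ in UBLM). Passing through such a point produces a rank-one perturbation of $H_{B_n}$ that can shuffle the ordered eigenvalue branches, so continuity-interval differentiability alone does not give the \emph{global} Lipschitz estimate on $[0,1)$. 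Your proposal never addresses the sign (or vanishing) of the jump $E_n(\beta_x)-E_n(\beta_x-0)$, which is exactly the obstacle the paper identifies and spends the bulk of Sections~3--4 resolving. Your ``two-step Schur reduction to eliminate very singular sites'' is a sensible way to deal with the unboundedness of the diagonal within a continuity interval, but it does nothing about the discontinuities themselves.

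The paper's proof of Theorem~\ref{mainub} hinges on Proposition~\ref{ubp}, which shows that in the UBLM case $E_n(\theta)$ in fact has \emph{no} jump at each $\beta_x$ with $x\neq o$, so that the local Lipschitz bound does extend to a global one. This is proved via Lemma~\ref{cone} (from~\cite{Kac19}): when a single diagonal entry $f(\theta)$ tends to $\pm\infty$ as $\theta\to\beta_x^{\mp}$, the spectrum of the full $B_n$-block converges to $\operatorname{Spec}(H_{B_n^x}(\beta_x))\cup\{\pm\infty\}$; since the principal submatrix $H_{B_n^x}$ is continuous at $\beta_x$ and the inductive separation estimate gives a unique eigenvalue of $H_{B_n^x}(\beta_x)$ in the small window around $E_{n-1}(\beta_x)$, the left and right limits of $E_n$ must coincide. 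This is precisely the ingredient missing from your argument; without it (or an equivalent mechanism controlling the jumps), the claimed uniform Lipschitz monotonicity of $E_n$, and hence the uniform separation of resonant phases driving the whole induction, does not follow.
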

	\begin{rem}
		 Shortly after the completion of this paper,  a similar result
		was also proved by Kachkovskiy-Parnovski-Shterenberg \cite{KPS24} independently  with a  different approach. 
	\end{rem}
	\begin{rem}
		Our approach  applies to piecewise continuous   Lipschitz  monotone  functions $v$, that is, $v$ is continuous on $(0,1)$ except on a finite set of points $0=\vartheta_1<\vartheta_2<\cdots<\vartheta_K<\vartheta_{K+1}=1$ with  $$v(\vartheta_k-0)<v(\vartheta_k+0), \ \ 2\leq k\leq K,$$
		$$ v(\theta_2)-v(\theta_1)\geq L(\theta_2-\theta_1), \ \ \vartheta_k<\theta_1< \theta_2<\vartheta_{k+1},$$
		and also applies to exponentially decaying  long-range operators, namely, the discrete Laplacian $\Delta$ in \eqref{model}  can be replaced by a Toeplitz operator $S$ satisfying 
		$$|S(x,y)|\leq e^{-\rho\|x-y\|_1}, \ \ \rho>0.$$
	\end{rem}

	\begin{rem}
		For the special case of meromorphic monotone quasi-periodic potentials, we also give an alternative proof of the classical localization result of 	Bellissard-Lima-Scoppola  \cite{BLS83}, in which the uniform in phase Anderson localization was established by taking account of both the phase  and space directions  in  a KAM diagonalization scheme.   
	\end{rem}
	\begin{rem}
		The Anderson localization results in Theorem \ref{mainb} and \ref{mainub} are also uniform in the sense of  Jitomirskaya  \cite{Jit97}  that every normalized  eigenfunction   $\psi_{\theta,s}$ satisfies 	$$|\psi_{\theta,s}(x)|\leq C(\varepsilon) |\psi_{\theta,s}(x_{\theta,s})|e^{-\frac{1}{400}|\ln\varepsilon|\cdot\|x-x_{\theta,s}\|_1}$$
		with   $x_{\theta,s}\in \Z^d$  such that $|\psi_{\theta,s}(x_{\theta,s})|=  \|\psi_{\theta,s}\|_{\ell^\infty(\Z^d)}$ {\rm (c.f. Corollary \ref{802})}. 
	\end{rem}
	
	\begin{rem}
		The above exponential decay rates ``$\frac{1}{400}"$ and   ``$\frac{1}{800}"$  are not optimal. In fact, it can be seen from the proof that the exponential decay rate can be $c(\varepsilon)|\ln\varepsilon|$ with $c(\varepsilon)\to 1$ as $\varepsilon\to 0$.
	\end{rem}
	To our best knowledge, this is the first application of multi-scale analysis method to monotone potentials setting. 
	
	\subsection{Main new ingredients of the proof}
	Our  proof relies on Green's function estimates  via the multi-scale analysis  in the perturbative regime originated from \cite{FS83}. Of particular importance includes the analysis of Rellich functions in the spirit of  \cite{FSW90, FV21,CSZ23,CSZ24b} together with the adaptation  of  the Schur complement argument of Bourgain \cite{Bou00}  (c.f. \cite{CSZ24a} for recent refinement).  Although the general scheme of our proof seems expected, almost all of the known elements are not available and have to be replaced by new ingredients or a more complicated setup.  The details will be explained as follows. 
	
	\subsubsection{Schur complement argument}    In \cite{Bou00}, Bourgain  proved  the  $(\frac{1}{2}-)$-H\"older continuity  of the integrated density  of states (IDS)  for   almost Mathieu operator  via  introducing  the  Schur complement  argument and Weierstrass preparation theorem.  The key point is to analyze the roots of the determinants of  the finite Dirichlet restriction of $H(\theta)$   for a fixed energy $E$.  Resonances at the  $n$-scale induction  there  depend on the pairs of the roots $\{\theta_{n,1}(E),\theta_{n,2}(E)\}$. This celebrated  method was later largely extended by the authors  \cite{CSZ24a} to prove arithmetic Anderson localization for multi-dimensional quasi-periodic Schr\"odinger operators with  cosine potential.  Definitely, all these  estimates of \cite{Bou00, CSZ24a} were obtained by fixing  the energy parameter $E$ and varying the phase  parameters $\theta.$ As we will see below, in the present work, we employ  some type of Schur complement argument via  varying   energy parameters. 
	\subsubsection{Rellich functions analysis}
	In \cite{FV21}, Forman and VandenBoom developed a novel  inductive procedure by constructing the Rellich functions   tree to handle the resonances depending on energy, 
	thus removed  the crucial  symmetrical assumption in \cite{FSW90}.  This scheme was recently extended by the authors \cite{CSZ24b} to  the multi-dimensional case via overcoming the level crossing issue. 
	More precisely,  for a finite subset $\Lambda\subset\Z^d$,   denote by $H_\Lambda(\theta)$ the Dirichlet restriction of $H(\theta)$ to $\Lambda$. It follows from the perturbation theory of  one-parameter self-adjoint operator families that   $H_\Lambda(\theta)$ exhibits $|\Lambda|$ branches of  Rellich functions. The idea is then based on carefully choosing  increasing  scales of finite blocks and   constructing each generation of  the Rellich function, which is resonant with the previous generation.
	{\it The main issue is to verify that the  Rellich  functions maintain the  crucial properties of  $v$}.

	In the previous works \cite{FV21,CSZ24b}, this issue was settled by eigenvector perturbation and eigenvalue variation arguments from \cite{FSW90}. Concretely,  restricting  $H$ to  a fixed block  $B$, one can try to fix an eigenvalue and  its eigenvector and  extend it by a small perturbation to a new eigenvalue and the corresponding  eigenvector  of a bigger block. Then one can use  Feynman-Hellman eigenvalue variation formulas to compute the derivatives of the new  eigenvalue parameterized by $\theta$.  Together  with the Cauchy  interlacing  idea originated from \cite{FV21}, using the above two  arguments   can  successfully yield the two-monotonicity  Morse structure of  Rellich functions for  quasi-periodic Schr\"odinger operators with $C^2$-cosine type  potentials (c.f. \cite{CSZ24b} for details).  
	
	In the monotone  potentials  setting, the lack of resonances leads to obvious conveniences:  One can easily establish the local Lipschitz  monotone property of all branches of   Rellich functions (c.f. \eqref{mno} and \eqref{mne} below), and establish the  Lipschitz continuity of the integrated density of states by the argument of \cite{Kac19} (c.f.  Theorem 4.1 of  \cite{Kac19}). However, there comes the key obstacle for proving the localization,  namely, {\it the  presence of jump discontinuities  of  $v$  at integer points, that will destroy the continuity of the  Rellich function,  making it extraordinary for the   monotonicity  property of the  Rellich function  to hold at these  jump discontinuities}. This becomes a very delicate issue for following reasons:  Since the size of  the Dirichlet block  will tend  to infinity, one may eventually lose the separation and  stability  of  eigenvalues, making the jump discontinuities unclear.  Moreover, the  Rellich function of a given block  can be very close  to other eigenvalues, in which case a rank one perturbation will totally change  the order of eigenvalues, making it hard to compare the size of the  Rellich function near these discontinuities (c.f. Remark \ref{.}),  where the previous eigenvector perturbation and eigenvalue variation ideas do not seem to work. 
	
	To analyze  these  jump discontinuities of the Rellich function  in the present paper, we construct the the Rellich function in a   more quantitative way.   {\it We  deal with  the root of the  characteristic polynomial of a certain scale Dirichlet restriction in a small neighborhood of the value of  the Rellich function from the previous scale, and utilize  Schur complement argument and  Rouch\'e theorem to obtain a unique eigenvalue, which is  chosen to be the value of the  present scale Rellich function (c.f. Proposition \ref{528}).  This  idea is motivated by the work of Bourgain \cite{Bou00}, which made use of   the Schur complement as a perturbation of quadratic polynomial concerning   the phase parameters  $\theta$. The  new aspect   is that we regard the Schur complement as a perturbation of linear function concerning the energy parameters  $E$ (c.f. Lemma \ref{sln})}. Now the issue of jump discontinuities of the Rellich function can be handled by comparing the roots of the two Schur complements with a disparity of a rank one perturbation due to the jump discontinuity  of $v$.   By carefully estimating the difference of the two Schur complements via the multi-scale analysis argument (c.f. Proposition \ref{qiang} and Lemma \ref{1448}), we confirm the sign of this difference and ultimately prove that all the jump discontinuities except those at integer points  of the Rellich function  are non-negative.  Hence each generation of the Rellich function maintains the uniform Lipschitz monotonicity property, which enables the whole induction.
	\subsection{Outline of the inductive scheme}
	The main motivation of  multi-scale analysis is to establish ``good''  Green's function estimates for certain subsets of $\Z^d$, which will be specified by ``good'' later.  Results of this variety are well-established in the literature (c.f. e.g. \cite{FS83,FV21,CSZ24b}).  We sketch this scheme relative to our specific application for the reader's convenience. 
	
	First, we fix a sufficiently small  $\varepsilon_0$ depending on $L,d,\tau,\gamma$ and let $\varepsilon$ be such that  $0\leq \varepsilon\leq \varepsilon_0$.  We choose  $\delta_0:=\varepsilon_0^\frac{1}{20}\gg\varepsilon$ to be the $0$-scale  resonance parameter. For fixed  $\theta$ and $E$, the $0$-resonant points set will be defined as 
	$$S_0(\theta,E):=\left\{p\in \Z^d:\ |v (\theta+p\cdot \omega)-E|<\delta_0\right\}.$$
For a finite set $\Lambda\subset\Z^d$, we define the Green's function   	$$G_\Lambda^{\theta,E}:=\left ( H_\Lambda (\theta)-E\right) ^{-1}.$$
	If a finite set $\Lambda\subset\Z^d$ satisfies  \begin{equation}\label{1418}
		\Lambda\cap S_0(\theta,E)=\emptyset,
	\end{equation} then one can obtain  the ``good'' estimates (i.e.,  one over the resonance bound  of the operator norm and off-diagonal exponential decay of  elements) of  $G_\Lambda^{\theta,E}$ by the Neumann series argument. If \eqref{1418} is not fulfilled,  one can still try to estimate the Green's function via the by-now classical resolvent identity iteration method, which roughly means that if for every $x\in \Lambda$, there exists  a ``good'' subset $U(x)\subset \Lambda$  with $x\in U(x)$ such that $G_{U(x)}^{\theta,E}$ has ``good'' estimates, then one can obtain  ``good'' estimates of   $G_\Lambda^{\theta,E}$, with a small loss on the exponential decay rate, by iterating the resolvent identity. Hence to employ the  resolvent identity method, one needs to associate a ``good''  block $U(x)$ with  every $x\in 	\Lambda\cap S_0(\theta,E)$. For this purpose, we will choose  $l_1:=[|\ln\delta_0|^4]$  ($[\cdot]$ denotes the integer part) to be the first scale length and consider the block  
	$$B_1(x):=\{y\in \Z^d:  \  \|y-x\|_1\leq l_1\}$$ for each $x\in 	\Lambda\cap S_0(\theta,E)$. 
	Since the Dirichlet restrictions of the   original operator satisfy the following translation property 
	\begin{equation}\label{1538}
		H_{B_1(x)}(\theta)=H_{B_1}(\theta+x\cdot\omega),\ \  B_1:=B_1(o),
	\end{equation}
it suffices to consider  $H_{B_1}(\theta)$ for all $\theta$.   
	Since, by self-adjointness, we have 
	\begin{equation}\label{1540}
		\|G_{B_1}^{\theta,E}\|=\frac{1}{\operatorname{dist}(\operatorname{Spec}(H_{B_1}(\theta), E))}, 
	\end{equation}
	 it follows that the issue of  operator norm bound   turns into  the one  of the difference between the eigenvalues of $H_{B_1}(\theta)$ and $E$. Since   $x\in 	\Lambda\cap S_0(\theta,E)$, it follows that   $|v (\theta+x\cdot \omega)-E|<\delta_0$. Thus by \eqref{1538} and \eqref{1540},   it suffices to consider  the  eigenvalue of $H_{B_1}(\theta+x\cdot\omega)$ that  is close  to  $E$ (hence is  also close  to   $v(\theta+x\cdot\omega)$).  Since Diophantine property of $\omega$ and Lipschitz monotonicity property of $v$ provide sufficiently large separation of  resonant points, we can  utilize the Schur complement argument and the  smallness of $\varepsilon$ to construct a unique Rellich function $E_1(\theta)$ of $H_{B_1}(\theta)$ in a certain neighborhood of $v(\theta)$.  Moreover, we can verify that $E_1(\theta)$ also satisfies  the  Lipschitz monotonicity property. Now the resonance depends on the difference $|E_1(\theta+x\cdot\omega)-E|$ and we classify the blocks $B_1(x)$ as ``good''  or ``bad''  depending on the difference  bigger or smaller than the first scale resonance parameter $\delta_1:=e^{-l_1^{2/3}}$. Moreover, by the special structure of $B_1(x)$, one can derive the off-diagonal exponential  decay  of the  Green's function from the sub-exponential operator norm  bound via the resolvent identity argument. Hence we can establish ``good''  estimates of $G_\Lambda^{\theta,E}$ provided all the blocks $B_1(x)\subset\Lambda$ with $x\in 	\Lambda\cap S_0(\theta,E)$ being ``good'', completing   the first inductive step. 
	
	For the general inductive step,  suppose  that for $1\leq k\leq n$, we have inductively constructed increasing block $B_k\subset\Z^d$ centered at origin  and the   corresponding Rellich function $E_k(\theta)$ satisfying the  Lipschitz monotonicity property. Suppose further that we  have  established ``good''   estimates of $G_\Lambda^{\theta,E}$ for  those  $\Lambda$ which do not contain any  $k$-scale ``bad''   block $B_k(x)$ with $|E_k(\theta+x\cdot\omega)-E|$ smaller than the $k$-scale resonance  parameter $\delta_k=e^{-l_k^{2/3}}$. Then,  since resonant points  are uniformly separated by Diophantine property of $\omega$ and Lipschitz monotonicity property of $E_k$,  one can properly choose  a bigger block $B_{n+1}$, and utilize the Schur complement argument together with the smallness of the perturbation term  due to the  off-diagonal exponential decay of Green's function from the  inductive hypothesis to construct a unique Rellich function $E_{n+1}(\theta)$ of $H_{B_{n+1}}(\theta)$ in a certain neighborhood of $E_n(\theta)$. Moreover, one  can   verify that $E_{n+1}(\theta)$ also   satisfies  the Lipschitz monotonicity property. It remains  to  use resolvent identity to  establish ``good''   estimates of $G_\Lambda^{\theta,E}$ for  those  $\Lambda$ which do not contain any  $(n+1)$-scale ``bad''   block $B_{n+1}(x)$ with $|E_{n+1}(\theta+x\cdot\omega)-E|<\delta_{n+1}$.  Since $\varepsilon_0$ is small, the loss of  exponential decay rate in each scale is super-exponentially small, providing a uniformly positive decay rate. 
	
	Finally, the localization results  follow  from the  Green's function estimates  for ``good''  sets, together with the uniform separation of resonant points due to Diophantine property of $\omega$ and Lipschitz monotonicity property of Rellich functions  by  the standard Schnol's  type argument. 
	
	\subsection{Some notation in the paper}\label{notation}
	\begin{itemize}
		\item    	For $L>0$, we denote by $Q_L$ the $L$-size lattice cube centered at the  origin:
		$$Q_L:=\{x\in \Z^d:\ \|x\|_1\leq L\}.$$
		\item For $a\in \C,R>0$, we denote 
		$$D(a,R):=\{z\in \C :\ |z-a|<R\}.$$
		\item  For two sets $X,Y\subset\Z^d$, we  denote $$\operatorname{dist}_1(X,Y)=\min_{\substack{x\in X, y\in Y} }\|x-y\|_1,$$
		$$\partial_Y^-X=\{x\in X:\ \text{there exists }y\in Y\setminus X\text{ with }\|y-x\|_1=1\},$$
		$$\partial_Y^+X=\{y\in Y\setminus X:\ \text{there exists }x\in X\text{ with }\|y-x\|_1=1\},$$
		$$\partial_YX=\{(x,y):\ x\in X,\  y\in Y\setminus X \text{ with }\|y-x\|_1=1 \}.$$
		\item For $\Lambda\subset\Z^d, p\in \Z^d$, we  denote  $$\Lambda+p:=\{x+p:\ x\in \Lambda\}.$$  
		\item We denote by $\|c\|$ the $\ell^2$-norm of a vector $c$ and denote by $\|A\|$ the operator ($\ell^2$) norm of a matrix $A$.
		\item We denote by $A^{\#}$ the adjugate matrix of a matrix $A$.
	\end{itemize}

	\subsection{Organization of the paper}
	Since most of   arguments in the unbounded  potentials case  are similar  to the bounded one  except the analysis of the jump discontinuities of the Rellich function (Proposition \ref{qiang} v.s. Proposition \ref{ubp}), we will mainly focus the multi-scale analysis  on  the  bounded case.   In \S \ref{MSA}, we present  the key arguments of this paper, constructing the Rellich functions
	and establishing Green’s function estimates via  multi-scale  analysis in the bounded case. In \S \ref{ub}, we complete  the analysis of the  unbounded potentials case. In \S \ref{localization},  we employ the results from \S \ref{MSA} and \S \ref{ub} to finish the proof of Theorems  \ref{mainb} and \ref{mainub}.

	\section{Multi-scale analysis}\label{MSA}
	Let $ v$ belong to the BLM class. Let  $\varepsilon_0>0$ be sufficiently small depending on $L,d,\tau ,\gamma$. Let $0\leq \varepsilon\leq \varepsilon_0$.  Define $ \delta_{0}:=\varepsilon_0^{\frac{1}{20}}$. 
	\subsection{The initial scale}
	\subsubsection{Definition of $E_0(\theta)$ and $S_0(\theta,E)$}
	The $0$-generation  Rellich function $E_0(\theta)$ is exactly  $v(\theta)$. Fix $\theta\in \R,E\in \C$. The $0$-resonant points set $S_0(\theta,E)$ is defined as 
	$$S_0(\theta,E):=\left\{p\in \Z^d:\ |v (\theta+p\cdot \omega)-E|<\delta_0\right\}.$$
	\subsubsection{Green's function estimates for $0$-nonresonant sets}
	Fix $\theta,E$. we say that a finite set $\Lambda\subset \Z^d$ is $0$-nonresonant related to  the pair $(\theta,E)$ if $\Lambda\cap S_0(\theta,E)=\emptyset$.
	\begin{prop}\label{0g}
		Let $\gamma_0=\frac{1}{2}|\ln\varepsilon|$.	If $\Lambda\cap S_0(\theta,E)=\emptyset$, then  for any $E^*\in \C$ such that  $|E-E^*|<\delta_0/5$, we have 
		\begin{align*}
			\|G_\Lambda^{\theta,E^*}\|&\leq 10\delta_0^{-1},\\
			|G_\Lambda^{\theta,E^*}(x,y)|&\leq e^{-\gamma_0\|x-y\|_1}, \ \|x-y\|_1\geq 1.
		\end{align*}	
	\end{prop}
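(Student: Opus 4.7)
The plan is to prove both estimates by a direct Neumann series expansion of the resolvent, exploiting the scale separation $\varepsilon/\delta_0 \leq \varepsilon_0^{19/20}\ll 1$ that follows from the choice $\delta_0=\varepsilon_0^{1/20}$. First I will decompose $H_\Lambda(\theta)-E^* = D + \varepsilon\Delta_\Lambda$, where $D$ is the diagonal matrix with entries $v(\theta+x\cdot\omega)-E^*$ for $x\in\Lambda$. The nonresonance assumption $\Lambda\cap S_0(\theta,E)=\emptyset$ gives $|v(\theta+x\cdot\omega)-E|\geq\delta_0$ for every $x\in\Lambda$, and combining this with $|E-E^*|<\delta_0/5$ and the triangle inequality yields
\begin{equation*}
|v(\theta+x\cdot\omega)-E^*|\geq \delta_0-\delta_0/5 = 4\delta_0/5, \qquad x\in\Lambda.
\end{equation*}
Hence $D$ is invertible with $\|D^{-1}\|\leq 5/(4\delta_0)$, and since $\|\Delta_\Lambda\|\leq 2d$ the perturbation satisfies $\|D^{-1}\varepsilon\Delta_\Lambda\|\leq 5d\varepsilon/(2\delta_0)\leq (5d/2)\varepsilon_0^{19/20}$, which is far below $1/2$ once $\varepsilon_0$ is small (depending on $d$).

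Next I will factor $G_\Lambda^{\theta,E^*} = (I+D^{-1}\varepsilon\Delta_\Lambda)^{-1}D^{-1}$ and use the convergent Neumann series
\begin{equation*}
G_\Lambda^{\theta,E^*} = \sum_{k\geq 0}(-D^{-1}\varepsilon\Delta_\Lambda)^k D^{-1}.
\end{equation*}
The operator-norm bound $\|G_\Lambda^{\theta,E^*}\|\leq 2\|D^{-1}\|\leq 10\delta_0^{-1}$ is then immediate. For the off-diagonal decay at $(x,y)$ with $r:=\|x-y\|_1\geq 1$, I will read off the matrix element term by term: since $D^{-1}$ is diagonal and $\Delta_\Lambda$ connects only $\ell^1$-nearest neighbors, $[(D^{-1}\varepsilon\Delta_\Lambda)^k D^{-1}](x,y)$ is a signed weighted sum over length-$k$ nearest-neighbor walks in $\Lambda$ from $x$ to $y$; no such walk exists for $k<r$, and for $k\geq r$ there are at most $(2d)^k$ of them, each of absolute value at most $\|D^{-1}\|^{k+1}\varepsilon^k$. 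Summing the resulting geometric series gives
\begin{equation*}
|G_\Lambda^{\theta,E^*}(x,y)| \leq \|D^{-1}\|\sum_{k\geq r}\bigl(2d\varepsilon\|D^{-1}\|\bigr)^k \leq \frac{5}{2\delta_0}\Bigl(\frac{5d\varepsilon}{2\delta_0}\Bigr)^{r}.
\end{equation*}

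Finally I will absorb the prefactor into the exponential. Since $5d\varepsilon/(2\delta_0) = (5d/2)\varepsilon_0^{-1/20}\varepsilon$ is bounded by $\varepsilon^{3/4}$ once $\varepsilon_0^{1/5}\leq 2/(5d)$ (using $\varepsilon\leq\varepsilon_0$), the right-hand side is at most $(5/(2\delta_0))\varepsilon^{3r/4}$. The logarithmic loss from $\delta_0^{-1}$ amounts to only $\tfrac{1}{20}|\ln\varepsilon_0|$, which is a small fraction of $|\ln\varepsilon|\geq|\ln\varepsilon_0|$, so it is absorbed into the exponent, leaving $|G_\Lambda^{\theta,E^*}(x,y)|\leq e^{-\frac{1}{2}|\ln\varepsilon|\,r}=e^{-\gamma_0 r}$ as required. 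There is no substantive obstacle in this initial-scale estimate: the whole argument is a perturbative computation driven entirely by $\delta_0\gg\varepsilon$, and the only minor care needed is tracking the $\varepsilon_0$-dependent constants to ensure they fit comfortably within the claimed decay rate.
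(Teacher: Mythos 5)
Your proposal is correct and follows essentially the same route as the paper's proof: decompose the resolvent as diagonal-plus-$\varepsilon\Delta$, use the nonresonance assumption with the triangle inequality to bound the diagonal inverse by $O(\delta_0^{-1})$, expand in a Neumann series, and use the nearest-neighbor structure of $\Delta$ to truncate the series at order $\|x-y\|_1$ before absorbing the $\delta_0^{-1}$ prefactor into the exponent. The only cosmetic differences are in how the constants are tracked (the paper works with a $\delta_0/2$ lower bound and writes the final absorption as $e^{-\gamma_0(\frac{3}{2}\|x-y\|_1-\frac{1}{2})}\leq e^{-\gamma_0\|x-y\|_1}$, while you keep $4\delta_0/5$ and phrase the absorption in terms of $\varepsilon^{3r/4}$), but the argument is the same.
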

	\begin{proof}
		Since $\Lambda\cap S_0(\theta,E)=\emptyset$, it follows that 	for any $x\in \Lambda$ and  $|E-E^*|<\delta_0/5$, we have  
		$$|v (\theta+x\cdot \omega) -E^*|\geq |v (\theta+x\cdot \omega) -E|-|E-E^*|\geq \delta_0-\delta_0/5>\delta_0/2. $$ Thus $V_\Lambda(\theta)$,  the diagonal part of $H_\Lambda (\theta)$, satisfies   $\| (V_\Lambda (\theta)-E^*)^{-1}\|\leq2\delta_0^{-1}$.  Since $\|\Delta\|\leq2d$, it follows that 
		$$\varepsilon\|\left (V_\Lambda(\theta)-E^*\right)^{-1} \Delta\| \leq 4d\varepsilon\delta_0^{-1}\leq\frac{1}{2}.$$ Hence  by Neumann series, we have   
		\begin{align*}
			\left(H_\Lambda (\theta)-E^*\right)^{-1} & =\left (\varepsilon \Delta+V_\Lambda(\theta)-E^*\right)^{-1} \\
			& =\sum_{l=0}^{\infty} (-1)^l \varepsilon^l\left[\left (V_\Lambda(\theta)-E^*\right)^{-1} \Delta\right]^l\left (V_\Lambda(\theta)-E^*\right)^{-1}.
		\end{align*}
		Thus\begin{align*}
			\|\left(H_\Lambda (\theta)-E^*\right)^{-1}\|&\leq \sum_{l=0}^{\infty}\left (\frac{4 d\varepsilon}{\delta_0}\right)^l \|\left (V_\Lambda(\theta)-E^*\right)^{-1}\| \\ &\leq2\|\left (V_\Lambda(\theta)-E^*\right)^{-1}\|\\&\leq4\delta_0^{-1}.
		\end{align*}
		Since $\Delta$ has 
		only nearest neighbor matrix elements, provided  $\|x-y\|_1\geq1$, we have 
		\begin{align*}
			\left|\left(H_\Lambda (\theta)-E^*\right)^{-1} (x, y)\right|  &\leq \sum_{l=\|x-y\|_1}^{\infty}\left (\frac{4 d\varepsilon}{\delta_0}\right)^l \|\left (V_\Lambda(\theta)-E^*\right)^{-1}\| \\&\leq \frac{4}{\delta_0}\left (\frac{4 d\varepsilon}{\delta_0}\right)^{\|x-y\|_1}\\ 
			&\leq e^{-\gamma_0(\frac{3}{2}\|x-y\|_1-\frac{1}{2})}\\
			&\leq e^{-\gamma_0\|x-y\|_1}.
		\end{align*} 
	\end{proof}
	\subsection{The first inductive scale}\label{n=1}
	Let $l_1:=[|\ln\delta_0|^4]$ be the first inductive length and $B_1=Q_{l_1}$ be the first scale block. 
	Recall that $H_{B_1}(\theta)$ is the Dirichlet restriction of $H(\theta)$ on $\ell^2(B_1)$.    Listing the   eigenvalues in non-decreasing order,  $H_{B_1}(\theta)$ has  $|B_1|$ branches of  Rellich functions  $\lambda_i(\theta)$ $(1\leq i\leq |B_1|)$. By the perturbation theory (Min-Max argument) for self-adjoint operators, the Rellich  functions $\lambda_i(\theta)$ are $1$-periodic and continuous on $[0,1)$ except on the finite set of points $0=\alpha_1<\alpha_2<\cdots<\alpha_{|B_1|}<\alpha_{|B_1|+1}=1$, where 
	$$\{\alpha_1,\alpha_2,\cdots,\alpha_{|B_1|}\}=\{\{-x\cdot\omega\}:\ x\in B_1\}$$
	with $\{-x\cdot\omega\}$ being the decimal part of $ -x\cdot\omega$.  
	Define $$\beta_x:=\{-x\cdot\omega\}.$$  
	From the  equation
	$$H_{B_1}(\beta_x)-H_{B_1}(\beta_x-0)=-{\bm e}_x{\bm e}_x^{\rm T},$$ it follows that   all  discontinuities  of $\lambda_i(\theta)$ are caused by  negative rank one perturbations due to the $x$-site diagonal element of  the matrix jumping from  $1$ to $0$, as  the jumps at finite energies will affect all eigenvalues. 
	Lipschitz monotonicity property \eqref{LC} implies the operator inequality 
	\begin{equation}\label{mno}
		H_{B_1}(\theta_2)-H_{B_1}(\theta_1)\geq L(\theta_2-\theta_1)\operatorname{Id}_{B_1}, \ \ \alpha_k\leq \theta_1\leq \theta_2<\alpha_{k+1},
	\end{equation} where $\operatorname{Id}_{B_1}$ is the identity operator on $\ell^2(B_1)$.
	Hence it follows from the  Min-Max argument that  the Rellich functions $\lambda_i(\theta)$ also satisfy the local Lipschitz monotonicity property
	\begin{equation}\label{mne}
		\lambda_i(\theta_2)-\lambda_i(\theta_1)\geq L(\theta_2-\theta_1), \ \ \alpha_k\leq \theta_1\leq \theta_2<\alpha_{k+1}.
	\end{equation}
	The goal of this section is to construct the first generation Rellich function $E_1(\theta)$, which is a  $1$-periodic, single-valued, real-valued Rellich function of $H_{B_1}(\theta)$. The construction is done by locally choosing  the  unique branch of $\lambda_i(\theta)$ which is resonant with the previous scale Rellich function $E_0(\theta)$. We will prove that in our construction, on each small interval $[\alpha_k,\alpha_{k+1})$,  $E_1(\theta)$ coincides with one branch of $\lambda_i(\theta)$, and more importantly, $E_1(\theta)$ has  non-negative jumps at $\alpha_k$ except $\alpha_1(=0)$, that is, 
	$$E_1(\alpha_k)\geq E_1(\alpha_k-0), \ \ 2\leq k\leq |B_1|. $$ 
	It may seem strange at the first sight that the Rellich function $E_1(\theta)$ has   non-negative jumps since  negative rank one perturbations will cause all the eigenvalues to drop down. The reason of these non-negative jumps is that $E_1(\theta)$ coincides with different branches of $\lambda_i(\theta)$ before and after $\alpha_k$. Hence $E_1(\theta)$ maintains the Lipschitz monotonicity property from $v(\theta)$ on the whole interval $[0,1)$. After constructing $E_1(\theta)$ and verifying the properties of it, for fixed $\theta$ and $E$, we relate the first scale resonant points  $p\in S_1(\theta,E)$ to the difference  $|E_1(\theta+p\cdot\omega)-E|$ and prove good Green's function estimates for $1$-good  set.
	\subsubsection{Construction of $E_1(\theta)$}
	\begin{lem}\label{sep}
		For any $\theta\in \R$ and  $x\in B_1$ such that $x\neq o$, we have $$|v(\theta+x\cdot\omega)-v(\theta)|\geq 20\delta_0.$$
	\end{lem}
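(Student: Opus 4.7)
The plan is to reduce to fractional parts via the $1$-periodicity of $v$, then combine the Diophantine lower bound on $\|x\cdot\omega\|_{\T}$ with the Lipschitz monotonicity of $v$ on $[0,1)$, and finally verify the quantitative bound from the definitions $\delta_0 = \varepsilon_0^{1/20}$ and $l_1 = [|\ln\delta_0|^4]$.

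First, since $v$ is $1$-periodic, I would set $a := \{\theta\}$ and $b := \{\theta + x\cdot\omega\}$, both in $[0,1)$, so that $v(\theta) = v(a)$ and $v(\theta + x\cdot\omega) = v(b)$. Writing $b - a \equiv x\cdot\omega \pmod 1$, the Diophantine condition $\omega\in {\rm DC}_{\tau,\gamma}$ gives, for $x\ne o$ with $\|x\|_1 \le l_1$,
\begin{equation*}
\min_{m\in\Z}|x\cdot\omega - m| \geq \frac{\gamma}{\|x\|_1^\tau} \geq \frac{\gamma}{l_1^\tau}.
\end{equation*}
Since $a,b\in[0,1)$, the difference $b-a$ lies in $(-1,1)$, and the above inequality applied to the two integers nearest to $b-a$ yields both $|b-a| \ge \gamma/l_1^\tau$ and $1-|b-a| \ge \gamma/l_1^\tau$. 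In particular, there is no ``wrap-around'' issue, and $a$ and $b$ are genuinely separated inside $[0,1)$.

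Next, by the BLM Lipschitz monotonicity \eqref{LC} applied globally on $[0,1)$ (where $v$ is continuous), assuming without loss of generality $a\le b$,
\begin{equation*}
|v(\theta+x\cdot\omega) - v(\theta)| \;=\; v(b)-v(a) \;\geq\; L(b-a) \;\geq\; \frac{L\gamma}{l_1^\tau}.
\end{equation*}

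Finally, I would verify $L\gamma/l_1^\tau \ge 20\delta_0$. Since $l_1 = [|\ln\delta_0|^4]\le |\ln\delta_0|^4$ and $\delta_0 = \varepsilon_0^{1/20}$, this reduces to
\begin{equation*}
L\gamma \;\geq\; 20\,\delta_0\,|\ln\delta_0|^{4\tau} \;=\; 20\,\varepsilon_0^{1/20}\bigl(\tfrac{1}{20}|\ln\varepsilon_0|\bigr)^{4\tau},
\end{equation*}
whose right-hand side tends to $0$ as $\varepsilon_0\to 0$; thus the inequality holds for $\varepsilon_0$ sufficiently small depending on $L,d,\tau,\gamma$, which is the standing assumption. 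There is no real obstacle: the only point that deserves care is checking that on the torus the two fractional parts cannot be accidentally close because of the jump of $v$ across integer points, but the two-sided Diophantine bound above rules this out. The lemma will be used in the next step to ensure that only the site $p=o$ contributes a $0$-resonance in $B_1$, which is exactly what allows the Schur-complement construction of $E_1(\theta)$.
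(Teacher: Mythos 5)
Your proof is correct and follows essentially the same route as the paper: reduce to the Diophantine lower bound on $\|x\cdot\omega\|_\T$, apply the Lipschitz monotonicity \eqref{LC} on $[0,1)$, and verify $L\gamma/l_1^\tau\geq 20\delta_0$ from $l_1=[|\ln\delta_0|^4]$ and the smallness of $\varepsilon_0$. The only difference is that you spell out the reduction to fractional parts $a,b\in[0,1)$ and the two-sided bound $|b-a|,\,1-|b-a|\geq\gamma/l_1^\tau$; the paper compresses all of this into the single inequality $|v(\theta+x\cdot\omega)-v(\theta)|\geq L\|x\cdot\omega\|_\T$, which already follows from $|b-a|\geq\|x\cdot\omega\|_\T$ and does not need the ``no wrap-around'' observation (a large $|b-a|$ only makes the lower bound better), but your extra care does no harm.
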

	\begin{proof}
		By \eqref{DC}, \eqref{LC} and $l_1^{-\tau}\geq  |\ln\delta_0|^{-4\tau}$, provided $\delta_0$ sufficiently small, we have  $$|v(\theta+x\cdot\omega)-v(\theta)|\geq L\|x\cdot\omega\|_\T\geq L\gamma l_1^{-\tau}\geq  L\gamma  |\ln\delta_0|^{-4\tau} \geq 20\delta_0.$$
	\end{proof}
	\begin{prop}\label{715}
		For any $\theta\in \R$, $H_{B_1}(\theta)$ has a unique eigenvalue $E_1(\theta)$ such that $|E_1(\theta)-v(\theta)|\leq \varepsilon$. Moreover, any other eigenvalues of  $H_{B_1}(\theta)$, $\hat E$, except $E_1(\theta)$,  satisfy $|\hat E-v(\theta)|>10\delta_0$.
	\end{prop}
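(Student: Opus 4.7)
The plan is to carry out a Schur complement argument in which the single potentially resonant site, the origin $0 \in B_1$, is separated from the punctured nonresonant block $\Lambda := B_1 \setminus \{0\}$. Writing $H_{B_1}(\theta) - E$ in the $\{0\} \oplus \Lambda$ block form and taking determinants yields
$$\det\bigl(H_{B_1}(\theta) - E\bigr) = \det\bigl(H_{\Lambda}(\theta) - E\bigr) \cdot f(E),$$
with Schur complement
$$f(E) := v(\theta) - E - \varepsilon^2 \langle \Gamma,\, G_{\Lambda}^{\theta, E}\,\Gamma\rangle,$$
where $\Gamma \in \ell^2(\Lambda)$ encodes the nearest-neighbor couplings from the origin (so $\|\Gamma\|^2 = 2d$). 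The strategy is to show that throughout the closed disk $\overline{D(v(\theta),\, 10\delta_0)}$ the operator $H_\Lambda(\theta) - E$ is invertible and $f$ is a tiny perturbation of the linear function $v(\theta) - E$; then Rouch\'e's theorem supplies a unique zero $E_1(\theta)$ of $f$ in this disk, and no other eigenvalue of $H_{B_1}(\theta)$ can lie there.

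For the uniform Green's function estimate on $\Lambda$, fix any $E^* \in \overline{D(v(\theta),\, 10\delta_0)}$. By Lemma \ref{sep} and the triangle inequality,
$$|v(\theta + x\cdot\omega) - E^*| \geq 20\delta_0 - 10\delta_0 = 10\delta_0 > \delta_0 \quad \text{for every } x \in \Lambda,$$
so $\Lambda \cap S_0(\theta, E^*) = \emptyset$. Applying Proposition \ref{0g} with $E = E^*$ gives $\|G_\Lambda^{\theta, E^*}\| \leq 10\delta_0^{-1}$ uniformly in $E^*$, which in particular forces $\operatorname{Spec}\bigl(H_\Lambda(\theta)\bigr) \cap \overline{D(v(\theta),\, 10\delta_0)} = \emptyset$ and bounds the perturbation $g(E) := -\varepsilon^2 \langle \Gamma,\, G_\Lambda^{\theta, E}\,\Gamma\rangle$ by
$$|g(E)| \leq 2d\,\varepsilon^2 \cdot 10\delta_0^{-1} \leq 20d\,\varepsilon \cdot \varepsilon_0^{19/20},$$
using $\delta_0 = \varepsilon_0^{1/20}$ and $\varepsilon \leq \varepsilon_0$. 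For $\varepsilon_0$ sufficiently small this is $\ll \varepsilon$.

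With $f(E) = (v(\theta) - E) + g(E)$, Rouch\'e's theorem on the circle $|E - v(\theta)| = 10\delta_0$ --- where $|v(\theta) - E| = 10\delta_0$ dominates $|g(E)|$ --- produces exactly one zero $E_1(\theta)$ of $f$ inside the disk, and the identity $E_1(\theta) - v(\theta) = g(E_1(\theta))$ immediately gives $|E_1(\theta) - v(\theta)| \leq \varepsilon$. Since $H_\Lambda(\theta) - E$ is invertible throughout $\overline{D(v(\theta),\, 10\delta_0)}$, on this disk the eigenvalues of $H_{B_1}(\theta)$ coincide with the zeros of $f$; hence $E_1(\theta)$ is the unique such eigenvalue, and every other eigenvalue $\hat E$ of $H_{B_1}(\theta)$ satisfies $|\hat E - v(\theta)| > 10\delta_0$. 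The principal technical point is verifying Proposition \ref{0g} uniformly across the entire disk rather than at a single energy; this is precisely what the generous separation gap $20\delta_0$ from Lemma \ref{sep} affords, and is the structural reason the Schur complement behaves as a rigid linear function of $E$ --- the viewpoint highlighted in subsection 1.2.2 of the introduction.
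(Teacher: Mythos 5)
Your proof is correct and takes essentially the same route as the paper: separate the origin from the punctured block $B_1\setminus\{o\}$, use Lemma \ref{sep} together with Proposition \ref{0g} to get a uniform resolvent bound on the whole disk $\overline{D(v(\theta),10\delta_0)}$, observe that the Schur complement is a tiny perturbation of the linear function, and close with Rouch\'e. The only differences are cosmetic: you write the Schur complement in terms of $H_\Lambda - E$ rather than $z\operatorname{Id}-H_\Lambda$ (a sign flip that leaves the zeros unchanged) and factor $c=\varepsilon\Gamma$ to get the slightly tighter constant $20d$ in place of the paper's $40d^2$, neither of which affects the argument.
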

	
	\begin{proof}
		Fix $\theta\in \R$. Consider $d_\theta(z):=\operatorname{det}[z\operatorname{Id}_{B_1}-H_{B_1}(\theta)]$, which is the characteristic polynomial of $H_{B_1}(\theta)$. In the following we restrict $z$ to the complex neighborhood $$D:=D(v(\theta),10\delta_0).$$
		It suffices to prove that $d_\theta(z)$ has a unique zero $E_1(\theta)$ in $\bar{D}$ satisfying $|E_1(\theta)-v(\theta)|\leq \varepsilon$.
		Let us  denote $B_1\setminus \{o\}$ by  $B_1^o$. We can write 
		$$	z\operatorname{Id}_{B_1}-H_{B_1}(\theta)=\begin{pmatrix}
			z-v(\theta) & c^{\text{T}} \\
			c & z\operatorname{Id}_{B_1^o}-H_{B_1^o}(\theta)
		\end{pmatrix}.$$
		By Lemma \ref{sep}, for $x\in B_1^o$ and $z\in \bar{D}$, we have  
		$$|z-v(\theta+x\cdot\omega)|\geq |v(\theta)-v(\theta+x\cdot\omega)|-|z-v(\theta)|\geq20\delta_0-10\delta_0=10\delta_0. $$
		Thus $B_1^o\cap S_0(\theta,z)=\emptyset$ and hence by Proposition \ref{0g}, we have 
		\begin{equation}\label{gb}
			\|G_{B_1^o}^{\theta,z}\|\leq 10\delta_0^{-1}.
		\end{equation}
		Denote the Schur complement by
		\begin{align}\label{923}
			s_\theta(z):&=z-v(\theta) -c^{\text{T}}\left(z\operatorname{Id}_{B_1^o}-H_{B_1^o}(\theta)\right)^{-1}c\nonumber\\
			&=:z-v(\theta)-r_\theta(z).
		\end{align}
		It follows from Schur complement formula that  $d_\theta(z)=0$ if and only if $s_\theta(z)=0$. By \eqref{gb} and $\|c\|\leq 2d\varepsilon$, we have \begin{equation}\label{556}
			|r_\theta(z)|\leq 40d^2\varepsilon^2\delta_0^{-1}\leq \varepsilon.
		\end{equation} Hence, 
		$$|r_\theta(z)|<|z-v(\theta)|=10\delta_0, \ \ z\in \partial D.$$
		It follows that by Rouch\'e theorem, $s_\theta(z)$ has the same number of zeros as $z-v(\theta)$ in $\bar{D}$. Denote by $E_1(\theta)$ the unique zero of $s_\theta(z)$ in $\bar{D}$. Since   $s_\theta(E_1(\theta))=E_1(\theta)-v(\theta)-r_\theta(E_1(\theta))=0$ and because of \eqref{556}, we have 
		\begin{equation}\label{841}
			|E_1(\theta)-v(\theta)|=|r_\theta(E_1(\theta))|\leq \varepsilon.
		\end{equation}
	\end{proof}
	\begin{rem}\label{454}
		Since $v(\theta)$ is  $1$-periodic, $H_{B_1}(\theta)$ is $1$-periodic and self-adjoint, it follows that $E_1(\theta)$ is a $1$-periodic,   real-valued function.
	\end{rem}
	
	\begin{lem}\label{sl}
		With the notation in the proof of Proposition \ref{715}, we have the following approximation, for  $z\in D$,  \begin{equation}\label{app}
			s_\theta(z)=(z-E_1(\theta))(1+O(\delta_0)).
		\end{equation}
	 Define   $$ D':=D(v(\theta),5\delta_0).$$ Then for   $z\in D'$, \begin{equation}\label{De}
			s_\theta'(z)=1+O(\delta_0) 	.
		\end{equation}
		The above  $``O(\delta_0)"$ are understood as  analytic functions of $z$ bounded by  $\delta_0$.
	\end{lem}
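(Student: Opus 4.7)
The plan is to base both estimates on the explicit Schur complement formula $s_\theta(z)=z-v(\theta)-r_\theta(z)$ together with two ingredients already in hand. First, Lemma~\ref{sep} ensures $|z-v(\theta+x\cdot\omega)|\geq 10\delta_0$ for every $z\in\bar D$ and every $x\in B_1^o$, which by the same Neumann series computation as in Proposition~\ref{0g} implies that $G_{B_1^o}^{\theta,z}=(z\operatorname{Id}_{B_1^o}-H_{B_1^o}(\theta))^{-1}$ is holomorphic on $\bar D$ with operator norm bounded by $10\delta_0^{-1}$; consequently $r_\theta(z)=c^{\mathrm T}G_{B_1^o}^{\theta,z}c$ is holomorphic on $\bar D$ and satisfies $|r_\theta(z)|\leq 40d^2\varepsilon^2\delta_0^{-1}\leq\varepsilon$, exactly as in \eqref{556}. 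Second, the defining identity $s_\theta(E_1(\theta))=0$ from Proposition~\ref{715} rewrites as $v(\theta)=E_1(\theta)-r_\theta(E_1(\theta))$.

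To prove \eqref{app} I would substitute this relation into $s_\theta$ to obtain the clean factorization
\begin{equation*}
s_\theta(z)=(z-E_1(\theta))-\bigl(r_\theta(z)-r_\theta(E_1(\theta))\bigr)=(z-E_1(\theta))\bigl(1-f(z)\bigr),
\end{equation*}
where $f(z):=(r_\theta(z)-r_\theta(E_1(\theta)))/(z-E_1(\theta))$ extends to a holomorphic function on $D$ (the apparent singularity at $E_1(\theta)$ is removable). By the maximum principle it suffices to bound $|f|$ on $\partial D$; there $|z-E_1(\theta)|\geq 10\delta_0-\varepsilon\geq 9\delta_0$ by \eqref{841}, and $|r_\theta|\leq\varepsilon$, so $|f(z)|\leq 2\varepsilon/(9\delta_0)$. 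Since $\varepsilon\leq\varepsilon_0=\delta_0^{20}$, this is far smaller than $\delta_0$, giving precisely the $O(\delta_0)$ claimed in \eqref{app}.

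For \eqref{De} I would differentiate to obtain $s_\theta'(z)=1-r_\theta'(z)$ and use Cauchy's integral formula on the circle of radius $5\delta_0$ centered at any $z\in D'$ (which stays inside $\bar D$, since $D'=D(v(\theta),5\delta_0)$ and $D=D(v(\theta),10\delta_0)$) to conclude
\begin{equation*}
|r_\theta'(z)|\leq\frac{\sup_{\bar D}|r_\theta|}{5\delta_0}\leq\frac{\varepsilon}{5\delta_0}\ll\delta_0,
\end{equation*}
which is the desired bound.

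I do not anticipate a genuine obstacle here: the buffer $\varepsilon\leq\delta_0^{20}$ is enormous, and both assertions reduce to a single Cauchy-type estimate on a holomorphic function of modest size. The only point that deserves attention is verifying that $r_\theta$ is holomorphic on the full closed disk $\bar D$ rather than only on the $\delta_0/5$-neighborhood in which Proposition~\ref{0g} is literally stated; this is handled by rerunning the Neumann series with the improved separation $|z-v(\theta+x\cdot\omega)|\geq 10\delta_0$ supplied by Lemma~\ref{sep}, which costs nothing.
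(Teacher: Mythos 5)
Your proof is correct and takes essentially the same route as the paper: your $f(z)$ is exactly $-h_\theta(z)$ where $h_\theta(z)=s_\theta(z)/(z-E_1(\theta))-1$ is the auxiliary function the paper bounds on $\partial D$ via the maximum principle, using the identity $E_1(\theta)-v(\theta)=r_\theta(E_1(\theta))$ from $s_\theta(E_1(\theta))=0$, and your Cauchy estimate for $r_\theta'$ on $D'$ is the paper's gradient estimate. The holomorphy of $r_\theta$ on $\bar D$ that you flag is indeed already secured inside the proof of Proposition~\ref{715} (see \eqref{gb}), so no extra work is needed there.
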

	\begin{proof}
		Since $E_1(\theta)$ is the unique zero of $s_\theta(z)$ in $\bar{D}$, the function $$h_\theta(z):=\frac{s_\theta(z)}{z-E_1(\theta)}-1$$
		is analytic in $D$. Moreover, by \eqref{556} and \eqref{841}, one has  for $z\in \partial D$,
		\begin{align*} 
			|h_\theta(z)|=\left|\frac{E_1(\theta)-v(\theta)-r_\theta(z)}{z-E_1(\theta)}\right|&\leq\left|\frac{E_1(\theta)-v(\theta)}{z-E_1(\theta)}\right| +\left|\frac{r_\theta(z)}{z-E_1(\theta)}\right|\\&\leq \frac{2\varepsilon}{10\delta_0-\varepsilon}\\ 
			&<\delta_0.
		\end{align*}
		Hence by maximum principle, we have  
		$$\sup_{z\in D}|h_\theta(z)|\leq \sup_{z\in \partial D}|h_\theta(z)|<\delta_0$$ and \eqref{app} follows. By \eqref{923}, we have  $s_\theta'(z)=1-r_\theta'(z)$. By   \eqref{556} and Cauchy integral estimate, it follows that 
		$$\sup_{z\in D'} |r_\theta'(z)|\leq \delta_0^{-1}\sup_{z\in D} |r_\theta(z)|\leq \delta_0^{-1}\varepsilon\leq \delta_0.$$ Thus we finish the proof of \eqref{De}.
		
	\end{proof}
	\begin{rem}
		Analogues of Proposition \ref{715} and Lemma \ref{sl} hold true for the left limit operator $H_{B_1}(\theta-0)$. We state them  as a proposition:
		\begin{prop}\label{*2}
			For any $\theta\in \R$, $H_{B_1}(\theta-0)$ has a unique eigenvalue $E_1(\theta-0)$ such that $|E_1(\theta-0)-v(\theta-0)|\leq \varepsilon$. Moreover, since $H_{B_1}(\theta)=H_{B_1}(\theta-0)$ for $\theta\notin \left\{\{-x\cdot\omega\}\right\}_{x\in B_1}$, it follows that $E_1(\theta)=E_1(\theta-0)$ for $\theta\notin \left\{\{-x\cdot\omega\}\right\}_{x\in B_1}$.
		\end{prop}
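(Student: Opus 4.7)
The plan is to mirror the proof of Proposition \ref{715} line by line, replacing every instance of $H_{B_1}(\theta)$ and $v(\theta)$ by their left-limit counterparts $H_{B_1}(\theta-0)$ and $v(\theta-0)$. First I would establish the left-limit analogue of Lemma \ref{sep}: for any $\theta\in\R$ and $x\in B_1^o$, one has $|v(\theta+x\cdot\omega-0)-v(\theta-0)|\geq 20\delta_0$. Since the BLM hypothesis gives Lipschitz monotonicity on every continuity piece of $v$, and the only jumps are the drops by $-1$ at integer phases, the same chain of inequalities
\[
	|v(\theta+x\cdot\omega-0)-v(\theta-0)|\;\geq\; L\,\|x\cdot\omega\|_{\T}\;\geq\; L\gamma l_1^{-\tau}\;\geq\; 20\delta_0
\]
goes through unchanged, where the first inequality uses that on any half-open period $(n,n+1]$ (with left limits at the right endpoint) Lipschitz monotonicity still holds between the two left-limit values.

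Next, working inside the disk $D:=D(v(\theta-0),10\delta_0)$, I would partition $B_1=\{o\}\cup B_1^o$ and form the Schur complement
\[
	s_{\theta-0}(z)=z-v(\theta-0)-c^{\rm T}\bigl(z\operatorname{Id}_{B_1^o}-H_{B_1^o}(\theta-0)\bigr)^{-1}c,
\]
noting that the off-diagonal vector $c$ is independent of $\theta$ (only the hopping entries of $\Delta$), so $\|c\|\leq 2d\varepsilon$ just as before. The separation proved above shows $B_1^o\cap S_0(\theta-0,z)=\emptyset$ for every $z\in\bar D$, and Proposition \ref{0g} applies verbatim to $H_{B_1^o}(\theta-0)$ since its proof uses only that the diagonal entries are at distance $\geq\delta_0/2$ from $z$. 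This gives $\|G_{B_1^o}^{\theta-0,z}\|\leq 10\delta_0^{-1}$, hence $|r_{\theta-0}(z)|\leq\varepsilon<10\delta_0$ on $\partial D$, and Rouch\'e's theorem produces a unique zero $E_1(\theta-0)$ of $s_{\theta-0}$ in $\bar D$ satisfying $|E_1(\theta-0)-v(\theta-0)|\leq\varepsilon$.

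For the final assertion, at any $\theta\notin\{\{-x\cdot\omega\}\}_{x\in B_1}$ every diagonal entry $v(\theta+x\cdot\omega)$ with $x\in B_1$ is a point of continuity of $v$, so $H_{B_1}(\theta)=H_{B_1}(\theta-0)$ and $v(\theta)=v(\theta-0)$. Since the whole construction of $E_1$ depends only on the Dirichlet operator and the base value, the uniqueness clauses in Proposition \ref{715} and in the step above force $E_1(\theta)=E_1(\theta-0)$. The only substantive point in the argument is the left-limit version of Lemma \ref{sep}; all other steps are formal transcriptions, because neither the Neumann series bound of Proposition \ref{0g} nor the Rouch\'e count distinguishes between $H_{B_1}(\theta)$ and its left-continuous modification.
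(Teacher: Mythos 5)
Your proposal is correct and takes essentially the same approach the paper intends: the paper states Proposition~\ref{*2} inside a remark asserting that the analogues of Proposition~\ref{715} and Lemma~\ref{sl} hold for the left-limit operator, and you supply exactly the verification that makes this transcription legitimate, namely the left-limit form of Lemma~\ref{sep}. The one place worth tightening is the phrase ``Lipschitz monotonicity still holds between the two left-limit values'': the cleanest justification is that the inequality $v(\theta_2)-v(\theta_1)\geq L(\theta_2-\theta_1)$ on $[0,1)$ passes to the one-sided limit $\theta_2\to 1^-$, giving $v(\theta_2-0)-v(\theta_1-0)\geq L(\theta_2-\theta_1)$ for all $0<\theta_1\leq\theta_2\leq 1$, after which the reduction $\|x\cdot\omega\|_\T\leq|\beta_2-\beta_1|$ for the reduced phases $\beta_i\in(0,1]$ finishes the estimate; otherwise your argument is a faithful and complete line-by-line transfer.
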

		\begin{lem}\label{H-} For $z\in  D(v(\theta-0),10\delta_0)$, the Schur complement 
			\begin{align}\label{923.}
				s_{\theta-0}(z):&=z-v(\theta-0) -c^{\text{T}}\left(z\operatorname{Id}_{B_1^o}-H_{B_1^o}(\theta-0)\right)^{-1}c\nonumber	\\
				&=z-v(\theta-0)-r_{\theta-0}(z)
			\end{align}
			exists and 	satisfies 
			\begin{align*}
				s_{\theta-0}(z)&=(z-E_1(\theta-0))(1+O(\delta_0)) ,  & z\in  D(v(\theta-0),10\delta_0),\\
				s_{\theta-0}'(z)&=1+O(\delta_0) , & z\in  D(v(\theta-0),5\delta_0).
			\end{align*}
			
		\end{lem}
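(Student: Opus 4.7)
The plan is to carry out the proof of Lemma \ref{H-} in direct parallel to Lemma \ref{sl}, replacing $\theta$ by the left-limit argument $\theta-0$ throughout, since the key operator $H_{B_1}(\theta-0)$ obtained by taking left limits of the diagonal entries $v(\theta+x\cdot\omega)$ at each site $x\in B_1$ remains self-adjoint and agrees with $H_{B_1}(\theta)$ except on a negative rank perturbation supported at the finitely many $x$'s for which $\theta+x\cdot\omega\in\Z$. So the schematic is: establish a left-limit analogue of Lemma \ref{sep}, use it to invoke Proposition \ref{0g} on $B_1^o$ with the shifted potential, then run the Rouché/maximum-principle/gradient-estimate package verbatim.

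First, I would verify that for every $\theta\in\R$ and every $x\in B_1^o$, the separation
$$|v(\theta-0+x\cdot\omega)-v(\theta-0)|\geq 20\delta_0$$
holds. This is obtained by taking the left limit in Lemma \ref{sep} on both sides; the Lipschitz monotonicity bound $L\|x\cdot\omega\|_{\T}\geq L\gamma l_1^{-\tau}$ comes from the BLM condition (which extends to left limits since the jumps of $v$ occur only at integers and are accounted for by the 1-periodic continuation), combined with the Diophantine lower bound on $\|x\cdot\omega\|_\T$. Consequently, for every $z\in\overline{D(v(\theta-0),10\delta_0)}$ and every $x\in B_1^o$ one has $|z-v(\theta-0+x\cdot\omega)|\geq 10\delta_0$, which means $B_1^o\cap S_0(\theta-0,z)=\emptyset$, and Proposition \ref{0g} yields
$$\|G_{B_1^o}^{\theta-0,z}\|\leq 10\delta_0^{-1}.$$

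Next, writing the Schur complement as in \eqref{923.}, the Neumann-series estimate of Proposition \ref{715} gives $|r_{\theta-0}(z)|\leq 40d^2\varepsilon^2\delta_0^{-1}\leq\varepsilon$ for $z\in \overline{D(v(\theta-0),10\delta_0)}$. Since $E_1(\theta-0)$ (constructed in Proposition \ref{*2}) is the unique zero of $s_{\theta-0}$ in this disc and satisfies $|E_1(\theta-0)-v(\theta-0)|\leq\varepsilon$, the function
$$h_{\theta-0}(z):=\frac{s_{\theta-0}(z)}{z-E_1(\theta-0)}-1=\frac{E_1(\theta-0)-v(\theta-0)-r_{\theta-0}(z)}{z-E_1(\theta-0)}$$
is holomorphic in $D(v(\theta-0),10\delta_0)$ and on the boundary is bounded by $\tfrac{2\varepsilon}{10\delta_0-\varepsilon}<\delta_0$. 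The maximum modulus principle then yields the first claimed approximation. For the derivative, I differentiate \eqref{923.} to get $s_{\theta-0}'(z)=1-r_{\theta-0}'(z)$; Cauchy's gradient estimate on the disc $D(v(\theta-0),5\delta_0)\subset D(v(\theta-0),10\delta_0)$ bounds $|r_{\theta-0}'(z)|\leq\delta_0^{-1}\varepsilon\leq\delta_0$, giving \eqref{De} in the left-limit form.

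The only point requiring mild care, and the closest thing to an obstacle, is confirming that Proposition \ref{0g} applies to $H_{B_1^o}(\theta-0)$: its proof needs only the diagonal lower bound $|v(\theta-0+x\cdot\omega)-z|>\delta_0/2$ and the Neumann series, both of which pass through left limits without change since the jumps at sites $x\in B_1^o$ with $\theta+x\cdot\omega\in\Z$ only produce negative rank-one perturbations to the diagonal that, combined with the BLM inequality, still respect the separation established above. Once this is in hand, the rest of the argument is a word-for-word repetition of Lemma \ref{sl}.
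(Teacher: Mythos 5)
Your proof is correct and takes essentially the same route the paper intends: the paper states Lemma \ref{H-} as an unproved analogue of Lemma \ref{sl} (see the surrounding Remark), and you supply exactly the missing verification that each ingredient — the separation of Lemma \ref{sep}, Proposition \ref{0g} on $B_1^o$, the Rouché/maximum-principle/Cauchy-estimate package — passes through the left limit $\theta\mapsto\theta-0$ unchanged. One small imprecision worth noting: since $\omega$ is Diophantine, at most one $x\in B_1$ can satisfy $\theta+x\cdot\omega\in\Z$ for a fixed $\theta$, so the perturbation you describe is in fact at most rank one rather than "finitely many" sites, but this does not affect the argument.
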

		
	\end{rem}

	\subsubsection{Verification of Lipschitz monotonicity property for $E_1(\theta)$}
	In this part, we prove that $E_1(\theta)$ inherits the Lipschitz monotonicity property of $v(\theta)$. 
	\begin{prop}\label{814}
		Let $E_1(\theta)$ be the eigenvalue of $H_{B_1}(\theta)$ constructed in Proposition \ref{715}.  Then $E_1(\theta)$ satisfies the following two properties{\rm :} 
		\begin{itemize}	\item On each small interval  $[\alpha_k,\alpha_{k+1})$, $E_1(\theta)$ coincides with exactly one branch of $\lambda_i(\theta)$, and hence is continuous and satisfies the local Lipschitz monotonicity property  on $[\alpha_k,\alpha_{k+1})$ by \eqref{mne}{\rm :}
			\begin{equation*}
				E_{1}(\theta_2)-E_{1}(\theta_1)\geq L(\theta_2-\theta_1), \ \ \alpha_k\leq \theta_1\leq \theta_2<\alpha_{k+1}.
			\end{equation*} 
			\item At  $\alpha_k$ $(2\leq k\leq |B_1|)$, the   discontinuities of $H_{B_1}(\theta)$,  $E_1(\theta)$ has  non-negative ``jumps''{\rm :}
			$$E_1(\alpha_k)\geq E_1(\alpha_k-0). $$ 
		\end{itemize}
		As a consequence, $E_1(\theta)$ satisfies the uniform Lipschitz monotonicity property{\rm :}	$$E_1(\theta_2)-E_1(\theta_1)\geq L(\theta_2-\theta_1), \ \ 0\leq\theta_1\leq \theta_2<1.$$
	\end{prop}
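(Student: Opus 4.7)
The plan is to dispatch the two bullets separately and then telescope. For the first bullet, $H_{B_1}(\theta)$ is Lipschitz in $\theta$ on each interval $[\alpha_k,\alpha_{k+1})$, so its eigenvalues can be listed as continuous branches $\lambda_i(\theta)$ on this interval; likewise $\theta\mapsto v(\theta)$ is continuous there, so the disk $D(v(\theta),10\delta_0)$ moves continuously. By Proposition \ref{715} there is a unique eigenvalue of $H_{B_1}(\theta)$ in this disk, separated from the rest by more than $10\delta_0$, and a standard spectral-continuity argument then forces $E_1(\theta)$ to agree with exactly one branch $\lambda_i$ throughout $[\alpha_k,\alpha_{k+1})$; the local Lipschitz monotonicity \eqref{mne} transfers verbatim.

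The second bullet is the delicate point, and I would attack it by comparing the two Schur complements \eqref{923} and \eqref{923.}. Fix $k\geq 2$ and let $x_k\in B_1^o$ be the unique site with $\{-x_k\cdot\omega\}=\alpha_k$. Since $\alpha_k\in(0,1)$, $v$ is continuous at $\alpha_k$, so $v(\alpha_k)=v(\alpha_k-0)$ and the two relevant disks coincide; the only change is the rank-one drop
\[
H_{B_1^o}(\alpha_k)=H_{B_1^o}(\alpha_k-0)-\bm{e}_{x_k}\bm{e}_{x_k}^{\text{T}}.
\]
Sherman--Morrison applied to $A:=z\operatorname{Id}_{B_1^o}-H_{B_1^o}(\alpha_k-0)$ then yields
\[
s_{\alpha_k}(z)-s_{\alpha_k-0}(z)=r_{\alpha_k-0}(z)-r_{\alpha_k}(z)=\frac{\bigl(c^{\text{T}}A^{-1}\bm{e}_{x_k}\bigr)^2}{1+G_{B_1^o}^{\alpha_k-0,z}(x_k,x_k)}.
\]
For real $z$ the numerator is $\geq 0$, so the task reduces to showing that the denominator is strictly negative on $D(v(\alpha_k),10\delta_0)\cap\R$.

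The structural identity $v(\alpha_k-0+x_k\cdot\omega)=v(1-0)=1$ does the job: the diagonal potential at $x_k$ is $1$, and Lemma \ref{sep} combined with $|z-v(\alpha_k)|<10\delta_0$ yields $|1-z|\geq 10\delta_0$; analogously every other site of $B_1^o$ stays $\geq 10\delta_0$ away from $z$, so $B_1^o\cap S_0(\alpha_k-0,z)=\emptyset$ and Proposition \ref{0g} together with the Neumann expansion give
\[
G_{B_1^o}^{\alpha_k-0,z}(x_k,x_k)=\frac{1}{z-1}+O\!\left(\varepsilon\delta_0^{-2}\right).
\]
The leading term is $\leq -(10\delta_0)^{-1}$, the error is $\ll\delta_0^{-1}$ since $\delta_0=\varepsilon_0^{1/20}\gg\varepsilon$, and hence $1+G_{B_1^o}^{\alpha_k-0,z}(x_k,x_k)<0$, so $s_{\alpha_k}(z)\leq s_{\alpha_k-0}(z)$ on the real disk.

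Evaluating at $z=E_1(\alpha_k-0)\in D'$, where $s_{\alpha_k-0}$ vanishes, gives $s_{\alpha_k}(E_1(\alpha_k-0))\leq 0$; since Lemma \ref{sl} shows $s_{\alpha_k}'=1+O(\delta_0)>0$ on $D'$ and $E_1(\alpha_k)$ is its unique zero, this forces $E_1(\alpha_k-0)\leq E_1(\alpha_k)$. Finally, for $0\leq\theta_1\leq\theta_2<1$ I split $[\theta_1,\theta_2]$ at the $\alpha_k$'s it contains; the first bullet provides rate-$L$ growth on each continuity piece, the second contributes only non-negative jumps at the $\alpha_k$'s with $k\geq 2$, and telescoping yields $E_1(\theta_2)-E_1(\theta_1)\geq L(\theta_2-\theta_1)$. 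The main obstacle throughout is pinning down the sign of the Schur-complement difference at the discontinuities; the crux is that the large left value $v(1-0)=1$ makes the diagonal Green's-function entry at $x_k$ negative and of order $\delta_0^{-1}$, overwhelming the $+1$ in the denominator.
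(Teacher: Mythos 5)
Your overall strategy matches the paper's for the first bullet and for the telescoping, and for the second bullet (the non-negative jump) you take a genuinely different and somewhat cleaner route. The paper writes $r_{\alpha_k-0}(z)-r_{\alpha_k}(z)$ via a resolvent identity and Cramer's rule as $(c^{\rm T}b(z))^2$ divided by the product $\det(z\operatorname{Id}_{B_1^o}-H_{B_1^o}(\alpha_k-0))\cdot\det(z\operatorname{Id}_{B_1^o}-H_{B_1^o}(\alpha_k))$, then pivots at $x_k$ with a further Schur complement to show one factor is positive and the other negative. You instead apply Sherman--Morrison to the rank-one update, collapsing the whole sign question to the scalar denominator $1+G_{B_1^o}^{\alpha_k-0,z}(x_k,x_k)$. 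The two are in fact equivalent: a short Cramer computation shows this denominator equals $\det\bigl(z\operatorname{Id}_{B_1^o}-H_{B_1^o}(\alpha_k)\bigr)/\det\bigl(z\operatorname{Id}_{B_1^o}-H_{B_1^o}(\alpha_k-0)\bigr)$, so you are proving the same negativity, but the bookkeeping in your version is lighter. This is the argument that then gets iterated in Lemma \ref{1448} at all scales, so it is worth knowing both phrasings exist.

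There is, however, a directional slip in your key estimate. From $|1-z|\geq 10\delta_0$ and $z<1$ you can only conclude $1/(z-1)\geq -(10\delta_0)^{-1}$, the opposite of your ``$\leq -(10\delta_0)^{-1}$''. Fortunately the argument does not need a large negative leading term; it needs only $1/(z-1)<-1$. This holds because Lemma \ref{sep} applied on the other side (with $v(\alpha_k+x_k\cdot\omega)=v(0)=0$) forces $v(\alpha_k)\geq 20\delta_0$, hence $z\in(10\delta_0,1-10\delta_0)$, so $|z-1|<1$ and $1/(z-1)<-1$. Then $1+G_{B_1^o}^{\alpha_k-0,z}(x_k,x_k)=z/(z-1)+O(\varepsilon\delta_0^{-2})<-10\delta_0+O(\varepsilon\delta_0^{-2})<0$, and the rest of your Sherman--Morrison computation, together with $s_{\alpha_k}'=1+O(\delta_0)>0$ on $D'$, gives the non-negative jump. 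With that one bound corrected, your proof is sound.
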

	\begin{proof}
		To prove the first item, we assume that $E_1(\alpha_k)=\lambda_i(\alpha_k)$ for some branch $\lambda_i(\theta)$. Define the set $X=\{\theta\in[\alpha_k,\alpha_{k+1}) :\ E_1(\theta)=\lambda_i(\theta) \}$. We claim that   $X=[\alpha_k,\alpha_{k+1})$. By the  definition,  $\alpha_k\in X$, thus $X\neq\emptyset$. Assume there is  a sequence  $\{\theta_j\}_{j=1}^{\infty}\subset X$ such that $\theta_j\to\theta_\infty\in [\alpha_k,\alpha_{k+1})$. Hence by Proposition \ref{715}, one has 
		\begin{equation}\label{721}
			|\lambda_i(\theta_j)-v(\theta_j)|=|E_1(\theta_j)-v(\theta_j)|\leq \varepsilon.
		\end{equation}
		Since $\lambda_i(\theta)$ and $v(\theta)$ are continuous on $[\alpha_k,\alpha_{k+1})$, taking a limit of \eqref{721} yields $|\lambda_i(\theta_\infty)-v(\theta_\infty)|\leq \varepsilon$. By the definition of $E_1(\theta_\infty)$ and the uniqueness of eigenvalue of $H_{B_1}(\theta_\infty)$ in the energy interval $[v(\theta_\infty)-10\delta_0,v(\theta_\infty)+10\delta_0]$, we get $E_1(\theta_\infty)=\lambda_i(\theta_\infty)$ and hence $X$ is closed in $[\alpha_k,\alpha_{k+1})$.
		Assume that $\theta_0\in X$. Then by Proposition \ref{715},  for any $j\neq i$, one has 
		\begin{equation}\label{858}
			|\lambda_j(\theta_0)-v(\theta_0)|>10\delta_0.
		\end{equation}
		By the continuity of $\lambda_j(\theta)$ and $v(\theta)$ in $[\alpha_k,\alpha_{k+1})$, the inequality \eqref{858} can be extended to  a neighborhood   (denoted by $U$) of $\theta_0$ in $[\alpha_k,\alpha_{k+1})$. Hence by the definition of $E_1(\theta)$, $\lambda_j(\theta)\neq E_1(\theta)$ for any $\theta\in U$ and  $j\neq i$. So it must be that $\lambda_i(\theta)= E_1(\theta)$ in $U$. Hence $X$ is open in $[\alpha_k,\alpha_{k+1})$. Thus we finish the proof of the first item.
		For the second item, 	since $\alpha_k\neq 0$, there exists some $x\in B_1$ with  $x\neq o$ such   that $\alpha_k=\beta_x(=\{-x\cdot\omega\})$. Since $\beta_x\neq 0$, we have  $v(\beta_x)=v(\beta_x-0)$.
		It suffices to prove \begin{equation}\label{1052}
			E_1(\beta_x)\geq E_1(\beta_x-0).
		\end{equation} Recalling the notation $s_\theta(z)$, $r_\theta(z)$, $s_{\theta-0}(z)$, $r_{\theta-0}(z)$ (c.f. \eqref{923}, \eqref{923.}), 
		we will restrict $z$ to the interval $I:=(v(\beta_x)-5\delta_0,v(\beta_x)+5\delta_0)$ in the following discussion  so that \eqref{De} holds and  all the functions of $z$ will be real-valued.  
		If we can  prove \begin{equation}\label{1055}
			s_{\beta_x}(z)\leq s_{\beta_x-0}(z),
		\end{equation}
		then 
		$$s_{\beta_x}(E_1(\beta_x-0))\leq s_{\beta_x-0}(E_1(\beta_x-0))=0=s_{\beta_x}(E_1(\beta_x)),$$   and \eqref{1052} will follow from the above inequality and the monotonicity of  $s_{\beta_x}(z)$ (since by \eqref{De}, $s_{\beta_x}'(z)\approx1$). The remainder of the proof  aims at verifying \eqref{1055}. Recalling \eqref{923}, \eqref{923.}, since $v(\beta_x)=v(\beta_x-0)$,  we compute 
		\begin{align}
			&s_{\beta_x}(z)-s_{\beta_x-0}(z)\nonumber\\ =&r_{\beta_x-0}(z)-r_{\beta_x}(z)\nonumber\\
			=&c^{\text{T}}\left[ \left(z\operatorname{Id}_{B_1^o}-H_{B_1^o}(\beta_x-0)\right)^{-1}- \left(z\operatorname{Id}_{B_1^o}-H_{B_1^o}(\beta_x)\right)^{-1}\right]c\nonumber\\
			=&c^{\text{T}}\left[ \left(z\operatorname{Id}_{B_1^o}-H_{B_1^o}(\beta_x-0)\right)^{-1}{\bm e}_x {\bm e}_x^{\rm T}\left(z\operatorname{Id}_{B_1^o}-H_{B_1^o}(\beta_x)\right)^{-1}\right]c. \label{1210}
		\end{align}
		On the last line of the above equation  we use the resolvent identity and the  equation $$H_{B_1^o}(\beta_x-0)-H_{B_1^o}(\beta_x)={\bm e}_x {\bm e}_x^{\rm T}.$$
	Cramer's rule implies 
		$$\left(z\operatorname{Id}_{B_1^o}-H_{B_1^o}(\beta_x-0)\right)^{-1}=\frac{1}{\operatorname{det}\left(z\operatorname{Id}_{B_1^o}-H_{B_1^o}(\beta_x-0)\right)}\left(z\operatorname{Id}_{B_1^o}-H_{B_1^o}(\beta_x-0)\right)^{\#}$$
		and $$\left(z\operatorname{Id}_{B_1^o}-H_{B_1^o}(\beta_x)\right)^{-1}=\frac{1}{\operatorname{det}\left(z\operatorname{Id}_{B_1^o}-H_{B_1^o}(\beta_x)\right)}\left(z\operatorname{Id}_{B_1^o}-H_{B_1^o}(\beta_x)\right)^{\#}.$$
		A simple computation shows that 
		\begin{equation}\label{1437}
			\left(z\operatorname{Id}_{B_1^o}-H_{B_1^o}(\beta_x-0)\right)^{\#}{\bm e}_x=\left(z\operatorname{Id}_{B_1^o}-H_{B_1^o}(\beta_x)\right)^{\#}{\bm e}_x.
		\end{equation}
		We denote by $b(z)$ the vector in \eqref{1437}.
		Hence 
		\begin{equation}\label{1528}
			\eqref{1210}=\frac{1}{\operatorname{det}\left(z\operatorname{Id}_{B_1^o}-H_{B_1^o}(\beta_x-0)\right)}\frac{1}{\operatorname{det}\left(z\operatorname{Id}_{B_1^o}-H_{B_1^o}(\beta_x)\right)}\left(c^{\rm T} b(z)\right)^2.
		\end{equation}
		From \eqref{1528}, it follows that  the sign of $s_{\beta_x}(z)-s_{\beta_x-0}(z)$ is the same as the sign of  the product $$\operatorname{det}\left(z\operatorname{Id}_{B_1^o}-H_{B_1^o}(\beta_x-0)\right)\operatorname{det}\left(z\operatorname{Id}_{B_1^o}-H_{B_1^o}(\beta_x)\right).$$
		Denote by $B_1^1$ the set $B_1^o\setminus \{x\}$. 
		Since $$\langle H_{B_1^o}(\beta_x-0){\bm e}_x,{\bm e}_x\rangle=v(1-0)=1, \  \ \langle H_{B_1^o}(\beta_x){\bm e}_x,{\bm e}_x\rangle=v(0)=0$$ and  $H_{B_1^1}(\beta_x -0)=H_{B_1^1}(\beta_x )$, we  can write 
		$$z\operatorname{Id}_{B_1^o}-H_{B_1^o}(\beta_x-0)=\begin{pmatrix}
			z-1 & \tilde{c}^{\text{T}} \\
			\tilde{c} & z\operatorname{Id}_{B_1^1}-H_{B_1^1}(\beta_x)
		\end{pmatrix}
		$$
		and 
		$$z\operatorname{Id}_{B_1^o}-H_{B_1^o}(\beta_x)=\begin{pmatrix}
			z-0 & \tilde{c}^{\text{T}} \\
			\tilde{c} & z\operatorname{Id}_{B_1^1}-H_{B_1^1}(\beta_x)
		\end{pmatrix}.
		$$
		By Lemma \ref{sep} and $|z-v(\beta_x)|<5\delta_0$, it follows that  $B_1^1\cap S_0(\beta_x,z)=\emptyset$. By Proposition \ref{0g}, we get the resolvent bound 
		\begin{equation}\label{1623}
			\left \|\left( z\operatorname{Id}_{B_1^1}-H_{B_1^1}(\beta_x)  \right)^{-1}\right\|\leq 10\delta_0^{-1}.
		\end{equation}
		Hence by Schur complement formula, 
		\begin{align}\label{351}
			&\nonumber	\operatorname{det}\left(z\operatorname{Id}_{B_1^o}-H_{B_1^o}(\beta_x-0)\right)\\
			=&
			\left [z-1-\tilde{c}^{\rm T}\left(z\operatorname{Id}_{B_1^1}-H_{B_1^1}(\beta_x)\right)^{-1}\tilde{c}\right ]\operatorname{det}\left(z\operatorname{Id}_{B_1^1}-H_{B_1^1}(\beta_x)\right).
		\end{align}
		By Lemma \ref{sep}, we have $$|v(\beta_x)-v(1-0)|=|v(\beta_x)-1|\geq 20\delta_0.$$ Since $v(\beta_x)\leq v(1-0)=1$, it follows that $v(\beta_x)-1\leq -20\delta_0$.
		Now \eqref{1623} combined  with $\|\tilde{c}\|\leq 2d\varepsilon$ and $|z-v(\beta_x)|\leq 5\delta_0$ implies  
		\begin{equation}\label{352}
			z-1-\tilde{c}^{\rm T}\left(z\operatorname{Id}_{B_1^1}-H_{B_1^1}(\beta_x)\right)^{-1}\tilde{c}\leq v(\beta_x) -1+10\delta_0\leq -20\delta_0+10\delta_0<0. 
		\end{equation}
		Similarly,\begin{align}\label{353}
			\nonumber &	 \operatorname{det}\left(z\operatorname{Id}_{B_1^o}-H_{B_1^o}(\beta_x)\right)\\
			=&\left [z-0-\tilde{c}^{\rm T}\left(z\operatorname{Id}_{B_1^1}-H_{B_1^1}(\beta_x)\right)^{-1}\tilde{c}\right ]\operatorname{det}\left(z\operatorname{Id}                                                           _{B_1^1}-H_{B_1^1}(\beta_x)\right). 
		\end{align}
		By Lemma \ref{sep}, we have $$|v(\beta_x)-v(0)|=|v(\beta_x)-0|\geq 20\delta_0.$$ 
		Since $v(\beta_x)\geq v(0)=0$, it follows that $v(\beta_x)-0\geq 20\delta_0$. By \eqref{1623}, $\|\tilde{c}\|\leq 2d\varepsilon$, $|z-v(\beta_x)|\leq 5\delta_0$, we have 
		\begin{equation}\label{354}
			z-0-\tilde{c}^{\rm T}\left(z\operatorname{Id}_{B_1^1}-H_{B_1^1}(\beta_x)\right)^{-1}\tilde{c}\geq  v(\beta_x) -0-10\delta_0\geq 20\delta_0-10\delta_0>0. 
		\end{equation}
		Combing \eqref{351}, \eqref{352}, \eqref{353} and \eqref{354}, we get 
		$$\operatorname{det}\left(z\operatorname{Id}_{B_1^o}-H_{B_1^o}(\beta_x-0)\right)\operatorname{det}\left(z\operatorname{Id}_{B_1^o}-H_{B_1^o}(\beta_x)\right)<0$$ and hence 
		\eqref{1055} holds true  by \eqref{1528}. Thus we finish the proof of \eqref{1052} and the second item.
	\end{proof}
	\begin{rem}\label{.}
		Another  proof of Proposition \ref{814} depends on eigenvalue perturbation theory. The separation property of the  diagonal  elements  \begin{equation}\label{842}
			\min_{\substack{x,y\in B_1\\ x\neq y}}|v(\theta+x\cdot\omega)-v(\theta+y\cdot\omega)| \geq L\gamma (2dl_1)^{-\tau}>10d\varepsilon
		\end{equation} yields  the separation of the Rellich functions $\lambda_i(\theta)$ after adding an $\varepsilon$-perturbative Laplacian.
		Thus one can verify that $E_1(\theta)$ coincides with $\lambda_k(\theta)$ (the $k$-branch of Rellich functions in non-decreasing order) on $[\alpha_k,\alpha_{k+1})$. The non-negative ``jumps'' at $\alpha_k$ follow from a min-max principle argument for rank one perturbation. 
		However, this easier idea seems difficult to   be  extended to large scale case since \eqref{842} will be violated as the size of the block increases.   
	\end{rem}
	\subsubsection{Green's function estimates for $1$-good sets}
	In this part, we relate the resonances to $E_1(\theta)$ and establish Green's function estimates for $1$-good sets. 
	For $p\in \Z^d$, we denote  $$B_1(p):=B_1+p.$$  
	\begin{defn}
		Fix $\theta\in \R$, $E\in \C$. Let $\delta_1:=e^{-l_1^\frac{2}{3}}$. Define the first scale resonant points set (related to $(\theta,E)$) as  $$S_1(\theta,E):=\left\{p\in \Z^d:\ |E_1 (\theta+p\cdot \omega)-E|<\delta_1\right\}.$$
		Let $\Lambda\subset\Z^d$ be a finite set. Related to $(\theta,E)$,  we say
		\begin{itemize}
			\item $\Lambda$ is $1$-nonresonant  if $\Lambda\cap S_1(\theta,E)=\emptyset$.
			\item $\Lambda$ is $1$-regular if $p\in \Lambda\cap S_0(\theta,E)$ $\Rightarrow$ $B_1(p)\subset\Lambda$.
			\item  $\Lambda$ is $1$-good if it is both $1$-nonresonant and $1$-regular.
		\end{itemize}
		
	\end{defn}
	\begin{lem}\label{1727}
		Fix $\theta\in \R$, $E\in \C$. If $p\in S_0(\theta,E)$ and $p \notin S_1(\theta,E)$, then for any $E^*\in \C$ such that $|E-E^*|<\delta_1/2$, we have 
		\begin{align}\label{1l2}
			\|G_{B_1(p)} ^{\theta,E^*}\|&\leq10\delta_1^{-1}, \\
			\label{1ex}
			|G_{B_1(p)}^{\theta,E^*}(x,y)|&\leq e^{-\gamma_0(1-l_1^{-\frac{1}{50}})\|x-y\|_1}, \ \|x-y\|_1\geq l_1^{\frac{4}{5}}.
		\end{align}
	\end{lem}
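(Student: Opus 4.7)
The plan is to reduce everything, via the translation identity \eqref{1538}, to analyzing $H_{B_1}(\theta')$ with $\theta':=\theta+p\cdot\omega$. The hypotheses then become $|v(\theta')-E|<\delta_0$ (from $p\in S_0(\theta,E)$) and $|E_1(\theta')-E|\geq\delta_1$ (from $p\notin S_1(\theta,E)$). It suffices to establish both bounds for $G_{B_1}^{\theta',E^*}$, since $B_1(p)=B_1+p$ and the translation is an isometry on matrix entries.

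For the operator norm bound \eqref{1l2}, I would use self-adjointness of $H_{B_1}(\theta')$ to write $\|G_{B_1}^{\theta',E^*}\|=1/\mathrm{dist}(E^*,\mathrm{Spec}(H_{B_1}(\theta')))$. By Proposition \ref{715}, the spectrum of $H_{B_1}(\theta')$ consists of $E_1(\theta')$, with $|E_1(\theta')-v(\theta')|\leq\varepsilon$, and other eigenvalues $\hat E$ all satisfying $|\hat E-v(\theta')|>10\delta_0$. The latter are easily far from $E^*$: $|\hat E-E^*|\geq 10\delta_0-|v(\theta')-E|-|E-E^*|>10\delta_0-\delta_0-\delta_1/2\gg\delta_1$. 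For $E_1(\theta')$ itself, $|E_1(\theta')-E^*|\geq|E_1(\theta')-E|-|E-E^*|\geq\delta_1-\delta_1/2=\delta_1/2$. Hence $\mathrm{dist}(E^*,\mathrm{Spec}(H_{B_1}(\theta')))\geq\delta_1/2$, which yields \eqref{1l2}.

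For the off-diagonal decay \eqref{1ex}, I would perform the same block decomposition $B_1=\{o\}\cup B_1^o$ used in the proof of Proposition \ref{715}, and apply the block-inverse formula. First, note that by Lemma \ref{sep}, $B_1^o\cap S_0(\theta',E)=\emptyset$, and since $|E-E^*|<\delta_1/2<\delta_0/5$, Proposition \ref{0g} applies to give both $\|G_{B_1^o}^{\theta',E^*}\|\leq 10\delta_0^{-1}$ and the $\gamma_0$-exponential decay of its entries. The Schur complement governing the $(o,o)$-block equals $s_{\theta'}(E^*)$, which by Lemma \ref{sl} factors as $(E^*-E_1(\theta'))(1+O(\delta_0))$, so $|s_{\theta'}(E^*)|\geq\delta_1/4$. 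The block-inverse formula then expresses every entry of $G_{B_1}^{\theta',E^*}$ as either (i) an entry of $G_{B_1^o}^{\theta',E^*}$ (decaying like $e^{-\gamma_0\|x-y\|_1}$), or (ii) a product of one or two factors $c^{\mathrm T}G_{B_1^o}^{\theta',E^*}\bm e_\cdot$ divided by $s_{\theta'}(E^*)$, where $\|c\|\leq 2d\varepsilon$ and $c$ is supported adjacent to $o$. Using the exponential decay of $G_{B_1^o}^{\theta',E^*}$ together with the triangle inequality, each such product is bounded by $C\varepsilon\delta_1^{-1}e^{-\gamma_0\|x-y\|_1}$ with $C=C(d)$.

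The only delicate point is absorbing the prefactor $\delta_1^{-1}=e^{l_1^{2/3}}$ into the exponential at the cost of the tiny rate loss $l_1^{-1/50}$. For $\|x-y\|_1\geq l_1^{4/5}$ one has $\gamma_0 l_1^{-1/50}\|x-y\|_1\geq\gamma_0 l_1^{4/5-1/50}=\gamma_0 l_1^{39/50}$, and since $\gamma_0\sim|\ln\varepsilon|\gtrsim|\ln\delta_0|$ while $l_1^{39/50}$ dominates $l_1^{2/3}=|\ln\delta_1|$, this rate loss indeed swallows the $\delta_1^{-1}$ factor (as well as the constant $C(d)$ and the factor $\varepsilon$). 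This purely arithmetic check between the scales $l_1$, $\gamma_0$, $\delta_0$, $\delta_1$ is where one must be careful, but it follows from the choice $l_1=[|\ln\delta_0|^4]$ once $\varepsilon_0$ is taken small enough. The resulting bound is exactly \eqref{1ex}.
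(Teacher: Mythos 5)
Your proof is correct and follows essentially the same route as the paper. The operator-norm bound is identical (reduce by translation to $H_{B_1}(\theta')$, use Proposition \ref{715} to locate the spectrum, conclude $\operatorname{dist}(E^*,\operatorname{Spec})\geq\delta_1/2$). For the off-diagonal decay, the paper iterates the resolvent identity across $\partial_{B_1(p)}B_1^o(p)$ and exploits that $\partial^+_{B_1(p)}B_1^o(p)=\{p\}$ is a singleton, with a case split on $\|y-p\|_1$; your block-inverse/Schur-complement formulation is the same computation written in closed form, since removing the single point $p$ is exactly what makes the Schur complement a scalar. Both arguments ultimately rest on Proposition \ref{0g} applied to $B_1^o(p)$ and a lower bound of order $\delta_1$ on the Schur complement (you get it from Lemma \ref{sl}, the paper from the operator norm bound it just proved), and both absorb the $\delta_1^{-1}$ and $\delta_0^{-1}$ prefactors into the $l_1^{-1/50}$ rate loss via the arithmetic relation $\gamma_0\, l_1^{39/50}\gg l_1^{2/3}$, which you verify correctly.
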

	\begin{proof}
		Since $p\in S_0(\theta,E)$, by definition we have $|v(\theta+p\cdot \omega )-E|\leq \delta_0$. Thus $$ |v(\theta+p\cdot \omega )-E^*|\leq |v(\theta+p\cdot \omega )-E| +|E-E^*|\leq \delta_0+\delta_1/2.$$ By Proposition \ref{715} and translation property, $E_1(\theta+p\cdot \theta )$ is the unique eigenvalue of $H_{B_1(p)}(\theta)$ in $[v(\theta+p\cdot \omega )-10\delta_0,v(\theta+p\cdot \omega )+10\delta_0]$ and moreover 
		\begin{align*}\
			|E_1(\theta+p\cdot \omega )-E^*|&\leq  |v(\theta+p\cdot \omega )-E^*|+|E_1(\theta+p\cdot \omega )-v(\theta+p\cdot \omega )|\\&\leq \delta_0+\delta_1/2+\varepsilon\\&<2\delta_0.
		\end{align*}
		Hence we get 
		$$\operatorname{dist}\left (\operatorname{Spec}(H_{B_1(p)}(\theta )),E^* \right )=|E_1(\theta+p\cdot \omega )-E^*|.$$
		Since $p \notin S_1(\theta,E)$, we have $$|E_1(\theta+p\cdot \omega )-E^*|\geq |E_1(\theta+p\cdot \omega )-E|-|E-E^*|\geq \delta_1-\delta_1/2\geq \delta_1/2.$$
		Since $H_{B_1(p)}(\theta )$ is self-adjoint,  we get  $$\|G_{B_1(p)} ^{\theta,E^*}\|=\frac{1}{\operatorname{dist}\left (\operatorname{Spec}(H_{B_1(p)}(\theta )),E^* \right )}=\frac{1}{|E_1(\theta+p\cdot \omega )-E^*|}\leq 2\delta_1^{-1}.$$ Thus we finish the proof of \eqref{1l2}.
		Denote  $B_1^o(p):=B_1(p)\setminus\{p\}$. By Lemma \ref{sep}, it follows that for any $x\in B_1^o(p)$, we have\begin{align*}
			|v(\theta+x\cdot\omega)-E|&\geq|v(\theta+x\cdot\omega)-v(\theta+p\cdot\omega)| -|v(\theta+p\cdot\omega)-E|\\&\geq 20\delta_0-\delta_0\\&>\delta_0.
		\end{align*} Thus $B_1^o(p)\cap S_0(\theta,E)=\emptyset$. Hence by Proposition \ref{0g}, we have 
		\begin{align}
			\|G_{B_1^o(p)} ^{\theta,E^*}\|&\leq10\delta_0^{-1}, \nonumber\\
			\label{1200}
			|G_{B_1^o(p)}^{\theta,E^*}(x,y)|&\leq e^{-\gamma_0\|x-y\|_1}, \ \|x-y\|_1\geq 1.
		\end{align}  Let $x,y\in B_1(p)$ such that $\|x-y\|_1\geq l_1^{\frac{4}{5}}$. By self-adjointness, we may assume $\|x-p\|_1\geq \frac{1}{2}l_1^\frac{4}{5}$.
	Recall the definition of $\partial$ and $\partial^\pm$. 	By the resolvent identity, one has 
		\begin{align}\label{1156}
			G_{B_1(p)}^{\theta,E^*}(x,y)=\chi_{B_1^o(p)}(y)G_{B_1^o(p)}^{\theta,E^*}(x,y)-\varepsilon\sum_{(w,w')\in \partial_{B_1(p)}B_1^o(p)} G_{B_1^o(p)}^{\theta,E^*}(x,w)G_{B_1(p)}^{\theta,E^*}(w',y).
		\end{align}
			Since $w \in  \partial^-_{B_1(p)}B_1^o(p)$, $w' \in  \partial^+_{B_1(p)}B_1^o(p)=\{p\}$, one has 
			$$\|x-w\|_1\geq \|x-p\|_1-\|w-p\|_1\geq \|x-p\|_1-1.$$ Hence by  \eqref{1200} and  \eqref{1156}, we have 
			\begin{equation}\label{1205}
				|G_{B_1(p)}^{\theta,E^*}(x,y)|\leq e^{-\gamma_0\|x-y\|_1}+e^{-\gamma_0(\|x-p\|_1-1)}|G_{B_1(p)}^{\theta,E^*}(p,y)|.
			\end{equation}
			\begin{itemize}
				\item 
				If $\|y-p\|_1\leq l_1^\frac{3}{4}$, then    \begin{align*}
					\|x-p\|_1-1&\geq \|x-y\|_1-\|y-p\|_1-1\\
					&\geq \left (1-(\l_1^\frac{3}{4}+1) l_1^{-\frac{4}{5}}\right) \|x-y\|_1\\
					&\geq \left (1- 2l_1^{-\frac{1}{20}}\right) \|x-y\|_1.
				\end{align*}
				Bounding the term $|G_{B_1(p)}^{\theta,E^*}(p,y)|$ in \eqref{1205} by $\|G_{B_1(p)}^{\theta,E^*}\|\leq 10\delta_1^{-1}=10e^{l_1^\frac{2}{3}}$, we get \begin{align*}
					\eqref{1205}&\leq e^{-\gamma_0\|x-y\|_1}+10e^{-\gamma_0(1- 2l_1^{-\frac{1}{20}})\|x-y\|_1}e^{l_1^\frac{2}{3}}  \\
					&\leq e^{-\gamma_0(1- 2l_1^{-\frac{1}{20}}-(l_1^\frac{2}{3}+10) l_1^{-\frac{4}{5}})\|x-y\|_1}\\
					&\leq e^{-\gamma_0(1-l_1^{-\frac{1}{50}})\|x-y\|_1}.
				\end{align*}
				\item  If $\|y-p\|_1> l_1^\frac{3}{4}$, we can use  \eqref{1l2}, \eqref{1200} and the resolvent identity  to bound the term   $|G_{B_1(p)}^{\theta,E^*}(p,y)|$ in \eqref{1205} by 
				\begin{align*}
					|G_{B_1(p)}^{\theta,E^*}(p,y)|=|G_{B_1(p)}^{\theta,E^*}(y,p)|&\leq \varepsilon\sum_{(w,w')\in \partial_{B_1(p)}B_1^o(p)} |G_{B_1^o(p)}^{\theta,E^*}(y,w)G_{B_1(p)}^{\theta,E^*}(w',p)|\\
					&\leq e^{-\gamma_0(\|y-p\|_1-1)} \delta_1^{-1}.
				\end{align*}
				Thus 
				\begin{align*}
					\eqref{1205} &\leq e^{-\gamma_0\|x-y\|_1}+e^{-\gamma_0(\|x-p\|_1-1)}e^{-\gamma_0(\|y-p\|_1-1)} \delta_1^{-1}\\&\leq e^{-\gamma_0(1-(l_1^\frac{2}{3} +10) l_1^{-\frac{4}{5}})\|x-y\|_1}\\
					&\leq e^{-\gamma_0(1-l_1^{-\frac{1}{50}})\|x-y\|_1}.
				\end{align*}
			\end{itemize}
			Thus we finish the proof of \eqref{1ex}.
		\end{proof}
		\begin{prop}\label{1g}
			Fix $\theta\in \R$, $E\in \C$. Assume that  a finite set $\Lambda$ is $1$-good related to $(\theta,E)$,  then for any $E^*\in \C$ such that $|E-E^*|<\delta_1/5$, we have 
			\begin{align}\label{1l}
				\|G_{\Lambda} ^{\theta,E^*}\|&\leq10\delta_1^{-1}, 
				\\
				|G_{\Lambda} ^{\theta,E^*}(x,y)|&\leq e^{-\gamma_1\|x-y\|_1}, \ \|x-y\|_1\geq l_1^{\frac{5}{6}},\label{1e}
			\end{align}
			where $\gamma_1=\gamma_0(1-l_1^{-\frac{1}{80}})$.
		\end{prop}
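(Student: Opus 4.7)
The plan is to apply the standard multi-scale resolvent-identity scheme: cover $\Lambda$ by local ``good'' subblocks whose Green's functions are already controlled, and then iterate the resolvent equation to propagate the estimates across $\Lambda$. For each $x\in\Lambda$ I would assign a local block $U(x)\subset\Lambda$ as follows: if $x\in\Lambda\cap S_0(\theta,E)$, the $1$-regularity of $\Lambda$ forces $B_1(x)\subset\Lambda$, while the $1$-nonresonance gives $x\notin S_1(\theta,E)$, so setting $U(x):=B_1(x)$ Lemma \ref{1727} supplies both $\|G_{U(x)}^{\theta,E^*}\|\leq 10\delta_1^{-1}$ and the off-diagonal decay at rate $\gamma_0(1-l_1^{-1/50})$ for $\|x-w\|_1\geq l_1^{4/5}$. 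If $x\notin S_0(\theta,E)$, I would take $U(x)$ to be the largest cube around $x$ inside $\Lambda$ still disjoint from $S_0(\theta,E)$; by $1$-regularity this cube has radius $\gtrsim l_1$, and Proposition \ref{0g} gives the Neumann-series bounds with full rate $\gamma_0$.

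The off-diagonal estimate \eqref{1e} is then obtained by iterating
\[
G_\Lambda^{\theta,E^*}(x,y) = \chi_{U(x)}(y)\, G_{U(x)}^{\theta,E^*}(x,y) - \varepsilon\sum_{(w,w')\in\partial_\Lambda U(x)} G_{U(x)}^{\theta,E^*}(x,w)\, G_\Lambda^{\theta,E^*}(w',y),
\]
re-expanding $G_\Lambda^{\theta,E^*}(w',y)$ via $U(w')$ at each new center, and so on, until $y$ finally lies inside the current block. Each hop advances the base point by at least $l_1^{4/5}$ while paying at most $\varepsilon\cdot(2dl_1)^{d-1}\cdot e^{-\gamma_0(1-l_1^{-1/50})\cdot\|x_k-w_k\|_1}$; since $\varepsilon$ is tiny compared to any polynomial in $l_1$, the combinatorial penalty is absorbed, and the exponential gains telescope to $e^{-\gamma_0(1-l_1^{-1/50})(\|x-y\|_1-l_1)}$. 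The terminating factor $10\delta_1^{-1}=10\,e^{l_1^{2/3}}$ together with all the accumulated sub-leading losses fit within the extra rate deficit $\gamma_0 l_1^{-1/80}\|x-y\|_1$ precisely because the threshold $\|x-y\|_1\geq l_1^{5/6}$ in \eqref{1e} is chosen so that $l_1^{5/6-1/80}\gg l_1^{4/5}$.

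The norm bound \eqref{1l} I would deduce from \eqref{1e} together with self-adjointness. Since $\|G_\Lambda^{\theta,E^*}\|=1/\operatorname{dist}(E^*,\operatorname{Spec}(H_\Lambda(\theta)))$, it suffices to rule out any eigenvalue $\tilde E$ of $H_\Lambda(\theta)$ with $|\tilde E-E^*|<\delta_1/10$. For such $\tilde E$, the corresponding $\ell^2$-normalized eigenfunction $\psi$ satisfies the Poisson identity
\[
\psi(x) = -\varepsilon\sum_{(w,w')\in\partial_\Lambda U(x)} G_{U(x)}^{\theta,\tilde E}(x,w)\,\psi(w'),
\]
and iterating this at the point $x_\star\in\Lambda$ where $|\psi|$ attains its maximum forces $|\psi(x_\star)|$ to be exponentially smaller than itself, a contradiction. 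Equivalently, Schur's test applied directly to the pointwise bound \eqref{1e}, with the near-diagonal columns controlled by one local resolvent expansion, yields the same conclusion.

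The main technical obstacle I anticipate is the combinatorial bookkeeping: verifying that the cumulative per-step losses---the polynomial factor $(2dl_1)^{d-1}$, the $\varepsilon$ coupling, and the final $10\delta_1^{-1}$---cost at most the allotted rate deficit $\gamma_0 l_1^{-1/80}$, and that the pointwise decay \eqref{1e} and the norm bound \eqref{1l} are established in tandem to avoid any circular dependence. The threshold exponents $4/5$, $5/6$, and $1/80$ in Lemma \ref{1727} and \eqref{1e} are calibrated precisely to make this arithmetic close, and the same resolvent-iteration template will recur at every higher scale with $B_1$ replaced by the inductively constructed $B_n$.
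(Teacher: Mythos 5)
There is a genuine gap in your block assignment for the resolvent iteration. You claim that for a non-resonant site $x\notin S_0(\theta,E)$, ``the largest cube around $x$ inside $\Lambda$ still disjoint from $S_0(\theta,E)$ has radius $\gtrsim l_1$ by $1$-regularity.'' This is false: $1$-regularity only guarantees that every $p\in\Lambda\cap S_0(\theta,E)$ has $B_1(p)\subset\Lambda$; it says nothing about how close a non-resonant point can be to $S_0$. A non-resonant $x$ adjacent to some resonant $p$ would get a block of radius zero under your rule, the per-step decay factor would then be of size $\delta_0^{-1}$ rather than $e^{-\gamma_0 l_1^{1/2}}$ or better, and the iteration would not make progress. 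The paper sidesteps this by stratifying on the distance to $S_0^\Lambda$: for $q$ with $\operatorname{dist}_1(q,S_0^\Lambda)\geq l_1^{2/3}$ it uses the $l_1^{1/2}$-cube $U(q)=\{u\in\Lambda:\|q-u\|_1\le l_1^{1/2}\}$ (automatically disjoint from $S_0$), while for $q$ with $\operatorname{dist}_1(q,S_0^\Lambda)<l_1^{2/3}$ it uses $U(q)=B_1(p)$ for the nearby resonant $p$ --- crucially, this large block is used for \emph{all} sites near a resonance, not only for $p$ itself. You need some version of this near/far stratification for the hops to have uniformly large length.

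There is also an ordering problem. The paper proves the norm bound \eqref{1l} \emph{first}, by a Schur-test contraction ($\sup_x\sum_y|G_\Lambda(x,y)|\leq \delta_1^{-2}+\frac14\sup_x\sum_y|G_\Lambda(x,y)|$ for $E'$ near $E$ and off the spectrum, hence $\leq 2\delta_1^{-2}$, hence $\operatorname{dist}(\operatorname{Spec}H_\Lambda(\theta),E)\geq\delta_1/2$), and only then proves the off-diagonal decay \eqref{1e}. The reason is that the terminating term of the resolvent iteration --- the ``leak-out'' contribution $\sup_{w'}e^{-\gamma_0(1-\cdots)\|q_K-w'\|_1}|G_\Lambda(w',y)|$ when $y\in U(q_K)$ --- must be bounded by $\|G_\Lambda^{\theta,E^*}\|$, i.e., by \eqref{1l}. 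Your proposal reverses the order; if you then ``deduce \eqref{1l} from \eqref{1e} via Schur's test'' as you suggest, the argument becomes circular. Your alternative Poisson-identity contradiction could in principle deliver \eqref{1l} independently (a single application at the maximum point already gives $|\psi(x_\star)|<|\psi(x_\star)|$ because $\varepsilon\delta_0^{-1}\ll 1$), but as written it also relies on the flawed block assignment, and it is not what the paper does. The paper's Schur-test route is the cleaner path: it establishes a crude polynomial-in-$\delta_1^{-1}$ bound valid throughout the punctured disk $|E'-E|<\delta_1/2$, which is already enough to rule out spectrum there and hence gives the sharp $10\delta_1^{-1}$ bound at $E^*$.
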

		\begin{proof}
			Define the set $S_0^\Lambda:=S_0(\theta,E)\cap\Lambda$. If $S_0^\Lambda=\emptyset$, then \eqref{1l} and \eqref{1e} follows from  Proposition \ref{0g} immediately. Thus we assume that $S_0^\Lambda\neq \emptyset$. 
			To prove \eqref{1l}, by self-adjointness it suffices to prove 
			\begin{equation}\label{sd}
				\operatorname{dist}\left (\operatorname{Spec}(H_{\Lambda}(\theta )),E \right )\geq  \delta_1 /2.
			\end{equation}
			Fix $E'$ such that $|E'-E|<\delta_1/2$ and $E'\notin\operatorname{Spec}(H_{\Lambda}(\theta ))$.
			For any $p\in S_0^\Lambda$, since $\Lambda$ is $1$-regular, it follows that  $B_1(p)\subset\Lambda$. Since $\Lambda$ is $1$-nonresonant, it follows that  $p\notin S_1(\theta,E)$. Hence Lemma \ref{1727} holds true for such $p$.
			Denote $\Lambda_0:=\Lambda\setminus S_0^\Lambda$. Then $\Lambda_0\cap S_0(\theta,E)=\emptyset$ and hence Proposition \ref{0g} holds true for $\Lambda_0$.
			\begin{itemize}
				\item  Let $x\in \Lambda$ such that $\operatorname{dist}_1(x,S_0^\Lambda)\geq l_1^\frac{2}{3}$. By the resolvent identity, one has 
				$$G_\Lambda^{\theta,E'}(x,y)=\chi_{\Lambda_0}(y)G_{\Lambda_0}^{\theta,E'}(x,y)-\varepsilon\sum_{(w,w')\in \partial_{\Lambda}\Lambda_0} G_{\Lambda_0}^{\theta,E'}(x,w)G_{\Lambda}^{\theta,E'}(w',y).$$
				Thus \begin{align}\label{1835}
					\sum_{y\in \Lambda}|G_\Lambda^{\theta,E'}(x,y)|&\leq \sum_{y\in \Lambda_0} |G_{\Lambda_0}^{\theta,E'}(x,y)|+\varepsilon\sum_{\substack{(w,w')\in \partial_{\Lambda}\Lambda_0\\y\in \Lambda}} |G_{\Lambda_0}^{\theta,E'}(x,w)||G_{\Lambda}^{\theta,E'}(w',y)|.
				\end{align}
				Since  $|E-E'|<\delta_1/2$, by Proposition \ref{0g}, the first summation on the right-hand side of \eqref{1835} satisfies
				\begin{align*}
					\sum_{y\in \Lambda_0} |G_{\Lambda_0}^{\theta,E'}(x,y)|&\leq \|G_{\Lambda_0}^{\theta,E'}\|+\sum_{y\in\Lambda_0:\|x-y\|_1\geq1 }|G_{\Lambda_0}^{\theta,E'}(x,y)|\\
					&\leq 10\delta_0^{-1}+\sum_{y:\|x-y\|_1\geq1 } e^{-\gamma_0\|x-y\|_1}\\
					&\leq 20\delta_0^{-1}.
				\end{align*}
				In the second summation on the right-hand side of \eqref{1835}, since $w\in \partial_{\Lambda}^-\Lambda_0,$ we have 
				$$\|x-w\|_1\geq\operatorname{dist}_1(x,S_0^\Lambda)-\operatorname{dist}_1(w,S_0^\Lambda)\geq l_1^\frac{2}{3}-1.$$ It follows that  
				\begin{align*}
					&\varepsilon\sum_{\substack{(w,w')\in \partial_{\Lambda}\Lambda_0\\y\in \Lambda}} |G_{\Lambda_0}^{\theta,E'}(x,w)||G_{\Lambda}^{\theta,E'}(w',y)|\\ \leq&2d\varepsilon \sum_{w\in \partial_{\Lambda}^-\Lambda_0} e^{-\gamma_0\|x-w\|_1}	\sup_{w'\in \partial_{\Lambda}^+\Lambda_0}\sum_y|G_\Lambda^{\theta,E'}(w',y)|\\
					\leq & 2d\varepsilon \sum_{{w:\|x-w\|_1\geq l_1^\frac{2}{3}-1}} e^{-\gamma_0\|x-w\|_1}	\sup_{w'\in \partial_{\Lambda}^+\Lambda_0}\sum_y|G_\Lambda^{\theta,E'}(w',y)| 
					\\ \leq &\frac{1}{4}	\sup_{w'\in \Lambda}\sum_y|G_\Lambda^{\theta,E'}(w',y)|.
				\end{align*}
				Hence, \begin{equation}\label{1906}
					\sup_{x: \operatorname{dist}_1(x,S_0^\Lambda)\geq l_1^\frac{2}{3}}\sum_{y\in \Lambda}|G_\Lambda^{\theta,E'}(x,y)|\leq 20\delta_0^{-1}+ \frac{1}{4}	\sup_{w'\in \Lambda}\sum_{y\in \Lambda}|G_\Lambda^{\theta,E'}(w',y)|.
				\end{equation}
				\item Let $x\in \Lambda$ such that $\operatorname{dist}_1(x,S_0^\Lambda)< l_1^\frac{2}{3}$. Then $\|x-p\|_1<l_1^\frac23$ and  $x\in B_1(p)$ for some $p\in S_0^\Lambda$.
				By the  resolvent identity, one has 
				\begin{equation*}
					G_\Lambda^{\theta,E'}(x,y)=\chi_{B_1(p)}(y)G_{B_1(p)}^{\theta,E'}(x,y)-\varepsilon\sum_{(w,w')\in \partial_\Lambda B_1(p)} G_{B_1(p)}^{\theta,E'}(x,w)G_{\Lambda}^{\theta,E'}(w',y).
				\end{equation*}
				Thus \begin{align}\label{1911}
					\sum_{y\in \Lambda}|G_\Lambda^{\theta,E'}(x,y)|\leq \sum_{y\in B_1(p)} |G_{B_1(p)}^{\theta,E'}(x,y)|+\varepsilon\sum_{\substack{(w,w')\in \partial_\Lambda B_1(p)\\ y\in \Lambda}} |G_{B_1(p)}^{\theta,E'}(x,w)||G_{\Lambda}^{\theta,E'}(w',y)|.
				\end{align}
				By Lemma \ref{1727},  the first summation on the right-hand side of \eqref{1911} satisfies
				\begin{align*}
					\sum_{y\in B_1(p)} |G_{B_1(p)}^{\theta,E'}(x,y)|&\leq (3l_1)^d\|G_{B_1(p)}^{\theta,E'}\|\leq \delta_1^{-2}.
				\end{align*}
				In the second summation on the right-hand side of \eqref{1911}, since $w\in \partial_\Lambda^-B_1(p)$, we have  $\|w-p\|_1\geq l_1$. It follows that  $$\|x-w\|_1\geq \|w-p\|_1- \|x-p\|_1\geq  l_1-l_1^\frac{2}{3}\geq l_1^\frac{4}{5}.$$ Since $|E'-E|<\delta_1/2$, by Lemma \ref{1727}, one has 
				\begin{align*}
					&\varepsilon\sum_{\substack{(w,w')\in \partial_\Lambda B_1(p)\\ y\in \Lambda}} |G_{B_1(p)}^{\theta,E'}(x,w)||G_{\Lambda}^{\theta,E'}(w',y)|\\	\leq& 2d\varepsilon \sum_{w\in \partial_\Lambda^-B_1(p)} |G_{B_1(p)}^{\theta,E'}(x,w)|	\sup_{w'\in \partial_\Lambda^+B_1(p)}\sum_{y\in \Lambda}|G_\Lambda^{\theta,E'}(w',y)|\\ \leq &
					2d\varepsilon \sum_{w:\|x-w\|_1\geq l_1^\frac{4}{5}} e^{-\gamma_0(1-l_1^{-\frac{1}{50}})\|x-w\|_1}	\sup_{w'\in \Lambda}\sum_{y\in \Lambda}|G_\Lambda^{\theta,E'}(w',y)|\\ \leq& \frac{1}{4}	\sup_{w'\in \Lambda}\sum_{y\in \Lambda}|G_\Lambda^{\theta,E'}(w',y)|.
				\end{align*}
				Hence, \begin{equation}\label{1940}
					\sup_{x: \operatorname{dist}_1(x,S_0^\Lambda)< l_1^\frac{2}{3}}\sum_{y\in \Lambda}|G_\Lambda^{\theta,E'}(x,y)|\leq \delta_1^{-2}+ \frac{1}{4}	\sup_{w'\in \Lambda}\sum_{y\in \Lambda}|G_\Lambda^{\theta,E'}(w',y)|.
				\end{equation}
			\end{itemize}
			By \eqref{1906} and \eqref{1940}, we get 
			$$	\sup_{x\in \Lambda}\sum_{y\in \Lambda}|G_\Lambda^{\theta,E'}(x,y)|\leq \delta_1^{-2}+ \frac{1}{4}	\sup_{w'\in \Lambda}\sum_{y\in \Lambda}|G_\Lambda^{\theta,E'}(w',y)|$$ and hence 
			$$	\sup_{x\in \Lambda}\sum_{y\in \Lambda}|G_\Lambda^{\theta,E'}(x,y)|\leq 2\delta_1^{-2}.$$
			By Schur's test and self-adjointness, we get 
			\begin{equation}\label{1947}
				\| G_\Lambda^{\theta,E'}\|\leq	\sup_{x\in \Lambda}\sum_{y\in \Lambda}|G_\Lambda^{\theta,E'}(x,y)|\leq 2\delta_1^{-2} .
			\end{equation}
			Since the uniform bound \eqref{1947} holds true for all $E'$ in  $|E'-E|<\delta_1/2$  except a finite subset of $\operatorname{Spec}(H_{\Lambda}(\theta ))$,   \eqref{1947} must  hold true for all $E'$ in  $|E'-E|<\delta_1/2$ as the resolvent is unbounded near the spectrum. Thus we finish the proof of \eqref{sd}. To prove \eqref{1e}, we let $x,y\in \Lambda$ such that $\|x-y\|_1\geq l_1^{\frac{5}{6}}$. For any $q\in \Lambda$, we introduce a subset $U(q)\subset\Lambda$ with $q\in U(q)$ to iterate the resolvent identity.
			\begin{itemize}
				\item If $\operatorname{dist}_1(q,S_0^\Lambda)\geq l_1^\frac{2}{3}$, we define 
				\begin{equation}\label{sf1}
					U(q)=\{u\in \Lambda:\ \|q-u\|_1\leq l_1^\frac{1}{2}\}.
				\end{equation}
				By the resolvent identity,  one has 
				$$G_\Lambda^{\theta,E^*}(q,y)=\chi_{U(q)}(y)G_{U(q)}^{\theta,E^*}(q,y)-\varepsilon\sum_{(w,w')\in \partial_\Lambda U(q)} G_{U(q)}^{\theta,E^*}(q,w)G_{\Lambda}^{\theta,E^*}(w',y).$$
				Since $\operatorname{dist}_1(q,S_0^\Lambda)\geq l_1^\frac{2}{3}>l_1^\frac{1}{2}$, it follows that $U(q)\cap S_0(\theta,E)=\emptyset.$ By Proposition \ref{0g} and $\|q-w\|_1\geq l_1^\frac{1}{2}$ for $w\in \partial_\Lambda^- U(q)$, we have 
				\begin{align}\label{fenzi}
					\nonumber	|G_\Lambda^{\theta,E^*}(q,y)|&\leq\chi_{U(q)}(y)|G_{U(q)}^{\theta,E^*}(q,y)|+\varepsilon\sum_{(w,w')\in \partial_\Lambda U(q)} |G_{U(q)}^{\theta,E^*}(q,w)||G_{\Lambda}^{\theta,E^*}(w',y)|\\
					\nonumber	&\leq\chi_{U(q)}(y)|G_{U(q)}^{\theta,E^*}(q,y)|+2dl_1^d\varepsilon\sup_{(w,w')\in \partial_\Lambda U(q)} |G_{U(q)}^{\theta,E^*}(q,w)||G_{\Lambda}^{\theta,E^*}(w',y)|\\
					\nonumber	&\leq\chi_{U(q)}(y)|G_{U(q)}^{\theta,E^*}(q,y)|+l_1^d\sup_{(w,w')\in \partial_\Lambda U(q)} e^{-\gamma_0\|q-w\|_1}|G_{\Lambda}^{\theta,E^*}(w',y)|\\ 
					\nonumber		&\leq\chi_{U(q)}(y)|G_{U(q)}^{\theta,E^*}(q,y)|+\sup_{(w,w')\in \partial_\Lambda U(q)} e^{-\gamma_0(\|q-w\|_1-d\ln l_1)}|G_{\Lambda}^{\theta,E^*}(w',y)|\\ 
					&\leq\chi_{U(q)}(y)|G_{U(q)}^{\theta,E^*}(q,y)|+\sup_{w'\in \partial_\Lambda^+ U(q)} e^{-\gamma_0(1-l_1^{-\frac13})\|q-w'\|_1}|G_{\Lambda}^{\theta,E^*}(w',y)|.
				\end{align}
				For $y\in U(q)$, since by Proposition \ref{0g}, 
				$$|G_{U(q)}^{\theta,E^*}(q,y)|\leq  e^{-\gamma_0\|q-y\|_1},\ \ \|q-y\|_1\geq 1,$$ 
				$$|G_{U(q)}^{\theta,E^*}(q,y)|\leq\|G_{U(q)}^{\theta,E^*}\| \leq 10\delta_0^{-1} ,\ \ \|q-y\|_1< 1,$$
				we have 
				\begin{equation}\label{f1}
					|G_{U(q)}^{\theta,E^*}(q,y)|\leq10\delta_0^{-1} e^{-\gamma_0(\|q-y\|_1-1)}.
				\end{equation}
				\item If $\operatorname{dist}_1(q,S_0^\Lambda)< l_1^\frac{2}{3}$, then $\|q-p\|_1\leq l_1^{\frac{2}{3}}$ and $q\in B_1(p)$ for some $p\in S_0^\Lambda$. We define 
				\begin{equation}\label{sf2}
					U(q)=\{B_1(p):\ q\in B_1(p)\text{ with }p\in S_0^\Lambda\}.
				\end{equation}
				By the resolvent identity,  one has 
				$$G_\Lambda^{\theta,E^*}(q,y)=\chi_{U(q)}(y)G_{U(q)}^{\theta,E^*}(q,y)-\varepsilon\sum_{(w,w')\in \partial_\Lambda U(q)} G_{U(q)}^{\theta,E^*}(q,w)G_{\Lambda}^{\theta,E^*}(w',y).$$
				Since  $w\in \partial_\Lambda^- U(q)=\partial_\Lambda^- B_1(p)$ and $\|q-p\|_1\leq l_1^{\frac{2}{3}}$, it follows that 
				\begin{equation}\label{yya}
					\|q-w\|_1\geq \|w-p\|_1-\|p-q\|_1\geq l_1-l_1^{\frac{2}{3}}\geq l_1^{\frac45}. 
				\end{equation}
				Since $p\notin S_1(\theta,E)$, by Lemma \ref{1727} and \eqref{yya}, one can get a similar estimate as \eqref{fenzi}:
				\begin{align}\label{fenzi2}
					\nonumber|G_\Lambda^{\theta,E^*}(q,y)|&\leq\chi_{U(q)}(y)|G_{U(q)}^{\theta,E^*}(q,y)|+\varepsilon\sum_{(w,w')\in \partial_\Lambda U(q)} |G_{U(q)}^{\theta,E^*}(q,w)||G_{\Lambda}^{\theta,E^*}(w',y)|\\
					\nonumber	&\leq\chi_{U(q)}(y)|G_{U(q)}^{\theta,E^*}(q,y)|+2dl_1^d\varepsilon\sup_{(w,w')\in \partial_\Lambda U(q)} |G_{U(q)}^{\theta,E^*}(q,w)||G_{\Lambda}^{\theta,E^*}(w',y)|\\
					\nonumber	&\leq\chi_{U(q)}(y)|G_{U(q)}^{\theta,E^*}(q,y)|+l_1^d\sup_{(w,w')\in \partial_\Lambda U(q)} e^{-\gamma_0(1-l_1^{-\frac{1}{50}})\|q-w\|_1}|G_{\Lambda}^{\theta,E^*}(w',y)|\\ 
					&\leq\chi_{U(q)}(y)|G_{U(q)}^{\theta,E^*}(q,y)|+\sup_{w'\in \partial_\Lambda^+ U(q)} e^{-\gamma_0(1-l_1^{-\frac{1}{60}})\|q-w'\|_1}|G_{\Lambda}^{\theta,E^*}(w',y)|.
				\end{align}
				For $y\in U(q)$, since by Lemma \ref{1727},
				$$|G_{U(q)}^{\theta,E^*}(q,y)|\leq  e^{-\gamma_0(1-l_1^{-\frac{1}{50}})\|q-y\|_1},\ \ \|q-y\|_1\geq l_1^\frac{4}{5},$$ 
				$$|G_{U(q)}^{\theta,E^*}(q,y)|\leq\|G_{U(q)}^{\theta,E^*}\| \leq 10\delta_1^{-1} ,\ \ \|q-y\|_1< l_1^\frac{4}{5},$$
				we have 
				\begin{equation}\label{f2}
					|G_{U(q)}^{\theta,E^*}(q,y)|\leq10\delta_1^{-1} e^{-\gamma_0(1-l_1^{-\frac{1}{50}})(\|q-y\|_1-l_1^\frac{4}{5})}.
				\end{equation}
			\end{itemize}
			Let $q_0=x$. Applying the estimates \eqref{fenzi} and \eqref{fenzi2} $K$ times yields 
			\begin{align}\label{1217}
				|G_\Lambda^{\theta,E^*}(x,y)|\leq \sum_{k=0}^{K-1}\chi_{U(q_k)}(y)|G_{U(q_k)}^{\theta,E^*}(q_k,y)|+
				\prod_{k=0}^{K-1}e^{-\gamma_0(1-l_1^{-\frac{1}{60}})\|q_k-q_{k+1}\|_1}|G_{\Lambda}^{\theta,E^*}(q_K,y)|,
			\end{align}
			where $q_{k+1} \in \partial_\Lambda^+U(q_k)$ takes the supremum in the last line of \eqref{fenzi} or \eqref{fenzi2}. 
			We choose $K$ such that $y\notin U(q_k)$ for $k\leq K-1$ and $y\in U(q_K)$, thus the first summation in \eqref{1217} vanishes. Hence, 
			\begin{align}\label{2221}
				|G_\Lambda^{\theta,E^*}(x,y)|\leq
				\prod_{k=0}^{K-1}e^{-\gamma_0(1-l_1^{-\frac{1}{60}})\|q_k-q_{k+1}\|_1}|G_{\Lambda}^{\theta,E^*}(q_K,y)|.
			\end{align}
			By \eqref{fenzi} and    \eqref{fenzi2}, we have 
			\begin{align}\label{2204}
				|G_{\Lambda}^{\theta,E^*}(q_K,y)|\leq |G_{U(q_K)}^{\theta,E^*}(q_K,y)|+\sup_{w'\in \partial_\Lambda^+ U(q_K)} e^{-\gamma_0(1-l_1^{-\frac{1}{60}})\|q_K-w'\|_1}|G_{\Lambda}^{\theta,E^*}(w',y)|.
			\end{align}
			Since
			\begin{align*}
				e^{-\gamma_0\|q_K-y\|}_1&\leq e^{-\gamma_0(1-l_1^{-\frac{1}{50}})\|q_K-y\|_1},\\
				e^{\gamma_0}&\leq e^{\frac{1}{2}\gamma_0l_1^\frac{4}{5}} \leq e^{\gamma_0(1-l_1^{-\frac{1}{50}})l_1^\frac{4}{5}}
			\end{align*}
			it follows from \eqref{f1} and \eqref{f2} that the first term on the right-hand side of \eqref{2204} satisfies
			\begin{equation}\label{25}
				|G_{U(q_K)}^{\theta,E^*}(q_K,y)|\leq10\delta_1^{-1} e^{-\gamma_0(1-l_1^{-\frac{1}{50}})(\|q_K-y\|_1-l_1^\frac{4}{5})}.
			\end{equation}
			For $y\in U(q_K)$, $ w'\in \partial_\Lambda^+ U(q_K)$, we have:
			\begin{itemize}
				\item If $U(q_K)$ is in the  form of \eqref{sf1}, then $$\|q_K-w'\|_1\geq \|q_K-y\|_1.$$
				\item If $U(q_K)$ is in the  form of \eqref{sf2} with $U(q_K)=B_1(p)$, then  
				\begin{align*}
					\|q_K-w'\|_1&\geq \|p-w'\|_1-\|p-q_K\|_1\\&\geq \|p-y\|_1-\|p-q_K\|_1\\&\geq \|q_K-y\|_1-\|p-q_K\|-\|p-q_K\|_1\\
					&\geq \|q_K-y\|_1-2l_1^\frac{2}{3} .
				\end{align*}
			\end{itemize}   
			It follows that the second term on the right-hand side of \eqref{2204} satisfies
			\begin{align}\label{2223}
				\nonumber&	\sup_{w'\in \partial_\Lambda^+ U(q_K)} e^{-\gamma_0(1-l_1^{-\frac{1}{60}})\|q_K-w'\|_1}|G_{\Lambda}^{\theta,E^*}(w',y)|\\ 
				\nonumber\leq &	\sup_{w'\in \partial_\Lambda^+ U(q_K)} e^{-\gamma_0(1-l_1^{-\frac{1}{60}})(\|q_K-y\|_1-2l_1^\frac{2}{3})}|G_{\Lambda}^{\theta,E^*}(w',y)|\\
				\nonumber \leq & e^{-\gamma_0(1-l_1^{-\frac{1}{60}})(\|q_K-y\|_1-2l_1^\frac{2}{3})}\|G_{\Lambda}^{\theta,E^*}\|\\
				\leq& 10\delta_1^{-1}e^{-\gamma_0(1-l_1^{-\frac{1}{60}})(\|q_K-y\|_1-2l_1^\frac23)}.
			\end{align}
			By \eqref{2221}, \eqref{2204}, \eqref{25}, \eqref{2223} and $\delta_1^{-1}=e^{l_1^\frac{2}{3}}$, provided $\|x-y\|\geq l_1^\frac56 $, we have 
			\begin{align*}
				|G_\Lambda^{\theta,E^*}(x,y)|&\leq
				20\delta_1^{-1}	e^{-\gamma_0(1-l_1^{-\frac{1}{60}})(\sum\limits_{k=0}^{K-1}\|q_k-q_{k+1}\|_1+\|q_K-y\|_1-3l_1^\frac45)}\\
				&\leq e^{-\gamma_0(1-l_1^{-\frac{1}{60}})(\|x-y\|_1-3l_1^\frac{4}{5}-2l_1^\frac{2}{3})}\\
				&\leq e^{-\gamma_0(1-l_1^{-\frac{1}{60}})(1-(3l_1^\frac{4}{5}+2l_1^\frac{2}{3})l_1^{-\frac{5}{6}})\|x-y\|_1}\\
				&\leq e^{-\gamma_0(1-l_1^{-\frac{1}{80}})\|x-y\|_1}.
			\end{align*}
		\end{proof}
		\begin{rem}\label{lh}
			Replacing $\theta$ by $\theta-0$, we can estimate the Green's function $G_{\Lambda} ^{\theta-0,E^*}$ of the left limit operator  $H_\Lambda(\theta-0)$       for  $1$-good set $\Lambda$ related to $(\theta-0,E)$ with   $|E-E^*|<\delta_1/5$, by analogous definitions\begin{align*}
				S_0(\theta-0,E)&:=\left\{p\in \Z^d:\ |v (\theta+p\cdot \omega-0)-E|<\delta_0\right\},\\
				S_1(\theta-0,E)&:=\left\{p\in \Z^d:\ |E_1 (\theta+p\cdot \omega-0)-E|<\delta_1\right\}.
			\end{align*}

		\end{rem}
		\subsection{Inductive hypotheses}
		In this section, we state the inductive hypotheses of each scale. First, we introduce some inductive definitions.
		\begin{defn}
			Recalling that $l_1=[|\ln\delta_0|^4],\delta_0=\varepsilon_0^{\frac{1}{20}},\gamma_0=\frac{1}{2}|\ln\varepsilon|$ defined in the previous section, for $n\geq 1$, we define the inductive parameters:
			\begin{itemize}
				\item $l_n:=(l_1)^{2^{n-1}}$  \hfill	(the $n$-scale length)
				\item $\delta_n:=e^{-l_n^\frac{2}{3}}$\hfill (the $n$-scale resonance parameter)
				\item $\gamma_n:=\prod\limits_{k=1}^{n}(1-l_k^{-\frac{1}{80}})\gamma_0$ \hfill (the Green's function decay rate)
			\end{itemize}    
			We let $\varepsilon$ be sufficiently small so that  \begin{equation}\label{rate}
				\gamma_n\geq \gamma_\infty:=\prod\limits_{k=1}^{\infty}(1-l_k^{-\frac{1}{80}})\gamma_0\geq \frac{1}{2}\gamma_0\geq 10.
			\end{equation}
		\end{defn}
		
		\begin{flushleft}
			\textbf{Inductive hypotheses (scale $n$).} The following inductive  hypotheses hold true for $0\leq m\leq n $: 
		\end{flushleft}
		\begin{hp}\label{h1}
			There exists a block $B_m$ independent of $\theta$ satisfying  \begin{equation}\label{h11}
				Q_{l_m}\subset B_m\subset Q_{l_m+50l_{m-1}}.
			\end{equation}
			The Dirichlet restriction  $H_{B_m}(\theta)$ has a $1$-periodic, single-valued, real-valued  Rellich function $E_m(\theta)$ satisfying the following properties{\rm :}
			\begin{enumerate}
				\item For $m=0$, $B_0=\{o\}$  and $E_0(\theta)=v(\theta)$.
				\item For $m\geq1$, $E_m(\theta)$ is well-approximated by the previous generation Rellich function $E_{m-1}(\theta)${\rm :}  
				\begin{equation}\label{h12}
					|E_m(\theta)- E_{m-1}(\theta)|\leq e^{-l_{m-1}},
				\end{equation} where $e^{-l_0}$ is understood as $\varepsilon$. 
				\item $E_m(\theta)$ is the unique eigenvalue of $H_{B_m}(\theta)$ in $D(E_{m-1}(\theta),10\delta_{m-1})$.
				\item Denote by  $0=\alpha_1<\alpha_2<\cdots<\alpha_{|B_m|}<\alpha_{|B_m|+1}=1$ the discontinuities of $H_{B_m}(\theta)$, where 
				$$\{\alpha_1,\alpha_2,\cdots,\alpha_{|B_m|}\}=\left\{\{-x\cdot\omega\}\right\}_{x\in B_m}.$$
				Then $E_m(\theta)$ is continuous on each interval $[\alpha_k,\alpha_{k+1})$  with  the local Lipschitz monotonicity property{\rm :}  \begin{equation}\label{1639}
					E_m(\theta_2)-E_m(\theta_1)\geq L(\theta_2-\theta_1), \ \ \alpha_k\leq \theta_1\leq \theta_2<\alpha_{k+1},
				\end{equation}
				and moreover, $E_m(\theta)$ has  non-negative jumps at $\alpha_k$  {\rm ($2\leq k\leq |B_m|$):}
				\begin{equation}\label{1640}
					E_m(\alpha_k)\geq E_m(\alpha_k-0).
				\end{equation}
			\end{enumerate}
		\end{hp}
		\begin{rem}
			By \eqref{1639} and \eqref{1640}, $E_m(\theta)$ in fact satisfies the uniform Lipschitz monotonicity property{\rm :} 
			\begin{equation}\label{Lc}
				E_m(\theta_2)-E_m(\theta_1)\geq L(\theta_2-\theta_1), \ \ 0\leq \theta_1 \leq \theta_2<1.
			\end{equation}
		\end{rem}
		
		\begin{defn}\label{A}
			To simplify the notation, for an $m$-generation  Rellich function $E_m(\theta)$ described in Hypothesis  \ref{h1},  $\theta\in \R\setminus\Z$ and  $R>0$, we denote the region 
			\begin{equation}\label{AR}
				A_m(\theta,R):=\bigcup_{a\in [E_m(\theta-0),E_m(\theta)]}D(a,R).  
			\end{equation}
			For the case $E_m(\theta-0)=E_m(\theta)$, the above  $[E_m(\theta-0),E_m(\theta)]$ is understood as the single point  $E_m(\theta)$ and thus $A_m(\theta,R)=D(E_m(\theta),R)$.\\
			
			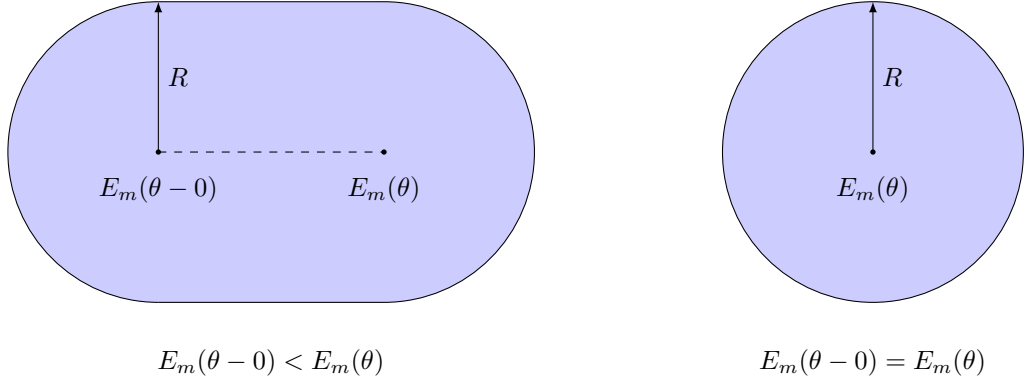
\begin{figure}[htp]
				\centering

				\begin{tikzpicture}[>=latex]
					\fill[blue!20!white] (2,0) circle (2);	\fill[blue!20!white] (5,0) circle (2); 	\fill[blue!20!white] (2,2) rectangle (5,-2); 
					\draw (2,2) arc (90:270:2); 
					\draw (5,2) arc (90:-90:2);
					\draw(2,2)-- (5,2);		 	\draw(2,-2)-- (5,-2);
					\draw (2,-0.2)node[below]{$E_m(\theta-0)$};  	\draw (5,-0.2)node[below]{$E_m(\theta)$};	\fill(2,0)circle(1pt);	\fill(5,0)circle(1pt);
					\draw[->](2,0)--(2,2);  	\draw (2,1)node[right]{$R$};
					\draw[dashed](2,0)--(5,0);  
					\fill[blue!20!white] (11.5,0) circle (2);	\draw (11.5,-0.2)node[below]{$E_m(\theta)$};\fill(11.5,0)circle(1pt); \draw (13.5,0) arc (0:360:2);	\draw (11.5,1)node[right]{$R$};	\draw[->](11.5,0)--(11.5,2);  \draw(3.5,-2.5)node[below]{$E_m(\theta-0)<E_m(\theta)$}; \draw(11.5,-2.5)node[below]{$E_m(\theta-0)=E_m(\theta)$};
				\end{tikzpicture}
				\caption{A picture for $A_m(\theta,R)$}
				
			\end{figure}
		\end{defn}
		\begin{hp}\label{h2}
			Denote  $B_m^o:=B_m\setminus\{o\}$. 
			Write	$$	z\operatorname{Id}_{B_m}-H_{B_m}(\theta)=\begin{pmatrix}
				z-v(\theta) & c^{\text{T}} \\
				c & z\operatorname{Id}_{B_m^o}-H_{B_m^o}(\theta)
			\end{pmatrix}.$$
			If  the resolvent $(z\operatorname{Id}_{B_m^o}-H_{B_m^o}(\theta))^{-1}$ exists, we 	denote the Schur complement by $$s_{\theta,m}(z):=z-v(\theta) -c^{\text{T}}\left(z\operatorname{Id}_{B_m^o}-H_{B_m^o}(\theta)\right)^{-1}c.$$
			\begin{itemize}
				\item  For any $\theta\in \R\setminus\Z$, $z\in A_{m-1}(\theta,10\delta_{m-1})$ and  $\xi\in \{\theta,\theta-0\}$, the resolvent $(z\operatorname{Id}_{B_m^o}-H_{B_m^o}(\xi))^{-1}$ exists and  we have approximations for the Schur complement{\rm :}
				\begin{align}\label{sa1}
					s_{\xi,m}(z)&=(z-E_m(\xi))\left (1+O(\sum_{k=0}^{m-1}\delta_k) \right),& z\in A_{m-1}(\theta,10\delta_{m-1}),\\
					\label{sa2}
					s_{\xi,m}'(z)&=1+O(\sum_{k=0}^{m-1}\delta_k), & z\in  A_{m-1}(\theta,5\delta_{m-1}).
				\end{align}
				\item For any $z\in D(E_{m-1}(0),10\delta_{m-1})$,   the resolvent $(z\operatorname{Id}_{B_m^o}-H_{B_m^o}(0))^{-1}$ exists  and 
				\begin{align}\label{sa10}
					s_{0,m}(z)&=(z-E_m(0))\left (1+O(\sum_{k=0}^{m-1}\delta_k) \right),&  z\in D(E_{m-1}(0),10\delta_{m-1}),\\
					\label{sa20}
					s_{0,m}'(z)&=1+O(\sum_{k=0}^{m-1}\delta_k), & z\in  D(E_{m-1}(0),5\delta_{m-1}).
				\end{align}
				\item For any $z\in D(E_{m-1}(1-0),10\delta_{m-1})$, the resolvent $(z\operatorname{Id}_{B_m^o}-H_{B_m^o}(1-0))^{-1}$ exists  and 
				\begin{align*}
					s_{1-0,m}(z)&=(z-E_m(1-0))\left (1+O(\sum_{k=0}^{m-1}\delta_k) \right), &  z\in D(E_{m-1}(1-0),10\delta_{m-1}),\\
					s_{1-0,m}'(z)&=1+O(\sum_{k=0}^{m-1}\delta_k),& z\in  D(E_{m-1}(1-0),5\delta_{m-1}).
				\end{align*}
			\end{itemize}
			The above  $``O(\sum_{k=0}^{m-1}\delta_k)"$ are understood as analytic functions of $z$ bounded by  $\sum_{k=0}^{m-1}\delta_k$.
		\end{hp}
		\begin{defn}
			For $p\in \Z^d$, we denote  $$B_m(p):=B_m+p.$$  Fix $\theta\in \R$, $E\in \C$.  Define the $m$-scale resonant points set as  \begin{equation}\label{sm}
				S_m(\theta,E):=\left\{p\in \Z^d:\ |E_m (\theta+p\cdot \omega)-E|<\delta_m\right\}.
			\end{equation}
			Let $\Lambda\subset\Z^d$ be a finite set. Related to $(\theta,E)$,  we say
			\begin{itemize}
				\item $\Lambda$ is $m$-nonresonant  if $\Lambda\cap S_m(\theta,E)=\emptyset$.
				\item  $\Lambda$ is $m$-regular if $p\in \Lambda\cap S_k(\theta,E)$ $\Rightarrow$  $B_{k+1}(p)\subset\Lambda$ ($0\leq k\leq m-1$).
				\item $\Lambda$ is $m$-good if it is both $m$-nonresonant and $m$-regular.
			\end{itemize}
		\end{defn}
		\begin{hp}\label{h3}
			Fix $\theta\in \R$, $E\in \C$. Assume that  a finite set $\Lambda$ is $m$-good related to $(\theta,E)$,  then for any $E^*\in C$ such that $|E-E^*|<\delta_m/5$, we have 
			\begin{align*}
				\|G_{\Lambda} ^{\theta,E^*}\|&\leq10\delta_m^{-1}, \\
				|G_{\Lambda} ^{\theta,E^*}(x,y)|&\leq e^{-\gamma_m\|x-y\|_1}, \ \|x-y\|_1\geq l_m^{\frac{5}{6}}.
			\end{align*}
			Moreover, as mentioned in Remark \ref{lh},	analogous estimates of  $G_{\Lambda} ^{\theta-0,E^*}$ hold true  for $m$-good set  $\Lambda $ related to $(\theta-0,E)$ with  $|E-E^*|<\delta_m/5$. 
		\end{hp}
		
		\begin{hp}\label{h4}$B_m$ satisfies  the following properties{\rm :}
			\begin{itemize}
				\item For any $E\in D(E_{m-1}(0),10\delta_{m-1})$, the set  $B_m$ is 
				$(m-1)$-regular related to $(0,E)$.
				\item  For any $E\in D(E_{m-1}(1-0),10\delta_{m-1})$, the set  $B_m$ is 
				$(m-1)$-regular related to $(1-0,E)$.
				\item For any $\theta\in\R\setminus\Z$, $E\in A_{m-1}(\theta,10\delta_{m-1})$, the set  $B_m$ is 
				$(m-1)$-regular related to $(\theta,E)$ and $(\theta-0,E)$. 
			\end{itemize}
		\end{hp}

		\subsection{Verification of the hypotheses for $n=1$}
		
		By the analysis in section \ref{n=1}, we can prove
		\begin{prop}\label{1832}There exists $\varepsilon_0=\varepsilon_0(L,d,\tau,\gamma)>0$ such that for all $0\leq \varepsilon\leq \varepsilon_0$, the inductive hypotheses hold true at the scale $n=1$.
		\end{prop}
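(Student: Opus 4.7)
The plan is to verify the four inductive hypotheses at scale $n=1$ by collecting the results already established in Section~\ref{n=1}. I would choose $B_1 = Q_{l_1}$, which trivially satisfies the containment $Q_{l_1} \subset B_1 \subset Q_{l_1+50l_0}$ of \eqref{h11}.

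First I would verify Hypothesis~\ref{h1}. Part~(1) is the definition. For part~(2), Proposition~\ref{715} (specifically \eqref{841}) gives $|E_1(\theta) - v(\theta)| \leq \varepsilon = e^{-l_0}$, which is \eqref{h12}. Part~(3), the uniqueness of $E_1(\theta)$ in $D(E_0(\theta), 10\delta_0) = D(v(\theta), 10\delta_0)$, is the uniqueness statement of Proposition~\ref{715}, while periodicity and real-valuedness come from Remark~\ref{454}. Part~(4)---continuity on each $[\alpha_k, \alpha_{k+1})$, the local Lipschitz monotonicity \eqref{1639}, and non-negative jumps \eqref{1640}---is exactly Proposition~\ref{814}, whence the uniform Lipschitz monotonicity \eqref{Lc} follows.

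Next I would verify Hypothesis~\ref{h2}. For $\theta \in \R \setminus \Z$ the region $A_0(\theta, 10\delta_0)$ collapses to $D(v(\theta), 10\delta_0)$ because $v$ is continuous on $[0,1)$. The approximations \eqref{sa1} and \eqref{sa2} with $\xi = \theta$ are then \eqref{app} and \eqref{De} from Lemma~\ref{sl}, and with $\xi = \theta-0$ they are the analogous statements from Lemma~\ref{H-}; the error bound $O(\sum_{k=0}^{0}\delta_k) = O(\delta_0)$ matches. The bulleted items at $\theta = 0$ and $\theta = 1-0$ are obtained by specializing Lemma~\ref{sl} to $\theta = 0$ and its left-limit analog to $\theta = 1-0$; both rely only on the separation estimate of Lemma~\ref{sep}, whose proof uses nothing beyond $\|x\|_1 \neq 0$ and hence applies uniformly at all three phase parameters.

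Hypothesis~\ref{h3} is precisely Proposition~\ref{1g}, together with its left-limit version flagged in Remark~\ref{lh} (proved verbatim by replacing $\theta$ by $\theta-0$). The decay rate $\gamma_1 = \gamma_0(1 - l_1^{-1/80})$ is consistent with the definition of $\gamma_n$. Finally, Hypothesis~\ref{h4} is vacuous at this scale: the $0$-regularity condition requires $B_{k+1}(p) \subset \Lambda$ whenever $p \in \Lambda \cap S_k(\theta,E)$ for $0 \leq k \leq m-2$, which is an empty range when $m=1$, so no condition is imposed. Consequently there is no genuine obstacle at this step---the proposition is a base-case bookkeeping lemma---and the only point requiring care is ensuring the Schur-complement apparatus applies uniformly to all of $\theta$, $\theta-0$, $0$, and $1-0$; this is immediate from Lemma~\ref{sep}, which supplies the $20\delta_0$-separation needed in each case.
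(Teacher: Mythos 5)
Your proposal is correct and follows essentially the same route as the paper's proof: collect Remark~\ref{454}, Proposition~\ref{715}, and Proposition~\ref{814} for Hypothesis~\ref{h1}; Lemmas~\ref{sl} and~\ref{H-} for Hypothesis~\ref{h2}; the Green's-function propositions for Hypothesis~\ref{h3}; and vacuity for Hypothesis~\ref{h4}. The one small omission is that Hypothesis~\ref{h3} at scale $n=1$ covers the range $0\le m\le 1$, so besides Proposition~\ref{1g} you should also cite Proposition~\ref{0g} for the $m=0$ ($0$-good, i.e.\ $0$-nonresonant) case, as the paper does.
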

		\begin{proof}
			We check the hypotheses one by one. \begin{itemize}
				\item   The $1$-periodic, single-valued, real-valued  properties of $E_1(\theta)$ follows  from the construction (c.f. Remark \ref{454}).  Item (2) and  (3)  follow from Proposition \ref{715}.  
				\eqref{1639} and \eqref{1640} follow from  Proposition \ref{814}.
				\item
				Lemma \ref{sl} and \ref{H-} yield Hypothesis \ref{h2}.
				\item The good Green's function estimates for $0$-nonresonant and $1$-good sets are verified in  Proposition \ref{0g} and \ref{1g}.
				\item There is nothing to check with  Hypothesis \ref{h4} for $n=1$ since we make a convention  that any set is 
				$0$-regular.
			\end{itemize}
		\end{proof}
		\subsection{Verification of inductive hypotheses for the general inductive scale} In this section, we will prove the following theorem:
		\begin{thm}\label{key2}
			There exists $\varepsilon_0=\varepsilon_0(L,d,\tau,\gamma)>0$ such that for all $0\leq \varepsilon\leq \varepsilon_0$ and $n\geq 1$, if the  inductive hypotheses hold true at the scale $n$, then they  remain valid  at the  scale $n+1$.
		\end{thm}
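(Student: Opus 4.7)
The plan is to mimic the scale $n=1$ construction, but with the role of $v(\theta)$ replaced by $E_n(\theta)$ and the role of Proposition \ref{0g} replaced by Hypothesis \ref{h3} at scale $n$. Concretely, we first select the block $B_{n+1}$ with $Q_{l_{n+1}} \subset B_{n+1} \subset Q_{l_{n+1}+50 l_n}$ so that Hypothesis \ref{h4} is met, i.e., so that $B_{n+1}^o := B_{n+1} \setminus \{o\}$ is $n$-regular with respect to every relevant pair $(\xi,z)$ with $\xi \in \{\theta, \theta-0, 0, 1-0\}$ and $z$ in the indicated neighborhoods of $E_n(\xi)$. The standard way to do this is to define $B_{n+1}$ as $Q_{l_{n+1}}$ enlarged by appending the finitely many $n$-scale blocks $B_n(p)$ associated with those $p \in Q_{l_{n+1}}$ that become $k$-resonant for some $k \le n$; Diophantine separation together with the local Lipschitz monotonicity \eqref{Lc} for $E_n$ guarantees that these appended boundaries do not accumulate and that the enlargement stays within $Q_{l_{n+1}+50l_n}$.

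With $B_{n+1}$ so chosen, combine Hypothesis \ref{h1}(2)--(3) at scale $n$ with the $n$-scale Green's function bound (Hypothesis \ref{h3}) to control the resolvent $(z\operatorname{Id}_{B_{n+1}^o} - H_{B_{n+1}^o}(\theta))^{-1}$ on the annular/disk region $A_n(\theta,10\delta_n)$. The key point is that $o \notin B_{n+1}^o$ is an $n$-resonant site, while all other $n$-resonant points in $B_{n+1}$ are surrounded by their own $B_n$-blocks inside $B_{n+1}^o$, so $B_{n+1}^o$ is $n$-good away from any $z$ close to $E_n(\theta)$. Writing
\[
 s_{\theta,n+1}(z) = z - v(\theta) - c^{\mathrm T}\bigl(z\operatorname{Id}_{B_{n+1}^o} - H_{B_{n+1}^o}(\theta)\bigr)^{-1} c
\]
and using $\|c\| \le 2d\varepsilon$ together with $\|G_{B_{n+1}^o}^{\theta,z}\| \le 10\delta_n^{-1}$, the perturbation $r_{\theta,n+1}(z)$ is bounded by $\lesssim \varepsilon \delta_n^{-1} \ll \delta_n$. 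Rouché's theorem applied on $\partial A_n(\theta,10\delta_n)$ then yields a unique zero $E_{n+1}(\theta)$ of $s_{\theta,n+1}$ in that region, satisfying $|E_{n+1}(\theta) - E_n(\theta)| \le e^{-l_n}$, and Hypothesis \ref{h1}(2)--(3) at scale $n+1$ follow. The gradient estimate plus a maximum-principle argument as in Lemma \ref{sl} then gives the Schur-complement approximations \eqref{sa1}--\eqref{sa2} of Hypothesis \ref{h2}, and the analogous statements at $\theta - 0$, $0$, $1-0$ are obtained identically.

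The hard part, as emphasized in the introduction, is verifying the monotonicity clauses \eqref{1639}--\eqref{1640} of Hypothesis \ref{h1} at scale $n+1$. The local Lipschitz monotonicity \eqref{1639} follows from \eqref{mno} at scale $n+1$ and the Min-Max identification of $E_{n+1}$ with a single branch $\lambda_i(\theta)$ of the Rellich functions of $H_{B_{n+1}}(\theta)$ on each continuity interval $[\alpha_k,\alpha_{k+1})$; the continuity/uniqueness argument of Proposition \ref{814} goes through verbatim because of Hypothesis \ref{h1}(3) at scale $n+1$. For the non-negative-jump claim, fix $\alpha_k = \beta_x = \{-x\cdot\omega\}$ with $x \in B_{n+1}$, $x\ne o$, and mimic the Schur-complement comparison in the proof of Proposition \ref{814}: since $v(\beta_x) = v(\beta_x - 0)$ for $x\ne o$,
\[
 s_{\beta_x,n+1}(z) - s_{\beta_x - 0,n+1}(z) = c^{\mathrm T}\bigl(z - H_{B_{n+1}^o}(\beta_x - 0)\bigr)^{-1}\mathbf e_x \mathbf e_x^{\mathrm T}\bigl(z - H_{B_{n+1}^o}(\beta_x)\bigr)^{-1} c,
\]
which by Cramer's rule reduces to the sign of $\det(z - H_{B_{n+1}^o}(\beta_x - 0))\det(z - H_{B_{n+1}^o}(\beta_x))$ times a non-negative factor. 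Schur-decomposing each of these determinants along the $x$-site and using the inductive $n$-good estimate for $B_{n+1}^o \setminus\{x\}$ — this is where Hypothesis \ref{h3} at scale $n$ is crucial, since $B_{n+1}^o\setminus\{x\}$ contains no $n$-scale resonant points other than $x$ itself near the relevant energies — we show both Schur complements have definite, opposite signs, yielding the key inequality $s_{\beta_x,n+1}(z) \le s_{\beta_x - 0,n+1}(z)$. Combined with $s_{\beta_x,n+1}'(z) \approx 1$ from \eqref{sa2}, this forces $E_{n+1}(\beta_x) \ge E_{n+1}(\beta_x - 0)$. The main obstacle is bookkeeping: at scale $n+1$ the ``inner'' set $B_{n+1}^o\setminus\{x\}$ is no longer $0$-nonresonant, so one must invoke Propositions analogous to \ref{qiang}/\ref{1448} (forward-referenced in the introduction) to confirm the required sign after rank-one perturbation and to control all error terms of order $\sum_{k<n} \delta_k$ arising from \eqref{sa1}.

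Finally, to upgrade to Hypothesis \ref{h3} at scale $n+1$, we repeat the proof of Proposition \ref{1g} with one more layer. Fix $\theta, E$ and a finite $(n+1)$-good set $\Lambda$. For $x\in\Lambda$ close to the new resonant set $S_n(\theta,E)\cap\Lambda$, choose $U(x)$ to be the $B_{n+1}(p)$ containing $x$; for $x$ close to $S_k(\theta,E)\cap\Lambda$ with $k<n$ (but far from $S_n(\theta,E)\cap\Lambda$), choose $U(x) = B_{k+1}(p)$; and for $x$ far from every resonance, take $U(x)$ to be a small cube contained in the $0$-nonresonant region. The $(n+1)$-regularity gives $U(x)\subset\Lambda$ and $n$-goodness of $U(x)$ in the appropriate sense, so the inductive Hypothesis \ref{h3} at scale $n$ applies on each $U(x)$. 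Summing via Schur's test yields $\|G_\Lambda^{\theta,E^*}\| \le 10\delta_{n+1}^{-1}$ (using $(n+1)$-nonresonance and Lipschitz monotonicity of $E_{n+1}$), and iterating the resolvent identity in the usual telescoping way produces the off-diagonal decay with rate $\gamma_{n+1} = \gamma_n(1 - l_{n+1}^{-1/80})$. The decay loss at each step is super-exponentially small, so \eqref{rate} ensures a uniformly positive limiting rate, completing the induction.
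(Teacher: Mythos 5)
Your overall strategy is the same as the paper's: construct $B_{n+1}$ so that Hypothesis \ref{h4} holds, build $E_{n+1}$ by Schur complement and Rouch\'e, upgrade to the Schur-complement approximations via the maximum principle and gradient estimate, verify the non-negative jump inequality by comparing the two Schur complements through a rank-one perturbation, and propagate Green's function estimates via resolvent-identity telescoping. The pieces you lay out for constructing $E_{n+1}(\theta)$ and for Hypothesis \ref{h3} at scale $n+1$ essentially match Proposition \ref{528}, Lemma \ref{sln}, Lemma \ref{1727`}, and Proposition \ref{1g`}.

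However, your handling of the non-negative jump inequality --- the part you correctly identify as the hard part --- has a genuine gap. You assert that after Schur-decomposing at the $x$-site, $B_{n+1}^o\setminus\{x\}$ is $n$-good ``since $B_{n+1}^o\setminus\{x\}$ contains no $n$-scale resonant points other than $x$,'' and you conclude that both Schur complements have definite, opposite signs. This is not sufficient for two reasons. First, even if $B_{n+1}^o\setminus\{x\}$ is $n$-nonresonant, it may contain lower-order resonant sites $x_1\neq x$ at some scale $m<n$; in this case the factor $\det(z\operatorname{Id}-H_{B_{n+1}^o\setminus\{x\}}(\beta_x))$ arising from the Schur decomposition at the $x$-site has no evident sign, and the naive estimate does not close. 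The paper's Lemma \ref{1448} addresses exactly this by identifying the \emph{largest-order} resonant site $x_1$ in $B_{n+1}^o$; if $x_1\neq x$ one does \emph{not} Schur-decompose at $x$ but at $x_1$, proves (via Claim \ref{6}, which uses the inductive non-negative-jump property \eqref{1640} at scale $m+1$ to compare $E_{m+1}(\beta_x+x_1\cdot\omega)$ with $E_{m+1}(\beta_x+x_1\cdot\omega-0)$ on both sides of $z$) that the Schur complements at $x_1$ have the \emph{same} sign, and hence that deleting $x_1$ preserves the sign of the determinant product; this is then iterated (Propositions \ref{953}, \ref{it}) until $x$ is the top-order resonance or no resonances remain. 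Your proposal contains no analogue of this iterative deletion, and without it the argument does not terminate.

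Second, even once one reduces to the case that $x$ is the top-order resonance (Case 1 of Lemma \ref{1448}), applying Hypothesis \ref{h3} once is not enough to pin down the sign of $z-v(\beta_x+x\cdot\omega)-\tilde c^{\rm T}(\cdots)^{-1}\tilde c$: the margin $|z-v(\beta_x+x\cdot\omega)|$ may be as small as order $\delta_m$, comparable to the perturbation allowed by a crude resolvent bound $\|G\|\lesssim\delta_{m+1}^{-1}$. The paper resolves this by a \emph{second} Schur reduction, peeling $B_{n+1}^1$ down to $B_{m+1}^o(x)$, invoking the inductive Schur-complement approximation \eqref{sa10}/\eqref{sa20} for the scale-$(m+1)$ Rellich function $E_{m+1}(0)$ (using that $\beta_x+x\cdot\omega\in\Z$), and then combining with the separation $|z-E_{m+1}(0)|\geq\delta_{m+1}$ from the maximality of $m$ together with the one-sided bound $z\geq E_n(\beta_x-0)-5\delta_n$ coming from $z\in I$. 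Your proposal gestures toward ``Propositions analogous to \ref{qiang}/\ref{1448}'' but this is circular --- those propositions \emph{are} the claim being established at scale $n+1$, and the substance (the iterated deletion of top-order resonant sites plus the two-level Schur reduction with the $(m+1)$-scale Rellich approximation) must be supplied.

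A smaller imprecision: in the block construction you say to enlarge $Q_{l_{n+1}}$ by appending ``$n$-scale blocks $B_n(p)$.'' The paper instead enlarges iteratively, appending cubes $Q_{10l_{n-k}}+x$ for $x$ in the (phase-uniform) resonant set $X_{n-k-1}$ at each level $k=0,\dots,n-1$; the nesting and size bounds $l_{k+1}+50l_k\leq 10 l_{k+1}$ are needed so that the final $B_{n+1}$ lies inside $Q_{l_{n+1}+50l_n}$ and contains all required $B_{k+1}(p)$. Also, the set you must control must be uniform over all pairs $(\theta,E)$ in the relevant ranges (Lemma \ref{1222}), since $B_{n+1}$ is a fixed subset of $\Z^d$; this is what makes the separation argument a bit delicate and is worth stating explicitly.
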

	\begin{rem}
		Here the choice of $\varepsilon_0$ can be traced from  the first inductive step, which is dictated by finitely many conditions.
	\end{rem}
		The proof of Theorem \ref{key2} is divided into the following  several parts. First,  we list  two lemmas  that will be used  later. We omit the proofs since they are obvious. 
		\begin{lem}\label{con}
			For $0\leq k\leq m\leq n$, with the notation from Definition \ref{A},  we have the inclusion relations{\rm :}
			\begin{itemize}
				\item for  $\theta\in \R,$
				$$D(E_m(\theta),10\delta_m)\subset D(E_k(\theta),10\delta_k),$$  $$D(E_m(\theta-0),10\delta_m)\subset D(E_k(\theta-0),10\delta_k).$$
				\item  for $\theta\in \R\setminus\T,$
				$$A_m(\theta,10\delta_m)\subset A_k(\theta,10\delta_k). $$  \end{itemize}
			Moreover, they remain valid if all the $``10"$ above replaced by $``5"$.
		\end{lem}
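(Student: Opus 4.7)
My plan is to prove all three inclusions by a single telescoping argument based on the approximation property \eqref{h12}.

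The key quantitative input is that for any $0\leq k\leq m\leq n$ and any phase $\xi$ (to be read as $\theta$, $\theta-0$, $0$, or $1-0$), iterating \eqref{h12} gives
\begin{equation*}
|E_m(\xi)-E_k(\xi)|\leq \sum_{j=k+1}^{m} e^{-l_{j-1}}\leq 2e^{-l_k},
\end{equation*}
where in the last step I use the super-exponential growth $l_j=l_1^{2^{j-1}}$ so the geometric tail sums to something bounded by twice the first term. Since $l_k>l_k^{2/3}$, we have $e^{-l_k}\ll \delta_k=e^{-l_k^{2/3}}$, and in fact $e^{-l_k}$ is smaller than any polynomial in $\delta_k$; in particular $2e^{-l_k}+10\delta_m\leq 10\delta_k$ provided $\varepsilon_0$ is small (and hence $l_1$ large), because $\delta_m\leq \delta_k$.

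With this estimate in hand, the first two inclusions are immediate: if $z\in D(E_m(\xi),10\delta_m)$ then by the triangle inequality $|z-E_k(\xi)|\leq 10\delta_m+2e^{-l_k}\leq 10\delta_k$. The same computation handles the $5\delta$ versions verbatim, since $5\delta_m+2e^{-l_k}\leq 5\delta_k$ for the same reason.

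For the third inclusion involving $A_m(\theta,10\delta_m)$, I will exploit the fact that by \eqref{1640}, the set $[E_m(\theta-0),E_m(\theta)]$ is a (possibly degenerate) interval on the real line, and the same holds for $E_k$. Given $z\in A_m(\theta,10\delta_m)$, write $z\in D(a,10\delta_m)$ with $a=tE_m(\theta-0)+(1-t)E_m(\theta)$ for some $t\in[0,1]$. The point $b=tE_k(\theta-0)+(1-t)E_k(\theta)\in[E_k(\theta-0),E_k(\theta)]$ satisfies
\begin{equation*}
|a-b|\leq t|E_m(\theta-0)-E_k(\theta-0)|+(1-t)|E_m(\theta)-E_k(\theta)|\leq 2e^{-l_k},
\end{equation*}
so $|z-b|\leq 10\delta_m+2e^{-l_k}\leq 10\delta_k$ and thus $z\in D(b,10\delta_k)\subset A_k(\theta,10\delta_k)$. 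The $5\delta$ version follows identically. No step here is really an obstacle; the only point requiring a moment of care is the convex-combination argument for $A_m$, which relies crucially on the non-negative-jump property \eqref{1640} so that $[E_m(\theta-0),E_m(\theta)]$ is genuinely a real interval to which convex combinations can be applied.
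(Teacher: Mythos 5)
Your proof is correct and follows the paper's approach: both telescope the approximation bound \eqref{h12} to control $|E_m(\xi)-E_k(\xi)|$ and then apply the triangle inequality. The paper states this one-liner (using the bound $\sum_{j=k}^{m-1}e^{-l_j}\leq\delta_k-\delta_m$) and declares the inclusions immediate; your convex-combination argument simply makes explicit the step for $A_m$, which the paper leaves to the reader.
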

		\begin{lem}\label{no}
			For $0\leq k\leq m\leq n$, with the notation from Definition \ref{A}, for any $\theta\in \R\setminus\Z$ and $\tilde\theta\in \R$, we have 
			\begin{align*}
				\min(|E_m(\tilde\theta)-E_m(\theta)|,|E_m(\tilde\theta)-E_m(\theta-0)|)& \leq  \operatorname{dist}(E_m(\tilde\theta),A_m(\theta,R))+R,\\
				\min(|E_m(\tilde\theta-0)-E_m(\theta)|,|E_m(\tilde\theta-0)-E_m(\theta-0)|)& \leq  \operatorname{dist}(E_m(\tilde\theta-0),A_m(\theta,R))+R.
			\end{align*}
		\end{lem}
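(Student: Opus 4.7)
The plan is to reduce both inequalities to an elementary planar geometry fact about the $R$-thickening of a closed real segment. By Hypothesis~\ref{h1}, each of $E_m(\tilde\theta),\ E_m(\theta-0),\ E_m(\theta)$ is real; and for $\theta\in\R\setminus\Z$ the interval $I:=[E_m(\theta-0),E_m(\theta)]$ is well-defined, either because $\theta$ is a continuity point of $E_m$ (so $E_m(\theta-0)=E_m(\theta)$ and $I$ is a single point) or because $\theta$ coincides with a non-integer discontinuity $\alpha_k$ ($k\ge 2$), at which the non-negative jump property \eqref{1640} applies. Since $A_m(\theta,R)$ is by definition the $R$-neighborhood of $I$ in $\C$, one has the geometric identity
$$\operatorname{dist}(w,A_m(\theta,R))+R=\max\bigl(R,\operatorname{dist}(w,I)\bigr)\qquad \text{for every }w\in\C.$$

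Setting $w:=E_m(\tilde\theta)\in\R$, I would split into two cases. If $w\notin I$, then since $w$ and $I$ both lie on the real line the closest point of $I$ to $w$ is one of the two endpoints, so $\operatorname{dist}(w,I)=\min(|w-E_m(\theta-0)|,|w-E_m(\theta)|)$, and the first inequality is immediate from the identity above. If instead $w\in I$, the right-hand side collapses to $R$, and one must verify
$$\min(|w-E_m(\theta-0)|,|w-E_m(\theta)|)\le \tfrac12(E_m(\theta)-E_m(\theta-0))\le R.$$
The left bound is elementary; the right bound reduces to a uniform control on the jump $E_m(\theta)-E_m(\theta-0)$ at non-integer phases. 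Iterating Hypothesis~\ref{h1}(2) down to scale $0$ together with the continuity of $v$ on $\R\setminus\Z$ (BLM class) gives $|E_m(\theta)-E_m(\theta-0)|\le 2\sum_{k=0}^{m-1}e^{-l_k}$, which is negligible compared to the values of $R$ (of order $\delta_m$) for which the lemma is invoked. The second inequality is proved by the identical argument with $w:=E_m(\tilde\theta-0)\in\R$, using only that the left limit is again real.

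The main point requiring care is the interior case $w\in I$: the naive half-interval bound $\tfrac12(E_m(\theta)-E_m(\theta-0))$ must actually fit inside $R$, which is not automatic from the definition of $A_m(\theta,R)$ alone but follows from the a priori smallness of the jump of $E_m$ at non-integer phases, inherited through the induction from the continuity of $v$ on $\R\setminus\Z$. Once this quantitative jump bound is in hand, the lemma becomes a one-line geometric statement about a real interval and its $R$-thickening.
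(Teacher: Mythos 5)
Your reduction to a planar geometry fact is a clean idea, and the identity $\operatorname{dist}(w,A_m(\theta,R))+R=\max(R,\operatorname{dist}(w,I))$ for $I=[E_m(\theta-0),E_m(\theta)]$ is correct, as is the handling of the case $w\notin I$. However, your treatment of the case $w\in I$ contains a genuine gap, on two levels.

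First, the quantitative claim fails. You propose to bound the jump by iterating Hypothesis \ref{h1}(2), obtaining $|E_m(\theta)-E_m(\theta-0)|\le 2\sum_{k=0}^{m-1}e^{-l_k}$, and then assert this is negligible compared to $R\sim\delta_m$. But the sum is dominated by its first term $e^{-l_0}=\varepsilon$, whereas $\delta_m=e^{-l_m^{2/3}}\le e^{-l_1^{2/3}}$ with $l_1\approx|\ln\delta_0|^4$, so $\delta_m\ll\varepsilon$ for every $m\ge 1$. Thus $2\sum_{k=0}^{m-1}e^{-l_k}\approx 2\varepsilon\gg 20\delta_m\ge 2R$ in exactly the regime where the lemma is invoked (e.g.\ in Lemma \ref{1222} with $R=10\delta_m$), and your half-interval bound does not fit inside $R$. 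More fundamentally, the lemma is stated for arbitrary $R>0$, so any proof that needs $R$ to dominate the jump size cannot establish the statement as written.

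Second, and more importantly, the interior case you are worried about never actually arises, and this is what makes the paper's argument go through with no quantitative input at all. By Hypothesis \ref{h1}(4), $E_m$ is strictly increasing on each continuity interval $[\alpha_k,\alpha_{k+1})$ with non-negative jumps at the $\alpha_k$, so the range of $E_m$ (and of its left limits) over $[0,1)$ skips the open gap $(E_m(\theta-0),E_m(\theta))$ entirely. Hence $E_m(\tilde\theta)\in\R\setminus(E_m(\theta-0),E_m(\theta))$ and likewise $E_m(\tilde\theta-0)$. In your notation, $w$ is either outside $I$ or equal to an endpoint, in which case $\min(|w-E_m(\theta-0)|,|w-E_m(\theta)|)=0\le R$ trivially. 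This structural observation — not a smallness estimate on the jump — is the missing ingredient; it replaces your problematic interior case and yields the lemma for all $R>0$. The paper's own proof proceeds this way: it uses the gap-avoidance to compare $|z-E_m(\tilde\theta)|$ directly against the nearer endpoint, splitting on whether $\Re z$ lies inside or outside the open interval, without ever needing to size the jump.
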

		
		\subsubsection{Construction of the block $B_{n+1}$}
		In this section, we construct the  block $B_{n+1}$ satisfying hypothesis \eqref{h11} and Hypothesis \ref{h4}. 
		\begin{lem}\label{1222}
			Let $0\leq m\leq n$.  
			Define the sets   $$X_m^{(1)}=\{ x\in \Z^d : \text{ there exists  $E\in D(E_{n}(0),10\delta_{n})$  with $x\in S_m(0,E)$}\},$$  
			$$X_m^{(2)}=\{ x\in \Z^d : \text{ there exists  $E\in D(E_{n}(1-0),10\delta_{n})$  with $x\in S_m(1-0,E)$}\},$$
			$$X_m^{(3)}=\{ x\in \Z^d : \text{ there exist    $\theta\in \R\setminus\Z$, $E\in A_{n}(\theta,10\delta_{n})$  with $x\in S_m(\theta,E)\cup S_m(\theta-0,E)$}\}$$
			
			and $$X_m=X_m^{(1)}\cup X_m^{(2)}\cup X_m^{(3)}.$$  Then we have 
			$$\inf_{\substack{x,y\in X_m\\x\neq y}}\|x-y\|_1\geq 100l_{m+1}.$$ 
		\end{lem}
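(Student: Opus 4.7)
The plan is to show that every $x \in X_m$ satisfies $\|x \cdot \omega\|_{\T} \leq 11\delta_m/L$, and then extract the desired separation from the Diophantine property of $\omega$.

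For each $x \in X_m$ I would first unpack the defining condition to obtain a phase–energy pair $(\theta_x, E_x)$: either $\theta_x = 0$ (from $X_m^{(1)}$), $\theta_x = 1-0$ (from $X_m^{(2)}$), or $\theta_x \in \R\setminus \Z$ (from $X_m^{(3)}$), with $E_x$ lying in the appropriate $n$-scale neighborhood. Applying Lemma \ref{con} enlarges the $n$-scale neighborhoods $D(E_n(\cdot), 10\delta_n)$ and $A_n(\cdot, 10\delta_n)$ to their $m$-scale counterparts, and in the tube case Lemma \ref{no} reduces membership in $A_m(\theta_x, 10\delta_m)$ to nearness to one of $\{E_m(\theta_x), E_m(\theta_x - 0)\}$. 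Combined with the $\delta_m$-resonance coming from $x \in S_m(\theta_x, E_x) \cup S_m(\theta_x-0, E_x)$, this produces an inequality of the form
\[
|E_m^{(\sigma_1)}(\theta_x + x\cdot\omega) - E_m^{(\sigma_2)}(\theta_x)| \leq 11\delta_m,
\]
where $E_m^{(+)} = E_m(\cdot)$, $E_m^{(-)} = E_m(\cdot - 0)$, and $(\sigma_1, \sigma_2) \in \{+, -\}^2$ records the relevant one-sided limits.

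The next step is to translate this into a bound on $\|x\cdot\omega\|_\T$. Let $\phi = \{\theta_x + x\cdot\omega\}$ and $\theta = \{\theta_x\}$ in $[0,1)$ (with the convention that $\theta_x = 1-0$ corresponds to $\theta = 0$ paired with the left-continuous version at $1$). Using $1$-periodicity the displayed inequality transfers to a comparison between $E_m^{(\sigma_1)}(\phi)$ and $E_m^{(\sigma_2)}(\theta)$. The uniform Lipschitz monotonicity \eqref{Lc} together with the non-negativity of the jumps of $E_m$ away from the origin \eqref{1640} then yield the lower bound $|E_m^{(\sigma_1)}(\phi) - E_m^{(\sigma_2)}(\theta)| \geq L\,|\phi-\theta|$ in every sign configuration; the boundary cases coming from $X_m^{(1)}, X_m^{(2)}$ give instead $L\min(\phi, 1-\phi) \leq 11\delta_m$, which yields the same conclusion $\|x\cdot\omega\|_\T \leq 11\delta_m/L$ after a periodic unwrap. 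The principal obstacle I anticipate is the mixed cases $(\sigma_1,\sigma_2) = (+,-)$ and $(-,+)$: a naive comparison would lose monotonicity across a jump of $E_m$, and the rescue is exactly the non-negativity $E_m(\theta) \geq E_m(\theta-0)$ from Hypothesis \ref{h1}(4), which allows one to swap between the two continuous versions at the cost of a non-negative error, preserving the $L$-Lipschitz lower bound.

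Finally, for distinct $x, y \in X_m$ the triangle inequality gives $\|(x-y)\cdot\omega\|_\T \leq 22\delta_m/L$, and since $x - y \neq 0$ the Diophantine condition yields $\|(x-y)\cdot\omega\|_\T \geq \gamma/\|x-y\|_1^\tau$. Combining,
\[
\|x-y\|_1 \geq \left(\frac{L\gamma}{22\delta_m}\right)^{1/\tau}.
\]
Since $\delta_m = e^{-l_m^{2/3}}$ and $l_{m+1} = l_m^2$, the right-hand side is super-polynomial in $l_m$ and trivially exceeds $100l_{m+1}$ for $m \geq 1$; for $m = 0$ the inequality reduces to $(L\gamma/(22\varepsilon_0^{1/20}))^{1/\tau} \geq 100[|\ln\varepsilon_0^{1/20}|^4]$, a power-over-polylog estimate that holds once $\varepsilon_0 = \varepsilon_0(L, d, \tau, \gamma)$ is chosen sufficiently small.
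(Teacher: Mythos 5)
Your proof follows the paper's argument essentially step for step: in each of the three cases you reduce to $L\|x\cdot\omega\|_\T \leq 11\delta_m$ via Lemma \ref{con} (enlarging the $n$-scale neighborhood to the $m$-scale one) and Lemma \ref{no} (in the tube case), then finish by the triangle inequality and the Diophantine estimate exactly as the paper does. Your side remark that the $m=0$ step needs a separate (power-over-polylog) verification since $\delta_0 = \varepsilon_0^{1/20}$ is not of the form $e^{-l_m^{2/3}}$ is a minor detail the paper glosses over, but it does not change the argument.
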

		\begin{proof}  	 We claim that $	L\|x\cdot \omega \|_\T\leq 11\delta_m$ for any $x\in X_m$.
					 Assume $x\in X_m^{(3)}$. (The other two cases are analogous.) Then  by the definition of $S_m(\theta,E)$, $S_m(\theta-0,E)$ and  $ X_m^{(3)}$, there exist $\theta\in \R\setminus\Z$, $\xi_1\in \{\theta,\theta-0\}$,  $E\in A_{n}(\theta,10\delta_{n})\subset  A_{m}(\theta,10\delta_{m})$   such that 
				$$|E_m(\xi_1+x\cdot \omega)-E|<\delta_m  .$$ It follows from Lemma \ref{no} that 
				\begin{align*}
					&	\min \left(|E_m(\theta)-E_m(\xi_1+x\cdot \omega)|,|E_m(\theta-0)-E_m(\xi_1+x\cdot \omega)|\right)\\
					\leq& \operatorname{dist}\left(A_{m}(\theta,10\delta_{m}),E_m(\xi_1+x\cdot \omega)\right)+10\delta_m\\
					\leq &|E_m(\xi_1+x\cdot \omega)-E|+10\delta_m\\\leq& 11\delta_m .
				\end{align*}
				Hence there exists  some  $\xi_2\in \{\theta,\theta-0\},$   such that 
				$$|E_m(\xi_1+x\cdot \omega)-E_m(\xi_2)|\leq 11\delta_m.$$
				Thus by the Lipschitz monotonicity property of $E_m(\theta)$, we have 
				\begin{equation*}
					L\|x\cdot \omega \|_\T	\leq |E_m(\xi_1+x\cdot \omega)-E_m(\xi_2)|\leq 11\delta_m.
				\end{equation*}
	 Let $y\in  X_m$ with $x\neq y$. Similarly, we have  $	L\|y\cdot \omega \|_\T\leq 11\delta_m.$
			By  \eqref{DC} and $x\neq y$, it follows that 
			$$L\gamma \|x-y\|_1^{-\tau}\leq  L\|(x-y)\cdot \omega \|_\T\leq L\|x\cdot \omega \|_\T+L\|y\cdot \omega \|_\T\leq 22\delta_m.$$
			Since $\delta_m= e^{-l_m^\frac{2}{3}}$, it follows from  the above inequality that   
			$$100l_{m+1}=100l_{m}^2\leq e^{\frac{1}{\tau}l_m^\frac{2}{3}}(\frac{L\gamma}{22})^{\frac{1}{\tau}}=(\frac{L\gamma}{22\delta_m})^{\frac{1}{\tau}}\leq \|x-y\|_1$$
			provided $\varepsilon_0$ is sufficiently small.
		\end{proof}
		\begin{lem}\label{1227}
			Let $L\geq 1$ and  $X\subset\Z^d$ satisfy the  condition{\rm :} 
			\begin{equation}\label{844}
				\inf_{\substack{x,y\in X\\x\neq y}}\|x-y\|_1\geq 10L.
			\end{equation}
			Then for any set $B\subset \Z^d$, there exists another set $\tilde{B}$  satisfying the following two  properties{\rm :} 
			\begin{itemize}
				\item $B\subset \tilde{B}\subset \{x\in \Z^d: \ \operatorname{dist}_1(x,B)\leq 2 L\}$.
				\item   $(Q_L+x)\subset \tilde{B}$ for any $x\in X$ such that  $(Q_L+x)\cap \tilde{B}\neq \emptyset$.
			\end{itemize}
		\end{lem}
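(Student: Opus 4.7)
My plan is to define $\tilde{B}$ explicitly in a single step, rather than iteratively, and then verify both properties directly using the separation hypothesis \eqref{844}. Specifically, I would set
$$\tilde{B} := B \cup \bigcup\{Q_L + x : x \in X,\ (Q_L+x)\cap B \neq \emptyset\}.$$
This is the most natural ``cube completion'' of $B$ with respect to the family of translates $\{Q_L + x\}_{x\in X}$.

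Verifying the first property is routine. The inclusion $B\subset\tilde{B}$ is immediate. For the upper bound, if $y\in Q_L+x$ for some $x\in X$ with $(Q_L+x)\cap B\neq \emptyset$, then picking $b\in (Q_L+x)\cap B$ we have $\|y-b\|_1\leq \|y-x\|_1+\|x-b\|_1\leq 2L$, so $\operatorname{dist}_1(y,B)\leq 2L$.

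The key point is the second property, and here is where I would use \eqref{844}. Suppose $x\in X$ satisfies $(Q_L+x)\cap\tilde{B}\neq\emptyset$; I need to show $Q_L+x\subset\tilde{B}$. If the intersection already meets $B$, we are done by the construction. Otherwise the intersection point must lie in some added cube $Q_L+y$ with $y\in X$, $y\neq x$, and $(Q_L+y)\cap B\neq\emptyset$. But any point in $(Q_L+x)\cap(Q_L+y)$ would force $\|x-y\|_1\leq 2L$, contradicting the separation $\|x-y\|_1\geq 10L$ guaranteed by \eqref{844}. Hence the second alternative cannot occur, and the argument closes in one step without iteration.

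I do not expect a genuine obstacle here: the factor $10$ in \eqref{844} is chosen exactly so that distinct cubes $Q_L+x$, $Q_L+y$ cannot meet, which prevents any ``cascade'' effect where adding one cube would force us to add another. The only thing to be careful about is checking that the upper bound ``$2L$'' (rather than something larger) really holds, which is why the single-step definition is preferable to an iterative one.
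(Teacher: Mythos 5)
Your proposal is correct and matches the paper's proof essentially verbatim: the same one-shot definition of $\tilde{B}$ as $B$ together with all cubes $Q_L+x$ ($x\in X$) that meet $B$, the same triangle-inequality bound for the first property, and the same use of the $10L$-separation to rule out two distinct cubes $Q_L+x$, $Q_L+y$ intersecting. The only cosmetic difference is that the paper derives $x=y$ and then notes the contradiction, whereas you note $y\neq x$ up front and contradict separation directly.
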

		\begin{proof}
			We define 
			\begin{equation}\label{839}
				\tilde{B}:=B\bigcup\left(\bigcup_{x\in X:(Q_L+x)\cap B\neq \emptyset }(Q_L+x)\right)
			\end{equation}
			and claim that $\tilde{B}$ satisfies the desired properties. For any $y\in \tilde{B}$, it follows that  $y\in B$ or $ y\in (Q_L+x)$ for some $x\in X$ with  $(Q_L+x)\cap B\neq \emptyset$. Hence, 
			$$\operatorname{dist}_1(y,B)\leq \|y-x\|_1+\operatorname{dist}_1(x,B)\leq L+L\leq 2L.$$ Thus we finish the proof of the first property. For the second,
			assume  there exists $x\in X$ such that $(Q_L+x)\cap \tilde{B}\neq \emptyset$. Then  by \eqref{839}, if \begin{equation}\label{849}
				(Q_L+x)\cap B= \emptyset,
			\end{equation} then there exists  $y\in X$ with \begin{equation}\label{851}
				(Q_L+y)\cap B\neq \emptyset
			\end{equation} such that  $(Q_L+x)\cap (Q_L+y)\neq \emptyset$. Thus $\|x-y\|_1\leq 2L$. By \eqref{844}, it follows that $x=y$, which contradicts with \eqref{849} and \eqref{851}. Thus \eqref{849} does not hold true. Then by \eqref{839}, it follows that  $(Q_L+x)\subset \tilde{B}$.
		\end{proof}
		\begin{prop}\label{m+1r}
			There exists a block $B_{n+1}$ satisfying the following  properties{\rm :}
			\begin{itemize}
				\item $	Q_{l_{n+1}}\subset B_{n+1}\subset Q_{l_{n+1}+50l_{n}}$.
				%
				\item    $x\in B_{n+1}\cap S_k(0,E)$  for some    $E\in D(E_n(0) ,10\delta_n)$ $\Rightarrow$ $B_{k+1}(x)\subset B_{n+1}$  {\rm ($0\leq k\leq n-1$)}.
				\item    $x\in B_{n+1}\cap S_k(1-0,E)$  for some    $E\in D(E_n(1-0) ,10\delta_n)$ $\Rightarrow$ $B_{k+1}(x)\subset B_{n+1}$  {\rm ($0\leq k\leq n-1$)}.
				\item    $x\in B_{n+1}\cap (S_k(\theta,E)\cup S_k(\theta-0,E) )$  for some $\theta\in \R\setminus\Z$,   $E\in A_n(\theta ,10\delta_n)$ $\Rightarrow$ $B_{k+1}(x)\subset B_{n+1}$  {\rm ($0\leq k\leq n-1$)}.
			\end{itemize}
		\end{prop}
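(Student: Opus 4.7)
The plan is to construct $B_{n+1}$ by iteratively applying Lemma \ref{1227} over the scales $k=0,1,\ldots,n-1$. Concretely, set $B^{(0)} := Q_{l_{n+1}}$, and for each $k$ apply Lemma \ref{1227} with input set $B=B^{(k)}$, selector $X=X_k$ from Lemma \ref{1222}, and cube parameter $L=L_k:=l_{k+1}+50l_k$ (chosen so that $B_{k+1}\subset Q_{L_k}$ by the inductive hypothesis \eqref{h11}), producing $B^{(k+1)}$; finally set $B_{n+1}:=B^{(n)}$. The separation hypothesis of Lemma \ref{1227} is immediate: since $L_k\leq 2l_{k+1}$, we have $\inf_{x\neq y\in X_k}\|x-y\|_1\geq 100 l_{k+1}\geq 10 L_k$ by Lemma \ref{1222}.

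The size bound $B_{n+1}\subset Q_{l_{n+1}+50l_n}$ follows from the first item of Lemma \ref{1227}: $B^{(k+1)}\subset\{x:\operatorname{dist}_1(x,B^{(k)})\leq 2L_k\}$, so $B_{n+1}\subset Q_{l_{n+1}+2\sum_{k=0}^{n-1}L_k}$, and the doubly-exponential growth $l_{k+1}=l_k^2$ forces $2\sum L_k\leq 50 l_n$ (the sum being dominated by its last term). The three regularity items are handled uniformly; I describe the generic-phase case. Fix $(\theta_0,E_0)$ with $\theta_0\in\R\setminus\Z$ and $E_0\in A_n(\theta_0,10\delta_n)$, and let $x\in B_{n+1}\cap S_k(\theta_0,E_0)$. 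Since $S_k(\theta_0,E_0)\subset X_k$, the point $x$ lies in $X_k$, and I trace its provenance through the construction: either $x\in Q_{l_{n+1}}\subset B^{(k)}$, in which case Lemma \ref{1227} at step $k$ directly gives $B_{k+1}(x)\subset B^{(k+1)}\subset B_{n+1}$; or $x\in B_{k_1+1}(z_1)\setminus B^{(k_1)}$ for some step $k_1\geq 1$ and some $z_1\in X_{k_1}\cap B^{(k_1)}$. If $k_1\leq k$ then $x\in B^{(k)}$ and the direct argument applies, while if $k_1>k$ the inclusion $B_{k+1}(x)\subset B_{n+1}$ reduces, via $x-z_1\in S_k(\theta_0+z_1\omega,E_0)$, to the $k_1$-regularity of $B_{k_1+1}$ furnished by Hypothesis \ref{h4}.

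The main obstacle lies in this last reduction: to invoke Hypothesis \ref{h4} for $B_{k_1+1}$ at the shifted pair $(\theta_0+z_1\omega,E_0)$, one must verify that this pair lies in the scale-$(k_1+1)$ range, i.e., $E_0\in A_{k_1}(\theta_0+z_1\omega,10\delta_{k_1})$ in the generic case. By Lemma \ref{con} we already have $E_0\in A_{k_1}(\theta_0,10\delta_{k_1})$, and the shift is small since $\|z_1\omega\|_\T\leq 11\delta_{k_1}/L$ (from the proof of Lemma \ref{1222}); the delicate task is to transfer $A_{k_1}$-membership from $\theta_0$ to $\theta_0+z_1\omega$ by controlling the variation and jump structure of $E_{k_1}$ across this tiny arc, invoking the monotone-plus-jumps description of Hypothesis \ref{h1} and, in case a jump of $E_{k_1}$ falls inside the arc, passing to the appropriate boundary-phase case of Hypothesis \ref{h4} so that the jump interval is absorbed into the definition of $A_{k_1}$.
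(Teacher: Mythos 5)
Your overall plan — build $B_{n+1}$ by iterating Lemma \ref{1227} against the selector sets $X_k$ of Lemma \ref{1222} — is the paper's, but your iteration \emph{order} is reversed, and this is not cosmetic: it creates a genuine gap. You process $X_0$ first and $X_{n-1}$ last, whereas the paper's step $j$ uses $X=X_{n-j-1}$ with $L=10l_{n-j}$, i.e.\ it goes from the largest scale to the smallest. The reason the order matters is distance control. In the paper's order, once $X_k$ has been treated (producing $B^{(n-k)}$), every subsequent enlargement is at a strictly smaller scale, so $B^{(n)}$ stays within $\operatorname{dist}_1$-distance $30l_k\leq 10l_{k+1}$ of $B^{(n-k)}$; hence for any $x\in B^{(n)}\cap X_k$ one has $(Q_{10l_{k+1}}+x)\cap B^{(n-k)}\neq\emptyset$, the second conclusion of Lemma \ref{1227} at that step forces $(Q_{10l_{k+1}}+x)\subset B^{(n-k)}\subset B^{(n)}$, and $B_{k+1}(x)\subset Q_{10l_{k+1}}+x$ finishes. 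In your order, an $x\in X_k$ can enter $B_{n+1}$ only by lying in a cube $Q_{L_{k_1}}+z_1$ appended at a later step $k_1>k$, and such an $x$ may sit at distance of order $l_{k_1+1}$ from $B^{(k+1)}$, which is far larger than $l_{k+1}$; Lemma \ref{1227}'s step-$k$ guarantee then does not apply.

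Your attempted patch for $k_1>k$ does not close the hole. First, Lemma \ref{1227} only yields $x\in Q_{L_{k_1}}+z_1$, not $x\in B_{k_1+1}(z_1)$; since $B_{k_1+1}$ is in general a proper subset of $Q_{L_{k_1}}$, the translate $x-z_1$ need not lie in $B_{k_1+1}$, so Hypothesis \ref{h4} for $B_{k_1+1}$ does not directly speak to $x$. Second, and more fundamentally, the phase transfer $E_0\in A_{k_1}(\theta_0,10\delta_{k_1})\Rightarrow E_0\in A_{k_1}(\theta_0+z_1\cdot\omega,10\delta_{k_1})$ would require an \emph{upper} bound on the variation of $E_{k_1}$ over an arc of length $\|z_1\cdot\omega\|_\T$, whereas the BLM/UBLM assumptions and \eqref{Lc} only give the \emph{lower} Lipschitz bound $E_m(\theta_2)-E_m(\theta_1)\geq L(\theta_2-\theta_1)$; the sampling function $v$, and with it $E_{k_1}$, may climb arbitrarily steeply on a tiny arc, so the two regions $A_{k_1}(\theta_0,10\delta_{k_1})$ and $A_{k_1}(\theta_0+z_1\cdot\omega,10\delta_{k_1})$ need not overlap at $E_0$. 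The right fix is to reverse the iteration order as the paper does; the regularity then follows from a purely geometric distance estimate, with no need to trace which pair $(\theta,E)$ produced the resonance.
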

		\begin{proof}
			Define the sets   $$X_k^{(1)}=\{ x\in \Z^d : \text{ there exists  $E\in D(E_{n}(0),10\delta_{n})$  with $x\in S_k(0,E)$}\},$$  
			$$X_k^{(2)}=\{ x\in \Z^d : \text{ there exists  $E\in D(E_{n}(1-0),10\delta_{n})$  with $x\in S_k(1-0,E)$}\},$$
			$$X_k^{(3)}=\{ x\in \Z^d : \text{ there exist    $\theta\in \R\setminus\Z$, $E\in A_{n}(\theta,10\delta_{n})$  with $x\in S_k(\theta,E)\cup S_k(\theta-0,E)$}\}$$
			
			and $$X_k=X_k^{(1)}\cup X_k^{(2)}\cup X_k^{(3)}.$$
			By  Lemma \ref{1222}, we have 
			$$\inf_{\substack{x,y\in X_k\\x\neq y}}\|x-y\|_1\geq 100l_{k+1}.$$
			We start with $B^{(0)}:=Q_{l_{n+1}}$.	Setting $L=10l_{n}$, $X=X_{n-1}$ in  Lemma \ref{1227} yields  that there exists $B^{(1)}$ satisfying 
			\begin{itemize}
				\item  $B^{(0)}\subset B^{(1)}\subset \{x\in \Z^d: \ \operatorname{dist}_1(x,B^{(0)})\leq 20l_n\} .$
				\item   $(Q_{10l_n}+x)\subset B^{(1)}$ for any $x\in X_{n-1}$ such that  $(Q_{10l_n}+x)\cap B^{(1)}\neq \emptyset$.
			\end{itemize}
			By setting $L=10l_{n-k}$, $X=X_{n-k-1}$ in Lemma \ref{1227},  we inductively	define  $B^{(k+1)}$ ($1\leq k\leq n-1$), such that
			\begin{equation}\label{1321}
				B^{(k)}\subset B^{(k+1)}\subset \{x\in \Z^d: \ \operatorname{dist}_1(x,B^{(k)})\leq 20l_{n-k}\},
			\end{equation}
			\begin{equation}\label{1916}
				\text{ 	 $(Q_{10l_{n-k}}+x)\subset B^{(k+1)}$ for $x\in X_{n-k-1}$ with $(Q_{10l_{n-k}}+x)\cap B^{(k+1)}\neq \emptyset$.}
			\end{equation}
			We claim that $B^{(n)}$ is the desired block. Since $$\sum_{k=m}^{n-1}20l_{n-k}\leq 30l_{n-m},$$ it follows from \eqref{1321}  that 
			\begin{equation}\label{1316}
				B^{(n)}\subset \{x\in \Z^d :\ \operatorname{dist}_1(x,B^{(m)})\leq 30l_{n-m}\}.
			\end{equation}
			Thus  $$B^{(n)}\subset \{x\in \Z^d :\ \operatorname{dist}_1(x,B^{(0)})\leq  30l_n\}\subset  Q_{l_{n+1}+50l_{n}}.$$
			Let $x$ be such that   \begin{equation}\label{1429}
				x\in B^{(n)}\cap X_k
			\end{equation} for some $0\leq k\leq n-1$.  By \eqref{1316} with $m=n-k$, we have
			\begin{equation}\label{1858}
				B^{(n)}\subset \{x\in \Z^d :\ \operatorname{dist}_1(x,B^{(n-k)})\leq 30l_{k}\}.
			\end{equation}
			By \eqref{1429}, \eqref{1858} and $30l_k\leq l_{k+1}$, it follows that 
			$$(Q_{10l_{k+1}}+x)\cap B^{(n-k)}\neq \emptyset.$$
			By \eqref{1916}, it follows that $(Q_{10l_{k+1}}+x)\subset  B^{(n-k)}.$
			Hence, by the inductive hypothesis  \eqref{h11} and $l_{k+1}+50l_k\leq 10l_{k+1}$, it follows that 
			$$B_{k+1}(x)\subset (Q_{10l_{k+1}}+x)\subset  B^{(n-k)}\subset B^{(n)}.$$
			Hence the set $B_{n+1}:=B^{(n)}$,  with $	Q_{l_{n+1}}\subset B_{n+1}\subset Q_{l_{n+1}+50l_{n}}$ satisfies 
			\begin{itemize}
				\item For any $E\in D(E_{n}(0),10\delta_{n})$, the set    $B_{n+1}$ is 
				$n$-regular related to $(0,E)$.
				\item  For any $E\in D(E_{n}(1-0),10\delta_{n})$, the set   $B_{n+1}$ is 
				$n$-regular related to $(1-0,E)$.
				\item For any $\theta\in\R\setminus\Z$, the set  $E\in A_{n}(\theta,10\delta_{n})$,  $B_{n+1}$ is 
				$n$-regular related to $(\theta,E)$ and $(\theta-0,E)$. 
			\end{itemize}
		\end{proof}
		\subsubsection{Construction of $E_{n+1}(\theta)$}
		Having constructed the $(n+1)$-scale  block $B_{n+1}$ in the previous section,  we construct the $(n+1)$-scale Rellich function $E_{n+1}(\theta )$ of the Dirichlet restriction $H_{B_{n+1}}(\theta)$ satisfying $$|E_{n+1}(\theta)-E_{n}(\theta)|\leq e^{-l_n}$$ in this section. To this end, we utilize the Schur complement perturbation argument to show the existence and uniqueness of such $E_{n+1}(\theta)$. 
		\begin{lem}\label{sepn}
			Let $m$ be an integer such that  $0\leq m\leq n$. For any $\theta\in\R$,  $\xi_1,\xi_2\in \{\theta,\theta-0\}$  and $x\in B_{m+1}$ such that $x\neq o$, we have $$|E_{m}(\xi_1+x\cdot \omega )-E_m(\xi_2)|\geq 20\delta_m. $$
		\end{lem}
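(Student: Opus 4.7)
The plan is to exploit the uniform Lipschitz monotonicity \eqref{Lc} of $E_m$ (from Hypothesis \ref{h1}), together with $1$-periodicity and the Diophantine condition $\omega\in\mathrm{DC}_{\tau,\gamma}$, to produce a lower bound of the form $L\|x\cdot\omega\|_\T$, which in turn dominates $20\delta_m$ via a standard polynomial-versus-subexponential comparison. For $m=0$ the Rellich function is just $v$, so exactly the same reasoning works with the BLM hypothesis playing the role of \eqref{Lc} (compare Lemma \ref{sep}).

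First I would reduce to canonical representatives. Writing $\hat\theta:=\theta\bmod 1\in[0,1)$ and $\hat y:=(\theta+x\cdot\omega)\bmod 1\in[0,1)$, $1$-periodicity of $E_m$ identifies each of $E_m(\xi_1+x\cdot\omega)$ and $E_m(\xi_2)$ with $E_m$ evaluated at $\hat y$ (resp.\ $\hat\theta$) or at the left limit $\hat y-0$ (resp.\ $\hat\theta-0$); in the exceptional case $\hat\theta=0$ or $\hat y=0$ the periodicity identification $E_m(0-0)=E_m(1-0)$ is used. Since $\hat y-\hat\theta\equiv x\cdot\omega\pmod 1$ and $\hat y-\hat\theta\in(-1,1)$, one has $|\hat y-\hat\theta|\geq\|x\cdot\omega\|_\T$, and the Diophantine condition together with $x\neq o$ forces $\hat y\neq\hat\theta$.

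Next I would apply \eqref{Lc} systematically. In the generic subcase $\hat\theta,\hat y\in(0,1)$, \eqref{Lc} (and its obvious extension to left limits, obtained by monotone passage to the limit in the defining inequality) yields, for any pair $\xi_1,\xi_2\in\{\theta,\theta-0\}$,
$$|E_m(\xi_1+x\cdot\omega)-E_m(\xi_2)|\geq L|\hat y-\hat\theta|\geq L\|x\cdot\omega\|_\T.$$
In the edge case $\hat y=0$, the values $E_m(\hat y)=E_m(0)$ and $E_m(\hat y-0)=E_m(1-0)$ are the two endpoints of the monotone range of $E_m$ on $[0,1)$, and one checks $\hat\theta=\{-x\cdot\omega\}$ satisfies $\hat\theta\geq\|x\cdot\omega\|_\T$ and $1-\hat\theta=\{x\cdot\omega\}\geq\|x\cdot\omega\|_\T$; then \eqref{Lc} applied to $(0,\hat\theta)$ or $(\hat\theta,1-0)$ gives the same bound $L\|x\cdot\omega\|_\T$. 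The case $\hat\theta=0$ is symmetric.

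Finally, by Hypothesis \ref{h1} one has $x\in B_{m+1}\subset Q_{l_{m+1}+50l_m}$, hence $\|x\|_1\leq 2l_{m+1}$, and the Diophantine bound gives
$$L\|x\cdot\omega\|_\T\geq\frac{L\gamma}{\|x\|_1^\tau}\geq\frac{L\gamma}{(2l_{m+1})^\tau}=\frac{L\gamma}{2^\tau l_m^{2\tau}}.$$
Since $\delta_m=e^{-l_m^{2/3}}$, choosing $\varepsilon_0$ small enough (so that $l_1$, and hence every $l_m$, is large) makes the polynomial right-hand side dominate $20\delta_m$, completing the proof. The only genuine obstacle is the bookkeeping across the four combinations of $\xi_1,\xi_2$ and the boundary cases $\hat\theta=0$ or $\hat y=0$; this bookkeeping is routine once one invokes the uniform Lipschitz monotonicity together with the nonnegative-jump structure \eqref{1640} encoded in \eqref{Lc}.
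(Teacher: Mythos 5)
Your proposal is correct and follows essentially the same one-line argument as the paper: apply the uniform Lipschitz monotonicity \eqref{Lc} to get the lower bound $L\|x\cdot\omega\|_\T$, then combine the Diophantine condition, the size bound $\|x\|_1\leq 2l_{m+1}$, and the comparison of $l_{m+1}^{-\tau}=|\ln\delta_m|^{-3\tau}$ with $\delta_m$. The extra bookkeeping you do (canonical representatives, left-limit edge cases at $\hat\theta=0$ or $\hat y=0$) is a legitimate fleshing-out of the paper's terse step and introduces no new ideas.
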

		\begin{proof}
			Let $x\in B_{m+1}$ with $x\neq o$. Since  $B_{m+1}\subset Q_{l_{m+1}+50l_m}\subset Q_{2l_{m+1}}$, it follows that  $\|x\|_1\leq 2l_{m+1}$. Thus by \eqref{DC}, inductive hypothesis \eqref{Lc} and $l_{m+1}^{-\tau}=l_m^{-2\tau}=|\ln\delta_m|^{-3\tau}$, it follows that 
			$$|E_{m}(\xi_1+x\cdot\omega)-E_m(\xi_2)|\geq L\|x\cdot\omega\|_\T \geq L\gamma( 2l_{m+1})^{-\tau}\geq L\gamma2^{-\tau}|\ln\delta_m|^{-3\tau}\geq 20\delta_m.$$
		\end{proof}
		\begin{lem}\label{ag}
			Let $m$ be an integer of  $0\leq m\leq n$. Denote  the set  $B_{m+1}^o:=B_{m+1}\setminus \{o\}$. We  have the followings{\rm :}
			\begin{itemize}
				\item  For any $E\in D(E_m(0),10\delta_m)$, the set  $B_{m+1}^o$ is $m$-good related to $(0,E)$.
				\item  For any $E\in D(E_m(1-0),10\delta_m)$, the set  $B_{m+1}^o$ is $m$-good related to $(1-0,E)$.
				\item  For any $\theta\in \R\setminus\Z $ and $E\in A_m(\theta ,10\delta_m)$, the set  $B_{m+1}^o$ is $m$-good related to $(\theta,E)$ and $(\theta-0,E)$.
			\end{itemize}
		\end{lem}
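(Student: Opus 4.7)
The plan is to verify the three items of Lemma~\ref{ag} in parallel, treating the technically richest case (item 3, $\theta\in\R\setminus\Z$) in detail; items 1 and 2 follow from the same argument with the tube $A_m(\theta,R)$ replaced by the disks $D(E_m(0),R)$ and $D(E_m(1-0),R)$. In each case I would establish the two defining properties of an $m$-good set, namely $m$-nonresonance and $m$-regularity, separately.

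For $m$-nonresonance, namely $B_{m+1}^o\cap S_m(\xi,E)=\emptyset$ for $\xi\in\{\theta,\theta-0\}$, I argue by contradiction. If $p\in B_{m+1}^o\cap S_m(\xi,E)$ then $p\in B_{m+1}\setminus\{o\}$ and $|E_m(\xi+p\cdot\omega)-E|<\delta_m$. The hypothesis $E\in A_m(\theta,10\delta_m)$ places $E_m(\xi+p\cdot\omega)$ within $11\delta_m$ of the interval $[E_m(\theta-0),E_m(\theta)]$, so Lemma~\ref{no} furnishes some $\xi''\in\{\theta,\theta-0\}$ with $|E_m(\xi+p\cdot\omega)-E_m(\xi'')|\leq 11\delta_m$. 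Lemma~\ref{sepn} at scale $m$ supplies the opposing lower bound $|E_m(\xi+p\cdot\omega)-E_m(\xi'')|\geq 20\delta_m$, yielding the contradiction.

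For $m$-regularity, suppose $p\in B_{m+1}^o\cap S_k(\xi,E)$ with $0\leq k\leq m-1$. Since $p\in B_{m+1}$ and $B_{m+1}$ is itself $m$-regular for the given $(\xi,E)$ (by Hypothesis~\ref{h4} when $m+1\leq n$ and by Proposition~\ref{m+1r} when $m=n$), we already have $B_{k+1}(p)\subset B_{m+1}$. It remains to show $o\notin B_{k+1}(p)$, equivalently $-p\notin B_{k+1}$. By Lemma~\ref{con} the hypothesis on $E$ gives $E\in A_k(\theta,10\delta_k)$, so $E_k(\xi+p\cdot\omega)$ lies within $11\delta_k$ of $[E_k(\theta-0),E_k(\theta)]$ and Lemma~\ref{no} again yields $\xi''\in\{\theta,\theta-0\}$ with
\[
|E_k(\xi+p\cdot\omega)-E_k(\xi'')|\leq 11\delta_k.
\]
The uniform Lipschitz monotonicity~\eqref{Lc} of $E_k$ then forces $L\|p\cdot\omega\|_\T\leq 11\delta_k$, and the Diophantine property gives
\[
\|p\|_1\geq \left(\frac{L\gamma}{11\delta_k}\right)^{1/\tau}=\left(\frac{L\gamma}{11}\right)^{1/\tau}e^{l_k^{2/3}/\tau},
\]
which exceeds $100\,l_{k+1}$ once $\varepsilon_0$ is chosen small enough (exponentially beating the polynomial $l_{k+1}=l_k^2$ for $k\geq 1$, and using $\delta_0=\varepsilon_0^{1/20}$ together with $l_1=[|\ln\delta_0|^4]$ for $k=0$). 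Since $B_{k+1}\subset Q_{l_{k+1}+50l_k}\subset Q_{2l_{k+1}}$ by~\eqref{h11}, neither $p$ nor $-p$ can lie in $B_{k+1}$; in particular $o\notin B_{k+1}(p)$, as required.

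The principal obstacle is the tube geometry of $A_m(\theta,10\delta_m)$ in item 3: unlike the pointwise settings of items 1 and 2, where $E$ is guaranteed to be within $10\delta_m$ of a single value of $E_m$, membership in $A_m(\theta,10\delta_m)$ only pins $E$ down to within $10\delta_m$ of an interval whose length, equal to the jump $E_m(\theta)-E_m(\theta-0)$, can itself be comparable to or larger than $\delta_m$. A direct comparison of $E$ with $E_m(\xi)$ would therefore be too weak to trigger either Lemma~\ref{sepn} or the Lipschitz/Diophantine step. Lemma~\ref{no} is engineered exactly for this pitfall: it selects one endpoint $\xi''\in\{\theta,\theta-0\}$ close to the shifted value $E_m(\xi+p\cdot\omega)$, converting tube-closeness into point-closeness so that both the nonresonance contradiction and the $\|p\|_1$ lower bound proceed as in the simpler cases.
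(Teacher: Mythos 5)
Your argument is correct and mirrors the paper's own proof almost exactly: the nonresonance part uses Lemma \ref{sepn} to get a $20\delta_m$ lower bound on $|E_m(\xi+p\cdot\omega)-E_m(\xi'')|$ and Lemma \ref{no} to convert membership of $E$ in the tube $A_m(\theta,10\delta_m)$ into closeness to one endpoint, yielding a contradiction (the paper phrases this directly rather than by contradiction, but the mechanism is identical), and the regularity part combines Hypothesis \ref{h4}/Proposition \ref{m+1r} to get $B_{k+1}(p)\subset B_{m+1}$ with the Diophantine/Lipschitz separation estimate to rule out $o\in B_{k+1}(p)$. Your explicit remark about the "tube geometry" pitfall and the role of Lemma \ref{no} is a fair observation, but it does not constitute a deviation from the paper's route.
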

		\begin{proof}
			We give the proof of  the third item since the others are analogous. Fix $\theta\in \R\setminus\Z$ and $E\in A_m(\theta ,10\delta_m)$. 	First, we prove that  $B_{m+1}^o$ is $m$-nonresonant related to $(\theta,E)$ and $(\theta-0,E)$. It follows from Lemma \ref{sepn} and Lemma \ref{no}  that  for  $x\in B_{m+1}^o$, $\xi\in\{\theta,\theta-0\}$ we have \begin{align*}
				|E_m(\xi+x\cdot\omega ) -E|&\geq \operatorname{dist}(E_m(\xi+x\cdot\omega ),A_m(\theta,10\delta_m))\\&\geq \min_{\xi'\in\{\theta,\theta-0\}} |E_m(\xi+x\cdot\omega ) -E_m(\xi')|-10\delta_m\\
				&\geq 20\delta_m-10\delta_m\\
				&\geq 10\delta_m.
			\end{align*}
			Thus $B_{m+1}^o\cap( S_m(\theta,E)\cup S_m(\theta-0,E))=\emptyset$.  It remains to prove that  $B_{m+1}^o$ is $m$-regular related to $(\theta,E)$ and  $(\theta-0,E)$. By Hypothesis \ref{h4} and  Proposition \ref{m+1r}, it follows that  $B_{m+1}$ is $m$-regular related to $(\theta,E)$ and $(\theta-0,E)$.  Let $0\leq k\leq m-1$ and $x\in B_{m+1}^o\cap( S_{k}(\theta,E) \cup S_{k}(\theta-0,E))$. Since $B_{m+1}$ is $m$-regular and $x\in B_{m+1}^o\cap( S_{k}(\theta,E) \cup S_{k}(\theta-0,E))\subset  B_{m+1}\cap( S_{k}(\theta,E) \cup S_{k}(\theta-0,E))$, it follows that \begin{equation}\label{1545}
				B_{k+1}(x)\subset B_{m+1}.
			\end{equation}
			By the definition of  $S_{k}(\theta,E)$ and $S_{k}(\theta-0,E)$, there exists some  $\xi_1\in \{\theta,\theta-0\}$ such that $|E_k(\xi_1+x\cdot\omega)-E|<\delta_k.$ Since $E\in A_m(\theta ,10\delta_m)\subset A_k(\theta ,10\delta_k)$ (c.f. Lemma \ref{con}),  it follows from Lemma \ref{no}  that 
			\begin{align*}
				&	\min \left(|E_k(\theta)-E_k(\xi_1+x\cdot \omega)|,|E_k(\theta-0)-E_k(\xi_1+x\cdot \omega)|\right)\\
				\leq& \operatorname{dist}\left(A_{k}(\theta,10\delta_{k}),E_k(\xi_1+x\cdot \omega)\right)+10\delta_k\\
				\leq &|E_k(\xi_1+x\cdot \omega)-E|+10\delta_k\\
				\leq &11\delta_k .
			\end{align*}
			Hence there exists  some  $\xi_2\in \{\theta,\theta-0\},$   such that 
			$$|E_k(\xi_1+x\cdot \omega)-E_k(\xi_2)|\leq 11\delta_k.$$
			Thus by $x\neq o$, \eqref{DC} and  the inductive hypothesis \eqref{Lc},
			\begin{align*}
				L\gamma\|x\|_1^{-\tau}\leq L \|x\cdot \omega\|_\T	\leq |E_k(\xi_1+x\cdot\omega)-E_k(\xi_2)|
				&\leq 11\delta_k.
			\end{align*}
			It follows that  
			\begin{equation}\label{1606}
				\|x\|_1\geq (L\gamma/10)^{\frac{1}{\tau}}\delta_k^{-\frac{1}{\tau}}= (L\gamma/10)^{\frac{1}{\tau}}e^{\frac{1}{\tau}l_k^{\frac{2}{3}}}\geq 10l_{k+1}.
			\end{equation}
			By \eqref{1606} and $B_{k+1}(x)\subset   (Q_{l_{k+1}+50l_{k}}+x)$, it follows that 
			\begin{equation}\label{1610}
				o\notin B_{k+1}(x).
			\end{equation}
			By \eqref{1545} and \eqref{1610}, we get $B_{k+1}(x)\subset B_{m+1}^o$.
		\end{proof}
		By translation argument,  Lemma \ref{ag} yields the following corollary
		\begin{cor} \label{cor}	Let $m$ be an integer of  $0\leq m\leq n$ and $x\in \Z^d$. Denote $B_{m+1}^o(x):=B^o_{m+1}+x$. Then we have the followings{\rm :}
			\begin{itemize}
				\item For any $E\in D(E_m(0),10\delta_m)$, the set  $B_{m+1}^o(x)$ is $m$-good related to $(\{-x\cdot\omega\},E)$.
				\item  For any $E\in D(E_m(1-0),10\delta_m)$, the set  $B_{m+1}^o(x)$ is $m$-good related to $(\{-x\cdot\omega\}-0,E)$.
				\item For any  $\theta\in \R$ with $\theta+x\cdot\omega\notin \Z$ and   $E\in A_m(\theta+x\cdot\omega ,10\delta_m)$, the set  $B^o_{m+1}(x)$ is $m$-good related to $(\theta,E)$ and $(\theta-0,E)$.
			\end{itemize}
		\end{cor}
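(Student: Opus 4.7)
The plan is to deduce this from Lemma \ref{ag} by exploiting the translation covariance of $H(\theta)$ together with the $1$-periodicity of $v$ (and hence of $E_m$ and $H_{B_{m+1}^o}(\cdot)$). The key observation is that for any $\theta\in\R$ and $x\in\Z^d$ one has the operator identity
$$H_{B_{m+1}^o(x)}(\theta)=H_{B_{m+1}^o}(\theta+x\cdot\omega),$$
and, at the level of the resonant sets, $p\in S_k(\theta,E)$ if and only if $p-x\in S_k(\theta+x\cdot\omega,E)$. Hence the property of being $m$-nonresonant, $m$-regular, or $m$-good at $(\theta,E)$ is invariant under the simultaneous translation $(\theta,\Lambda)\mapsto(\theta+x\cdot\omega,\Lambda-x)$. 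The same identities hold verbatim with $\theta$ replaced by $\theta-0$, where the left limit commutes with the translation because $\theta\mapsto\theta+x\cdot\omega$ is an order-preserving homeomorphism of $\R$.

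I would handle the three items in turn. For the first item, set $\theta=\{-x\cdot\omega\}$, so that $\theta+x\cdot\omega\in\Z$; by $1$-periodicity of all the objects involved, being $m$-good at $(\{-x\cdot\omega\},E)$ for $B_{m+1}^o(x)$ is equivalent, after translation by $-x$, to $B_{m+1}^o$ being $m$-good at $(0,E)$, which is exactly the first bullet of Lemma \ref{ag} for $E\in D(E_m(0),10\delta_m)$. For the second item, one repeats the argument using the left-limit version: for $\theta=\{-x\cdot\omega\}$ the limit $\theta-0$ shifts under $+x\cdot\omega$ to the left limit of an integer, which by periodicity coincides with $1-0$, so the claim reduces to the second bullet of Lemma \ref{ag}. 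For the third item, given $\theta\in\R$ with $\theta+x\cdot\omega\notin\Z$, the translated parameter $\theta':=\theta+x\cdot\omega$ lies in $\R\setminus\Z$ and $E\in A_m(\theta',10\delta_m)$ by assumption, so the third bullet of Lemma \ref{ag} applied at $(\theta',E)$ gives that $B_{m+1}^o$ is $m$-good at $(\theta',E)$ and $(\theta'-0,E)$; translating back by $-x$ yields the desired conclusion for $B_{m+1}^o(x)$ at $(\theta,E)$ and $(\theta-0,E)$.

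The only real point requiring care, which I expect to be the main (minor) obstacle, is bookkeeping the left limits and fractional parts so that the reduction to Lemma \ref{ag} lands in exactly the right case: in particular checking that $\{-x\cdot\omega\}+x\cdot\omega\in\Z$ and that approaching this integer from the left under the translation corresponds to $1-0$ rather than $0$, which is where the asymmetric split in items one and two of the corollary (and of Lemma \ref{ag}) originates. Everything else is a direct transcription through the translation identity, since both the resonant set $S_k(\theta,E)$ and the regularity condition depend on $\theta$ only through $\theta+p\cdot\omega$, so they transform covariantly under $\theta\mapsto\theta+x\cdot\omega$ and $\Lambda\mapsto\Lambda-x$.
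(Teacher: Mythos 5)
Your proposal is correct and is essentially the paper's own argument: the paper introduces Corollary \ref{cor} with the single remark ``By translation argument, Lemma \ref{ag} yields the following corollary,'' and your write-up simply spells out that translation argument in full, including the correct handling of fractional parts and left limits (in particular that $\{-x\cdot\omega\}+x\cdot\omega\in\Z$ and that the left limit at an integer is $1-0$ by $1$-periodicity).
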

		\begin{prop}\label{528}
			For any $\theta\in \R$, $H_{B_{n+1}}(\theta)$ has a unique eigenvalue $E_{n+1}(\theta)$ such that $|E_{n+1}(\theta)-E_{n}(\theta)|\leq e^{-l_n}.$ Moreover, any other eigenvalues of  $H_{B_{n+1}}(\theta)$, $\hat E$ except $E_{n+1}(\theta)$  satisfy $|\hat E- E_n(\theta)|>10\delta_n$.
		\end{prop}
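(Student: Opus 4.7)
The plan is to mimic the strategy of Proposition \ref{715}, using the Schur complement and Rouch\'e's theorem, but now perturbing around the previous-scale Rellich function $E_n(\theta)$ instead of $v(\theta)$. Fix $\theta\in\R$; by $1$-periodicity we may restrict to $\theta\in[0,1)$ and treat the cases $\theta\in\Z$ and $\theta\notin\Z$ in parallel, using items (1) and (3) of Lemma \ref{ag} respectively (the analogous claim for $H_{B_{n+1}}(\theta-0)$ follows identically via item (2)). Write the block decomposition separating $\{o\}$ from $B_{n+1}^o=B_{n+1}\setminus\{o\}$, so that by the Schur complement formula the characteristic polynomial of $H_{B_{n+1}}(\theta)$ in the disk $D(E_n(\theta),10\delta_n)$ vanishes precisely at the zeros of
$$s_{\theta,n+1}(z):=z-v(\theta)-c^{\rm T}\left(z\operatorname{Id}_{B_{n+1}^o}-H_{B_{n+1}^o}(\theta)\right)^{-1}c.$$

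To see that this Schur complement is well defined on the whole disk, I would invoke Lemma \ref{ag}: for any $z\in D(E_n(\theta),10\delta_n)\subset A_n(\theta,10\delta_n)$, $B_{n+1}^o$ is $n$-good related to $(\theta,z)$, so by Hypothesis \ref{h3} one has $\|G_{B_{n+1}^o}^{\theta,z}\|\leq 10\delta_n^{-1}$ together with exponential off-diagonal decay at rate $\gamma_n$ on pairs at distance $\geq l_n^{5/6}$.

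The heart of the proof is to show $|s_{\theta,n+1}(E_n(\theta))|\ll e^{-l_n}$. Since the inductive hypothesis at scale $n$ gives $s_{\theta,n}(E_n(\theta))=0$ (c.f.~\eqref{sa1}), the identity
$$s_{\theta,n+1}(E_n(\theta))=-c^{\rm T}\bigl[G_{B_{n+1}^o}^{\theta,E_n(\theta)}-G_{B_n^o}^{\theta,E_n(\theta)}\bigr]c$$
combined with the resolvent identity (legitimate since $B_n^o\subset B_{n+1}^o$) reduces matters to estimating a boundary sum over $\partial_{B_{n+1}^o}B_n^o$. The support of $c$ lies within unit distance of $o$, while every boundary pair has $\|o-w\|_1\geq l_n-O(1)$; applying the off-diagonal decay $|G_{B_n^o}^{\theta,E_n(\theta)}(y,w)|\leq e^{-\gamma_{n-1}\|y-w\|_1}$ from Hypothesis \ref{h3} (this requires that $B_n^o$ is $(n-1)$-good at $E_n(\theta)$, which follows from Lemma \ref{ag} with $m=n-1$) and the operator bound $\|G_{B_{n+1}^o}^{\theta,E_n(\theta)}\|\leq 10\delta_n^{-1}$ yields
$$|s_{\theta,n+1}(E_n(\theta))|\leq e^{-(\gamma_\infty-o(1))l_n}\ll e^{-l_n},$$
where the super-exponential gain comes from $\gamma_\infty\geq 10$ dominating $l_n^{2/3}=|\ln\delta_n|$.

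Equipped with this, I would close the argument by Rouch\'e: on $D(E_n(\theta),e^{-l_n})$, a gradient estimate analogous to Lemma \ref{sl} gives $|s_{\theta,n+1}'(z)-1|\leq\sum_{k=0}^n\delta_k\ll 1$, so $s_{\theta,n+1}(z)$ and the linear function $z-E_n(\theta)$ differ on the boundary by strictly less than $e^{-l_n}$, producing a unique zero $E_{n+1}(\theta)$ with $|E_{n+1}(\theta)-E_n(\theta)|\leq e^{-l_n}$. Uniqueness of this zero in the larger disk $D(E_n(\theta),10\delta_n)$, which gives the claim $|\hat E-E_n(\theta)|>10\delta_n$ for all other eigenvalues, will then be read off from the factorization $s_{\theta,n+1}(z)=(z-E_{n+1}(\theta))(1+O(\sum_k\delta_k))$ obtained by maximum principle exactly as in the proof of Lemma \ref{sl}. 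The principal obstacle I anticipate is bookkeeping: the resolvent identity comparison of $G_{B_{n+1}^o}$ and $G_{B_n^o}$ must be iterated carefully so that the final exponent beats $l_n$ uniformly in $n$, and one must verify that the various ``good'' prerequisites on $B_n^o$ and $B_{n+1}^o$ at the relevant energies are simultaneously met for both $(\theta,z)$ and $(\theta-0,z)$ throughout $D(E_n(\theta),10\delta_n)$.
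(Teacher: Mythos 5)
Your proposal follows essentially the paper's route: isolate $\{o\}$ via a block decomposition, invoke Lemma \ref{ag} and Hypothesis \ref{h3} to control $G_{B_{n+1}^o}^{\theta,z}$ and $G_{B_n^o}^{\theta,z}$, compare $s_{\theta,n+1}$ to $s_{\theta,n}$ through the resolvent identity to extract a boundary-sum correction of order $e^{-3l_n}$, and conclude by Rouch\'e. The one point to repair is the order of operations in the final step: you apply Rouch\'e only on the tiny disk $D(E_n(\theta),e^{-l_n})$ and then propose to ``read off'' uniqueness in $D(E_n(\theta),10\delta_n)$ from the factorization $s_{\theta,n+1}(z)=(z-E_{n+1}(\theta))(1+O(\sum_k\delta_k))$. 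But that factorization (Lemma \ref{sln}) is \emph{derived} via the maximum principle from the a priori knowledge that $E_{n+1}(\theta)$ is the unique zero of $s_{\theta,n+1}$ on the full region, so as stated the argument is circular. The paper instead substitutes the inductive approximation \eqref{sa1} to write $s_{\theta,n+1}(z)=\bigl(1+O(\sum_{k<n}\delta_k)\bigr)\bigl(z-E_n(\theta)-\tilde g_{\theta,n+1}(z)\bigr)$ with $|\tilde g_{\theta,n+1}|\leq e^{-2l_n}$, then applies Rouch\'e once on the whole region ($D(E_n(0),10\delta_n)$ if $\theta\in\Z$, $A_n(\theta,10\delta_n)$ otherwise), obtaining existence, the bound $|E_{n+1}(\theta)-E_n(\theta)|\leq e^{-2l_n}$, and uniqueness in one stroke. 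You have all the needed estimates; simply run Rouch\'e on the large region from the start.
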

		\begin{proof}
			Fix $\theta\in \R$. Consider $d_{\theta,n+1}(z):=\operatorname{det}[z\operatorname{Id}_{B_{n+1}}-H_{B_{n+1}}(\theta)]$, which is the characteristic polynomial of $H_{B_{n+1}}(\theta)$. In the following, we restrict $z$ in the complex neighborhood $$D:=\begin{cases}
				D(E_n(0),10\delta_n),\ \ &\theta\in \Z,\\
				A_n(\theta,10\delta_n),\ \ &\theta\notin\Z.
			\end{cases}$$
			It suffices to prove that $d_{\theta,n+1}(z)$ has a unique zero $E_{n+1}(\theta)$ in $\bar{D}$ satisfying $|E_{n+1}(\theta)-E_n(\theta)|\leq e^{-l_n}$. Denote $B_{n+1}^o:=B_{n+1}\setminus \{o\}$.
			We can write 
			$$	z\operatorname{Id}_{B_{n+1}}-H_{B_{n+1}}(\theta)=\begin{pmatrix}
				z-v(\theta) & c^{\text{T}} \\
				c & z\operatorname{Id}_{B_{n+1}^o}-H_{B_{n+1}^o}(\theta)
			\end{pmatrix}.$$
			By Lemma \ref{ag}  and  Lemma \ref{con}, $B_{n+1}^o$ (resp. $B_n^o$) is $n$-good (resp. $(n-1)$-good) related to $(\theta,z)$. Hence by Hypothesis \ref{h3}, we have 
			\begin{align}\label{bn+1}
				\|G_{B_{n+1}^o}^{\theta,z}\|&\leq10\delta_n^{-1},  \\
				\label{bn+}
				|G_{B_{n}^o}^{\theta,z}(x,y)|&\leq e^{-\gamma_{n-1}\|x-y\|_1}, \ \|x-y\|_1\geq l_{n-1}^{\frac{5}{6}}.
			\end{align}
			Denote the Schur complement by \begin{align}\label{923+}
				s_{\theta,n+1}(z):&=z-v(\theta) -c^{\text{T}}\left(z\operatorname{Id}_{B_{n+1}^o}-H_{B_{n+1}^o}(\theta)\right)^{-1}c \nonumber \\
				&=z-v(\theta)-r_{\theta,n+1}(z).
			\end{align}
			It follows from Schur complement formula, $d_{\theta,n+1}(z)=0$ if and only if $s_{\theta,n+1}(z)=0$. 
			Since the vector $c$ has only nonzero elements on the nearest neighbor of $o$, by the  resolvent identity, it follows that 
			\begin{align}
				\nonumber s_{\theta,n+1}(z)&=s_{\theta,n}(z)-\varepsilon \sum_{\substack{\|x\|_1,\|y\|_1=1\\(w,w')\in \partial_{B_{n+1}^o}B_{n}^o}}c(x)G_{B_{n}^o}^{\theta,z}(x,w)G_{B_{n+1}^o}^{\theta,z}(w',y)c(y)\\
				&=s_{\theta,n}(z)-g_{\theta,n+1}(z), \label{1945}
			\end{align}
			where $$s_{\theta,n}(z)= z-v(\theta) -\tilde{c}^{\text{T}}\left(z\operatorname{Id}_{B_{n}^o}-H_{B_{n}^o}(\theta)\right)^{-1}\tilde{c}$$
			with $\tilde{c}$ being  the restriction of $c$ on $B_{n}^o$. 
			Since $	Q_{l_n}\subset B_n$, it follows that  for $\|x\|_1=1$ and  $w\in \partial^-_{B_{n+1}^o}B_{n}^o$, we have 
			$$\|x-w\|_1\geq l_n-1\geq l_{n-1}^\frac{5}{6}.$$ 
			By \eqref{bn+1} and \eqref{bn+} and $\gamma_{n-1}\geq 10$ (c.f. \eqref{rate}), it follows that 
			\begin{align}\label{res}
				|	g_{\theta,n+1}(z)|\leq l_n^d e^{-\gamma_{n-1}(l_n-1)}\delta_n^{-1}\leq e^{-\gamma_{n-1}(l_n-2l_n^{\frac{2}{3}})}\leq e^{-3l_n}.
			\end{align}
			By the inductive  hypotheses \eqref{sa1}, \eqref{sa10} and Lemma \ref{con}, we have 
			\begin{align}
				s_{\theta,n+1}(z)&=	(z-E_n(\theta))\left (1+O(\sum_{k=0}^{n-1}\delta_k)\right )-	g_{\theta,n+1}(z)\label{1924}\\
				&=\left (1+O(\sum_{k=0}^{n-1}\delta_k)\right )\left(z-E_n(\theta)-\tilde{g}_{\theta,n+1}(z)\right ) \label{1845}
			\end{align}
			with \begin{equation}\label{1853}
				|\tilde{g}_{\theta,n+1}(z)|\leq 2|g_{\theta,n+1}(z)|\leq  e^{-2l_n}.
			\end{equation} Hence, 
			\begin{equation}\label{15}
				|\tilde{g}_{\theta,n+1}(z)|<10\delta_n\leq |z-E_n(\theta)|, \ \ z\in \partial D.
			\end{equation}
			By \eqref{1845}, it follows that  $s_{\theta,n+1}(z)$ and  $z-E_n(\theta)-\tilde{g}_{\theta,n+1}(z)$ have the same zero(s) in $\bar{D}$. And by \eqref{15}  together with Rouch\'e theorem, $z-E_n(\theta)-\tilde{g}_{\theta,n+1}(z)$ has the same number of zero(s) as $z-E_{n}(\theta)$ in $\bar{D}$. Denote by $E_{n+1}(\theta)$ the unique zero of $s_{\theta,n+1}(z)$ in $\bar{D}$. Since   $s_{\theta,n+1}(E_{n+1}(\theta))=0$, by \eqref{1845}, it follows that $E_{n+1}(\theta)-E_n(\theta)-\tilde{g}_{\theta,n+1}(E_{n+1}(\theta))=0$. Hence by  \eqref{1853}, we have 
			\begin{equation}\label{1844}
				|E_{n+1}(\theta)-E_n(\theta)|=|\tilde{g}_{\theta,n+1}(E_{n+1}(\theta))|\leq e^{-2l_n}.
			\end{equation}
		\end{proof}
		
		\begin{rem}\label{454n}
			Since $E_n(\theta)$ is  $1$-periodic  and   $H_{B_{n+1}}(\theta)$ is  $1$-periodic and self-adjoint, it follows that $E_{n+1}(\theta)$ is a $1$-periodic,   real-valued function.
		\end{rem}
		
		\begin{lem}\label{sln}
			With the notation in the proof of Proposition \ref{528}, we have the following approximation, for  $z\in D$,  \begin{equation}\label{appn}
				s_{\theta,n+1}(z)=(z-E_{n+1}(\theta))\left (1+O(\sum_{k=0}^{n}\delta_k)\right ).
			\end{equation}
			Moreover, for $z\in D'$ with  $$ D':=\begin{cases}
				D(E_n(0),5\delta_n),\ \ &\theta\in \Z,\\
				A_n(\theta,5\delta_n),\ \ &\theta\notin\Z,
			\end{cases}$$
			we have \begin{equation}\label{Den}
				s_{\theta,n+1}'(z)=1+O(\sum_{k=0}^{n}\delta_k)	.
			\end{equation}
			The above  $``O(\sum_{k=0}^{n}\delta_k)"$ are understood as analytic functions of $z$ bounded by  $\sum_{k=0}^{n}\delta_k$.
		\end{lem}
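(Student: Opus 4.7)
\emph{Plan.} The strategy is to mirror the proof of Lemma~\ref{sl} at the initial scale, substituting the factored representation \eqref{1845} for the elementary identity $s_\theta(z)=z-v(\theta)-r_\theta(z)$ used there.

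For \eqref{appn}, the key point is that by Proposition~\ref{528}, $E_{n+1}(\theta)$ is the unique zero of $s_{\theta,n+1}(z)$ in $\bar D$, so $h_{\theta,n+1}(z):=s_{\theta,n+1}(z)/(z-E_{n+1}(\theta))-1$ is analytic on $D$. Substituting \eqref{1845} and using $|\tilde g_{\theta,n+1}(z)|\leq e^{-2l_n}$ from \eqref{1853} together with $|E_{n+1}(\theta)-E_n(\theta)|\leq e^{-2l_n}$ from \eqref{1844}, I would verify on $\partial D$ that $|z-E_{n+1}(\theta)|\geq 10\delta_n-e^{-2l_n}\geq 5\delta_n$, while $|s_{\theta,n+1}(z)-(z-E_{n+1}(\theta))|$ is bounded by an $O(\sum_{k=0}^{n-1}\delta_k)\cdot(10\delta_n+e^{-2l_n})$ piece (from the prefactor $(1+O(\cdot))$) plus a $2e^{-2l_n}$ piece (from $E_{n+1}(\theta)-E_n(\theta)-\tilde g_{\theta,n+1}(z)$). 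Dividing yields $|h_{\theta,n+1}(z)|\leq\sum_{k=0}^{n}\delta_k$ on $\partial D$, and the maximum principle propagates this bound to all of $D$, giving \eqref{appn}.

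For \eqref{Den}, I would differentiate the identity \eqref{1945}, reducing to $s_{\theta,n+1}'(z)=s_{\theta,n}'(z)-g_{\theta,n+1}'(z)$. The inductive hypothesis \eqref{sa2} (or \eqref{sa20} when $\theta\in\Z$) gives $s_{\theta,n}'(z)=1+O(\sum_{k=0}^{n-1}\delta_k)$ on $A_{n-1}(\theta,5\delta_{n-1})$ (respectively $D(E_{n-1}(0),5\delta_{n-1})$), and by Lemma~\ref{con} the current-scale domain $D'$ sits inside that previous-scale domain. For the remaining term, a Cauchy gradient estimate on $D'$ combined with $|g_{\theta,n+1}|\leq e^{-3l_n}$ from \eqref{res} and the geometric fact that every point of $D'$ has distance at least $5\delta_n$ to $\partial D$ yields $|g_{\theta,n+1}'(z)|\leq e^{-3l_n}/(5\delta_n)\leq\delta_n$. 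Summing the two contributions gives \eqref{Den}.

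The only non-routine concern is the domain-matching just invoked: the scale-$n$ Schur-complement approximations are phrased over $E_{n-1}$-centred neighbourhoods, whereas the scale-$(n+1)$ objects $D$ and $D'$ are phrased over $E_n$-centred ones. Lemma~\ref{con} is designed for exactly this translation, and because $|E_{n+1}-E_n|\leq e^{-2l_n}$ is super-exponentially smaller than $\delta_n$, the fresh $\delta_n$-contribution produced at this step is absorbed by the new error budget without spoiling the convergence of $\sum_{k}\delta_k$ at later scales.
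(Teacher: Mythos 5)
Your proposal is correct and follows essentially the same route as the paper: Rouch\'e gives the analyticity of $h_{\theta,n+1}$ on $D$, a boundary estimate plus the maximum principle gives \eqref{appn}, and for \eqref{Den} one differentiates \eqref{1945}, invokes the inductive bounds \eqref{sa2}/\eqref{sa20} after Lemma~\ref{con}, and applies a Cauchy gradient estimate to $g_{\theta,n+1}$. The only cosmetic difference is that you start from \eqref{1845} (with $\tilde g$) whereas the paper works from \eqref{1924} (with $g$); these are algebraically equivalent. One small caution: the parenthetical weakening $|z-E_{n+1}(\theta)|\geq 5\delta_n$ must not be what you actually divide by, since $\frac{10\delta_n+e^{-2l_n}}{5\delta_n}\approx 2$ would blow the constant past $\sum_{k=0}^n\delta_k$; you need the sharper denominator $10\delta_n-e^{-2l_n}$, which makes the ratio $1+O(\delta_n^2)$, exactly as in the paper's computation.
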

		\begin{proof}
			Since $E_{n+1}(\theta)$ is the unique zero of $s_{\theta,n+1}(z)$ in $\bar{D}$, the function $$h_\theta(z):=\frac{s_{\theta,n+1}(z)}{z-E_{n+1}(\theta)}-1$$
			is analytic in $D$. Moreover, by \eqref{1924}, \eqref{1853} and \eqref{1844}, one has for $z\in \partial D$
			\begin{align*}
				|h_\theta(z)|&=\left|\frac{E_{n+1}(\theta)-E_n(\theta)-g_{\theta,n+1}(z)}{z-E_{n+1}(\theta)}+\frac{z-E_{n}(\theta)}{z-E_{n+1}(\theta)}O(\sum_{k=0}^{n-1}\delta_k)\right|\\
				&\leq\left|\frac{E_{n+1}(\theta)-E_n(\theta)}{z-E_{n+1}(\theta)}\right| +\left|\frac{g_{\theta,n+1}(z)}{z-E_{n+1}(\theta)}\right| +\left|\frac{z-E_{n}(\theta)}{z-E_{n+1}(\theta)}\right|\sum_{k=0}^{n-1}\delta_k\\
				&\leq \frac{2e^{-2l_n}}{10\delta_n-e^{-2l_n}}+\frac{10\delta_n}{10\delta_n-e^{-2l_n}}\sum_{k=0}^{n-1}\delta_k\\
				&<\delta_n^2+(1+\delta_n^2)\sum_{k=0}^{n-1}\delta_k\\&<\sum_{k=0}^{n}\delta_k.
			\end{align*}	
			Hence by maximum principle, we have  
			$$\sup_{z\in D}|h_\theta(z)|\leq \sup_{z\in\partial  D}|h_\theta(z)|<\sum_{k=0}^{n}\delta_k$$ and \eqref{appn} follows. By \eqref{1945}, we have  $s_{\theta,n+1}'(z)=s_{\theta,n}'(z)-g_{\theta,n+1}'(z)$. By   \eqref{1853} and Cauchy integral estimate, it follows that 
			\begin{equation}\label{1955}
				\sup_{z\in D'} |g_{\theta,n+1}'(z)|\leq \delta_n^{-1}\sup_{z\in D} |g_{\theta,n+1}(z)|\leq \delta_n^{-1} e^{-2l_n}\leq \delta_n
			\end{equation}
			By the inductive hypotheses \eqref{sa2}, \eqref{sa20} and Lemma \ref{con}, we have \begin{equation}\label{1954}
				s_{\theta,n}'(z)=1+O(\sum_{k=0}^{n-1}\delta_k). 
			\end{equation} Then \eqref{Den} follows from   \eqref{1955} and \eqref{1954}.
		\end{proof}
		\begin{rem}
			Analogues of Proposition \ref{528} and Lemma \ref{sln} hold true for the left limit operator $H_{B_{n+1}}(\theta-0)$. We state them as a proposition:
			\begin{prop}\label{**2}
				For any $\theta\in \R$, $H_{B_{n+1}}(\theta-0)$ has a unique eigenvalue $E_{n+1}(\theta-0)$ such that $|E_{n+1}(\theta-0)-E_n(\theta-0)|\leq e^{-l_n}$. Moreover, since $H_{B_{n+1}}(\theta)=H_{B_{n+1}}(\theta-0)$ and $E_n(\theta )=E_n(\theta-0)$ for $\theta\notin \left\{\{-x\cdot\omega\}\right\}_{x\in B_{n+1}}$, it follows that $E_{n+1}(\theta)=E_{n+1}(\theta-0)$ for $\theta\notin \left\{\{-x\cdot\omega\}\right\}_{x\in B_{n+1}}$.
			\end{prop}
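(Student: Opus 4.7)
The plan is to mirror the proof of Proposition \ref{528} verbatim, replacing $H_{B_{n+1}}(\theta)$, $v(\theta)$, and $E_n(\theta)$ by their left-limit counterparts $H_{B_{n+1}}(\theta-0)$, $v(\theta-0)$, and $E_n(\theta-0)$. Fix $\theta\in\R$ and consider the characteristic polynomial $d_{\theta-0,n+1}(z):=\operatorname{det}[z\operatorname{Id}_{B_{n+1}}-H_{B_{n+1}}(\theta-0)]$, restricting $z$ to the domain
\[
D:=\begin{cases} D(E_n(1-0),10\delta_n), & \theta\in \Z,\\ A_n(\theta,10\delta_n), & \theta\notin\Z.\end{cases}
\]
(When $\theta\in\Z$, by $1$-periodicity $H_{B_{n+1}}(\theta-0)=H_{B_{n+1}}(1-0)$ and $E_n(\theta-0)=E_n(1-0)$, so the $1-0$ branch of the hypotheses is the relevant one.)

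The first step is to set up the Schur complement at the origin, writing
\[
z\operatorname{Id}_{B_{n+1}}-H_{B_{n+1}}(\theta-0)=\begin{pmatrix} z-v(\theta-0) & c^{\text{T}} \\ c & z\operatorname{Id}_{B_{n+1}^o}-H_{B_{n+1}^o}(\theta-0)\end{pmatrix}.
\]
By Corollary \ref{cor} combined with Lemma \ref{con}, for every $z\in D$ the set $B_{n+1}^o$ is $n$-good related to $(\theta-0,z)$ and $B_n^o$ is $(n-1)$-good related to $(\theta-0,z)$; hence Hypothesis \ref{h3} (the $\theta-0$ version stated at the end of it) supplies the operator norm bound $\|G_{B_{n+1}^o}^{\theta-0,z}\|\leq 10\delta_n^{-1}$ and the off-diagonal exponential decay of $G_{B_n^o}^{\theta-0,z}$. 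This legitimizes forming the Schur complement $s_{\theta-0,n+1}(z)$ on $D$.

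Next, using the resolvent identity relating $B_{n+1}^o$ to $B_n^o$ exactly as in \eqref{1945}, write
\[
s_{\theta-0,n+1}(z)=s_{\theta-0,n}(z)-g_{\theta-0,n+1}(z),
\]
where $g_{\theta-0,n+1}(z)$ is controlled, via the exponential decay of $G_{B_n^o}^{\theta-0,z}$ and $\gamma_{n-1}\geq 10$, by $|g_{\theta-0,n+1}(z)|\leq e^{-3l_n}$ as in \eqref{res}. Invoking the left-limit approximations of Hypothesis \ref{h2} (namely the $\xi=\theta-0$ version of \eqref{sa1} when $\theta\notin\Z$, and the Schur approximation at $1-0$ when $\theta\in\Z$), one recasts this as
\[
s_{\theta-0,n+1}(z)=\Bigl(1+O\Bigl(\sum_{k=0}^{n-1}\delta_k\Bigr)\Bigr)\bigl(z-E_n(\theta-0)-\tilde{g}_{\theta-0,n+1}(z)\bigr),
\]
with $|\tilde{g}_{\theta-0,n+1}(z)|\leq e^{-2l_n}$. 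On $\partial D$ we have $|z-E_n(\theta-0)|\geq 10\delta_n\gg e^{-2l_n}$, so Rouch\'e's theorem produces a unique zero $E_{n+1}(\theta-0)$ of $s_{\theta-0,n+1}$ in $\bar D$, and plugging $z=E_{n+1}(\theta-0)$ back into the displayed identity yields $|E_{n+1}(\theta-0)-E_n(\theta-0)|\leq e^{-l_n}$, proving the first assertion.

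For the second assertion, if $\theta\notin\{\{-x\cdot\omega\}\}_{x\in B_{n+1}}$ then no diagonal entry of $H_{B_{n+1}}$ has a discontinuity at $\theta$, so $H_{B_{n+1}}(\theta)=H_{B_{n+1}}(\theta-0)$; moreover $B_n\subset B_{n+1}$, so Hypothesis \ref{h1} gives $E_n(\theta)=E_n(\theta-0)$, hence the disks $D(E_n(\theta),10\delta_n)$ and $D(E_n(\theta-0),10\delta_n)$ coincide. The uniqueness clauses of Proposition \ref{528} and of the Proposition just proved then force $E_{n+1}(\theta)=E_{n+1}(\theta-0)$. The main obstacle here is really bookkeeping: one must carefully check that the inductive hypotheses are stated in the $\xi=\theta-0$ form that is needed (which they are, by design of Hypothesis \ref{h2} and the final clause of Hypothesis \ref{h3}), and that the case $\theta\in\Z$ is correctly routed through the $1-0$ branch via periodicity. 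Beyond that, every estimate is a verbatim reprise of the proof of Proposition \ref{528}.
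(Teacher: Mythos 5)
Your proposal is correct and follows exactly the route the paper intends (the paper states Proposition~\ref{**2} only as an analogue and omits the details); you correctly route through the $1-0$ branch of Hypothesis~\ref{h2} when $\theta\in\Z$ and through the $\xi=\theta-0$ branch otherwise, and the uniqueness argument for the second assertion works because the two domains and the two Schur complements literally coincide when $\theta\notin\{\{-x\cdot\omega\}\}_{x\in B_{n+1}}$. One minor remark: the goodness of $B_{n+1}^o$ related to $(\theta-0,z)$ is most directly Lemma~\ref{ag} (Corollary~\ref{cor} specialized to $x=o$ recovers it, so the citation is still valid but slightly roundabout).
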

			\begin{lem}\label{H-*} Denote $$\tilde D:=\begin{cases}
					D(E_n(1-0),10\delta_n),\ \ &\theta\in \Z,\\
					A_n(\theta,10\delta_n),\ \ &\theta\notin\Z,
				\end{cases}$$
				and $$\tilde D':=\begin{cases}
					D(E_n(1-0),5\delta_n),\ \ &\theta\in \Z,\\
					A_n(\theta,5\delta_n),\ \ &\theta\notin\Z.
				\end{cases}$$
				For  $z\in\tilde  D$, the Schur complement 
				\begin{align}\label{923.*}
					s_{\theta-0,n+1}(z):&=z-v(\theta-0) -c^{\text{T}}\left(z\operatorname{Id}_{B_{n+1}^o}-H_{B_{n+1}^o}(\theta-0)\right)^{-1}c\nonumber	\\
					&=z-v(\theta-0)-r_{\theta-0,n+1}(z)
				\end{align}
				exists and 	satisfies 
				\begin{align*}
					s_{\theta-0,n+1}(z)&=(z-E_{n+1}(\theta-0))\left(1+O(\sum_{k=0}^{n}\delta_k)\right) , & z\in  \tilde D,\\
					s_{\theta-0,n+1}'(z)&=1+O(\sum_{k=0}^{n}\delta_k) ,& z\in  \tilde D'.
				\end{align*}
				
			\end{lem}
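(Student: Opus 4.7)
The plan is to mirror the proof of Proposition \ref{528} and Lemma \ref{sln}, but now working with the left limit operator $H_{B_{n+1}^o}(\theta-0)$ and using the left-limit parts of Hypothesis \ref{h2} and Hypothesis \ref{h3}. First I would check that on $\tilde D$ the Green's function $G_{B_{n+1}^o}^{\theta-0,z}$ is well defined and controlled. For $\theta\in\Z$, one has $\theta-0$ corresponding to phase $1-0$, and $\tilde D=D(E_n(1-0),10\delta_n)$; for $\theta\notin\Z$, the domain is $A_n(\theta,10\delta_n)$. In both cases, the translated block $B_{n+1}^o$ falls under Corollary \ref{cor} (the item covering $(\{-x\cdot\omega\}-0,E)$ for $\theta\in\Z$ and the third item for $\theta\notin\Z$), so $B_{n+1}^o$ is $n$-good with respect to $(\theta-0,z)$. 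Consequently Hypothesis \ref{h3} (in its left-limit form noted in Remark \ref{lh}) yields
\begin{equation*}
\|G_{B_{n+1}^o}^{\theta-0,z}\|\leq 10\delta_n^{-1},\qquad |G_{B_n^o}^{\theta-0,z}(x,y)|\leq e^{-\gamma_{n-1}\|x-y\|_1}\ (\|x-y\|_1\geq l_{n-1}^{5/6}),
\end{equation*}
so in particular the Schur complement $s_{\theta-0,n+1}(z)$ in \eqref{923.*} exists analytically on $\tilde D$.

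Next I would derive a resolvent-identity decomposition exactly as in \eqref{1945}:
\begin{equation*}
s_{\theta-0,n+1}(z)=s_{\theta-0,n}(z)-g_{\theta-0,n+1}(z),
\end{equation*}
where $g_{\theta-0,n+1}(z)$ is the ``boundary'' term involving $G_{B_n^o}^{\theta-0,z}$ and $G_{B_{n+1}^o}^{\theta-0,z}$. Since $Q_{l_n}\subset B_n$, the lattice separation between $\{\|x\|_1=1\}$ and $\partial^-_{B_{n+1}^o}B_n^o$ is at least $l_n-1$, so the exponential decay estimate yields the same bound $|g_{\theta-0,n+1}(z)|\leq e^{-3l_n}$ as in \eqref{res}. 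Combining with the left-limit version of \eqref{sa1}/\eqref{sa10} (the third bullet of Hypothesis \ref{h2}, valid on $\tilde D$ since $\tilde D\subset A_{n-1}(\theta,10\delta_{n-1})$ or $D(E_{n-1}(1-0),10\delta_{n-1})$ by Lemma \ref{con}), we obtain
\begin{equation*}
s_{\theta-0,n+1}(z)=\Bigl(1+O\bigl(\sum_{k=0}^{n-1}\delta_k\bigr)\Bigr)\bigl(z-E_n(\theta-0)-\tilde g_{\theta-0,n+1}(z)\bigr),
\end{equation*}
with $|\tilde g_{\theta-0,n+1}|\leq e^{-2l_n}$. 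Rouch\'e's theorem applied on $\partial\tilde D$ then gives a unique zero $E_{n+1}(\theta-0)\in\bar{\tilde D}$ with $|E_{n+1}(\theta-0)-E_n(\theta-0)|\leq e^{-2l_n}$, the analogue of Proposition \ref{528}.

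Finally, to conclude the approximation \eqref{appn}-type statement, I would copy the maximum-principle argument of Lemma \ref{sln} verbatim for the function $h_{\theta-0}(z):=s_{\theta-0,n+1}(z)/(z-E_{n+1}(\theta-0))-1$, using the boundary bound $|h_{\theta-0}(z)|<\sum_{k=0}^n\delta_k$ on $\partial\tilde D$ produced by the inputs above, which yields the first estimate. For the derivative estimate on $\tilde D'$, I would differentiate the identity $s_{\theta-0,n+1}'=s_{\theta-0,n}'-g_{\theta-0,n+1}'$ and bound $|g_{\theta-0,n+1}'|\leq \delta_n$ by Cauchy's gradient estimate (distance from $\tilde D'$ to $\partial\tilde D$ is $\geq 5\delta_n$), combined with the left-limit inductive bound $s_{\theta-0,n}'(z)=1+O(\sum_{k=0}^{n-1}\delta_k)$. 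The main obstacle, and really the only point requiring care, is verifying that the chosen domains $\tilde D,\tilde D'$ indeed lie inside the domains where the previous-scale left-limit Hypothesis \ref{h2} applies; this is exactly what Lemma \ref{con} supplies, after splitting into the two cases $\theta\in\Z$ (use the $D(E_{n-1}(1-0),\cdot)$ bullet) and $\theta\notin\Z$ (use the $A_{n-1}(\theta,\cdot)$ bullet for $\xi=\theta-0$).
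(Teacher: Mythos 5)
Your proposal is correct and matches the paper's implicit approach: the paper presents Lemma \ref{H-*} precisely as the $\theta-0$ analogue of Proposition \ref{528} and Lemma \ref{sln}, and your reconstruction carries out the same resolvent decomposition, uses the left-limit items of Lemma \ref{ag} and Hypotheses \ref{h2}--\ref{h3} (with the Lemma \ref{con} containments) to justify the Green's function bounds, applies Rouch\'e to produce $E_{n+1}(\theta-0)$, and finishes with the maximum-principle and Cauchy-gradient argument from Lemma \ref{sln}. The only cosmetic quibble is that Lemma \ref{ag} already covers $B_{n+1}^o$ directly (Corollary \ref{cor} is needed only for translates), but that does not affect the argument.
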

			
		\end{rem}

		\subsubsection{Verification of the Lipschitz monotonicity property for $E_{n+1}(\theta)$} 
		Denote $\{-x\cdot\omega\}$ by $\beta_x$ and denote by  $0=\alpha_1<\alpha_2<\cdots<\alpha_{|B_{n+1}|}<\alpha_{|B_{n+1}|+1}=1$ the discontinuities of $H_{B_{n+1}}(\theta)$, where 
		$$\{\alpha_1,\alpha_2,\cdots,\alpha_{|B_{n+1}|}\}=\left\{\beta_x\right\}_{x\in B_{n+1}}.$$
		Also denote  the Rellich functions of $H_{B_{n+1}}(\theta)$ in non-decreasing order by $\lambda_i(\theta)$ ($1\leq i\leq |B_{n+1}|$) which are $1$-periodic and continuous on $[0,1)$ except on the set $\{\alpha_1,\alpha_2,\cdots,\alpha_{|B_{n+1}|}\}$ with the  local Lipschitz monotonicity property as  mentioned at the beginning of section \ref{n=1}.
		\begin{prop}\label{qiang}
			Let $E_{n+1}(\theta)$ be the eigenvalue of $H_{B_{n+1}}(\theta)$ constructed in Proposition \ref{528}. Then $E_{n+1}(\theta)$ satisfies the following two properties{\rm :}
			\begin{itemize}	\item On each small interval $[\alpha_k,\alpha_{k+1})$, $E_{n+1}(\theta)$ coincides with exactly one branch of $\lambda_i(\theta)$, and hence is continuous and satisfies the local Lipschitz monotonicity property on $[\alpha_k,\alpha_{k+1})${\rm :}
				\begin{equation*}\label{400}
					E_{n+1}(\theta_2)-E_{n+1}(\theta_1)\geq L(\theta_2-\theta_1), \ \ \alpha_k\leq \theta_1\leq \theta_2<\alpha_{k+1}.
				\end{equation*}
				\item $E_1(\theta)$ has  non-negative ``jumps'' at  $\alpha_k$ $(2\leq k\leq |B_{n+1}|)${\rm :}  
				$$E_{n+1}(\alpha_k)\geq E_{n+1}(\alpha_k-0). $$ 
			\end{itemize}
		\end{prop}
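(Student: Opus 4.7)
The proof will mirror that of Proposition \ref{814}, with the block $B_1$ replaced by the inductively constructed $B_{n+1}$, and with scale-$0$ inputs replaced by the scale-$n$ machinery already available (Lemma \ref{ag}, Hypothesis \ref{h3}, Proposition \ref{528}, Lemma \ref{sln}). I handle the two items in turn.

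\textbf{Part (1).} For continuity and local Lipschitz monotonicity on $[\alpha_k, \alpha_{k+1})$, the plan is a clopen argument. Pick the unique branch $\lambda_i$ with $E_{n+1}(\alpha_k) = \lambda_i(\alpha_k)$ and set $X := \{\theta \in [\alpha_k, \alpha_{k+1}) : E_{n+1}(\theta) = \lambda_i(\theta)\}$. The set $X$ is nonempty by construction. Closedness: for $\theta_j \to \theta_\infty$ in $X$, pass the bound $|\lambda_i(\theta_j) - E_n(\theta_j)| \le e^{-l_n}$ to the limit using continuity of both $\lambda_i$ and (by the inductive continuity of $E_n$ on this interval) $E_n$, and invoke the uniqueness part of Proposition \ref{528} to conclude $\lambda_i(\theta_\infty) = E_{n+1}(\theta_\infty)$. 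Openness: at any $\theta_0 \in X$, Proposition \ref{528} also forces every other branch $\lambda_j$, $j \ne i$, to have $|\lambda_j(\theta_0) - E_n(\theta_0)| > 10\delta_n$, and this persists in a neighborhood in $[\alpha_k, \alpha_{k+1})$ by joint continuity, so $E_{n+1} \equiv \lambda_i$ there. Local Lipschitz monotonicity of $E_{n+1}$ on $[\alpha_k, \alpha_{k+1})$ then follows from the corresponding property of $\lambda_i$, which in turn follows from the operator inequality \eqref{mno} applied to $H_{B_{n+1}}$ and the Min-Max principle.

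\textbf{Part (2).} For the jump condition at $\alpha_k = \beta_x$ (with $x \ne o$, whence $v(\beta_x) = v(\beta_x - 0)$), I plan to restrict $z$ to $D' \cap \R$ where $D' := A_n(\beta_x, 5\delta_n)$. On this region both $s_{\beta_x, n+1}$ and $s_{\beta_x - 0, n+1}$ are real-valued with derivative $1 + O(\sum_{k=0}^n \delta_k)$ by Lemma \ref{sln}, hence strictly increasing. The goal is the pointwise inequality
\[
s_{\beta_x, n+1}(z) \le s_{\beta_x - 0, n+1}(z), \qquad z \in D' \cap \R,
\]
which, evaluated at $z = E_{n+1}(\beta_x - 0)$ and combined with monotonicity, delivers $E_{n+1}(\beta_x) \ge E_{n+1}(\beta_x - 0)$.

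The inequality is proved by computing the sign of the left-hand side explicitly as in Proposition \ref{814}. Using $v(\beta_x) = v(\beta_x - 0)$ and the resolvent identity together with $H_{B_{n+1}^o}(\beta_x - 0) - H_{B_{n+1}^o}(\beta_x) = \bm e_x \bm e_x^{\rm T}$, the difference factors as $c^{\rm T} (zI_{B_{n+1}^o} - H_{B_{n+1}^o}(\beta_x - 0))^{-1} \bm e_x \bm e_x^{\rm T} (zI_{B_{n+1}^o} - H_{B_{n+1}^o}(\beta_x))^{-1} c$. Since the $x$-column of the adjugate of $zI_{B_{n+1}^o} - H_{B_{n+1}^o}(\xi)$ does not depend on the $(x,x)$ diagonal entry, Cramer's rule rewrites this as $(c^{\rm T} b(z))^2 / [\det(zI - H_{B_{n+1}^o}(\beta_x - 0)) \det(zI - H_{B_{n+1}^o}(\beta_x))]$, whose numerator is non-negative. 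A second Schur complement at the $(x,x)$ entry factors out a common non-negative $\det(zI - H_{B_{n+1}^1}(\beta_x))^2$ (where $B_{n+1}^1 := B_{n+1}^o \setminus \{x\}$), reducing matters to showing that
\[
\bigl[(z - 1) - T(z)\bigr] \bigl[z - T(z)\bigr] < 0, \qquad T(z) := \tilde c_x^{\rm T} (zI - H_{B_{n+1}^1}(\beta_x))^{-1} \tilde c_x.
\]

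The main obstacle, which I expect to be the essential new difficulty at general scale, is bounding $T(z)$ uniformly for $z \in D' \cap \R$. A direct invocation of Hypothesis \ref{h3} on $B_{n+1}^1$ is not automatic: although $B_{n+1}^o$ is $n$-good by Lemma \ref{ag}, deleting the single site $x$ can violate $n$-regularity, since a lower-scale resonant point $y$ may fail to satisfy $\|x - y\|_1 > l_{k+1} + 50 l_k$. I plan to handle this via a separate multi-scale Schur-style estimate (in the spirit of the forthcoming Lemma \ref{1448}), either by showing that every near-resonant $y$ is in fact far from $x$ using the Diophantine information coming from $\|x \cdot \omega\|_{\mathbb{T}} \ge \gamma / \|x\|_1^\tau$ combined with $\|y \cdot \omega\|_{\mathbb{T}} \le 6 \delta_k / L$ forced by $y \in S_k(\beta_x, z)$, or by reducing $T(z)$ to the diagonal entry $G_{B_{n+1}^o}(\beta_x, z)(x, x)$ via the identity $z - T(z) = 1 / G_{B_{n+1}^o}(\beta_x, z)(x, x)$ and estimating the latter from the $n$-good bound on $B_{n+1}^o$. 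Once $|T(z)| \ll \delta_n$ is secured, the Diophantine/Lipschitz separation $\min(v(\beta_x), 1 - v(\beta_x)) \ge L \gamma / (2 l_{n+1})^\tau \gg \delta_n$ (which follows from $\|x\|_1 \le 2 l_{n+1}$ and the Lipschitz monotonicity of $v$ on $[0,1)$) gives $(z - 1) - T(z) < 0$ and $z - T(z) > 0$ throughout $D' \cap \R$, closing the argument.
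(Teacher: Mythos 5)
Your Part (1) argument matches the paper's verbatim: the clopen decomposition of $X$, closedness via the uniqueness part of Proposition \ref{528}, openness via the gap estimate $|\lambda_j(\theta_0) - E_n(\theta_0)| > 10\delta_n$, and local monotonicity from the Min-Max operator inequality. For Part (2) you also correctly reproduce the factorization: writing the difference of the two Schur complements as $(c^{\rm T} b(z))^2/[\det\cdot\det]$, and the further factorization of each determinant through a Schur complement at $x$, so that everything reduces to the sign of $[(z-1)-T(z)][z-T(z)]$.

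The gap is in how you propose to control that last product. You set yourself the goal ``$|T(z)|\ll\delta_n$,'' and plan to compare against the Diophantine/Lipschitz separation $\min(v(\beta_x),1-v(\beta_x))\gtrsim l_{n+1}^{-\tau}$. This is the wrong comparison and the bound is not achievable. First, even in the cleanest situation ($B_{n+1}^o$ entirely nonresonant), the resolvent bound on $B_{n+1}^1$ is only $\delta_0^{-1}$, giving $|T(z)|\lesssim\varepsilon^2\delta_0^{-1}\approx\delta_0^{39}$; this does beat $\delta_0$, but for $n$ large it does \emph{not} beat $l_{n+1}^{-\tau}$ (which decays like a tower in $n$), so your inequality $|T(z)|<\min(v(\beta_x),1-v(\beta_x))$ fails. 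The paper avoids this by exploiting the $0$-nonresonance of $x$ itself, which gives the stronger separation $|z-0|\ge\delta_0$ and $|z-1|\ge\delta_0$; that $\delta_0$ is the correct yardstick, not $l_{n+1}^{-\tau}$. Second, and more seriously, when $x$ \emph{is} resonant (say $x\in S_m$, $x\notin S_{m+1}$), no crude bound on $T(z)$ can work: the best possible resolvent bound is $\delta_{m+1}^{-1}$, giving $|T(z)|\lesssim\varepsilon^2\delta_{m+1}^{-1}$, while the true separation is $|z-E_{m+1}(0)|\ge\delta_{m+1}$, and one checks $\varepsilon^2\delta_{m+1}^{-1}\gg\delta_{m+1}$ already for $m=0$ once $l_1$ is large. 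The paper instead uses the inductive Schur-complement approximation \eqref{sa10} to compute $z-T(z)=(z-E_{m+1}(0)-O(e^{-2l_{m+1}}))(1+O(\sum\delta_k))$ \emph{exactly} up to multiplicative error, and then extracts the sign from the fine information $z-E_{m+1}(0)\ge\delta_{m+1}$; a magnitude bound on $T(z)$ alone cannot reproduce this. Your alternative fix via $z-T(z)=1/G_{B_{n+1}^o}^{\beta_x,z}(x,x)$ only yields $|z-T(z)|\ge\delta_n/10$ from the $n$-good norm bound, which gives no sign information.

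Finally, you identify that $B_{n+1}^1=B_{n+1}^o\setminus\{x\}$ may fail to be regular, but you do not address what happens when the maximal-scale resonant site is some $x_1\ne x$. In that situation $B_{n+1}^1$ genuinely need not be $(m+1)$-good (the block $B_{m+1}(x_1)$ may contain $x$ when $x$ is only resonant at a lower scale), so Schur-complementing at $x$ directly is blocked; the paper's Proposition \ref{953}/\ref{it} handles this by Schur-complementing at $x_1$ first, showing the determinant product's sign is unchanged upon deleting $x_1$ (using the non-negative-jump induction \eqref{1640} for $E_{m+1}$ to prove Claim \ref{6}), and iterating until no off-$x$ resonances remain. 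This recursive peeling step is missing from your plan, and it is where the inductive non-negative jump hypothesis at lower scales is actually consumed.
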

		\begin{proof}
			The proof of the first property is similar to the corresponding proof at the first scale (c.f. Proposition \ref{814}).
			For the second property, 	since $\alpha_k\neq 0$, there exists some $x\in B_{n+1}$ with  $x_1\neq o$ such that  $\alpha_k=\beta_x$. 
			Recall the notation $s_{\theta,n+1}(z)$, $r_{\theta,n+1}(z)$, $s_{\theta-0,n+1}(z)$, $r_{\theta-0,n+1}(z)$  (c.f. \eqref{923+}, \eqref{923.*}).
			We will restrict $z$ to the interval \begin{equation*}
				I:=(E_n(\beta_x-0)-5\delta_n,E_n(\beta_x)+5\delta_n)
			\end{equation*} in the following discussion  so that all the functions of $z$ will be real-valued. Since $\beta_x\in \R\setminus\Z$ and $I\subset A_n(\beta_x,5\delta_n)$,  it follows that  \eqref{Den} holds true.   
			If we can  prove \begin{equation}\label{1055+}
				s_{\beta_x,n+1}(z)\leq s_{\beta_x-0,n+1}(z),\ \ z\in I,
			\end{equation}
			then 
			\begin{equation}\label{ky}
				s_{\beta_x,n+1}(E_{n+1}(\beta_x-0))\leq s_{\beta_x-0,n+1}(E_{n+1}(\beta_x-0))=0=s_{\beta_x,n+1}(E_{n+1}(\beta_x)).
			\end{equation} The desired conclusion   \begin{equation}\label{key}
				E_{n+1}(\beta_x)\geq  E_{n+1}(\beta_x-0)
			\end{equation} will follow from \eqref{ky} and the monotonicity of  $s_{\beta_x,n+1}(z)$ (since by \eqref{Den}, $s_{\beta_x,n+1}'(z)\approx1$). The remainder of the proof  aims at verifying \eqref{1055+}. Recalling \eqref{923+}, we compute 
			\begin{align}
				&s_{\beta_x,n+1}(z)-s_{\beta_x-0,n+1}(z)\nonumber\\ =&r_{\beta_x-0,n+1}(z)-r_{\beta_x,n+1}(z)\nonumber\\
				=&c^{\text{T}}\left[ \left(z\operatorname{Id}_{B_{n+1}^o}-H_{B_{n+1}^o}(\beta_x-0)\right)^{-1}- \left(z\operatorname{Id}_{B_{n+1}^o}-H_{B_{n+1}^o}(\beta_x)\right)^{-1}\right]c\nonumber\\
				=&c^{\text{T}}\left[ \left(z\operatorname{Id}_{B_{n+1}^o}-H_{B_{n+1}^o}(\beta_x-0)\right)^{-1}{\bm e}_x {\bm e}_x^{\rm T}\left(z\operatorname{Id}_{B_{n+1}^o}-H_{B_{n+1}^o}(\beta_x)\right)^{-1}\right]c. \label{1210+}
			\end{align}
			On the last line of the above equation  we use the resolvent identity and the  equation $$H_{B_{n+1}^o}(\beta_x-0)-H_{B_{n+1}^o}(\beta_x)={\bm e}_x {\bm e}_x^{\rm T}.$$
		Cramer's rule implies 
			$$\left(z\operatorname{Id}_{B_{n+1}^o}-H_{B_{n+1}^o}(\beta_x-0)\right)^{-1}=\frac{\left(z\operatorname{Id}_{B_{n+1}^o}-H_{B_{n+1}^o}(\beta_x-0)\right)^{\#}}{\operatorname{det}\left(z\operatorname{Id}_{B_{n+1}^o}-H_{B_{n+1}^o}(\beta_x-0)\right)}$$
			and $$\left(z\operatorname{Id}_{B_{n+1}^o}-H_{B_{n+1}^o}(\beta_x)\right)^{-1}=\frac{\left(z\operatorname{Id}_{B_{n+1}^o}-H_{B_{n+1}^o}(\beta_x)\right)^{\#}}{\operatorname{det}\left(z\operatorname{Id}_{B_{n+1}^o}-H_{B_{n+1}^o}(\beta_x)\right)}.$$
			A simple computation shows that 
			\begin{equation}\label{1437+}
				\left(z\operatorname{Id}_{B_{n+1}^o}-H_{B_{n+1}^o}(\beta_x-0)\right)^{\#}{\bm e}_x=\left(z\operatorname{Id}_{B_{n+1}^o}-H_{B_{n+1}^o}(\beta_x)\right)^{\#}{\bm e}_x.
			\end{equation}
			We denote by $b(z)$ the vector in \eqref{1437+}.
			Hence 
			\begin{equation}\label{1528+}
				\eqref{1210+}=\frac{1}{\operatorname{det}\left(z\operatorname{Id}_{B_{n+1}^o}-H_{B_{n+1}^o}(\beta_x-0)\right)}\frac{1}{\operatorname{det}\left(z\operatorname{Id}_{B_{n+1}^o}-H_{B_{n+1}^o}(\beta_x)\right)}\left(c^{\rm T} b(z)\right)^2.
			\end{equation}
			From \eqref{1528+}, it follows  that the sign of $s_{\beta_x,n+1}(z)-s_{\beta_x-0,n+1}(z)$ is the same as the sign of  the product \begin{equation}\label{532}
				\operatorname{det}\left(z\operatorname{Id}_{B_{n+1}^o}-H_{B_{n+1}^o}(\beta_x-0)\right)\operatorname{det}\left(z\operatorname{Id}_{B_{n+1}^o}-H_{B_{n+1}^o}(\beta_x)\right).
			\end{equation}
			The following technical Lemma \ref{1448} shows that the sign of \eqref{532} is negative and hence by  \eqref{1210+} and  \eqref{1528+}, it follows that \eqref{1055+} holds true. Thus we finish the proof of \eqref{key} and the second property.
		\end{proof}
		\begin{lem}\label{1448}
			Let $\beta_x=\{-x\cdot\omega\}$ with $x\neq o$. Then for   \begin{equation}\label{I}
				z\in I:=(E_n(\beta_x-0)-5\delta_n,E_n(\beta_x)+5\delta_n),
			\end{equation} we have 
			\begin{equation}\label{key1}
				\operatorname{det}\left(z\operatorname{Id}_{B_{n+1}^o}-H_{B_{n+1}^o}(\beta_x-0)\right)\operatorname{det}\left(z\operatorname{Id}_{B_{n+1}^o}-H_{B_{n+1}^o}(\beta_x)\right)<0.
			\end{equation}
		\end{lem}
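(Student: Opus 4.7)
The plan is to adapt the argument of Proposition \ref{814}(2) (the $n=1$ analogue) to the scale $n+1$, with the multi-scale Green's function machinery of Hypothesis \ref{h3} taking the role of the crude scale-$0$ Neumann bound used there.

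The first step is a Schur complement at the single site $x$. Setting $B_{n+1}^1 := B_{n+1}^o \setminus \{x\}$, the operators $H_{B_{n+1}^o}(\beta_x)$ and $H_{B_{n+1}^o}(\beta_x - 0)$ agree on $B_{n+1}^1$ (since $\beta_x + y\cdot\omega \notin \Z$ for every $y \in B_{n+1}^1$) and differ only at the $(x,x)$-entry, with respective values $v(0)=0$ and $v(1-0)=1$. Schur complement at $x$ thus produces a common factor $D(z) := \operatorname{det}(z\operatorname{Id}_{B_{n+1}^1} - H_{B_{n+1}^1}(\beta_x))$ and the identity
\[
\operatorname{det}\!\bigl(z\operatorname{Id}_{B_{n+1}^o} - H_{B_{n+1}^o}(\beta_x)\bigr)\operatorname{det}\!\bigl(z\operatorname{Id}_{B_{n+1}^o} - H_{B_{n+1}^o}(\beta_x - 0)\bigr) = (z - R_x)(z - 1 - R_x)\, D(z)^2,
\]
where $R_x(z) := C_x^{\rm T}(z\operatorname{Id}_{B_{n+1}^1} - H_{B_{n+1}^1}(\beta_x))^{-1} C_x$ and $C_x$ is the $\varepsilon$-coupling from $x$ into $B_{n+1}^1$. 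Lemma \ref{ag} and Hypothesis \ref{h3} ensure that $B_{n+1}^o$ is $n$-good at both phases $(\beta_x, z)$ and $(\beta_x - 0, z)$ for $z \in I$, so the factored operators are invertible, $D(z) \neq 0$, and $D(z)^2 > 0$. The lemma then reduces to showing $R_x(z) \in (z-1, z)$ throughout $I$.

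Next, I would locate $z$ robustly inside $(0, 1)$. Applying Lemma \ref{sepn} at level $m = n$ with $\beta_x + x\cdot\omega \in \Z$, together with the uniform Lipschitz monotonicity \eqref{Lc} of $E_n$, yields
\[
E_n(0) + 20\delta_n \leq E_n(\beta_x - 0) \leq E_n(\beta_x) \leq E_n(1-0) - 20\delta_n,
\]
so that $z - E_n(0) \geq 15\delta_n$ and $E_n(1-0) - z \geq 15\delta_n$ for every $z \in I$. The target containment $R_x \in (z - 1, z)$ is therefore implied by the two sharper estimates $|R_x(z) - E_n(0)| \leq 5\delta_n$ and $|R_x(z) - (E_n(1-0) - 1)| \leq 5\delta_n$.

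The key bounds on $R_x$ are to be obtained by identifying $R_x(z)$ with a translated scale-$(n+1)$ Schur correction. By $1$-periodicity and the shift $\theta \mapsto \theta + x \cdot \omega$, which sends $\beta_x$ to $0$ (respectively $\beta_x - 0$ to $1 - 0$), the operator $H_{B_{n+1}^1}(\beta_x)$ becomes, up to a relabeling of sites, a displaced copy of the operator appearing in the Schur construction of $r_{0, n+1}(z)$ in Proposition \ref{528} and Lemma \ref{sln}. The leading contribution thus identifies $R_x$ with $r_{0, n+1}(z) \approx E_n(0)$, respectively $r_{1-0, n+1}(z) \approx E_n(1-0) - 1$; the discrepancy caused by the asymmetric exclusion of $o$ rather than $x$ from $B_{n+1}^o$ is an annular resolvent correction, controlled by the exponentially decaying Green's function across the $o$-to-$x$ annulus, entirely in the spirit of the estimate \eqref{res}. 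Combining these bounds with the previous step gives $(z - R_x)(z - 1 - R_x) < 0$ throughout $I$, completing the argument.

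The main obstacle is precisely this translation. The scale-$(n+1)$ Schur analysis is developed in Proposition \ref{528} and Hypothesis \ref{h2} for blocks symmetric about the origin, whereas here the Schur is at a displaced site $x$ inside the asymmetric block $B_{n+1}^o$, which excludes only $o$. Bridging this asymmetry rigorously -- that is, verifying that the needed translated Schur correction inherits the scale-$(n+1)$ approximation from Lemma \ref{sln} and that the asymmetric discrepancy decays super-exponentially across the $o$-$x$ annulus -- requires careful use of the multi-scale regularity of $B_{n+1}^o$ and its translated analogue (via Corollary \ref{cor}), together with a resolvent iteration patterned on the derivation of \eqref{res}.
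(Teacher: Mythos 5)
Your opening steps match the paper's proof in spirit: Schur complement at $x$, factor out $D(z)^2>0$, and reduce to showing $R_x(z)\in(z-1,z)$ (or equivalently, that the two Schur complements $z-R_x$ and $z-1-R_x$ have opposite signs). The separation $E_n(0)+20\delta_n\leq E_n(\beta_x-0)\leq E_n(\beta_x)\leq E_n(1-0)-20\delta_n$ from Lemma \ref{sepn} is also correctly invoked. However, the central estimate you propose for $R_x(z)$ is not attainable by the route you sketch, and this is a genuine gap.

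The problem is the size of $R_x(z)$ when $B_{n+1}^1:=B_{n+1}^o\setminus\{x\}$ still contains resonant points. Lemma \ref{ag} and Hypothesis \ref{h3} make $B_{n+1}^1$ $n$-good, but not better than $n$-good: if there is a resonance $x_1\neq x$ of some high order $m\leq n$ in $B_{n+1}^1$, the operator norm bound $\|(z-H_{B_{n+1}^1}(\beta_x))^{-1}\|\leq 10\delta_n^{-1}$ is all you get, and since the coupling $C_x$ is supported at nearest neighbours of $x$ the off-diagonal decay (which kicks in only at distance $\gtrsim l_n^{5/6}$) does not help. This yields $|R_x(z)|\lesssim\varepsilon^2\delta_n^{-1}$, which is much larger than $\delta_n$ for all but the first few scales, so it cannot put $R_x(z)$ in the window $(z-1,z)$. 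Your plan to upgrade this via ``identification with a translated scale-$(n+1)$ Schur correction $r_{0,n+1}$ plus an annular error'' does not work either: after shifting by $-x$ the block $B_{n+1}^1$ becomes $(B_{n+1}-x)\setminus\{o,-x\}$, which is centred at $-x$ and is not a small-annulus perturbation of $B_{n+1}^o$ (centred at the origin); the symmetric difference is of order the full block when $\|x\|_1\sim l_{n+1}$. The correct local approximant is the translated $(m_x{+}1)$-scale block $B_{m_x+1}^o(x)$, where $m_x$ is the highest order of resonance at $x$, but controlling the boundary term of that resolvent identity inside $B_{n+1}^1$ requires $B_{n+1}^1$ to be $(m_x{+}1)$-good, which fails precisely when there is a higher-order resonance $x_1\neq x$ present.

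The mechanism the paper uses to break this impasse is not in your outline at all: Proposition \ref{953} and Proposition \ref{it} show that peeling off the highest-order non-$x$ resonance $x_1$ from $B_{n+1}^o$ preserves the sign of the determinant product, because the site $x_1$ has the same diagonal value in $H_{B_{n+1}^o}(\beta_x)$ and $H_{B_{n+1}^o}(\beta_x-0)$ and the factored Schur complement at $x_1$ is the same sign for both (via the $(m+1)$-scale approximation on the translated block $B_{m+1}^o(x_1)$ and the non-negative-jump property \eqref{1640}). This iterative removal reduces to the situation where either the remaining block is resonance-free or the only resonance is at $x$ itself, and only then does your step --- the translated $(m+1)$-scale Schur estimate at $x$, with $m$ the highest resonance order --- apply. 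Without this ``deleting resonances preserves sign'' iteration, the approximation of $R_x$ needed for $(z-R_x)(z-1-R_x)<0$ is not available.
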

		\begin{proof}
			Fix $z\in I$. For $0\leq m\leq n$, we define the set 
			$$S_m:=S_m(\beta_x,z)\cup S_m(\beta_x-0,z).$$
		In the  case   \begin{equation}\label{1509}
				\text{$B_{n+1}^o\cap S_m=\emptyset$ ($0\leq m\leq n$). }
			\end{equation}
		The proof is similar to the corresponding proof at the first scale  (c.f. Proposition \ref{814}).

		We consider the  negation case of \eqref{1509}.
			Let  $m$ be the  largest integer of $0\leq m\leq n$ such that $B_{n+1}^o\cap S_m\neq \emptyset$ and let  $x_1\in S_m\cap B_{n+1}^o$.	Without loss of generality,  we assume $x_1\in S_m(\beta_x,z)\cap B_{n+1}^o$ since the case $x_1\in S_m(\beta_x-0,z)\cap B_{n+1}^o$ is analogous. Since $z\in I$, it follows from Lemma \ref{con} that 
			$$z\in I\subset A_n(\beta_x,5\delta_n)\subset A_m(\beta_x,5\delta_m).$$
			And since $x_1\in S_m(\beta_x,z)\cap B_{n+1}^o$, we have $0<\|x_1\|_1\leq 2l_{n+1}$ and $|E_m(\beta_x+x_1\cdot\omega)-z|\leq\delta_m$.
			Then  by \eqref{Lc} and Lemma \ref{no} , it follows that 
			\begin{align*}
				L\|x_1\cdot \omega\|_\T	\leq &\min (|E_m(\beta_x+x_1\cdot \omega )-E_m(\beta_x)|,|E_m(\beta_x+x_1\cdot \omega )-E_m(\beta_x-0)|)
				\\ \leq &\operatorname{dist}(E_m(\beta_x+x_1\cdot\omega),A_m(\beta_x,5\delta_m))+5\delta_m\\
				\leq & |E_m(\beta_x+x_1\cdot\omega)-z|+5\delta_m\\
				\leq& 6\delta_m. 
			\end{align*}
			Hence by \eqref{DC} and $0<\|x_1\|_1\leq 2l_{n+1}$, we have 
			$$L\gamma(2l_{n+1})^{-\tau}\leq L\|x_1\cdot \omega\|_\T 	\leq 6 \delta_m=6e^{-l_m^{\frac{2}{3}}}\leq  L\gamma(2l_m^{16})^{-\tau}.$$
			It follows from the above inequality that $l_{n+1}\geq l_m^{16}$ and hence $m\leq n-3$.\\
			The proof proceeds by case analysis:
			\begin{itemize}
				\item[\textbf{Case 1:}] Assume that 
				\begin{equation}\label{case1}
					x_1\in S_m(\beta_x,z)\cap B_{n+1}^o,\ \  x_1=x.
				\end{equation}
				We denote  $B_{n+1}^1:=B_{n+1}^o\setminus\{x\}$ as before. Then $H_{B_{n+1}^1}(\beta_x)=H_{B_{n+1}^1}(\beta_x-0)$. Let  $p\in S_k(\beta_x,z)\cap B_{n+1}^1$.  (By the maximal property of $m$, it follows that $k\leq m$.) Then $B_{k+1}(p)\subset B_{n+1}^o$ since $B_{n+1}^o$ is $n$-regular related to $(\beta_x,z)$ (c.f. Lemma \ref{ag}).
				Since   $x_1\in S_m(\beta_x,z)$ and $p\in S_k(\beta_x,z)$, it follows that 
				$$|E_m(\beta_x+x\cdot\omega)-z|\leq \delta_m, \ \ |E_k(\beta_x+p\cdot\omega)-z|\leq \delta_k. $$
				Since $p\neq x$, estimating as before by \eqref{h12}, \eqref{Lc} and \eqref{DC}, we have\begin{align*}
					L\gamma\|p-x\|_1^{-\tau} &\leq  L\gamma\|(p-x)\cdot\omega\|_\T\\ &\leq |E_k(\beta_x+p\cdot\omega)-E_k(\beta_x+x\cdot \omega )|\\&\leq |E_k(\beta_x+p\cdot\omega)-z|+|z-E_m(\beta_x+x\cdot\omega)|\\ &+|E_m(\beta_x+x\cdot\omega)-E_k(\beta_x+x\cdot\omega)|
					\\&\leq \delta_k+\delta_m +\sum_{j=k}^{m-1}e^{-l_j}\leq 10\delta_k,
				\end{align*} yielding $\|p-x\|_1\geq (L\gamma/10)^\frac{1}{\tau}\delta_k^{-\tau}\geq 10l_{k+1}$. Since $B_{k+1}\subset Q_{2l_{k+1}}$,  it follows that  $x\notin B_{k+1}(p)$ and hence $B_{k+1}(p)\subset B_{n+1}^1$. Thus $B_{n+1}^1$ is $(m+1)$-regular related to $(\beta_x,z)$. Since  $S_{m+1}(\beta_x,z)\cap B_{n+1}^1=\emptyset$ by the maximal property of $m$, it follows that $B_{n+1}^1$ is $(m+1)$-good related to $(\beta_x,z)$.
				Hence by Hypothesis \ref{h3}, we get the resolvent bound 
				\begin{equation}\label{1623++}
					\left \|\left( z\operatorname{Id}_{B_{n+1}^1}-H_{B_{n+1}^1}(\beta_x)  \right)^{-1}\right\|\leq 10\delta_{m+1}^{-1}.
				\end{equation}
				Since $$\langle H_{B_{n+1}^o}(\beta_x-0){\bm e}_x,{\bm e}_x\rangle=v(1-0)=1, \  \ \langle H_{B_{n+1}^o}(\beta_x){\bm e}_x,{\bm e}_x\rangle=v(0)=0$$ and  $H_{B_{n+1}^1}(\beta_x -0)=H_{B_{n+1}^1}(\beta_x )$, 
				we can write 
				$$z\operatorname{Id}_{B_{n+1}^o}-H_{B_{n+1}^o}(\beta_x-0)=\begin{pmatrix}
					z-1 & \tilde{c}^{\text{T}} \\
					\tilde{c} & z\operatorname{Id}_{B_{n+1}^1}-H_{B_{n+1}^1}(\beta_x)
				\end{pmatrix}
				$$
				and 
				$$z\operatorname{Id}_{B_{n+1}^o}-H_{B_{n+1}^o}(\beta_x)=\begin{pmatrix}
					z-0 & \tilde{c}^{\text{T}} \\
					\tilde{c} & z\operatorname{Id}_{B_{n+1}^1}-H_{B_{n+1}^1}(\beta_x)
				\end{pmatrix}.
				$$
				Hence by Schur complement formula, 
				\begin{align}\label{351++}
					&\nonumber	\operatorname{det}\left(z\operatorname{Id}_{B_{n+1}^o}-H_{B_{n+1}^o}(\beta_x-0)\right)\\=&\left [z-1-\tilde{c}^{\rm T}\left(z\operatorname{Id}_{B_{n+1}^1}-H_{B_{n+1}^1}(\beta_x)\right)^{-1}\tilde{c}\right ]\operatorname{det}\left(z\operatorname{Id}_{B_{n+1}^1}-H_{B_{n+1}^1}(\beta_x)\right)
				\end{align} and 
				\begin{align}\label{353++}
					&\nonumber	\operatorname{det}\left(z\operatorname{Id}_{B_{n+1}^o}-H_{B_{n+1}^o}(\beta_x)\right)\\=&\left [z-0-\tilde{c}^{\rm T}\left(z\operatorname{Id}_{B_{n+1}^1}-H_{B_{n+1}^1}(\beta_x)\right)^{-1}\tilde{c}\right ]\operatorname{det}\left(z\operatorname{Id}                                                           _{B_{n+1}^1}-H_{B_{n+1}^1}(\beta_x)\right) .
				\end{align}
				Denote $B_{m+1}^o(x):=B_{m+1}(x)\setminus\{x\}$. Since $x\in S_m(\beta_x,z)\cap B_{n+1}^o$ and $B_{n+1}^o$ is $n$-regular related to $(\beta_x,z)$ (c.f. Lemma  \ref{ag}), it follows that $B_{m+1}(x)\subset B_{n+1}^o$ and hence $B_{m+1}^o(x)\subset B_{n+1}^1 $.   Since $\beta_x+x\cdot\omega\in \Z $ and  $x\in S_m(\beta_x,z)$, it follows that $|E_m(\beta_x+x\cdot\omega)-z|=|E_m(0)-z|\leq \delta_m$, yielding $z\in D(E_m(0),10\delta_m)$. It  follows from Corollary \ref{cor}  that $B_{m+1}^o(x)$ is $m$-good related to $(\beta_x,z)$. Thus  
				\begin{equation}\label{bm+}
					|G_{B_{m+1}^o(x)}^{\beta_x,z}(\tilde y,y)|\leq e^{-\gamma_{m}\|\tilde y-y\|_1}, \ \|\tilde y-y\|_1\geq l_{m}^{\frac{5}{6}}.
				\end{equation}
				By the resolvent identity, \eqref{1623++}, \eqref{bm+} and \eqref{sa10}, estimating as in \eqref{res}, we have 
				\begin{align}
					\nonumber	&z-0-\tilde{c}^{\rm T}\left(z\operatorname{Id}_{B_{n+1}^1}-H_{B_{n+1}^1}(\beta_x)\right)^{-1}\tilde{c}\\
					\nonumber	=&z-0-\tilde{c}^{\rm T}\left(z\operatorname{Id}_{B_{m+1}^o(x)}-H_{B_{m+1}^o(x)}(\beta_x)\right)^{-1}\tilde{c}\\-&\nonumber\varepsilon \sum_{\substack{\|\tilde{y}-x\|_1,\|y-x\|_1=1\\(w,w')\in \partial_{B_{n+1}^1}B_{m+1}^o(x)}}\tilde c(\tilde y)G_{B_{m+1}^o(x)}^{\beta_x,z}(\tilde y,w)G_{B_{n+1}^1}^{\beta_x,z}(w',y)\tilde c(y)\\
					\nonumber=&z-0-c^{\rm T}\left(z\operatorname{Id}_{B_{m+1}^o}-H_{B_{m+1}^o}(\beta_x+x\cdot\omega)\right)^{-1}c-O(e^{-3l_{m+1}})\\
					\nonumber	=&\left(z-E_{m+1}(\beta_x+x\cdot \omega) -O(e^{-2l_{m+1}})\right)\left(1+O(\sum_{k=0}^m\delta_k)\right)\\
					=&\left(z-E_{m+1}(0) -O(e^{-2l_{m+1}})\right)\left(1+O(\sum_{k=0}^m\delta_k)\right).\label{2036}
				\end{align}
				The above $O(\star)$ are understood as  numbers bounded by $\star$.
				Since  $x\notin S_{m+1}(\beta_x,z)$ by the maximal property of $m$, it follows that  
				\begin{equation}\label{2038}
					|z-E_{m+1}(\beta_x+x\cdot\omega)|=|z-E_{m+1}(0)|\geq \delta_{m+1}.
				\end{equation}
				Since $z\geq E_n(\beta_x-0)-5\delta_n$,     it follows from   inductive hypotheses \eqref{h12}, \eqref{Lc} that 
				\begin{align}\label{2}
					\nonumber	z-E_{m+1}(0)&\geq z-E_{m+1}(\beta_x-0)\\
					\nonumber	&\geq z-E_{n}(\beta_x-0)-|E_n(\beta_x-0)-E_{m+1}(\beta_x-0)|\\
					\nonumber	&\geq -5\delta_n-\sum_{k=m+1}^{n-1}e^{-l_k}\\
					&>-\delta_{m+1}/2.
				\end{align}
				By \eqref{2038} and \eqref{2}, we have 	$z-E_{m+1}(0)\geq \delta_{m+1}>e^{-2l_{m+1}}$. Hence by \eqref{2036}, it follows that 
				\begin{equation}\label{0}
					z-0-\tilde{c}^{\rm T}\left(z\operatorname{Id}_{B_{n+1}^1}-H_{B_{n+1}^1}(\beta_x)\right)^{-1}\tilde{c}>0.
				\end{equation}
				Since $x\in S_m(\beta_x,z)$, we have $|z-E_m(0)|<\delta_m$. Hence by \eqref{h12} and  \eqref{2036}, it follows that 
				\begin{align}
					\nonumber	&z-1-\tilde{c}^{\rm T}\left(z\operatorname{Id}_{B_{n+1}^1}-H_{B_{n+1}^1}(\beta_x)\right)^{-1}\tilde{c}\\
					\nonumber	=&\left(z-E_{m+1}(0) -O(e^{-2l_{m+1}})\right)\left(1+O(\sum_{k=0}^m\delta_k)\right)-1\\
					\nonumber	\leq &2(|z-E_m(0)|+|E_m(0)-E_{m+1}(0)|+e^{-2l_{m+1}})-1\\
					\leq &2(\delta_m+\delta_m)-1<0.\label{1}
				\end{align}
				Then \eqref{key1} follows from \eqref{351++}, \eqref{353++}, \eqref{0} and \eqref{1}.
				\item[\textbf{Case 2:}] Assume that\begin{equation}\label{case2}
					x_1\in S_m(\beta_x,z)\cap B_{n+1}^o,\ \  x_1\neq x.
				\end{equation}
				We denote $B_{n+1}^{(1)}=B_{n+1}^o\setminus \{x_1\}$.
				Let  $p\in S_k(\beta_x,z)\cap B_{n+1}^{(1)}$.  (By the maximal property of $m$, it follows that $k\leq m$.) Then $B_{k+1}(p)\subset B_{n+1}^o$ since $B_{n+1}^o$ is $n$-regular related to $(\beta_x,z)$ (c.f. Lemma  \ref{ag}).
				Since $p\neq x_1$, estimating as before  by \eqref{h12}, \eqref{Lc} and \eqref{DC}, we have
				\begin{align*}
					L\gamma\|p-x_1\|_1^{-\tau} &\leq  L\gamma\|(x_1-p)\cdot\omega\|_\T\\ &\leq |E_k(\beta_x+p\cdot\omega)-E_k(\beta_x+x_1\cdot \omega )|\\&\leq |E_k(\beta_x+p\cdot\omega)-z|+|z-E_m(\beta_x+x_1\cdot\omega)|\\ &+|E_m(\beta_x+x_1\cdot\omega)-E_k(\beta_x+x_1\cdot\omega)|
					\\&\leq \delta_k+\delta_m +\sum_{j=k}^{m-1}e^{-l_j}\leq 10\delta_k,
				\end{align*} yielding $\|p-x_1\|_1\geq (L\gamma/10)^\frac{1}{\tau}\delta_k^{-\tau}\geq 10l_{k+1}$. It follows that  $x_1\notin  B_{k+1}(p)$ and hence $B_{k+1}(p)\subset B_{n+1}^{(1)}$. Thus $B_{n+1}^{(1)}$ is $(m+1)$-regular related to $(\beta_x,z)$. Since  $S_{m+1}(\beta_x,z)\cap B_{n+1}^{(1)}=\emptyset$ by the maximal property of $m$, it follows that $B_{n+1}^{(1)}$ is $(m+1)$-good related to $(\beta_x,z)$.
				Hence by Hypothesis \ref{h3}, we get the resolvent bound 
				\begin{equation}\label{1623,}
					\left \|\left( z\operatorname{Id}_{B_{n+1}^{(1)}}-H_{B_{n+1}^{(1)}}(\beta_x)  \right)^{-1}\right\|\leq 10\delta_{m+1}^{-1}.
				\end{equation}
				Let  $p\in S_k(\beta_x-0,z)\cap B_{n+1}^{(1)}$.  (By the maximal property of $m$, it follows that $k\leq m$.) Then $B_{k+1}(p)\subset B_{n+1}^o$ since $B_{n+1}^o$ is $n$-regular related to $(\beta_x-0,z)$ (c.f. Lemma  \ref{ag}).
				Since $p\neq x_1$, estimating as before  by \eqref{h12}, \eqref{Lc} and \eqref{DC}, we have 
			\begin{align*}
				L\gamma\|p-x_1\|_1^{-\tau} &\leq  L\gamma\|(x_1-p)\cdot\omega\|_\T\\ &\leq |E_k(\beta_x+p\cdot\omega-0)-E_k(\beta_x+x_1\cdot \omega )|\\&\leq |E_k(\beta_x+p\cdot\omega-0)-z|+|z-E_m(\beta_x+x_1\cdot\omega)|\\ &+|E_m(\beta_x+x_1\cdot\omega)-E_k(\beta_x+x_1\cdot\omega)|
				\\&\leq \delta_k+\delta_m +\sum_{j=k}^{m-1}e^{-l_j}\leq 10\delta_k,
			\end{align*} yielding $\|p-x_1\|_1\geq (L\gamma/10)^\frac{1}{\tau}\delta_k^{-\tau}\geq 10l_{k+1}$. It follows that  $x_1\notin B_{k+1}(p)$ and hence $B_{k+1}(p)\subset B_{n+1}^{(1)}$. Thus $B_{n+1}^{(1)}$ is $(m+1)$-regular related to $(\beta_x-0,z)$. Since  $S_{m+1}(\beta_x-0,z)\cap B_{n+1}^{(1)}=\emptyset$ by the maximal property of $m$, it follows that $B_{n+1}^{(1)}$ is $(m+1)$-good related to $(\beta_x-0,z)$.
			Hence by Hypothesis \ref{h3}, we get the resolvent bound 
			\begin{equation}\label{1623,,}
				\left \|\left( z\operatorname{Id}_{B_{n+1}^{(1)}}-H_{B_{n+1}^{(1)}}(\beta_x-0)  \right)^{-1}\right\|\leq 10\delta_{m+1}^{-1}.
			\end{equation}
			Since $x_1\neq x$, it follows that $v(\beta_x-0+x_1\cdot\omega)=v(\beta_x+x_1\cdot\omega)$ and hence  $$\langle H_{B_{n+1}^o}(\beta_x-0){\bm e}_{x_1},{\bm e}_{x_1}\rangle =\langle H_{B_{n+1}^o}(\beta_x){\bm e}_{x_1},{\bm e}_{x_1}\rangle=v(\beta_x+x_1\cdot\omega).$$ Thus we  can write  write 
			$$z\operatorname{Id}_{B_{n+1}^o}-H_{B_{n+1}^o}(\beta_x-0)=\begin{pmatrix}
				z-v(\beta_x+x_1\cdot\omega) & \tilde{c}^{\text{T}} \\
				\tilde{c} & z\operatorname{Id}_{B_{n+1}^{(1)}}-H_{B_{n+1}^{(1)}}(\beta_x-0)
			\end{pmatrix}
			$$
			and 
			$$z\operatorname{Id}_{B_{n+1}^o}-H_{B_{n+1}^o}(\beta_x)=\begin{pmatrix}
				z-v(\beta_x+x_1\cdot\omega) & \tilde{c}^{\text{T}} \\
				\tilde{c} & z\operatorname{Id}_{B_{n+1}^{(1)}}-H_{B_{n+1}^{(1)}}(\beta_x)
			\end{pmatrix}.
			$$
			Hence by Schur complement formula, 
			\begin{align}\label{351,}
				&\nonumber	\operatorname{det}\left(z\operatorname{Id}_{B_{n+1}^o}-H_{B_{n+1}^o}(\beta_x-0)\right)\\=&\left [z-v(\beta_x+x_1\cdot\omega)-\tilde{c}^{\rm T}\left(z\operatorname{Id}_{B_{n+1}^{(1)}}-H_{B_{n+1}^{(1)}}(\beta_x-0)\right)^{-1}\tilde{c}\right ]\operatorname{det}\left(z\operatorname{Id}_{B_{n+1}^{(1)}}-H_{B_{n+1}^{(1)}}(\beta_x-0)\right)
			\end{align} and 
			\begin{align}\label{353,}
				&\nonumber	\operatorname{det}\left(z\operatorname{Id}_{B_{n+1}^o}-H_{B_{n+1}^o}(\beta_x)\right)\\=&\left [z-v(\beta_x+x_1\cdot\omega)-\tilde{c}^{\rm T}\left(z\operatorname{Id}_{B_{n+1}^{(1)}}-H_{B_{n+1}^{(1)}}(\beta_x)\right)^{-1}\tilde{c}\right ]\operatorname{det}\left(z\operatorname{Id}_{B_{n+1}^{(1)}}-H_{B_{n+1}^{(1)}}(\beta_x)\right).
			\end{align}
			Denote $B_{m+1}^o(x_1):=B_{m+1}(x_1)\setminus\{x_1\}$.  Since $x_1\in S_m(\beta_x,z)\cap B_{n+1}^o$ and $B_{n+1}^o$ is $n$-regular related to $(\beta_x,z)$ (c.f. Lemma  \ref{ag}), it follows that $B_{m+1}(x_1)\subset B_{n+1}^o$ and hence $B_{m+1}^o(x_1)\subset B_{n+1}^1 $.
			Since $x_1\in S_m(\beta_x,z)$ yields $|E_m(\beta_x+x_1\cdot\omega)-z|\leq \delta_m$, it follows that    $z\in A_m(\beta_x+x_1\cdot\omega,10\delta_m)$ (by $x_1\neq x$, we have $\beta_x+x_1\cdot\omega\notin\Z$).   It follows from Corollary \ref{cor} that $B_{m+1}^o(x_1)$ is $m$-good related to $(\beta_x,z)$ and $(\beta_x-0,z)$. Thus  by Hypothesis \ref{h3}, we have  
			\begin{equation}\label{bm,-}
				|G_{B_{m+1}^o(x_1)}^{\beta_x-0,z}(\tilde y,y)|\leq e^{-\gamma_{m}\|\tilde y-y\|_1}, \ \|\tilde y-y\|_1\geq l_{m}^{\frac{5}{6}},
			\end{equation}
			and 
			\begin{equation}\label{bm,}
				|G_{B_{m+1}^o(x_1)}^{\beta_x,z}(\tilde y,y)|\leq e^{-\gamma_{m}\|\tilde y-y\|_1}, \ \|\tilde y-y\|_1\geq l_{m}^{\frac{5}{6}}.
			\end{equation}
			By the resolvent identity, \eqref{1623,}, \eqref{1623,,}, \eqref{bm,-}, \eqref{bm,} and \eqref{sa1}, estimating as in \eqref{res}, we have 
			\begin{align}
				\nonumber	&z-v(\beta_x+x_1\cdot\omega)-\tilde{c}^{\rm T}\left(z\operatorname{Id}_{B_{n+1}^{(1)}}-H_{B_{n+1}^{(1)}}(\beta_x-0)\right)^{-1}\tilde{c}\\
				\nonumber	=&z-v(\beta_x+x_1\cdot\omega)-\tilde{c}^{\rm T}\left(z\operatorname{Id}_{B_{m+1}^o(x_1)}-H_{B_{m+1}^o(x_1)}(\beta_x-0)\right)^{-1}\tilde{c}\\-&\nonumber\varepsilon \sum_{\substack{\|\tilde{y}-x_1\|_1,\|y-x_1\|_1=1\\(w,w')\in \partial_{B_{n+1}^{(1)}}B_{m+1}^o(x_1)}}\tilde c(\tilde y)G_{B_{m+1}^o(x_1)}^{\beta_x-0,z}(\tilde y,w)G_{B_{n+1}^{(1)}}^{\beta_x-0,z}(w',y)\tilde c(y)\\
				\nonumber	=&z-v(\beta_x+x_1\cdot\omega)-c^{\rm T}\left(z\operatorname{Id}_{B_{m+1}^o}-H_{B_{m+1}^o}(\beta_x+x_1\cdot \omega -0)\right)^{-1}c-O(e^{-3l_{m+1}})\\
				=&\left(z-E_{m+1}(\beta_x+x_1\cdot \omega-0) -O(e^{-2l_{m+1}})\right)\left(1+O(\sum_{k=0}^m\delta_k)\right) \label{2036,-}
			\end{align}
			and 
			\begin{align}
				\nonumber	&z-v(\beta_x+x_1\cdot\omega)-\tilde{c}^{\rm T}\left(z\operatorname{Id}_{B_{n+1}^{(1)}}-H_{B_{n+1}^{(1)}}(\beta_x)\right)^{-1}\tilde{c}\\
				\nonumber	=&z-v(\beta_x+x_1\cdot\omega)-\tilde{c}^{\rm T}\left(z\operatorname{Id}_{B_{m+1}^o(x_1)}-H_{B_{m+1}^o(x_1)}(\beta_x)\right)^{-1}\tilde{c}\\-&\nonumber\varepsilon \sum_{\substack{\|\tilde{y}-x_1\|_1,\|y-x_1\|_1=1\\(w,w')\in \partial_{B_{n+1}^{(1)}}B_{m+1}^o(x_1)}}\tilde c(\tilde y)G_{B_{m+1}^o(x_1)}^{\beta_x,z}(\tilde y,w)G_{B_{n+1}^{(1)}}^{\beta_x,z}(w',y)\tilde c(y)\\
				\nonumber	=&z-v(\beta_x+x_1\cdot\omega)-c^{\rm T}\left(z\operatorname{Id}_{B_{m+1}^o}-H_{B_{m+1}^o}(\beta_x+x_1\cdot\omega)\right)^{-1} c-O(e^{-3l_{m+1}})\\
				=&\left(z-E_{m+1}(\beta_x+x_1\cdot \omega) -O(e^{-2l_{m+1}})\right)\left(1+O(\sum_{k=0}^m\delta_k)\right) \label{2036,}.
			\end{align}
			\begin{claim}\label{6}
				The product of \eqref{2036,-} and \eqref{2036,} is positive.
			\end{claim}
			Since $x_1\neq x$, then $\{\beta_x+x_1\cdot \omega \}\neq 0$. Hence by the inductive hypothesis \eqref{1640}, we have $E_{m+1}(\beta_x+x_1\cdot \omega)\geq E_{m+1}(\beta_x+x_1\cdot \omega-0)$. And since $x_1\notin S_{m+1}$ (by the maximum property of $m$), it follows that \begin{equation}\label{845}
				|E_{m+1}(\beta_x+x_1\cdot \omega)-z|\geq \delta_{m+1}\end{equation}
			and\begin{equation}\label{846}
				|E_{m+1}(\beta_x+x_1\cdot \omega-0)-z|\geq \delta_{m+1}.
			\end{equation}
			Since $x_1\neq o$, we have $\beta_x\neq \{\beta_x+x_1\cdot\omega\}$.\\  If $\beta_x<\{\beta_x+x_1\cdot\omega\}$, then by \eqref{Lc}, \eqref{h12} and  $z\leq E_n(\beta_x)+5\delta_n$, we have
			\begin{align}\label{>}
				\nonumber	&E_{m+1}(\beta_x+x_1\cdot \omega-0)-z\\ \nonumber\geq& E_{m+1}(\beta_x+x_1\cdot \omega-0)-E_{m+1}(\beta_x) -|E_{m+1}(\beta_x)-E_n(\beta_x)|-5\delta_n \\
				\geq &0-\sum_{k=m+1}^{n-1}e^{-l_k}-5\delta_n\nonumber\\
				>&-\delta_{m+1}/2.
			\end{align}
			Then by \eqref{>}, \eqref{846}, we have 
			$$E_{m+1}(\beta_x+x_1\cdot \omega)-z\geq E_{m+1}(\beta_x+x_1\cdot \omega-0)-z\geq \delta_{m+1} >e^{-2l_{m+1}},$$ yielding the first factors of \eqref{2036,-} and \eqref{2036,} are both negative.\\
			If $\beta_x>\{\beta_x+x_1\cdot\omega\}$, then by \eqref{Lc}, \eqref{h12} and  $z\geq E_n(\beta_x-0)- 5\delta_n$, we have
			\begin{align}\label{<}
				\nonumber	&E_{m+1}(\beta_x+x_1\cdot \omega)-z\\ \nonumber\leq& E_{m+1}(\beta_x+x_1\cdot \omega)-	E_{m+1}(\beta_x-0)+|E_{m+1}(\beta_x-0)-E_n(\beta_x-0)| +5\delta_n\\
				\leq &0+\sum_{k=m+1}^{n-1}e^{-l_k}+5\delta_n\nonumber\\
				<&\delta_{m+1}/2.
			\end{align}
			Then by \eqref{<}, \eqref{845}, we have 
			$$E_{m+1}(\beta_x+x_1\cdot \omega-0)-z\leq E_{m+1}(\beta_x+x_1\cdot \omega)-z\leq -\delta_{m+1} <-e^{-2l_{m+1}},$$ yielding the first factors of \eqref{2036,-} and \eqref{2036,} are both positive.  Hence the claim is proved. 
			
			By \eqref{351,}, \eqref{353,},   \eqref{2036,-}, \eqref{2036,} and Claim \ref{6},  we get the following proposition:
			\begin{prop}\label{953}Let  $m$ be the  largest integer of $0\leq m\leq n$ such that $B_{n+1}^o\cap S_m\neq \emptyset$ and let  $x_1\in S_m\cap B_{n+1}^o$ with $x_1\neq x$.  Then the sign of 
				\begin{equation}\label{956}
					\operatorname{det}\left(z\operatorname{Id}_{B_{n+1}^o}-H_{B_{n+1}^o}(\beta_x-0)\right)\operatorname{det}\left(z\operatorname{Id}_{B_{n+1}^o}-H_{B_{n+1}^o}(\beta_x)\right)
				\end{equation}
				is the same as the sign of  	$$	\operatorname{det}\left(z\operatorname{Id}_{B_{n+1}^{(1)}}-H_{B_{n+1}^{(1)}}(\beta_x-0)\right)\operatorname{det}\left(z\operatorname{Id}_{B_{n+1}^{(1)}}-H_{B_{n+1}^{(1)}}(\beta_x)\right)$$
				with $B_{n+1}^{(1)}:=B_{n+1}^o\setminus\{x_1\}$.
			\end{prop} 
		\end{itemize}
		Proposition \ref{953} means that deleting the largest order resonant point $x_1(\neq x)$ of $B_{n+1}^o$ does not change the sign of the  determinant product \eqref{956}. Moreover,  we   can iterate it into the following form:
		\begin{prop}\label{it}Let $\{B_{n+1}^{(j)}\}_{j=0}^J$ be a set sequence such that $B_{n+1}^{(0)}=B_{n+1}^o$ and $B_{n+1}^{(j)}=B_{n+1}^{(j-1)}\setminus\{x_j\}$ where $x_j(\neq x)$ is the largest order resonant point in $B_{n+1}^{(j-1)}$ (that is, $x_j\in S_{m_j}\cap B_{n+1}^{(j-1)}$ with $m_j$ being the largest integer such that $B_{n+1}^{(j-1)}\cap S_{m_j}\neq \emptyset$).
			Then the sign of 
			\begin{equation*}
				\operatorname{det}\left(z\operatorname{Id}_{B_{n+1}^{(j-1)}}-H_{B_{n+1}^{(j-1)}}(\beta_x-0)\right)\operatorname{det}\left(z\operatorname{Id}_{B_{n+1}^{(j-1)}}-H_{B_{n+1}^{(j-1)}}(\beta_x)\right)
			\end{equation*}
			is the same as the sign of  	$$	\operatorname{det}\left(z\operatorname{Id}_{B_{n+1}^{(j)}}-H_{B_{n+1}^{(j)}}(\beta_x-0)\right)\operatorname{det}\left(z\operatorname{Id}_{B_{n+1}^{(j)}}-H_{B_{n+1}^{(j)}}(\beta_x)\right).$$
		\end{prop} 
		The proof of Proposition \ref{it} is the same as in Case 2 (c.f. \eqref{case2}), with $B_{n+1}^o$ (resp. $B_{n+1}^{(1)}$, $x_1$) replaced by $B_{n+1}^{(j-1)}$ (resp. $B_{n+1}^{(j)}$, $x_j$). One issue is to verify that   $B_{n+1}^{(j-1)}$ is $(m_j+1)$-regular. This  follows from the $n$-regular property of $B_{n+1}^o$ and the sufficiently large separation of the   resonant points from the reason below.
		
		Let  $p_1\in S_k$ with $k\leq m_j$.  Then  for $p_2\in \{x_1,\cdots,x_{j-1}\}$, there exist   $\xi_1,\xi_2\in \{\beta_x,\beta_x-0\}$ such that  
		\begin{align*}
			&L\gamma \|p_1-p_2\|^{-\tau}\\ \leq& L\gamma \|(p_1-p_2)\cdot\omega\|_\T\\
			\leq& |E_k(\xi_1+p_1\cdot\omega)-E_k(\xi_2+p_2\cdot\omega)|\\
			\leq & |E_k(\xi_1+p_1\cdot\omega)-z|+|z-E_{m_j}(\xi_2+p_2\cdot\omega)|+|E_{m_j}(\xi_2+p_2\cdot\omega)-E_k(\xi_2+p_2\cdot\omega)|\\
			\leq&  \delta_k+2\delta_{m_j}+\sum_{s=k}^{m_j-1}e^{-l_s}\leq 10\delta_k,
		\end{align*}
		yielding   $\|p_1-p_2\|_1\geq 10l_{k+1}$ and hence  $p_2\notin B_{k+1}(p_1)$. Thus the whole blocks $B_{k+1}(p)$ ($p\in S_k\cap B_{n+1}^{(j-1)} $, $k\leq m_j$) are contained in $B_{n+1}^{(j-1)}$.
		
		By Proposition \ref{it}, we can find a set $B_{n+1}^{(J)}=B_{n+1}^o\setminus\{x_1,\cdots,x_J\}$ with $x_j\neq x$ such that 
		\begin{itemize}
			\item either  $B_{n+1}^{(J)}\cap S_m=\emptyset$ ($0\leq m\leq n$) or $x\in B_{n+1}^{(J)}$ is the largest order resonant point in $B_{n+1}^{(J)}$.
			\item The sign of 	$$\operatorname{det}\left(z\operatorname{Id}_{B_{n+1}^o}-H_{B_{n+1}^o}(\beta_x-0)\right)\operatorname{det}\left(z\operatorname{Id}_{B_{n+1}^o}-H_{B_{n+1}^o}(\beta_x)\right)$$ is the same as the sign of 
			$$	\operatorname{det}\left(z\operatorname{Id}_{B_{n+1}^{(J)}}-H_{B_{n+1}^{(J)}}(\beta_x-0)\right)\operatorname{det}\left(z\operatorname{Id}_{B_{n+1}^{(J)}}-H_{B_{n+1}^{(J)}}(\beta_x)\right).$$
		\end{itemize}
		Replacing $B_{n+1}^o$ by $B_{n+1}^{(J)}$ in  \eqref{1509} or Case 1  (c.f. \ref{case1})  yields 
		$$	\operatorname{det}\left(z\operatorname{Id}_{B_{n+1}^{(J)}}-H_{B_{n+1}^{(J)}}(\beta_x-0)\right)\operatorname{det}\left(z\operatorname{Id}_{B_{n+1}^{(J)}}-H_{B_{n+1}^{(J)}}(\beta_x)\right)<0$$ 
		and hence \eqref{key1}.
	\end{proof}
	
	\subsubsection{Green's function estimates for $(n+1)$-good sets}
	In this part, we relate the resonances to $E_{n+1}(\theta)$ and establish Green's function estimates for $(n+1)$-good sets. 
	For $p\in \Z^d$, we denote  $$B_{n+1}(p):=B_{n+1}+p.$$  
	\begin{defn}
		Fix $\theta\in \R$, $E\in \C$. Let $\delta_{n+1}:=e^{-l_{n+1}^\frac{2}{3}}$. Define the $(n+1)$-scale resonant points set (related to $(\theta,E)$) as  $$S_{n+1}(\theta,E):=\left\{p\in \Z^d:\ |E_{n+1} (\theta+p\cdot \omega)-E|<\delta_{n+1}\right\}.$$
		Let $\Lambda\subset\Z^d$ be a finite set. Related to $(\theta,E)$,  recall that 
		\begin{itemize}
			\item $\Lambda$ is $(n+1)$-nonresonant  if $\Lambda\cap S_{n+1}(\theta,E)=\emptyset$.
			\item  $\Lambda$ is $(n+1)$-regular if $p\in \Lambda\cap S_k(\theta,E)$ $\Rightarrow$  $B_{k+1}(p)\subset\Lambda$ ($0\leq k\leq n$).
			\item $\Lambda$ is $(n+1)$-good if it is both $(n+1)$-nonresonant and $(n+1)$-regular.
		\end{itemize}
		
	\end{defn}
	\begin{lem}\label{1727`}
		Let $m$ be an integer of $0\leq m\leq n$.	Fix $\theta\in \R$, $E\in \C$. If $p\in S_m(\theta,E)$ and $p \notin S_{m+1}(\theta,E)$, then for any $E^*\in\C$ such that $|E-E^*|<\delta_{m+1}/2$, we have 
		\begin{align}\label{1l2`}
			\|G_{B_{m+1}(p)} ^{\theta,E^*}\|&\leq10\delta_{m+1}^{-1}, 
			\\ 
			\label{1ex`}
			|G_{B_{m+1}(p)}^{\theta,E^*}(x,y)|&\leq e^{-\gamma_m(1-l_{m+1}^{-\frac{1}{50}})\|x-y\|_1}, \ \|x-y\|_1\geq l_{m+1}^{\frac{4}{5}}.
		\end{align}
	\end{lem}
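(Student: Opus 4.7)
The proposal is to mimic the proof of Lemma \ref{1727} (the base case $m=0$) at scale $m+1$, using the just-established Proposition \ref{528} and the inductive hypothesis on Green's functions at scale $m$. The translation identity $H_{B_{m+1}(p)}(\theta)=H_{B_{m+1}}(\theta+p\cdot\omega)$ reduces everything to analyzing $H_{B_{m+1}}(\theta+p\cdot\omega)$ in a neighborhood of $E$.

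First, I would establish the norm bound \eqref{1l2`}. Since $p\in S_m(\theta,E)$, one has $|E_m(\theta+p\cdot\omega)-E|<\delta_m$, so $E\in D(E_m(\theta+p\cdot\omega),10\delta_m)$ (and analogously $A_m$/$D$ regions in the $\theta+p\cdot\omega\in\Z$ case). By Proposition \ref{528}, $E_{m+1}(\theta+p\cdot\omega)$ is the unique eigenvalue of $H_{B_{m+1}}(\theta+p\cdot\omega)$ in that region, with $|E_{m+1}(\theta+p\cdot\omega)-E_m(\theta+p\cdot\omega)|\leq e^{-l_m}$. For $|E-E^*|<\delta_{m+1}/2$ the triangle inequality gives $|E_{m+1}(\theta+p\cdot\omega)-E^*|<\delta_m+\delta_{m+1}/2+e^{-l_m}<2\delta_m$, so $E^*$ still lies in the uniqueness region and $E_{m+1}(\theta+p\cdot\omega)$ is the only spectrum of $H_{B_{m+1}(p)}(\theta)$ near $E^*$. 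On the other hand $p\notin S_{m+1}(\theta,E)$ gives $|E_{m+1}(\theta+p\cdot\omega)-E^*|\geq\delta_{m+1}-\delta_{m+1}/2=\delta_{m+1}/2$. Self-adjointness then yields $\|G_{B_{m+1}(p)}^{\theta,E^*}\|\leq 2\delta_{m+1}^{-1}\leq 10\delta_{m+1}^{-1}$.

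For the exponential decay \eqref{1ex`}, I would work with $B_{m+1}^o(p):=B_{m+1}(p)\setminus\{p\}$. The key input is that this shell is $m$-good related to $(\theta,E)$. This is precisely the content of Lemma \ref{ag} (or its translation-invariant form, Corollary \ref{cor}): separation of resonant points inside $B_{m+1}(p)$ via the Diophantine condition and the Lipschitz monotonicity of $E_m$ gives $m$-nonresonance outside the center, while $m$-regularity is inherited from Hypothesis \ref{h4} combined with $n$-regularity of $B_{m+1}$ after removing the origin. Since $|E-E^*|<\delta_{m+1}/2<\delta_m/5$, Hypothesis \ref{h3} applies and delivers $\|G_{B_{m+1}^o(p)}^{\theta,E^*}\|\leq 10\delta_m^{-1}$ together with $|G_{B_{m+1}^o(p)}^{\theta,E^*}(x,y)|\leq e^{-\gamma_m\|x-y\|_1}$ for $\|x-y\|_1\geq l_m^{5/6}$.

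I then iterate the resolvent identity
\[
G_{B_{m+1}(p)}^{\theta,E^*}(x,y)=\chi_{B_{m+1}^o(p)}(y)G_{B_{m+1}^o(p)}^{\theta,E^*}(x,y)-\varepsilon\!\!\sum_{(w,w')\in\partial_{B_{m+1}(p)}B_{m+1}^o(p)}\!\!G_{B_{m+1}^o(p)}^{\theta,E^*}(x,w)G_{B_{m+1}(p)}^{\theta,E^*}(w',y),
\]
noting that $\partial^+_{B_{m+1}(p)}B_{m+1}^o(p)=\{p\}$, so the sum collapses. By self-adjointness I may assume $\|x-p\|_1\geq\tfrac12 l_{m+1}^{4/5}$. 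Split into two cases exactly as in Lemma \ref{1727}: if $\|y-p\|_1\leq l_{m+1}^{3/4}$, bound $|G_{B_{m+1}(p)}^{\theta,E^*}(p,y)|$ crudely by the norm estimate $10\delta_{m+1}^{-1}=10e^{l_{m+1}^{2/3}}$ and absorb the loss into the exponential rate using $\|x-p\|_1\geq(1-2l_{m+1}^{-1/20})\|x-y\|_1$; if $\|y-p\|_1>l_{m+1}^{3/4}$, first apply the resolvent identity to $G_{B_{m+1}(p)}^{\theta,E^*}(p,y)$ itself to gain the factor $e^{-\gamma_m(\|y-p\|_1-1)}$ and then combine. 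In both cases standard arithmetic, using $\delta_{m+1}=e^{-l_{m+1}^{2/3}}$ and the size of $\gamma_m$, yields the claimed rate $\gamma_m(1-l_{m+1}^{-1/50})$. The main obstacle is the bookkeeping in the second case — keeping track of all the sub-exponential losses so that they aggregate below $l_{m+1}^{-1/50}$ — but no new idea beyond what appears in the proof of Lemma \ref{1727} is required; the statement follows by substituting the $m$-scale Green's function hypothesis for its $0$-scale analogue Proposition \ref{0g}.
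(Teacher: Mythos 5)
Your proposal is correct and follows essentially the same route as the paper's proof: the norm bound via uniqueness of $E_{m+1}(\theta+p\cdot\omega)$ near $E^*$ (from item (3) of Hypothesis \ref{h1} together with Proposition \ref{528} and translation), and the off-diagonal decay via Corollary \ref{cor} showing $B_{m+1}^o(p)$ is $m$-good, then Hypothesis \ref{h3} plus the one-step resolvent identity with the $\|y-p\|_1\lessgtr l_{m+1}^{3/4}$ case split. The only cosmetic difference is that you correctly label the shell as $m$-good (matching Corollary \ref{cor}); the paper's phrase ``$(m+1)$-good'' in its proof is a harmless typo, as the estimates it then invokes are the $m$-scale ones.
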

	\begin{proof}
		Since $p\in S_m(\theta,E)$, by definition we have $|E_m(\theta+p\cdot \omega )-E|\leq \delta_m$. Thus $$ |E_m(\theta+p\cdot \omega )-E^*|\leq |E_m(\theta+p\cdot \omega )-E| +|E-E^*|\leq \delta_m+\delta_{m+1}/2.$$ By item ($3$) of Hypothesis \ref{h1},  Proposition \ref{528} and translation property, $E_{m+1}(\theta+p\cdot \theta )$ is the unique eigenvalue of $H_{B_{m+1}(p)}(\theta)$ in $[E_m(\theta+p\cdot \omega )-10\delta_m,E_m(\theta+p\cdot \omega )+10\delta_m]$ and moreover 
		\begin{align*}\
			|E_{m+1}(\theta+p\cdot \omega )-E^*|&\leq  |E_m(\theta+p\cdot \omega )-E^*|+|E_{m+1}(\theta+p\cdot \omega )-E_m(\theta+p\cdot \omega )|\\&\leq \delta_m+\delta_{m+1}/2+e^{-l_m}\\&<2\delta_m.
		\end{align*}
		Hence we get 
		$$\operatorname{dist}\left (\operatorname{Spec}(H_{B_{m+1}(p)}(\theta )),E^* \right )=|E_{m+1}(\theta+p\cdot \omega )-E^*|.$$
		Since $p \notin S_{m+1}(\theta,E)$, we have $$|E_{m+1}(\theta+p\cdot \omega )-E^*|\geq |E_{m+1}(\theta+p\cdot \omega )-E|-|E-E^*|\geq \delta_{m+1}-\delta_{m+1}/2\geq \delta_{m+1}/2.$$
		Since $H_{B_{m+1}(p)}(\theta )$ is self-adjoint,  we get  $$\|G_{B_{m+1}(p)} ^{\theta,E^*}\|=\frac{1}{\operatorname{dist}\left (\operatorname{Spec}(H_{B_{m+1}(p)}(\theta )),E^* \right )}=\frac{1}{|E_{m+1}(\theta+p\cdot \omega )-E^*|}\leq 2\delta_{m+1}^{-1}.$$ Thus we finish the proof of \eqref{1l2`}.
		Denote  $B_{m+1}^o(p):=B_{m+1}(p)\setminus\{p\}$. Since $|E_m(\theta+p\cdot \omega )-E|\leq \delta_m$ it follows that $E\in D(E_m(\theta+p\cdot\omega),\delta_m)$ and hence by Corollary \ref{cor},   $B_{m+1}^o(p)$ is $(m+1)$-good related to $(\theta,E)$. Hence by Hypothesis \ref{h3}, we have 
		\begin{align}\label{1200`}
			\|G_{B_{m+1}^o(p)} ^{\theta,E^*}\|&\leq10\delta_m^{-1},\nonumber\\
			|G_{B_{m+1}^o(p)}^{\theta,E^*}(x,y)|&\leq e^{-\gamma_m\|x-y\|_1}, \ \|x-y\|_1\geq l_m^\frac{5}{6}.
		\end{align}  Let $x,y\in B_{m+1}(p)$ such that $\|x-y\|_1\geq l_{m+1}^{\frac{4}{5}}$. By self-adjointness, we may assume $\|x-p\|_1\geq \frac{1}{2}l_{m+1}^\frac{4}{5}$.
		By the resolvent identity, one has 
		\begin{align}
			\nonumber		G_{B_{m+1}(p)}^{\theta,E^*}(x,y)&=\chi_{B_{m+1}^o(p)}(y)G_{B_{m+1}^o(p)}^{\theta,E^*}(x,y)\\
			-&\varepsilon\sum_{(w,w')\in \partial_{B_{m+1}(p)}B_{m+1}^o(p)} G_{B_{m+1}^o(p)}^{\theta,E^*}(x,w)G_{B_{m+1}(p)}^{\theta,E^*}(w',y),\label{1156`}
		\end{align}
			Since $w \in  \partial^-_{B_{m+1}(p)}B_{m+1}^o(p)$, $w' \in  \partial^+_{B_{m+1}(p)}B_{m+1}^o(p)=\{p\}$, one has 
			$$\|x-w\|_1\geq \|x-p\|_1-\|w-p\|_1\geq \|x-p\|_1-1>l_m^\frac{5}{6}.$$ Hence by  \eqref{1200`} and  \eqref{1156`}, we have 
			\begin{equation}\label{1205`}
				|G_{B_{m+1}(p)}^{\theta,E^*}(x,y)|\leq e^{-\gamma_m\|x-y\|_1}+e^{-\gamma_m(\|x-p\|_1-1)}|G_{B_{m+1}(p)}^{\theta,E^*}(p,y)|.
			\end{equation}
			\begin{itemize}
				\item 
				If $\|y-p\|_1\leq l_{m+1}^\frac{3}{4}$, then    \begin{align*}
					\|x-p\|_1-1&\geq \|x-y\|_1-\|y-p\|_1-1\\
					&\geq \left (1-(l_{m+1}^\frac{3}{4}+1) l_{m+1}^{-\frac{4}{5}}\right) \|x-y\|_1\\
					&\geq \left (1- 2l_{m+1}^{-\frac{1}{20}}\right) \|x-y\|_1.
				\end{align*}
				Bounding the term $|G_{B_{m+1}(p)}^{\theta,E^*}(p,y)|$ in \eqref{1205`} by $\|G_{B_{m+1}(p)}^{\theta,E^*}\|\leq 10\delta_{m+1}^{-1}=10e^{l_{m+1}^\frac{2}{3}}$, we get \begin{align*}
					\eqref{1205`}&\leq e^{-\gamma_m\|x-y\|_1}+10e^{-\gamma_m(1- 2l_{m+1}^{-\frac{1}{20}})\|x-y\|_1}e^{l_{m+1}^\frac{2}{3}}  \\
					&\leq e^{-\gamma_m(1- 2l_{m+1}^{-\frac{1}{20}}-(l_{m+1}^\frac{2}{3}+10) l_{m+1}^{-\frac{4}{5}})\|x-y\|_1}\\
					&\leq e^{-\gamma_m(1-l_{m+1}^{-\frac{1}{50}})\|x-y\|_1}.
				\end{align*}
				\item  If $\|y-p\|_1> l_{m+1}^\frac{3}{4}$, we can use  \eqref{1l2`}, \eqref{1200`} and the resolvent identity  to bound the term   $|G_{B_{m+1}(p)}^{\theta,E^*}(p,y)|$ in \eqref{1205`} by 
				\begin{align*}
					|G_{B_{m+1}(p)}^{\theta,E^*}(p,y)|&=|G_{B_{m+1}(p)}^{\theta,E^*}(y,p)|\\&\leq \varepsilon\sum_{(w,w')\in \partial_{B_{m+1}(p)}B_{m+1}^o(p)} |G_{B_{m+1}^o(p)}^{\theta,E^*}(y,w)G_{B_{m+1}(p)}^{\theta,E^*}(w',p)|\\
					&\leq e^{-\gamma_m(\|y-p\|_1-1)} \delta_{m+1}^{-1}.
				\end{align*}
				Thus 
				\begin{align*}
					\eqref{1205} &\leq e^{-\gamma_m\|x-y\|_1}+e^{-\gamma_m(\|x-p\|_1-1)}e^{-\gamma_m(\|y-p\|_1-1)} \delta_{m+1}^{-1}\\&\leq e^{-\gamma_m(1-(l_{m+1}^\frac{2}{3} +10) l_{m+1}^{-\frac{4}{5}})\|x-y\|_1}\\
					&\leq e^{-\gamma_m(1-l_{m+1}^{-\frac{1}{50}})\|x-y\|_1}.
				\end{align*}
			\end{itemize}
			Thus we finish the proof of \eqref{1ex}.
		\end{proof}
		
		\begin{prop}\label{1g`}
			Fix $\theta\in \R$, $E\in \C$. Assume that  a finite set $\Lambda$ is $(n+1)$-good related to $(\theta,E)$,  then for any $E^*\in \C$ such that $|E-E^*|<\delta_{n+1}/5$, we have 
			\begin{align}\label{1l`}
				\|G_{\Lambda} ^{\theta,E^*}\|&\leq10\delta_{n+1}^{-1},\\ 
				\label{1e`}
				|G_{\Lambda} ^{\theta,E^*}(x,y)|&\leq e^{-\gamma_{n+1}\|x-y\|_1}, \ \|x-y\|_1\geq l_{n+1}^{\frac{5}{6}},
			\end{align}
			where $\gamma_{n+1}=\gamma_n(1-l_{n+1}^{-\frac{1}{80}})$.
		\end{prop}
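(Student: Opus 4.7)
The plan is to mirror the structure of Proposition \ref{1g} but adapted to handle resonances at all scales $0 \leq m \leq n$ simultaneously. The proof splits into two parts: first the operator norm bound \eqref{1l`} via Schur's test, then the off-diagonal exponential decay \eqref{1e`} via an iterated resolvent identity that chains through local ``good'' blocks at appropriate scales.

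The key preparatory observation is to classify resonant points by scale. For each $p\in S_0(\theta,E)\cap \Lambda$, let $m(p)$ be the largest $m\in\{0,1,\dots,n\}$ with $p\in S_m(\theta,E)$; this is well-defined because the $(n+1)$-nonresonant hypothesis forces $p\notin S_{n+1}$. By $(n+1)$-regularity, $B_{m(p)+1}(p)\subset\Lambda$, and since $p\in S_{m(p)}\setminus S_{m(p)+1}$, Lemma \ref{1727`} supplies both the norm bound $\|G_{B_{m(p)+1}(p)}^{\theta,E^*}\|\leq 10\delta_{m(p)+1}^{-1}\leq 10\delta_{n+1}^{-1}$ and the off-diagonal bound with rate $\gamma_{m(p)}(1-l_{m(p)+1}^{-1/50})$ whenever $|E-E^*|<\delta_{n+1}/5$. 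These are the local ``good'' blocks whose Green's functions drive the iteration.

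For the operator norm, I would fix $E'$ with $|E-E'|<\delta_{n+1}/2$ and $E'\notin\operatorname{Spec}(H_\Lambda(\theta))$, and assign to each $x\in\Lambda$ a block $U(x)\subset\Lambda$: if $\operatorname{dist}_1(x,S_0^\Lambda)\geq l_1^{2/3}$, take $U(x)=\{u\in\Lambda:\|u-x\|_1\leq l_1^{1/2}\}$, which is $0$-nonresonant so Proposition \ref{0g} applies; otherwise $x\in B_{m(p)+1}(p)$ for some $p\in S_{m(p)}^\Lambda$ (choosing the largest such scale) and set $U(x)=B_{m(p)+1}(p)$. Applying the resolvent identity on $U(x)$ and summing over $y$, the contribution from $U(x)$-interior is bounded by $\delta_{n+1}^{-2}$ while the boundary contribution is absorbed using the off-diagonal decay of $G_{U(x)}^{\theta,E'}$ together with the boundary distance $\|x-w\|_1\geq l_{n+1}^{4/5}$ (guaranteed by the structure of $B_{m(p)+1}$); this yields $\sup_x\sum_y|G_\Lambda^{\theta,E'}(x,y)|\leq \delta_{n+1}^{-2}+\tfrac{1}{4}\sup_{w'}\sum_y|G_\Lambda^{\theta,E'}(w',y)|$. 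Self-improvement and Schur's test give $\|G_\Lambda^{\theta,E'}\|\leq 2\delta_{n+1}^{-2}$, and the uniform dense-subset bound extends to all $|E-E^*|<\delta_{n+1}/2$ by the unboundedness of the resolvent near the spectrum, yielding \eqref{1l`}.

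For the off-diagonal estimate, I iterate the resolvent identity as in Proposition \ref{1g}, starting from $q_0=x$ and using the same local block assignment $U(q_k)$ at each step: from each $U(q_k)$ one passes to a boundary point $q_{k+1}\in\partial_\Lambda^+U(q_k)$ picking up a factor $\exp(-\gamma_n(1-l_{n+1}^{-1/60})\|q_k-q_{k+1}\|_1)$, where the decay rate originates from Lemma \ref{1727`} with the largest-scale loss absorbed by $l_{n+1}^{-1/60}$. The iteration terminates at the first $K$ with $y\in U(q_K)$, which necessarily occurs because each step advances a distance at least $l_n$ while $\|x-y\|_1\geq l_{n+1}^{5/6}$. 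Telescoping the factors and bounding the final $U(q_K)$-piece as in the proof of Proposition \ref{1g} (cases $\|y-q_K\|_1\lessgtr l_{n+1}^{3/4}$), one gets decay $\gamma_n(1-l_{n+1}^{-1/60})(\|x-y\|_1-O(l_{n+1}^{4/5}))$, and using $\|x-y\|_1\geq l_{n+1}^{5/6}$ this tightens to $\gamma_n(1-l_{n+1}^{-1/80})\|x-y\|_1=\gamma_{n+1}\|x-y\|_1$.

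The main obstacle is the bookkeeping of scales in the boundary-cost estimates: at each iteration step the chosen $U(q_k)$ may correspond to a different scale $m$, and one must verify that the exponential factors $\gamma_m(1-l_{m+1}^{-1/50})\|q_k-w\|_1$ coming from Lemma \ref{1727`} all admit the uniform lower bound $\gamma_n(1-l_{n+1}^{-1/60})\|q_k-w\|_1$. This relies on the monotonicity $\gamma_m\geq \gamma_n$ together with the fact that the multiplicity factor $l_{n+1}^d$ in the boundary enumeration is subsumed using $\gamma_n\geq 10$ from \eqref{rate}. An auxiliary verification — that the separation properties of the multiscale resonant points (as in Lemma \ref{1222}) prevent overlap of local blocks of different scales — is needed to ensure the map $x\mapsto U(x)$ is well-defined with the desired block-interior distances.
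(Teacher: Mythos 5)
Your overall strategy is sound, and the off-diagonal estimate \eqref{1e`} follows the paper's route essentially verbatim: you use the same scale-adapted block assignment $U(q)$ — a small ball when $\operatorname{dist}_1(q,S_0^\Lambda)\geq l_1^{2/3}$, and $B_{m(p)+1}(p)$ (with $m(p)$ the largest scale of the nearby $p\in S_0^\Lambda$) otherwise — and the same telescoping of rates $\gamma_m(1-l_{m+1}^{-1/60})\geq\gamma_{m+1}\geq\gamma_n(1-l_{n+1}^{-1/60})$.

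Your proof of the operator-norm bound \eqref{1l`}, however, takes a genuinely different route. The paper splits $\Lambda$ based on $\operatorname{dist}_1(x,S_n^\Lambda)\gtrless l_{n+1}^{2/3}$, and in the far case it applies the resolvent identity to the large set $\Lambda_0:=\Lambda\setminus S_n^\Lambda$. This requires a non-trivial sub-lemma: one must prove $\Lambda_0$ is $n$-\emph{regular} (and hence $n$-good), which the paper does via the separation estimate $\|p-q\|_1\geq 100l_{k+1}$ for $p\in S_n^\Lambda$, $q\in S_k^\Lambda$. You instead reuse the same $U(x)$ block logic as the off-diagonal argument: small ball when far from $S_0^\Lambda$, else $B_{m(p)+1}(p)$. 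This sidesteps the $\Lambda_0$-is-$n$-good lemma entirely, since the needed Green's function bounds for each $U(x)$ come directly from Proposition \ref{0g} or Lemma \ref{1727`}; the interior contribution is $\leq \delta_{m(p)+1}^{-2}\leq\delta_{n+1}^{-2}$ in every case, and the boundary factor is still $\leq\frac14$. The trade-off is that your boundary distances are smaller (e.g.\ $l_1^{1/2}$ from a small ball versus the paper's $l_{n+1}^{2/3}-1$ from $\Lambda_0$), but this is compensated by the correspondingly larger rate $\gamma_0$ and remains far more than enough given $\gamma_0\geq 10$. Both arguments deliver $\sup_x\sum_y|G_\Lambda(x,y)|\leq 2\delta_{n+1}^{-2}$, from which the spectral gap $\operatorname{dist}(\operatorname{Spec},E)\geq\delta_{n+1}/2$ and then $\|G_\Lambda^{\theta,E^*}\|\leq 10\delta_{n+1}^{-1}$ by self-adjointness. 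So your version is a valid and arguably more uniform alternative.

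Two small inaccuracies in your write-up (neither fatal): the boundary distance you invoke for $U(x)=B_{m(p)+1}(p)$ is $\geq l_{m(p)+1}^{4/5}$, not $l_{n+1}^{4/5}$; the $\tfrac14$ factor nevertheless holds because the rate $\gamma_{m(p)}(1-l_{m(p)+1}^{-1/50})\geq\gamma_n\geq 10$ makes $2dl_{m(p)+1}^{d-1}\varepsilon\, e^{-\gamma_{m(p)}(1-\cdots)l_{m(p)+1}^{4/5}}$ tiny for every $m(p)$, not only $m(p)=n$. And the claim that "each step advances a distance at least $l_n$" is false — small-ball steps advance only $l_1^{1/2}$; the iteration still terminates (or the product degenerates to $0$, making the bound trivial) because every step advances at least $l_1^{1/2}>0$ and the left side is a fixed finite number. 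The concern you flag about disjointness of the blocks $B_{m(p)+1}(p)$ is a red herring: nothing in the Schur-test argument requires disjointness, only that each $x$ has a well-defined $U(x)\subset\Lambda$, which follows from the $100l_1$-separation of $S_0^\Lambda$ and $(n+1)$-regularity.
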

		\begin{proof}
			Define the set $S_n^\Lambda:=S_n(\theta,E)\cap\Lambda$. If $S_n^\Lambda=\emptyset$, then \eqref{1l`} and \eqref{1e`} follows from  Hypothesis \ref{h3} immediately. Thus we assume that $S_n^\Lambda\neq \emptyset$. 
			To prove \eqref{1l`}, by self-adjointness it suffices to prove 
			\begin{equation}\label{sd`}
				\operatorname{dist}\left (\operatorname{Spec}(H_{\Lambda}(\theta )),E \right )\geq  \delta_{n+1} /2.
			\end{equation}
			Fix $E'$ such that $|E'-E|<\delta_{n+1}/2$ and $E'\notin\operatorname{Spec}(H_{\Lambda}(\theta ))$.
			For any $p\in S_n^\Lambda$, since $\Lambda$ is $(n+1)$-regular, it follows that  $B_{n+1}(p)\subset\Lambda$. Since $\Lambda$ is $(n+1)$-nonresonant, it follows that  $p\notin S_{n+1}(\theta,E)$. Hence Lemma \ref{1727`} holds true for such $p$.
			Denote $\Lambda_0:=\Lambda\setminus S_n^\Lambda$. Then $\Lambda_0\cap S_n(\theta,E)=\emptyset$ and hence $\Lambda_0$ is $n$-nonresonant. Moreover, for $p\in S_n^\Lambda$ and  $q\in S_k(\theta,E)\cap \Lambda$ with  $0\leq k\leq n-1$, it follows from \eqref{Lc}, \eqref{DC}, \eqref{h12} that  
			\begin{align*}
				&L\gamma\|p-q\|_1^{-\tau}\leq	 L\|(p-q)\cdot\omega\|_\T\\
				\leq& |E_k(\theta+p\cdot\omega)-E_k(\theta+q\cdot\omega)|\\
				\leq  &  |E_k(\theta+p\cdot\omega)-E_n(\theta+p\cdot\omega)|+|E_n(\theta+p\cdot\omega)-E|+|E_k(\theta+q\cdot\omega)-E|\\
				\leq	& \delta_k+\delta_n+\delta_k\\
				\leq &3\delta_k.
			\end{align*}
			Hence $\|p-q\|_1\geq 100l_{k+1}$, yielding 	$p\notin B_{k+1}(q)$ and hence \begin{equation}\label{re}
				S_n^\Lambda\cap B_{k+1}(q)=\emptyset.
			\end{equation}
			Since $\Lambda$ is $(n+1)$-regular, it follows from  $q\in S_k(\theta,E)\cap \Lambda$ that 
			\begin{equation}\label{re2}
				B_{k+1}(q)\subset \Lambda.
			\end{equation}
			\eqref{re} together with  \eqref{re2} yields $B_{k+1}(q)\subset\Lambda_0$. It follows that $\Lambda_0$ is $n$-regular and hence $n$-good related to $(\theta,E)$. Thus by Hypothesis \ref{h3}, we have 	\begin{align}\label{ml`}
				\|G_{\Lambda_0} ^{\theta,E'}\|&\leq10\delta_n^{-1}, \\
				\label{me`}
				|G_{\Lambda_0} ^{\theta,E'}(x,y)|&\leq e^{-\gamma_n\|x-y\|_1}, \ \|x-y\|_1\geq l_n^{\frac{5}{6}}.
			\end{align}
			\begin{itemize}
				\item  Let $x\in \Lambda$ such that $\operatorname{dist}_1(x,S_n^\Lambda)\geq l_{n+1}^\frac{2}{3}$. By the resolvent identity, one has 
				$$G_\Lambda^{\theta,E'}(x,y)=\chi_{\Lambda_0}(y)G_{\Lambda_0}^{\theta,E'}(x,y)-\varepsilon\sum_{(w,w')\in \partial_{\Lambda}\Lambda_0} G_{\Lambda_0}^{\theta,E'}(x,w)G_{\Lambda}^{\theta,E'}(w',y).$$
				Thus \begin{align}\label{1835`}
					\sum_{y\in \Lambda}|G_\Lambda^{\theta,E'}(x,y)|&\leq \sum_{y\in \Lambda_0} |G_{\Lambda_0}^{\theta,E'}(x,y)|+\varepsilon\sum_{\substack{(w,w')\in \partial_{\Lambda}\Lambda_0\\y\in \Lambda}} |G_{\Lambda_0}^{\theta,E'}(x,w)||G_{\Lambda}^{\theta,E'}(w',y)|.
				\end{align}
				By \eqref{ml`} and \eqref{me`}, the first summation on the right-hand side of \eqref{1835`} satisfies
				\begin{align*}
					\sum_{y\in \Lambda_0} |G_{\Lambda_0}^{\theta,E'}(x,y)|&\leq (3l_n)^d\|G_{\Lambda_0}^{\theta,E'}\|+\sum_{y\in\Lambda_0:\|x-y\|_1\geq l_n }|G_{\Lambda_0}^{\theta,E'}(x,y)|\\
					&\leq 3^d10l_n^d\delta_n^{-1}+\sum_{y:\|x-y\|_1\geq l_n } e^{-\gamma_n\|x-y\|_1}\\
					&\leq \delta_n^{-2}.
				\end{align*}
				In the second summation on the right-hand side of \eqref{1835`}, since $w\in \partial_{\Lambda}^-\Lambda_0,$ we have 
				$$\|x-w\|_1\geq\operatorname{dist}_1(x,S_n^\Lambda)-\operatorname{dist}_1(w,S_n^\Lambda)\geq l_{n+1}^\frac{2}{3}-1>l_n^\frac{5}{6}.$$ It follows that  
				\begin{align*}
					&\varepsilon\sum_{\substack{(w,w')\in \partial_{\Lambda}\Lambda_0\\y\in \Lambda}} |G_{\Lambda_0}^{\theta,E'}(x,w)||G_{\Lambda}^{\theta,E'}(w',y)|\\ \leq&2d\varepsilon \sum_{w\in \partial_{\Lambda}^-\Lambda_0} e^{-\gamma_n\|x-w\|_1}	\sup_{w'\in \partial_{\Lambda}^+\Lambda_0}\sum_y|G_\Lambda^{\theta,E'}(w',y)|\\
					\leq & 2d\varepsilon \sum_{{w:\|x-w\|_1\geq l_{n+1}^\frac{2}{3}-1}} e^{-\gamma_n\|x-w\|_1}	\sup_{w'\in \partial_{\Lambda}^+\Lambda_0}\sum_y|G_\Lambda^{\theta,E'}(w',y)| 
					\\ \leq &\frac{1}{4}	\sup_{w'\in \Lambda}\sum_y|G_\Lambda^{\theta,E'}(w',y)|.
				\end{align*}
				Hence, \begin{equation}\label{1906`}
					\sup_{x: \operatorname{dist}_1(x,S_n^\Lambda)\geq l_{n+1}^\frac{2}{3}}\sum_{y\in \Lambda}|G_\Lambda^{\theta,E'}(x,y)|\leq \delta_n^{-2}+ \frac{1}{4}	\sup_{w'\in \Lambda}\sum_{y\in \Lambda}|G_\Lambda^{\theta,E'}(w',y)|.
				\end{equation}
				\item Let $x\in \Lambda$ such that $\operatorname{dist}_1(x,S_n^\Lambda)< l_{n+1}^\frac{2}{3}$. Then $\|x-p\|_1<l_{n+1}^\frac23$ and  $x\in B_{n+1}(p)$ for some $p\in S_n^\Lambda$.
				By the  resolvent identity, one has 
				\begin{equation*}
					G_\Lambda^{\theta,E'}(x,y)=\chi_{B_{n+1}(p)}(y)G_{B_{n+1}(p)}^{\theta,E'}(x,y)-\varepsilon\sum_{(w,w')\in \partial_\Lambda B_{n+1}(p)} G_{B_{n+1}(p)}^{\theta,E'}(x,w)G_{\Lambda}^{\theta,E'}(w',y).
				\end{equation*}
				Thus \begin{align}
					\nonumber	\sum_{y\in \Lambda}|G_\Lambda^{\theta,E'}(x,y)|&\leq \sum_{y\in B_{n+1}(p)} |G_{B_{n+1}(p)}^{\theta,E'}(x,y)|\\+&\varepsilon\sum_{\substack{(w,w')\in \partial_\Lambda B_{n+1}(p)\\ y\in \Lambda}} |G_{B_{n+1}(p)}^{\theta,E'}(x,w)||G_{\Lambda}^{\theta,E'}(w',y)|.\label{1911`}
				\end{align}
				By Lemma \ref{1727`},  the first summation on the right-hand side of \eqref{1911`} satisfies
				\begin{align*}
					\sum_{y\in B_{n+1}(p)} |G_{B_{n+1}(p)}^{\theta,E'}(x,y)|&\leq (3l_{n+1})^d\|G_{B_{n+1}(p)}^{\theta,E'}\|\leq \delta_{n+1}^{-2}.
				\end{align*}
				In the second summation on the right-hand side of \eqref{1911`}, since $w\in \partial_\Lambda^-B_{n+1}(p)$, we have  $\|w-p\|_1\geq l_{n+1}$. It follows that  $$\|x-w\|_1\geq \|w-p\|_1- \|x-p\|_1\geq  l_{n+1}-l_{n+1}^\frac{2}{3}\geq l_{n+1}^\frac{4}{5}.$$ Since $|E'-E|<\delta_1/2$, by Lemma \ref{1727`}, one has 
				\begin{align*}
					&\varepsilon\sum_{\substack{(w,w')\in \partial_\Lambda B_{n+1}(p)\\ y\in \Lambda}} |G_{B_{n+1}(p)}^{\theta,E'}(x,w)||G_{\Lambda}^{\theta,E'}(w',y)|\\	\leq& 2d\varepsilon \sum_{w\in \partial_\Lambda^-B_{n+1}(p)} |G_{B_{n+1}(p)}^{\theta,E'}(x,w)|	\sup_{w'\in \partial_\Lambda^+B_{n+1}(p)}\sum_{y\in \Lambda}|G_\Lambda^{\theta,E'}(w',y)|\\ \leq &
					2d\varepsilon \sum_{w:\|x-w\|_1\geq l_{n+1}^\frac{4}{5}} e^{-\gamma_n(1-l_{n+1}^{-\frac{1}{50}})\|x-w\|_1}	\sup_{w'\in \Lambda}\sum_{y\in \Lambda}|G_\Lambda^{\theta,E'}(w',y)|\\ \leq& \frac{1}{4}	\sup_{w'\in \Lambda}\sum_{y\in \Lambda}|G_\Lambda^{\theta,E'}(w',y)|.
				\end{align*}
				Hence, \begin{equation}\label{1940`}
					\sup_{x: \operatorname{dist}_1(x,S_n^\Lambda)< l_{n+1}^\frac{2}{3}}\sum_{y\in \Lambda}|G_\Lambda^{\theta,E'}(x,y)|\leq \delta_{n+1}^{-2}+ \frac{1}{4}	\sup_{w'\in \Lambda}\sum_{y\in \Lambda}|G_\Lambda^{\theta,E'}(w',y)|.
				\end{equation}
			\end{itemize}
			By \eqref{1906`} and \eqref{1940`}, we get 
			$$	\sup_{x\in \Lambda}\sum_{y\in \Lambda}|G_\Lambda^{\theta,E'}(x,y)|\leq \delta_{n+1}^{-2}+ \frac{1}{4}	\sup_{w'\in \Lambda}\sum_{y\in \Lambda}|G_\Lambda^{\theta,E'}(w',y)|$$ and hence 
			$$	\sup_{x\in \Lambda}\sum_{y\in \Lambda}|G_\Lambda^{\theta,E'}(x,y)|\leq 2\delta_{n+1}^{-2}.$$
			By Schur's test and self-adjointness, we get 
			\begin{equation}\label{1947`}
				\| G_\Lambda^{\theta,E'}\|\leq	\sup_{x\in \Lambda}\sum_{y\in \Lambda}|G_\Lambda^{\theta,E'}(x,y)|\leq 2\delta_{n+1}^{-2} .
			\end{equation}
			Since the uniform bound \eqref{1947`} holds true for all $E'$ in  $|E'-E|<\delta_{n+1}/2$  except a finite subset of $\operatorname{Spec}(H_{\Lambda}(\theta ))$,   \eqref{1947`} must  hold true for all $E'$ in  $|E'-E|<\delta_{n+1}/2$ as the resolvent is unbounded near the spectrum. Thus we finish the proof of \eqref{sd`}. To prove \eqref{1e`}, we let $x,y\in \Lambda$ such that $\|x-y\|_1\geq l_{n+1}^{\frac{5}{6}}$. For any $q\in \Lambda$, we introduce a subset $U(q)\subset\Lambda$ with $q\in U(q)$ to iterate the resolvent identity.
			
			For $0\leq m\leq n-1$, we define the sets 
			$$S_m^\Lambda:=S_m(\theta,E)\cap\Lambda.$$
			Since by \eqref{h12}, for any $p\in \Z^d$, we have $$ |E_{m+1}(\theta+p\cdot\omega)-E|<\delta_{m+1}  \ \Rightarrow\  |E_m(\theta+p\cdot\omega)-E|<\delta_m , $$ 
			it follows that \begin{equation}\label{ce}
				S_{m+1}^\Lambda\subset S_{m}^\Lambda. 
			\end{equation}
			And since $\Lambda$ is $(n+1)$ regular related to $(\theta,E)$, it follows that for any $p\in S_m^\Lambda$, the block $B_{m+1}(p)\subset \Lambda$ ($0\leq m\leq n$).
			\begin{itemize}
				\item If $\operatorname{dist}_1(q,S_0^\Lambda)\geq l_1^\frac{2}{3}$, we define 
				\begin{equation}\label{sf1`}
					U(q)=\{u\in \Lambda:\ \|q-u\|_1\leq l_1^\frac{1}{2}\}.
				\end{equation}
				By the resolvent identity,  one has 
				$$G_\Lambda^{\theta,E^*}(q,y)=\chi_{U(q)}(y)G_{U(q)}^{\theta,E^*}(q,y)-\varepsilon\sum_{(w,w')\in \partial_\Lambda U(q)} G_{U(q)}^{\theta,E^*}(q,w)G_{\Lambda}^{\theta,E^*}(w',y).$$
				Since $\operatorname{dist}_1(q,S_0^\Lambda)\geq l_1^\frac{2}{3}>l_1^\frac{1}{2}$, it follows that $U(q)\cap S_0(\theta,E)=\emptyset.$ By Proposition \ref{0g} and $\|q-w\|_1\geq l_1^\frac{1}{2}$ for $w\in \partial_\Lambda^- U(q)$, we have 
				\begin{align}\label{fenzi`}
					\nonumber	|G_\Lambda^{\theta,E^*}(q,y)|&\leq\chi_{U(q)}(y)|G_{U(q)}^{\theta,E^*}(q,y)|+\varepsilon\sum_{(w,w')\in \partial_\Lambda U(q)} |G_{U(q)}^{\theta,E^*}(q,w)||G_{\Lambda}^{\theta,E^*}(w',y)|\\
					\nonumber	&\leq\chi_{U(q)}(y)|G_{U(q)}^{\theta,E^*}(q,y)|+2dl_1^d\varepsilon\sup_{(w,w')\in \partial_\Lambda U(q)} |G_{U(q)}^{\theta,E^*}(q,w)||G_{\Lambda}^{\theta,E^*}(w',y)|\\
					\nonumber	&\leq\chi_{U(q)}(y)|G_{U(q)}^{\theta,E^*}(q,y)|+l_1^d\sup_{(w,w')\in \partial_\Lambda U(q)} e^{-\gamma_0\|q-w\|_1}|G_{\Lambda}^{\theta,E^*}(w',y)|\\ 
					\nonumber		&\leq\chi_{U(q)}(y)|G_{U(q)}^{\theta,E^*}(q,y)|+\sup_{(w,w')\in \partial_\Lambda U(q)} e^{-\gamma_0(\|q-w\|_1-d\ln l_1)}|G_{\Lambda}^{\theta,E^*}(w',y)|\\ 
					&\leq\chi_{U(q)}(y)|G_{U(q)}^{\theta,E^*}(q,y)|+\sup_{w'\in \partial_\Lambda^+ U(q)} e^{-\gamma_0(1-l_1^{-\frac13})\|q-w'\|_1}|G_{\Lambda}^{\theta,E^*}(w',y)|.
				\end{align}
				For $y\in U(q)$, since by Proposition \ref{0g}, 
				$$|G_{U(q)}^{\theta,E^*}(q,y)|\leq  e^{-\gamma_0\|q-y\|_1},\ \ \|q-y\|_1\geq 1,$$ 
				$$|G_{U(q)}^{\theta,E^*}(q,y)|\leq\|G_{U(q)}^{\theta,E^*}\| \leq 10\delta_0^{-1} ,\ \ \|q-y\|_1< 1,$$
				we have 
				\begin{equation}\label{f1`}
					|G_{U(q)}^{\theta,E^*}(q,y)|\leq10\delta_0^{-1} e^{-\gamma_0(\|q-y\|_1-1)}.
				\end{equation}
				\item If $\operatorname{dist}_1(q,S_0^\Lambda)< l_1^\frac{2}{3}$, then $\|q-p\|_1\leq l_1^{\frac{2}{3}}$ and $q\in B_1(p)$ for some $p\in S_0^\Lambda$. Recalling \eqref{ce}, we let $m$ be the largest integer of  $0\leq m\leq n$ such that $p\in S_m^\Lambda$ and we  define 
				\begin{equation}\label{sf2`}
					U(q)=\{B_{m+1}(p):\ q\in B_1(p)\text{ with }p\in S_0^\Lambda\}.
				\end{equation}
				By the resolvent identity,  one has 
				$$G_\Lambda^{\theta,E^*}(q,y)=\chi_{U(q)}(y)G_{U(q)}^{\theta,E^*}(q,y)-\varepsilon\sum_{(w,w')\in \partial_\Lambda U(q)} G_{U(q)}^{\theta,E^*}(q,w)G_{\Lambda}^{\theta,E^*}(w',y).$$
				Since  $w\in \partial_\Lambda^- U(q)=\partial_\Lambda^- B_{m+1}(p)$ and $\|q-p\|_1\leq l_1^{\frac{2}{3}}$, it follows that 
				\begin{equation}\label{yy}
					\|q-w\|_1\geq \|w-p\|_1-\|p-q\|_1\geq l_{m+1}-l_1^{\frac{2}{3}}\geq l_{m+1}^{\frac45}. 
				\end{equation}
				By the maximal property of $m$, it follows that $p\notin S_{m+1}(\theta,E)$. Then by Lemma \ref{1727`} and \eqref{yy}, one can get a similar estimate as \eqref{fenzi`}:
				\begin{align}\label{fenzi2`}
					\nonumber|G_\Lambda^{\theta,E^*}(q,y)|&\leq\chi_{U(q)}(y)|G_{U(q)}^{\theta,E^*}(q,y)|+\varepsilon\sum_{(w,w')\in \partial_\Lambda U(q)} |G_{U(q)}^{\theta,E^*}(q,w)||G_{\Lambda}^{\theta,E^*}(w',y)|\\
					\nonumber	&\leq\chi_{U(q)}(y)|G_{U(q)}^{\theta,E^*}(q,y)|+2dl_{m+1}^d\varepsilon\sup_{(w,w')\in \partial_\Lambda U(q)} |G_{U(q)}^{\theta,E^*}(q,w)||G_{\Lambda}^{\theta,E^*}(w',y)|\\
					\nonumber	&\leq\chi_{U(q)}(y)|G_{U(q)}^{\theta,E^*}(q,y)|+l_{m+1}^d\sup_{(w,w')\in \partial_\Lambda U(q)} e^{-\gamma_m(1-l_{m+1}^{-\frac{1}{50}})\|q-w\|_1}|G_{\Lambda}^{\theta,E^*}(w',y)|\\ 
					&\leq\chi_{U(q)}(y)|G_{U(q)}^{\theta,E^*}(q,y)|+\sup_{w'\in \partial_\Lambda^+ U(q)} e^{-\gamma_m(1-l_{m+1}^{-\frac{1}{60}})\|q-w'\|_1}|G_{\Lambda}^{\theta,E^*}(w',y)|.
				\end{align}
				For $y\in U(q)$, since by Lemma \ref{1727`},
				$$|G_{U(q)}^{\theta,E^*}(q,y)|\leq  e^{-\gamma_m(1-l_{m+1}^{-\frac{1}{50}})\|q-y\|_1},\ \ \|q-y\|_1\geq l_{m+1}^\frac{4}{5},$$ 
				$$|G_{U(q)}^{\theta,E^*}(q,y)|\leq\|G_{U(q)}^{\theta,E^*}\| \leq 10\delta_{m+1}^{-1} ,\ \ \|q-y\|_1< l_{m+1}^\frac{4}{5},$$
				we have 
				\begin{equation}\label{f2`}
					|G_{U(q)}^{\theta,E^*}(q,y)|\leq10\delta_{m+1}^{-1} e^{-\gamma_m(1-l_{m+1}^{-\frac{1}{50}})(\|q-y\|_1-l_{m+1}^\frac{4}{5})}.
				\end{equation}
			\end{itemize}
			Let $q_0=x$. By the  relations of decay rate $$\gamma_m(1-l_{m+1}^{-\frac{1}{60}})\geq \gamma_{m+1}\geq \gamma_{m+1}(1-l_{m+2}^{-\frac{1}{60}}), \ \ 0\leq m\leq n-1, $$   applying the estimates \eqref{fenzi`} and \eqref{fenzi2`} $K$ times yields 
			\begin{align}
				\nonumber	|G_\Lambda^{\theta,E^*}(x,y)|&\leq \sum_{k=0}^{K-1}\chi_{U(q_k)}(y)|G_{U(q_k)}^{\theta,E^*}(q_k,y)|\\+&
				\prod_{k=0}^{K-1}e^{-\gamma_n(1-l_{n+1}^{-\frac{1}{60}})\|q_k-q_{k+1}\|_1}|G_{\Lambda}^{\theta,E^*}(q_K,y)|,\label{1217`}
			\end{align}
			where $q_{k+1} \in \partial_\Lambda^+U(q_k)$ takes the supremum in the last line of \eqref{fenzi`} or \eqref{fenzi2`}. 
			We choose $K$ such that $y\notin U(q_k)$ for $k\leq K-1$ and $y\in U(q_K)$, thus the first summation in \eqref{1217`} vanishes. Hence, 
			\begin{align}\label{2221`}
				|G_\Lambda^{\theta,E^*}(x,y)|\leq
				\prod_{k=0}^{K-1}e^{-\gamma_n(1-l_{n+1}^{-\frac{1}{60}})\|q_k-q_{k+1}\|_1}|G_{\Lambda}^{\theta,E^*}(q_K,y)|.
			\end{align}
			By \eqref{fenzi`} and    \eqref{fenzi2`}, we have 
			\begin{align}\label{2204`}
				|G_{\Lambda}^{\theta,E^*}(q_K,y)|\leq |G_{U(q_K)}^{\theta,E^*}(q_K,y)|+\sup_{w'\in \partial_\Lambda^+ U(q_K)} e^{-\gamma_n(1-l_{n+1}^{-\frac{1}{60}})\|q_K-w'\|_1}|G_{\Lambda}^{\theta,E^*}(w',y)|.
			\end{align}
			Since  for $0\leq m\leq n-1$,
			\begin{align*}
				e^{-\gamma_m(1-l_{m+1}^{-\frac{1}{50}})\|q_K-y\|_1}&\leq e^{-\gamma_{m+1}\|q_K-y\|_1}\leq  e^{-\gamma_{m+1}(1-l_{m+2}^{-\frac{1}{50}})\|q_K-y\|_1},\\
				e^{\gamma_m(1-l_{m+1}^{-\frac{1}{50}})l_{m+1}^\frac{4}{5}}&\leq e^{2\gamma_{m+1}l_{m+1}^\frac{4}{5}}\leq e^{\frac{1}{2}\gamma_{m+1}l_{m+2}^\frac{4}{5}}\leq  e^{\gamma_{m+1}(1-l_{m+2}^{-\frac{1}{50}})l_{m+2}^\frac{4}{5}},
			\end{align*}
			it follows from  \eqref{f1`} and \eqref{f2`}  that the first term on the right-hand side of \eqref{2204`} satisfies
			\begin{equation}\label{25`}
				|G_{U(q_K)}^{\theta,E^*}(q_K,y)|\leq10\delta_{n+1}^{-1} e^{-\gamma_n(1-l_{n+1}^{-\frac{1}{50}})(\|q_K-y\|_1-l_{n+1}^\frac{4}{5})}.
			\end{equation}
			For $y\in U(q_K)$, $ w'\in \partial_\Lambda^+ U(q_K)$, we have:
			\begin{itemize}
				\item If $U(q_K)$ is in the  form of \eqref{sf1`}, then $$\|q_K-w'\|_1\geq \|q_K-y\|_1.$$
				\item If $U(q_K)$ is in the  form of \eqref{sf2`} with $U(q_K)=B_{m+1}(p)$, then since $Q_{l_{m+1}}\subset B_{m+1}\subset Q_{l_{m+1}+50l_m}$, it follows that 
				\begin{align*}
					\|q_K-w'\|_1&\geq \|p-w'\|_1-\|p-q_K\|_1 \geq l_{m+1}-\|p-q_K\|_1\\&\geq (l_{m+1}+50l_m)-\|p-q_K\|_1-50l_m \\
					&\geq \|p-y\|_1-\|p-q_K\|_1-50l_m\\&\geq \|q_K-y\|_1-2\|p-q_K\|_1-50l_m\\
					&\geq \|q_K-y\|_1-2l_1^\frac{2}{3} -50l_n.
				\end{align*}
			\end{itemize}  
			It follows that the second term on the right-hand side of \eqref{2204`} satisfies
			\begin{align}\label{2223`}
				\nonumber	&\sup_{w'\in \partial_\Lambda^+ U(q_K)} e^{-\gamma_n(1-l_{n+1}^{-\frac{1}{60}})\|q_K-w'\|_1}|G_{\Lambda}^{\theta,E^*}(w',y)|\\
				\nonumber \leq & \sup_{w'\in \partial_\Lambda^+ U(q_K)} e^{-\gamma_n(1-l_{n+1}^{-\frac{1}{60}})(\|q_K-y\|_1-2l_1^\frac{2}{3} -50l_n)}|G_{\Lambda}^{\theta,E^*}(w',y)|
				\\ \nonumber \leq & e^{-\gamma_n(1-l_{n+1}^{-\frac{1}{60}})(\|q_K-y\|_1 -60l_n)}\|G_{\Lambda}^{\theta,E^*}\|\\
				\leq& 10\delta_{n+1}^{-1}e^{-\gamma_n(1-l_{n+1}^{-\frac{1}{60}})(\|q_K-y\|_1- 60l_{n+1}^\frac12)}.
			\end{align}
			By \eqref{2221`}, \eqref{2204`}, \eqref{25`}, \eqref{2223`} and $\delta_{n+1}^{-1}=e^{l_{n+1}^\frac{2}{3}}$, provided $\|x-y\|\geq l_{n+1}^\frac56 $, we have 
			\begin{align*}
				|G_\Lambda^{\theta,E^*}(x,y)|&\leq
				20\delta_{n+1}^{-1}	e^{-\gamma_n(1-l_{n+1}^{-\frac{1}{60}})(\sum\limits_{k=0}^{K-1}\|q_k-q_{k+1}\|_1+\|q_K-y\|_1-3l_{n+1}^\frac45)}\\
				&\leq e^{-\gamma_{n}(1-l_{n+1}^{-\frac{1}{60}})(\|x-y\|_1-3l_{n+1}^\frac{4}{5}-2l_{n+1}^\frac{2}{3})}\\
				&\leq e^{-\gamma_n(1-l_{n+1}^{-\frac{1}{60}})(1-(3l_{n+1}^\frac{4}{5}+2l_{n+1}^\frac{2}{3})l_{n+1}^{-\frac{5}{6}})\|x-y\|_1}\\
				&\leq e^{-\gamma_n(1-l_{n+1}^{-\frac{1}{80}})\|x-y\|_1}.
			\end{align*}
		\end{proof}
		\begin{rem}\label{lhg}
			Replacing $\theta$ by $\theta-0$, we can estimate the Green's function $G_{\Lambda} ^{\theta-0,E^*}$ of the left limit operator  $H_\Lambda(\theta-0)$      for  $(n+1)$-good set $\Lambda$ related to $(\theta-0,E)$ with  $|E-E^*|<\delta_{n+1}/5$, by analogous definitions
			$$ S_m(\theta-0,E):=\left\{p\in \Z^d:\ |E_m (\theta+p\cdot \omega-0)-E|<\delta_m\right\}, \ 0\leq m\leq n+1.$$
		\end{rem}
		\subsubsection{Proof of Theorem \ref{key2}}
		In this section, we  check  all the hypotheses  for the $(n+1)$-scale one by one. \begin{itemize}
			\item   The $1$-periodic, single-valued, real-valued  properties of $E_{n+1}(\theta)$ are from the construction (c.f. Remark \ref{454n}).  Item (2) and  (3)   follow from Proposition \ref{528}.  
			\eqref{1639} and \eqref{1640} follow from  Proposition \ref{qiang}.
			\item
			Lemma \ref{sln} and \ref{H-*} yield Hypothesis \ref{h2}.
			\item The good Green's function estimates for  $(n+1)$-good sets is verified in  Proposition  \ref{1g`}.
			\item \eqref{h11} and Hypothesis \ref{h4} are verified in  Proposition  \ref{m+1r}.
		\end{itemize}

		\section{The unbounded potential case}\label{ub}
		In this section, we will consider the case that $v$ belongs to the UBLM class.  Since most of  arguments and proofs can be mapped one to one to the bounded potential case, we only focus on the different one. As mentioned in the introduction, it is the crucial Lipschitz monotonicity property of 
		$E_n(\theta)$ (the $n$-generation Rellich function) that enables the induction. When $v$ is bounded,  one can not, in general, expect $E_n(\theta)$ is continuous since a  rank one perturbation at finite energies may affect all eigenvalues. Recall that in the bounded potential  setting, we proved (c.f. Proposition \ref{814} and \ref{qiang}) that $E_n(\theta)$ has only positive jump discontinuities. While in the unbounded potential setting,  Proposition \ref{ubp}  below shows that  $E_n(\theta)$  in fact has no jump discontinuity  and hence satisfies the Lipschitz monotonicity property.
		\begin{lem}\label{cone}
			Let $A(\theta)$  be a continuous self-adjoint matrix-valued function defined on $[\theta_0,\theta_1)$. Suppose $f:(\theta_0,\theta_1)\to \R$ is continuous, and $f(\theta_0+0)=\pm\infty$. 
			Let 
			$$B(\theta)=	\begin{pmatrix}
				f(\theta) &	c^{\rm T} \\
				c & A(\theta)
			\end{pmatrix},$$ where $c$ is an arbitrary real-valued vector.
			Then 
			$$\lim_{\theta\to\theta_0+}\operatorname{Spec}(B(\theta))=\operatorname{Spec}(A(\theta_0))\cup\{\pm\infty\},$$
			where the convergence is understood in the sense of sets with multiplicities.
		\end{lem}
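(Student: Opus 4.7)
The plan is to combine Cauchy interlacing with the Schur complement identity to pin down the limit of each individual eigenvalue of $B(\theta)$. Without loss of generality take $f(\theta_0+0)=+\infty$; the $-\infty$ case is identical up to sign reversal. Let $\nu_1(\theta)\leq\cdots\leq\nu_n(\theta)$ and $\mu_1(\theta)\leq\cdots\leq\mu_{n+1}(\theta)$ denote the eigenvalues of $A(\theta)$ and $B(\theta)$, respectively. Since $B(\theta)$ is a rank-one bordering of $A(\theta)$, the Cauchy interlacing theorem for bordered self-adjoint matrices gives
\[
\mu_i(\theta)\leq\nu_i(\theta)\leq\mu_{i+1}(\theta), \quad 1\leq i\leq n,
\]
and the continuity of $A$ at $\theta_0$ yields $\nu_i(\theta)\to\nu_i(\theta_0)$. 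In particular $\mu_1(\theta),\ldots,\mu_n(\theta)$ stay uniformly bounded as $\theta\to\theta_0+$.

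Next, I would invoke the trace identity
\[
\sum_{i=1}^{n+1}\mu_i(\theta)=\operatorname{tr}B(\theta)=f(\theta)+\sum_{i=1}^{n}\nu_i(\theta),
\]
which, together with the boundedness of $\mu_1,\ldots,\mu_n$, forces $\mu_{n+1}(\theta)\to+\infty$. Hence for $\theta$ sufficiently close to $\theta_0$, $\mu_{n+1}(\theta)$ lies strictly above $\operatorname{Spec}(A(\theta))$, and the Schur complement formula applied at $\lambda=\mu_{n+1}(\theta)$ gives
\[
\mu_{n+1}(\theta)-f(\theta)=c^{\mathrm T}\bigl(\mu_{n+1}(\theta)I-A(\theta)\bigr)^{-1}c=O\bigl(1/\mu_{n+1}(\theta)\bigr),
\]
so $\mu_{n+1}(\theta)=f(\theta)+o(1)$ and, in particular, $\mu_{n+1}(\theta)\to+\infty$ with the same rate as $f(\theta)$.

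Substituting this asymptotic back into the trace identity yields
\[
\sum_{i=1}^{n}\bigl(\mu_i(\theta)-\nu_i(\theta)\bigr)=f(\theta)-\mu_{n+1}(\theta)\longrightarrow 0.
\]
By interlacing every summand is nonpositive, so each individual difference tends to $0$. Combined with $\nu_i(\theta)\to\nu_i(\theta_0)$ this gives $\mu_i(\theta)\to\nu_i(\theta_0)$ for $1\leq i\leq n$, together with $\mu_{n+1}(\theta)\to+\infty$; this is exactly the claimed convergence of $\operatorname{Spec}(B(\theta))$ to $\operatorname{Spec}(A(\theta_0))\cup\{+\infty\}$ as multisets with multiplicities.

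The one delicate point is legitimizing the Schur step, i.e., guaranteeing that $\mu_{n+1}(\theta)\notin\operatorname{Spec}(A(\theta))$ once $\theta$ is close to $\theta_0$. But this is automatic, since $\mu_{n+1}(\theta)\to+\infty$ while $\nu_n(\theta)$ remains bounded by continuity of $A$; hence $\mu_{n+1}(\theta)>\nu_n(\theta)+1$ for $\theta$ near $\theta_0$ and the resolvent $(\mu_{n+1}(\theta)I-A(\theta))^{-1}$ is well defined and of norm $O(1/\mu_{n+1}(\theta))$. No genuine obstacle arises; the argument is classical self-adjoint perturbation theory combined with the continuity of $A$.
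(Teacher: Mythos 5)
Your proof is correct and self-contained; the paper itself does not supply an argument for this lemma but simply defers to Lemma~4.2 of \cite{Kac19}, so your route via Cauchy interlacing, the trace identity, and the Schur complement is a legitimate (and elementary) alternative. The key steps — interlacing to bound $\mu_1,\ldots,\mu_n$ above, the trace to force $\mu_{n+1}\to\pm\infty$, the Schur identity at $\lambda=\mu_{n+1}$ to show $\mu_{n+1}-f\to0$, and the squeeze on nonpositive increments $\mu_i-\nu_i$ summing to $o(1)$ — fit together without gaps.

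One point worth tightening in the exposition: you assert, as a consequence of interlacing alone, that $\mu_1(\theta),\ldots,\mu_n(\theta)$ are \emph{uniformly bounded}, but interlacing gives only the one-sided bound $\mu_1(\theta)\leq\nu_1(\theta)$; there is no a priori lower bound on $\mu_1$ from interlacing. This does not damage the argument, because the trace step needs only $\sum_{i\leq n}\mu_i(\theta)$ bounded \emph{above} (which interlacing does provide) to conclude $\mu_{n+1}(\theta)\to+\infty$, and the final squeeze then produces $\mu_1(\theta)\to\nu_1(\theta_0)$, retroactively supplying the lower bound. So the conclusion is right, but the claim of two-sided boundedness of $\mu_1$ should either be deferred until after the trace/Schur steps or replaced by the upper bound that is actually used.
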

		\begin{proof}
			We refer to \cite{Kac19} for the detailed proof (c.f. Lemma 4.2 of \cite{Kac19}).
		\end{proof}
		As before, we denote $\{-x\cdot\omega\}$ by $\beta_x$ and denote by  $0=\alpha_1<\alpha_2<\cdots<\alpha_{|B_{n}|}<\alpha_{|B_{n}|+1}=1$ the discontinuities of $H_{B_{n}}(\theta)$, where 
		$$\{\alpha_1,\alpha_2,\cdots,\alpha_{|B_{n}|}\}=\left\{\beta_x\right\}_{x\in B_{n}}.$$
		Also denote  the Rellich functions of $H_{B_{n}}(\theta)$ in non-decreasing order by $\lambda_i(\theta)$ ($1\leq i\leq |B_{n}|$), which are $1$-periodic and continuous on $(0,1)$ except on the set $\{\alpha_1,\alpha_2,\cdots,\alpha_{|B_{n}|}\}$ with the  local Lipschitz monotonicity property as  mentioned at the beginning of section \ref{n=1}.
		\begin{prop}\label{ubp}
			In the unbounded potential setting {\rm ($v(0+0)=-\infty,$ $ v(1-0)=+\infty$)}, $E_n(\theta)$, the $n$-generation Rellich function, satisfies{\rm :}
			\begin{itemize}	\item On each small interval $(\alpha_k,\alpha_{k+1})$, $E_{n}(\theta)$ coincides with exactly one branch of $\lambda_i(\theta)$, and hence is continuous and satisfies the local Lipschitz monotonicity property on $(\alpha_k,\alpha_{k+1})${\rm :}
				\begin{equation*}
					E_{n}(\theta_2)-E_{n}(\theta_1)\geq L(\theta_2-\theta_1), \ \ \alpha_k< \theta_1\leq \theta_2<\alpha_{k+1},
				\end{equation*}
				\item $E_1(\theta)$ has   no jump discontinuity at $\beta_x$ for   $x\in B_n\setminus\{o\}${\rm :}  
				$$E_n(\beta_x-0)=E_n(\beta_x+0).$$
			\end{itemize} 
			
		\end{prop}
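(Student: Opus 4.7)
The plan is to follow the structure of Proposition \ref{qiang}, using Lemma \ref{cone} to exploit the unboundedness of $v$ and upgrade ``non-negative jumps'' to ``no jumps''. The first bullet will admit an almost verbatim clopen argument; the second requires a Schur-complement computation together with the observation that a diverging diagonal entry effectively decouples a site.

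For the first bullet, on $(\alpha_k, \alpha_{k+1})$ both the branches $\lambda_i(\theta)$ and (by the second bullet at scale $n-1$ applied inductively, noting that the discontinuities of $E_{n-1}$ are contained in $\{\alpha_j\}$) $E_{n-1}(\theta)$ are continuous. Fixing the branch $\lambda_i$ matched to $E_n$ at some point, and setting $X = \{\theta \in (\alpha_k, \alpha_{k+1}) : E_n(\theta) = \lambda_i(\theta)\}$, the uniqueness of the Rellich eigenvalue in $D(E_{n-1}(\theta), 10\delta_{n-1})$ (the scale-$n$ analogue of Proposition \ref{528}) makes $X$ both open and closed in $(\alpha_k, \alpha_{k+1})$, hence equal to it.

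For the second bullet, fix $x \in B_n \setminus \{o\}$. By the Diophantine condition $\beta_x \neq 0$ and $\beta_x \neq \beta_y$ for every other $y \in B_n$, so the only discontinuous entry of $H_{B_n}(\theta)$ at $\beta_x$ is the $x$-th diagonal $v(\theta + x\cdot\omega)$, which jumps from $+\infty$ to $-\infty$ by UBLM. Splitting
\[
H_{B_n}(\theta) = \begin{pmatrix} v(\theta+x\cdot\omega) & c^{\rm T} \\ c & H_{B_n\setminus\{x\}}(\theta) \end{pmatrix},
\]
Lemma \ref{cone} gives $\lim_{\theta \to \beta_x^\mp}\operatorname{Spec}(H_{B_n}(\theta)) = \operatorname{Spec}(H_{B_n\setminus\{x\}}(\beta_x)) \cup \{\pm\infty\}$, so the finite parts agree. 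To pick out $E_n$, I work with the Schur complement $s_{\theta,n}(z)$ around $o$: a further block decomposition of $zI - H_{B_n^o}(\theta)$ at the site $x$ shows that as $|v(\theta + x\cdot\omega)| \to \infty$ the $x$-row and $x$-column of $(zI - H_{B_n^o}(\theta))^{-1}$ vanish, so $r_{\theta,n}(z)$ converges to a common limit $\tilde r(z)$ involving $(zI - H_{B_n^o\setminus\{x\}}(\beta_x))^{-1}$ from both sides, uniformly for $z$ on $\partial D(E_{n-1}(\beta_x), 10\delta_{n-1})$. Since $v$ is continuous at $\beta_x$ and $E_{n-1}$ is continuous at $\beta_x$ by the inductive hypothesis (or vacuously if $x \notin B_{n-1}$), Rouch\'e's theorem applied to $s_{\theta,n}$ on this circle forces the unique zero $E_n(\theta)$ in the enclosed disk to have the same limit from either side.

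The main obstacle is the uniformity of this Schur-complement convergence on the Rouch\'e contour, which requires controlling $(zI - H_{B_n^o\setminus\{x\}}(\beta_x))^{-1}$ uniformly on the circle. This should follow from the induction: $B_n^o \setminus \{x\}$ is ``good'' at $(\beta_x, z)$ because the single resonant site of $B_n^o$ at energy near $E_{n-1}(\beta_x)$ has been removed by passing to $B_n^o \setminus \{x\}$. The unboundedness of $v$ is what forces the two one-sided limits of $r_{\theta,n}(z)$ to coincide; in the BLM setting the corresponding limits differ by a genuine finite rank-one perturbation of the interior resolvent, which is precisely why Proposition \ref{qiang} only produces non-negative rather than zero jumps.
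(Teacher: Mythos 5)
Your first bullet matches the paper's clopen argument, and you correctly identify Lemma \ref{cone} as the key tool. However, for the second bullet you and the paper diverge substantially, and your route as written has a gap. The paper does not return to the Schur complement at all: it applies Lemma \ref{cone} to the split at the site $x$ (not $o$), i.e.\ $H_{B_n}(\theta)=\bigl(\begin{smallmatrix} v(\theta+x\cdot\omega) & c^{\rm T}\\ c & H_{B_n\setminus\{x\}}(\theta)\end{smallmatrix}\bigr)$, deduces that both one-sided limits $E_n(\beta_x\mp 0)$ lie in $\operatorname{Spec}(H_{B_n\setminus\{x\}}(\beta_x))$, uses the scale-$n$ analogue of Proposition \ref{528} plus continuity of $E_{n-1}$ at $\beta_x$ to place both limits in $\overline{D(E_{n-1}(\beta_x),e^{-l_{n-1}})}$, and then uses the $10\delta_{n-1}$-separation of the non-resonant Rellich branches together with Lemma \ref{cone} to show $H_{B_n\setminus\{x\}}(\beta_x)$ has a \emph{unique} eigenvalue in that disc. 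Two points both equal to the unique eigenvalue of a fixed self-adjoint matrix must coincide; no Rouch\'e argument and no analysis of $(zI-H_{B_n^o\setminus\{x\}}(\beta_x))^{-1}$ is needed.

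The gap in your proposal is in the claimed uniformity, and the justification you offer is incorrect. You write that $B_n^o\setminus\{x\}$ is good at $(\beta_x,z)$ ``because the single resonant site of $B_n^o$ at energy near $E_{n-1}(\beta_x)$ has been removed by passing to $B_n^o\setminus\{x\}$.'' But the resonant site for energies near $E_{n-1}(\beta_x)$ is $o$, which was already removed in forming $B_n^o$; the site $x$ is \emph{not} resonant (indeed its diagonal entry diverges), so deleting it removes nothing resonant. Goodness is not inherited by subsets: deleting $x$ could a priori break $m$-regularity, e.g.\ if some $p\in S_k(\beta_x,z)\cap B_n^o$ has $x\in B_{k+1}(p)$. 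Ruling this out would require a separation argument in the spirit of Lemma \ref{1448} (you would need Diophantine plus the inductive Lipschitz estimates to show $\|p-x\|_1$ is large), which is precisely the work the paper's Lemma \ref{cone} route is designed to avoid. So your approach could probably be completed, but it re-imports the case analysis that the paper sidesteps, and the explanation you gave for why it closes does not hold up. Your closing remark about why the unboundedness eliminates rather than merely signs the jump is accurate and matches the paper's intuition.
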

		\begin{proof}
			We prove this proposition  by induction. When $n=0$,  nothing needs to  be proved since $v(\theta)$ is continuous on $(0,1)$ by our assumption. Suppose that Proposition \ref{ubp} holds true at scale $n-1$. In the following we prove that  it remains valid at scale $n$. The proof of the first item is the same  as that of the first item in Proposition \ref{qiang} so we omit the details. For the second item, 
			let $x\in B_n\setminus\{o\}$ and denote $B_n^x:=B_n\setminus \{x\}$.	 We write 
			$$H_{B_n}(\theta)=\begin{pmatrix}
				f(\theta) &  c^{\rm T}  \\
				c	& H_{B_n^x}(\theta)
			\end{pmatrix},$$ where 	$f(\theta )=v(\theta+x \cdot \omega)$. 
			Then $f(\beta_x-0)=+\infty$, $f(\beta_x+0)=-\infty$ and $H_{B_n^x}(\theta)$ is continuous at $\beta_x$. Since $\beta_x\neq 0$, it follows  that $E_n(\theta )$ remains bounded for $\theta$  near $\beta_x$.  Hence by Lemma \ref{cone}, 
			\begin{equation}\label{1611}
				E_n(\beta_x-0)\in \operatorname{Spec}( H_{B_n^x}(\beta_x)),\ \  E_n(\beta_x+0)\in \operatorname{Spec}( H_{B_n^x}(\beta_x)).
			\end{equation}
			By the  inductive hypothesis, we have $E_{n-1}(\beta_x-0)=E_{n-1}(\beta_x+0)$ and we denote this identical value by $E$.  An analogue of Proposition \ref{528} yields that for any $\theta\notin\{\{-x\cdot\omega\}\}_{x\in B_n}$, 
			\begin{equation}\label{1631}
				|E_n(\theta)-E_{n-1}(\theta)|\leq e^{-l_{n-1}}
			\end{equation}
			and  any other eigenvalues of  $H_{B_{n}}(\theta)$, $\hat E$ except $E_{n}(\theta)$  satisfy $|\hat E- E_{n-1}(\theta)|>10\delta_{n-1}$.
			Taking  limits ($\theta\to \beta_x\mp$) of \eqref{1631} yields 
			\begin{equation}\label{1651}
				|	E_n(\beta_x-0)-E|\leq e^{-l_{n-1}} ,\ \  |	E_n(\beta_x+0)-E|\leq e^{-l_{n-1}}.
			\end{equation}
			Moreover, when $\theta\to \beta_x+$, the limit    (denoted by $\tilde{E}$) of other Rellich curves of $H_{B_n}(\theta)$  except $E_n(\theta)$  satisfies 
			$$|\tilde{E}-E|\geq 10\delta_{n-1}>e^{-l_{n-1}} .$$
			By Lemma \ref{cone} and \eqref{1651},  it follows   that   $H_{B_n^x}(\beta_x)$ has a unique eigenvalue (with multiplicities) in $\overline{D(E,e^{-l_{n-1}})}$, which together with \eqref{1611} and \eqref{1651} yields 
			$$E_n(\beta_x-0)=E_n(\beta_x+0).$$
		\end{proof}

		\section{Localization}\label{localization}
		In this section, we give the proof of Theorem \ref{mainb} and \ref{mainub}. We let $\varepsilon_0$ be so small that Proposition \ref{1832} and Theorem \ref{key2} hold true and hence the inductive hypotheses hold for all $n\geq 1$. 
		\begin{lem}\label{fen}
			Fix $\theta\in \R$ and $E\in \C$. Let $m$ be an integer of $m\geq 0$. Then  we have  
			\begin{equation}\label{fe}
				\inf_{\substack{x,y\in S_m(\theta,E)\\x\neq y}}\|x-y\|_1\geq 100l_{m+1}.
			\end{equation}
		\end{lem}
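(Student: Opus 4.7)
The plan is to show directly that any two distinct $x,y\in S_m(\theta,E)$ must be separated by at least $100l_{m+1}$, by combining three ingredients already in hand: the uniform Lipschitz monotonicity of the Rellich function $E_m$ established inductively in Hypothesis~\ref{h1} (equation \eqref{Lc}), the Diophantine property \eqref{DC} of $\omega$, and the explicit dependence of the scale parameters $l_{m+1}=l_m^2$ and $\delta_m=e^{-l_m^{2/3}}$ on $m$. This is exactly the one-pair analogue of the separation estimate appearing inside the proof of Lemma~\ref{1222}, so no new machinery is required.

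The first step is simply the triangle inequality: by the definition of $S_m(\theta,E)$,
\[
|E_m(\theta+x\cdot\omega)-E_m(\theta+y\cdot\omega)|
\le |E_m(\theta+x\cdot\omega)-E|+|E_m(\theta+y\cdot\omega)-E|<2\delta_m.
\]
Next I would invoke the Lipschitz monotonicity \eqref{Lc} together with $1$-periodicity of $E_m$ (Hypothesis~\ref{h1}) to convert this into a bound on the torus distance. The brief verification is that if $\theta_1,\theta_2\in\R$ with $\tilde\theta_i:=\{\theta_i\}$, then \eqref{Lc} yields $|E_m(\theta_1)-E_m(\theta_2)|\ge L|\tilde\theta_2-\tilde\theta_1|$, which is at least $L\|\theta_1-\theta_2\|_{\T}$ since $\|\theta_1-\theta_2\|_{\T}=\min(|\tilde\theta_2-\tilde\theta_1|,1-|\tilde\theta_2-\tilde\theta_1|)$. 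Applying this to $\theta_i=\theta+x_i\cdot\omega$ gives
\[
L\|(x-y)\cdot\omega\|_{\T}\le |E_m(\theta+x\cdot\omega)-E_m(\theta+y\cdot\omega)|<2\delta_m.
\]

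Then I would apply the Diophantine condition \eqref{DC} (using $x\ne y$) to obtain
\[
L\gamma\,\|x-y\|_1^{-\tau}\le L\|(x-y)\cdot\omega\|_{\T}<2\delta_m,
\]
so that $\|x-y\|_1\ge (L\gamma/2)^{1/\tau}\delta_m^{-1/\tau}=(L\gamma/2)^{1/\tau}e^{l_m^{2/3}/\tau}$. The final step is to check that the right-hand side dominates $100l_{m+1}=100l_m^2$ once $\varepsilon_0$ (and hence $\delta_0$) is chosen sufficiently small. For $m\ge 1$ this is immediate because $e^{l_m^{2/3}/\tau}$ grows super-polynomially in $l_m$, while for $m=0$ one uses $l_1=[|\ln\delta_0|^4]$ and $\delta_0=\varepsilon_0^{1/20}$, so that $\delta_0^{-1/\tau}$ is a power of $\varepsilon_0^{-1}$ which beats $|\ln\delta_0|^8$ for small $\varepsilon_0$. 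There is no genuine obstacle; the only care needed is the small ``periodicity step'' translating \eqref{Lc} from $[0,1)$ to $\R$ via the torus distance, which is the only place the argument differs trivially from an estimate already used in Lemma~\ref{1222}.
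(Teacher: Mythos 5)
Your proposal is correct and follows the paper's proof line for line: triangle inequality on the definition of $S_m(\theta,E)$, then the Lipschitz monotonicity \eqref{Lc} of $E_m$ combined with the Diophantine condition \eqref{DC}, then the numerical comparison $\delta_m^{-1/\tau}\gg 100 l_{m+1}$. The only addition you make — spelling out that periodicity plus the Lipschitz bound on $[0,1)$ yields $L\|\cdot\|_{\T}\le |E_m(\cdot)-E_m(\cdot)|$ — is a correct elaboration of a step the paper leaves implicit, not a different route.
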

		\begin{proof}
			Assume $x,y\in S_m(\theta,E)$ with $x\neq y$. By \eqref{DC}, \eqref{Lc}, we have 
			\begin{align*}
				L\gamma \|x-y\|_1^{-\tau}&\leq L\|(x-y)\cdot \omega\|_\T\\
				&\leq |E_m(\theta+x\cdot \omega )-E_m(\theta +y\cdot \omega )| \\&\leq |E_m(\theta+x\cdot \omega )-E|+|E_m(\theta +y\cdot \omega )-E|\\&\leq 2\delta_m.
			\end{align*}
			Hence $\|x-y\|_1\geq (L\gamma/2)^{\frac{1}{\tau}}\delta_m^{-\frac{1}{\tau}}\geq 100l_{m+1}$.
		\end{proof}
		
		\begin{lem}\label{kuo}
			Fix $\theta\in \R$ and $E\in \C$. Then for any  set $B\subset \Z^d$, there exists another set $\tilde B$ satisfying the following properties{\rm :}
			\begin{itemize}
				\item $B\subset \tilde{B} \subset \{x\in \Z^d:\ \operatorname{dist}_1 (x,B)\leq 30l_n\}.$
				\item $\tilde B$ is $n$-regular  related to  $(\theta,E)$.
			\end{itemize}
		\end{lem}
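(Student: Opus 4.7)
The plan is to mimic the construction in Proposition \ref{m+1r}, since both Lemma \ref{kuo} and the regularity statement in Proposition \ref{m+1r} amount to enlarging a given set into an $n$-regular one. The key inputs are the geometric lemma \ref{1227} (which absorbs a scale of resonant boxes when their centers are sufficiently well-separated) and the separation estimate from Lemma \ref{fen} (which supplies the required separation of the resonant point sets $S_m(\theta,E)$ at every scale).

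Concretely, I would set $B^{(0)}:=B$ and define $B^{(k+1)}$ inductively for $0\le k\le n-1$ by invoking Lemma \ref{1227} with $L=10 l_{n-k}$ and $X=S_{n-k-1}(\theta,E)$. The separation hypothesis \eqref{844} requires $\inf_{x\ne y\in X}\|x-y\|_1\ge 100 l_{n-k}$, which is exactly the content of Lemma \ref{fen} applied at scale $m=n-k-1$. The output of Lemma \ref{1227} gives the two properties
\begin{equation*}
B^{(k)}\subset B^{(k+1)}\subset\{x:\operatorname{dist}_1(x,B^{(k)})\le 20l_{n-k}\},
\end{equation*}
\begin{equation*}
\text{$(Q_{10l_{n-k}}+q)\subset B^{(k+1)}$ for every $q\in S_{n-k-1}(\theta,E)$ with $(Q_{10l_{n-k}}+q)\cap B^{(k+1)}\ne\emptyset$.}
\end{equation*}
I would then take $\tilde B:=B^{(n)}$. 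Telescoping the first inclusion and using $\sum_{k=0}^{n-1}20l_{n-k}\le 30l_n$ (which follows from the super-exponential growth $l_j=l_1^{2^{j-1}}$) yields the enclosure $B\subset\tilde B\subset\{x:\operatorname{dist}_1(x,B)\le 30l_n\}$.

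It remains to verify $n$-regularity. Suppose $p\in\tilde B\cap S_k(\theta,E)$ for some $0\le k\le n-1$. Telescoping again gives $\tilde B=B^{(n)}\subset\{x:\operatorname{dist}_1(x,B^{(n-k)})\le\sum_{j=n-k}^{n-1}20l_{n-j}\le 30l_k\}$, so $p$ lies within $l$-distance $30l_k<10l_{k+1}$ of $B^{(n-k)}$, forcing $(Q_{10l_{k+1}}+p)\cap B^{(n-k)}\ne\emptyset$. Since $B^{(n-k)}$ is exactly the set produced at the step where the parameter pair was $(L,X)=(10l_{k+1},S_k(\theta,E))$ and $p\in S_k(\theta,E)$, the second property above yields $(Q_{10l_{k+1}}+p)\subset B^{(n-k)}\subset\tilde B$. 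Combined with $B_{k+1}\subset Q_{l_{k+1}+50l_k}\subset Q_{10l_{k+1}}$ (from \eqref{h11}), this gives $B_{k+1}(p)\subset\tilde B$, which is precisely $n$-regularity.

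No single step is genuinely difficult here; the main thing to get right is the scale bookkeeping. The most delicate point is showing that $p\in\tilde B$ actually lands close enough to the appropriate intermediate set $B^{(n-k)}$ to trigger the absorbing clause of Lemma \ref{1227} at the step keyed to $S_k(\theta,E)$—this is why one must perform the construction from the largest scale $n-1$ down to the smallest scale $0$, and why the ``$30l_{n-m}$'' estimate for the cumulative thickening at each intermediate stage must beat the threshold $10l_{k+1}$ used at the corresponding step.
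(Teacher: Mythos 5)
Your proposal is correct and follows the paper's proof essentially verbatim: the same inductive application of Lemma \ref{1227} with parameters $(L,X)=(10l_{n-k},S_{n-k-1}(\theta,E))$, the same telescoping bound $\sum 20l_{n-k}\le 30l_n$ for the enclosure, and the same argument that a resonant point $p\in B^{(n)}\cap S_k(\theta,E)$ lies within distance $30l_k\le l_{k+1}$ of $B^{(n-k)}$, which triggers the absorbing clause of Lemma \ref{1227} at the step keyed to scale $k$. The only cosmetic difference is your explicit remark that Lemma \ref{fen} supplies the separation hypothesis (the paper cites \eqref{fe} inline), which is a useful clarification but not a different route.
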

		\begin{proof}
			Fix $\theta\in \R$ and $E\in \C$. 
			We start with $B^{(0)}:=B$. By setting $L=10l_{n-k}$, $X=S_{n-k-1}(\theta,E)$ in Lemma \ref{1227} (the assumption of Lemma \ref{1227} is verified by \eqref{fe}),  we inductively	define  $B^{(k+1)}$ ($0\leq k\leq n-1$), such that
			\begin{equation}\label{1321`}
				B^{(k)}\subset B^{(k+1)}\subset \{x\in \Z^d: \ \operatorname{dist}_1(x,B^{(k)})\leq 20l_{n-k}\},
			\end{equation}
			\begin{equation}\label{1916`}
				\text{ 	 $(Q_{10l_{n-k}}+x)\subset B^{(k+1)}$ for $x\in S_{n-k-1}(\theta,E)$ with $(Q_{10l_{n-k}}+x)\cap B^{(k+1)}\neq \emptyset$.}
			\end{equation}
			We claim that $B^{(n)}$ is the desired set $\tilde{B}$. Since $$\sum_{k=m}^{n-1}20l_{n-k}\leq 30l_{n-m},$$ it follows from \eqref{1321`}  that 
			\begin{equation}\label{1316`}
				B^{(n)}\subset \{x\in \Z^d :\ \operatorname{dist}_1(x,B^{(m)})\leq 30l_{n-m}\}.
			\end{equation}
			Thus  $$B^{(n)}\subset \{x\in \Z^d :\ \operatorname{dist}_1(x,B^{(0)})\leq  30l_n\}.$$
			Let $x$ be such that 	$x\in B^{(n)}\cap S_k(\theta,E)$
			for some $0\leq k\leq n-1$.  By \eqref{1316`} with $m=n-k$, we have
			\begin{equation}\label{1858`}
				B^{(n)}\subset \{x\in \Z^d :\ \operatorname{dist}_1(x,B^{(n-k)})\leq 30l_{k}\}.
			\end{equation}
			By $x\in B^{(n)}$, $30l_k\leq l_{k+1}$ and  \eqref{1858`}, it follows that 
			$$(Q_{10l_{k+1}}+x)\cap B^{(n-k)}\neq \emptyset.$$
			By \eqref{1916`}, it follows that $(Q_{10l_{k+1}}+x)\subset  B^{(n-k)}.$
			Hence, by the inductive hypothesis  \eqref{h11} and $l_{k+1}+50l_k\leq 10l_{k+1}$, it follows that 
			$$B_{k+1}(x)\subset (Q_{10l_{k+1}}+x)\subset  B^{(n-k)}\subset B^{(n)}.$$
			Hence the set $\tilde{B}:=B^{(n)}$ is $n$-regular related to $(\theta,E)$.
		\end{proof}
		\subsection{Anderson localization}Fix $\theta\in \R$. To show that $H(\theta)$ satisfies Anderson localization, it suffices by Schnol's Lemma to show that every generalized eigenfunction $\psi$ of $H(\theta)$ that grows at most polynomial bound ($|\psi(x)|\leq (1+\|x\|_1)^d$) in fact decays exponentially.  In the  remainder of this section, we fix a generalized eigenvalue $E$ and its corresponding nonzero generalized eigenfunction $\psi$.
		\begin{lem}\label{zha}
			There exists some $N>0$ such that for any $n\geq N$, we have 
			\begin{equation*}\label{fk}
				S_n(\theta,E)\cap Q_{50_{l_n}}\neq \emptyset.
			\end{equation*}
		\end{lem}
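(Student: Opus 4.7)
My plan is to argue by contradiction. Suppose, to the contrary, that $S_n(\theta,E)\cap Q_{50l_n}=\emptyset$ for some arbitrarily large $n$. Since $\psi\not\equiv 0$, fix some $x_0\in\Z^d$ with $\psi(x_0)\neq 0$, and restrict attention to $n$ large enough that $\|x_0\|_1\leq l_n$. I set $B:=x_0+Q_{19l_n}$ and invoke Lemma \ref{kuo} (applied to the pair $(\theta,E)$) to obtain an $n$-regular set $\tilde B$ with
$B\subset\tilde B\subset\{x\in\Z^d:\operatorname{dist}_1(x,B)\leq 30l_n\}$.
Together with $\|x_0\|_1\leq l_n$ this gives $\tilde B\subset Q_{50l_n}$. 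Since by assumption $S_n(\theta,E)\cap Q_{50l_n}=\emptyset$, the set $\tilde B$ is also $n$-nonresonant, and hence $n$-good. Hypothesis \ref{h3} (taken with $E^*=E$) therefore furnishes $\|G_{\tilde B}^{\theta,E}\|\leq 10\delta_n^{-1}$ together with the off-diagonal bound $|G_{\tilde B}^{\theta,E}(x,y)|\leq e^{-\gamma_n\|x-y\|_1}$ whenever $\|x-y\|_1\geq l_n^{5/6}$.

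Next, because $H(\theta)\psi=E\psi$, the usual resolvent identity (Poisson formula) yields, for every $x\in\tilde B$,
$\psi(x)=-\varepsilon\sum_{(u,u')\in\partial_{\Z^d}\tilde B}G_{\tilde B}^{\theta,E}(x,u)\,\psi(u').$
I evaluate this at $x=x_0$. The crucial geometric observation is that $B\subset\tilde B$, so for any $u\in\partial^-_{\Z^d}\tilde B$ the outer neighbor $u'$ lies outside $\tilde B$, hence outside $B=x_0+Q_{19l_n}$; thus $\|u'-x_0\|_1\geq 19l_n+1$ and therefore $\|u-x_0\|_1\geq 19l_n\gg l_n^{5/6}$. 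The off-diagonal Green's function estimate from Hypothesis \ref{h3} applies and yields $|G_{\tilde B}^{\theta,E}(x_0,u)|\leq e^{-19\gamma_n l_n}$.

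Combining this with the crude counting bound $|\partial_{\Z^d}\tilde B|\leq 2d|\tilde B|\leq 2d(101l_n)^d$ and the polynomial growth bound $|\psi(u')|\leq (1+\|u'\|_1)^d\leq (2+50l_n)^d$, I conclude $|\psi(x_0)|\leq C\, l_n^{2d}\,e^{-19\gamma_n l_n}$, which tends to $0$ as $n\to\infty$ since $\gamma_n\geq 10$ by \eqref{rate}. This contradicts $\psi(x_0)\neq 0$, establishing the lemma. The only nontrivial step is the geometric bookkeeping: choosing $B$ large enough around $x_0$ so that the internal boundary of the extended set $\tilde B$ lies at distance of order $l_n$ from $x_0$, while simultaneously keeping $\tilde B$ inside $Q_{50l_n}$ so that the hypothesized absence of $n$-resonant points in $Q_{50l_n}$ is inherited by $\tilde B$. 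Once $B$ is chosen correctly, Lemma \ref{kuo} and Hypothesis \ref{h3} close the argument immediately.
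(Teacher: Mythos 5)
Your proof is correct and follows essentially the same approach as the paper: negate the lemma to get infinitely many scales $n$ where $Q_{50l_n}$ is $n$-nonresonant, use Lemma \ref{kuo} to fill in a box to an $n$-regular (hence $n$-good) set contained in $Q_{50l_n}$, and feed the resulting Green's function decay into the Poisson formula to force a polynomially bounded generalized eigenfunction to vanish. The only cosmetic difference is that you center the starting box at a fixed $x_0$ with $\psi(x_0)\neq 0$ and contradict at that single site, while the paper centers the box at the origin (starting from $Q_{20l_{n_i}}$) and shows $\psi(x)=0$ for every fixed $x$; these are interchangeable.
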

		\begin{proof}
			If the lemma does not hold true, then there exist a subsequence $\{n_i\}_{i=1}^\infty$  such that  
			\begin{equation}\label{gan}
				S_{n_i}(\theta,E)\cap Q_{50_{l_{n_i}}}= \emptyset.
			\end{equation}
			By Lemma \ref{kuo}, there exists a set sequence $\{U_i\}_{i=1}^\infty$ satisfying \begin{itemize}
				\item  $Q_{20l_{n_i}}\subset  U_i \subset \{x\in \Z^d:\ \operatorname{dist}_1 (x,Q_{20l_{n_i}})\leq 30l_{n_i}\} =Q_{50l_{n_i}}.$
				\item $U_i$ is $n_i$-regular related to $(\theta, E)$.
			\end{itemize}
			Moreover, it follows from \eqref{gan} that  $U_i$ is $n_i$-nonresonant and hence $n_i$-good with good Green's function estimate (c.f. Hypothesis \ref{h3}). Fix $x\in \Z^d$ arbitrarily. Let $i$ be sufficiently large so that $\|x\|_1\leq l_{n_i}$.
			It follows from Poisson formula that 
			\begin{equation}\label{p}
				\psi(x)=\sum_{(w,w')\in \partial_{\Z^d}U_i}G_{U_i}^{\theta,E}(x,w)\psi(w').
			\end{equation}
			For $w\in\partial_{\Z^d}^-U_i $, we have  $$\|x-w\|_1\geq \|w\|_1-\|x\|_1\geq 20l_{n_i}-l_{n_i}>l_{n_i}^\frac56.$$
			Thus \eqref{p} yields 
			\begin{align}
				\nonumber|\psi(x)|&\leq \sum_{(w,w')\in \partial_{\Z^d}U_i}|G_{U_i}^{\theta,E}(x,w)||\psi(w')|\\
				\nonumber&\leq \sum_{(w,w')\in \partial_{\Z^d}U_i} e^{-\gamma_{n_i}\|x-w\|_1}(1+\|w'\|_1)^d\\
				&\leq 2d(200l_{n_i})^de^{-19l_{n_i}}(100l_{n_i})^d\label{00}.
			\end{align}
			Taking $i\to\infty$  in \eqref{00} yields $\psi(x)=0$. Since $x$ is arbitrary, it follows that $\psi\equiv0$, contradicting with the  nonzero assumption of $\psi$.
		\end{proof}
		
		\begin{lem}\label{N}
			There exists some $N>0$ such that for  $n\geq N$, there exists a set $A_n\subset\Z^d$ satisfying 
			\begin{itemize}
				\item 	$Q_{98l_{n+1}}\setminus Q_{80l_n} \subset A_n\subset Q_{99l_{n+1}}\setminus Q_{50l_n}.$
				\item  $A_n$ is $n$-good related to $(\theta,E)$.
			\end{itemize}
		\end{lem}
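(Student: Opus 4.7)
The plan is to take the annulus $B:=Q_{98l_{n+1}}\setminus Q_{80l_n}$, apply the $n$-regular extension procedure of Lemma \ref{kuo} to get $A_n:=\tilde B$, and verify that the extension does not introduce any scale-$n$ resonant points. The $n$-regularity is then built in, and the two geometric inclusions in the statement will follow by tracking the $30l_n$-enlargement from Lemma \ref{kuo}.

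First I would fix $N$ large enough that Lemma \ref{zha} applies at every scale $n\geq N$ and that $30l_n\leq l_{n+1}$. Lemma \ref{zha} yields some $p_0\in S_n(\theta,E)\cap Q_{50l_n}$, and the separation estimate of Lemma \ref{fen} forces every other $p\in S_n(\theta,E)$ to satisfy
\[
\|p\|_1\geq \|p-p_0\|_1-\|p_0\|_1\geq 100l_{n+1}-50l_n\geq 99l_{n+1}.
\]
Hence all scale-$n$ resonant points in $Q_{99l_{n+1}}$ are already trapped inside $Q_{50l_n}$, i.e.
\[
S_n(\theta,E)\cap\bigl(Q_{99l_{n+1}}\setminus Q_{50l_n}\bigr)=\emptyset.
\]

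Next I would apply Lemma \ref{kuo} to $B$ to produce an $n$-regular set $A_n$ with $B\subset A_n\subset\{x\in\Z^d:\operatorname{dist}_1(x,B)\leq 30l_n\}$. The outer inclusion combined with $B\subset Q_{98l_{n+1}}$ and $30l_n\leq l_{n+1}$ gives $A_n\subset Q_{99l_{n+1}}$. For the inner side, any $x\in Q_{50l_n}$ satisfies
\[
\operatorname{dist}_1(x,B)\geq \operatorname{dist}_1(x,\Z^d\setminus Q_{80l_n})\geq (80l_n+1)-50l_n>30l_n,
\]
so $A_n\cap Q_{50l_n}=\emptyset$. Combining, $A_n\subset Q_{99l_{n+1}}\setminus Q_{50l_n}$, which together with the previous paragraph yields $A_n\cap S_n(\theta,E)=\emptyset$; that is, $A_n$ is $n$-nonresonant. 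Combined with the $n$-regularity from Lemma \ref{kuo}, this makes $A_n$ $n$-good, and the containment $Q_{98l_{n+1}}\setminus Q_{80l_n}\subset A_n$ is automatic.

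The argument is essentially geometric bookkeeping; the structural inputs are Lemma \ref{zha} (existence of a scale-$n$ resonant point already inside $Q_{50l_n}$) and Lemma \ref{fen} (uniform $100l_{n+1}$-separation of $S_n$-points). The only tight check is that the $30l_n$-buffer introduced by Lemma \ref{kuo} stays strictly below the inner gap $80l_n-50l_n=30l_n$, which survives precisely because the defining inequalities are strict on integer lattice points; the outer gap $99l_{n+1}-98l_{n+1}=l_{n+1}$ comfortably absorbs the same buffer once $l_n\geq 30$.
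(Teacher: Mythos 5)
Your proof is correct and follows essentially the same route as the paper: both invoke Lemma \ref{zha} to locate a scale-$n$ resonant point inside $Q_{50l_n}$, use the $100l_{n+1}$-separation from Lemma \ref{fen} to clear the annulus $Q_{99l_{n+1}}\setminus Q_{50l_n}$ of $S_n(\theta,E)$-points, and then apply the $n$-regular extension of Lemma \ref{kuo} to $Q_{98l_{n+1}}\setminus Q_{80l_n}$, with the $30l_n$-buffer staying inside the cleared annulus. The only cosmetic difference is that you derive the emptiness of $S_n\cap(Q_{99l_{n+1}}\setminus Q_{50l_n})$ directly from a lower bound on $\|p\|_1$, whereas the paper argues by contradiction; note that your chain $100l_{n+1}-50l_n\geq 99l_{n+1}$ should really be strict (which it is, since $l_n\gg 50$), otherwise a point with $\|p\|_1=99l_{n+1}$ would not be excluded from $Q_{99l_{n+1}}$.
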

		\begin{proof}
			Let $N$ be the integer from Lemma \ref{zha}.	It follows from  Lemma \ref{zha} and Lemma \ref{fen} that \begin{equation}\label{kong}
				(Q_{99l_{n+1}}\setminus Q_{50l_n})\cap S_n(\theta,E)=\emptyset.
			\end{equation}
			Otherwise, the exist $x\in  S_n(\theta,E)\cap Q_{50l_n}$, $y\in S_n(\theta,E)\cap (Q_{99l_{n+1}}\setminus Q_{50l_n})$ with $x\neq y$ and $\|x-y\|_1\leq \|x\|_1+\|y\|_1\leq 99l_{n+1}+50l_n<100l_{n+1}$, contradicting with Lemma \ref{fen}.
			By Lemma \ref{kuo}, there exists a set $A_n$ satisfies: 
			\begin{itemize}
				\item  $Q_{98l_{n+1}}\setminus Q_{80l_n} \subset A_n \subset \{x\in \Z^d:\ \operatorname{dist}_1 (x,Q_{98l_{n+1}}\setminus Q_{80l_n})\leq 30l_n\}\subset Q_{99l_{n+1}}\setminus Q_{50l_n}.$
				\item $A_n$ is $n$-regular related to $(\theta,E)$.
			\end{itemize}
			Since \eqref{kong}, it follows that  $A_n$ is $n$-good related to $(\theta,E)$. 
		\end{proof}
		Let $N$ be the integer from Lemma \ref{N}. Let $x\in \Z^d$ with $\|x\|_1>85l_N$. Thus  $x\in  Q_{95l_{n+1}}\setminus Q_{85l_n}$ for some $n\geq N$.
		It follows from Poisson formula that 
		\begin{equation}\label{f}
			\psi(x)=\sum_{(w,w')\in \partial_{\Z^d}A_n}G_{A_n}^{\theta,E}(x,w)\psi(w').
		\end{equation}
		Since for $x\in  Q_{95l_{n+1}}\setminus Q_{85l_n}$, $w\in \partial_{\Z^d}^-A_n$, one has 
		\begin{align*}
			\|x-w\|_1\geq&\min(98l_{n+1}-\|x\|_1,\|x\|_1-80l_n-1)\\
			\geq & \min(\frac{98}{95}\|x\|_1-\|x\|_1,\|x\|_1-\frac{80l_n+1}{85l_n+1}\|x\|_1)\\
			\geq & \frac{3}{95}\|x\|_1>l_n^\frac{5}{6},
		\end{align*}
		it follows that \eqref{f} together with the good Green's function estimate for $n$-good set $A_n$ yields 
		\begin{align*}
			\nonumber|\psi(x)|&\leq \sum_{(w,w')\in \partial_{\Z^d}A_n}|G_{A_n}^{\theta,E}(x,w)||\psi(w')|\\
			\nonumber&\leq \sum_{(w,w')\in \partial_{\Z^d}A_n} e^{-\gamma_{n}\|x-w\|_1}(1+\|w'\|_1)^d\\
			&\leq 2d(1000l_{n+1})^de^{-\frac{3}{95}\|x\|_1}(100l_{n+1})^d\\
			&\leq 2d10^{5d} \|x\|_1^{4d}e^{-\frac{3}{95}\|x\|_1}\\
			&\leq Ce^{-\frac{1}{95}\|x\|_1}
		\end{align*}
		with $$C=\sup_{x\in \Z^d}\frac{ 2d10^{5d} \|x\|_1^{4d}}{e^{\frac{2}{95}\|x\|_1}}.$$
		\subsection{Exponential dynamical localization}
		In the previous section, we proved that for any $\theta\in \R$, $H(\theta)$ satisfies Anderson localization. Hence $H(\theta)$ has a complete eigenfunction basis.  Denote by $\{\psi_{\theta,s}\}_{s=0}^\infty$ an orthonormal basis of $\ell^2(\Z^d)$ consisting of eigenfunctions of $H(\theta)$ and denote by $\mu_{\theta,s}$ the eigenvalue of $\psi_{\theta,s}$. For each $\psi_{\theta,s}$, we let $x_{\theta,s}\in \Z^d$ be such that $$|\psi_{\theta,s}(x_{\theta,s})|=  \|\psi_{\theta,s}\|_{\ell^\infty(\Z^d)}.$$
		The following general lemma is from \cite{JK13} (c.f.  Theorem 2.1 of  \cite{JK13}). We include the proof for  the reader's convenience.
		\begin{lem}
			For any $t\in \R$, we have 
			\begin{align}\label{847}
				|\langle e^{itH(\theta)}\e_x, \e_y\rangle|\leq \sum_{p\in \Z^d}\left(\sum_{s:x_{\theta,s}=p}|\psi_{\theta,s}(x)|^2\sum_{s:x_{\theta,s}=p}|\psi_{\theta,s}(y)|^2\right) ^{\frac{1}{2}}.
			\end{align}
		\end{lem}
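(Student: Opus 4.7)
The plan is to combine the spectral decomposition of $H(\theta)$ with a grouping of eigenfunctions by their ``localization center'' $x_{\theta,s}$, and then apply Cauchy--Schwarz within each group.

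First, since $H(\theta)$ is a real self-adjoint operator and $\{\psi_{\theta,s}\}_{s=0}^\infty$ is an orthonormal eigenbasis, I would choose the $\psi_{\theta,s}$ to be real-valued. Expanding $\e_x$ in this basis gives $\e_x=\sum_s \psi_{\theta,s}(x)\psi_{\theta,s}$, so functional calculus yields
\begin{equation*}
\langle e^{itH(\theta)}\e_x,\e_y\rangle=\sum_{s}e^{it\mu_{\theta,s}}\psi_{\theta,s}(x)\psi_{\theta,s}(y).
\end{equation*}
Taking absolute values and using the triangle inequality (bounding $|e^{it\mu_{\theta,s}}|$ by $1$), one obtains
\begin{equation*}
|\langle e^{itH(\theta)}\e_x,\e_y\rangle|\leq \sum_{s}|\psi_{\theta,s}(x)||\psi_{\theta,s}(y)|.
\end{equation*}

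Next, I would partition the index set of $s$ according to the value of $x_{\theta,s}\in\Z^d$, writing
\begin{equation*}
\sum_{s}|\psi_{\theta,s}(x)||\psi_{\theta,s}(y)|=\sum_{p\in \Z^d}\ \sum_{s:x_{\theta,s}=p}|\psi_{\theta,s}(x)||\psi_{\theta,s}(y)|.
\end{equation*}
Applying the Cauchy--Schwarz inequality to the inner sum (over the finite or countable set of indices $s$ with $x_{\theta,s}=p$) directly gives
\begin{equation*}
\sum_{s:x_{\theta,s}=p}|\psi_{\theta,s}(x)||\psi_{\theta,s}(y)|\leq \Bigl(\sum_{s:x_{\theta,s}=p}|\psi_{\theta,s}(x)|^2\Bigr)^{1/2}\Bigl(\sum_{s:x_{\theta,s}=p}|\psi_{\theta,s}(y)|^2\Bigr)^{1/2},
\end{equation*}
and summing over $p$ yields exactly \eqref{847}.

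There is no serious obstacle here; the statement is a general spectral-theoretic inequality valid whenever $H(\theta)$ admits a complete eigenbasis (which Anderson localization established in the previous subsection). The only small technical point to be cautious about is the justification of the real-valuedness of the eigenfunctions (which uses that $H(\theta)$ has real matrix entries in the canonical basis) and the convergence of the series, both of which are immediate from $\psi_{\theta,s}\in\ell^2(\Z^d)$ and Parseval. The real content of this lemma is that it reduces the dynamical bound to controlling, for each $p$, the $\ell^2$ mass $\bigl(\sum_{s:x_{\theta,s}=p}|\psi_{\theta,s}(x)|^2\bigr)^{1/2}$, which will later be estimated via the uniform eigenfunction decay established from Green's function bounds.
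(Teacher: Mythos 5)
Your proof is correct and follows essentially the same route as the paper: spectral decomposition of $e^{itH(\theta)}$, triangle inequality, grouping eigenfunctions by localization center, then Cauchy--Schwarz on each group. The only cosmetic difference is that the paper keeps a complex conjugate $\overline{\psi_{\theta,s}(x)}$ rather than invoking real-valuedness of eigenfunctions, but this is immaterial once absolute values are taken.
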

		\begin{proof}
			Since  $H(\theta)\psi_{\theta,s}=\mu_{\theta,s}\psi_{\theta,s}$,  it follows that 
			$$e^{itH(\theta)}\psi_{\theta,s}=e^{it\mu_{\theta,s}}\psi_{\theta,s}.$$
			Hence \begin{align*}
				e^{itH(\theta)}\e_x=e^{itH(\theta)}(\sum_s  \overline{\psi_{\theta,s}(x)}\psi_{\theta,s})=\sum_s e^{it\mu_{\theta,s}} \overline{\psi_{\theta,s}(x)}\psi_{\theta,s}.
			\end{align*}
			It follows that 
			\begin{align*}
				|\langle e^{itH(\theta)}\e_x, \e_y\rangle|&=	|\langle\sum_s e^{it\mu_{\theta,s}} \overline{\psi_{\theta,s}(x)}\psi_{\theta,s},\e_y\rangle|\\
				&=|\sum_s e^{it\mu_{\theta,s}}\overline{\psi_{\theta,s}(x)}\psi_{\theta,s}(y)|\\
				&\leq 	\sum_s|\psi_{\theta,s}(x)||\psi_{\theta,s}(y)|	\\
				& =  \sum_{p\in \Z^d}   \sum_{s:x_{\theta,s}=p} |\psi_{\theta,s}(x)||\psi_{\theta,s}(y)|      \\
				&\leq \sum_{p\in \Z^d}\left(\sum_{s:x_{\theta,s}=p}|\psi_{\theta,s}(x)|^2\sum_{s:x_{\theta,s}=p}|\psi_{\theta,s}(y)|^2\right) ^{\frac{1}{2}},
			\end{align*}
			where we apply Cauchy-Schwartz inequality to the inner sum  on the last line.
		\end{proof}
		\begin{lem}\label{feng}
			There exists some $C(\varepsilon)>0$ such that for any $\theta\in \R$ and $\psi_{\theta,s}$ with $x_{\theta,s}=o$, we have 
			\begin{equation}\label{619}
				|\psi_{\theta,s}(x)|\leq C(\varepsilon) |\psi_{\theta,s}(o)|e^{-\frac{1}{400}|\ln\varepsilon|\cdot\|x\|_1}.
			\end{equation}
			We emphasize that here the constant $C(\varepsilon)$ does not depend on the choice of $\theta$ and  $\psi_{\theta,s}$.
		\end{lem}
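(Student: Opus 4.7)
The plan is to adapt the Schnol/Poisson-formula argument from the preceding subsection on Anderson localization, but to track constants uniformly in $(\theta,s)$. The key new input is that the hypothesis $x_{\theta,s}=o$ gives the pointwise bound $|\psi_{\theta,s}(w')|\leq |\psi_{\theta,s}(o)|$ for every $w'\in\Z^d$, which replaces the polynomial-growth bound $|\psi(x)|\leq (1+\|x\|_1)^d$ used earlier and, crucially, is independent of $(\theta,s)$.

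First I would upgrade Lemma \ref{zha} by showing that for every $n\geq 1$ one has $S_n(\theta,\mu_{\theta,s})\cap Q_{50l_n}\neq\emptyset$, with no dependence on $(\theta,s)$. Suppose for contradiction that $S_n(\theta,\mu_{\theta,s})\cap Q_{50l_n}=\emptyset$. By Lemma \ref{kuo} applied with $B=Q_{20l_n}$, there exists an $n$-good set $\tilde B$ with $Q_{20l_n}\subset \tilde B\subset Q_{50l_n}$. The Poisson formula at $o\in\tilde B$ combined with Hypothesis \ref{h3} and the pointwise bound $|\psi_{\theta,s}(w')|\leq |\psi_{\theta,s}(o)|$ yields
$$|\psi_{\theta,s}(o)|\leq C l_n^d e^{-20\gamma_n l_n}|\psi_{\theta,s}(o)|.$$
Since $\gamma_n\geq 10$ by \eqref{rate} and $l_n\geq l_1$ is large for $\varepsilon_0$ sufficiently small, the prefactor is strictly less than $1$, forcing $\psi_{\theta,s}(o)=0$, hence $\psi_{\theta,s}\equiv 0$---a contradiction. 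Thus the integer $N$ in Lemma \ref{N} (applied with $E=\mu_{\theta,s}$) can be taken to equal $1$, uniformly in $(\theta,s)$.

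Given $x\neq o$ with $\|x\|_1\geq 85l_1$, choose $n\geq 1$ such that $\|x\|_1\in [85l_n,95l_{n+1})$ (such $n$ exists since consecutive intervals overlap). By the previous step together with Lemma \ref{fen}, the construction of Lemma \ref{N} yields an $n$-good set $A_n$ with $Q_{98l_{n+1}}\setminus Q_{80l_n}\subset A_n\subset Q_{99l_{n+1}}\setminus Q_{50l_n}$; in particular $x\in A_n$. The geometric estimate from the Anderson localization proof gives $\|x-w\|_1\geq \tfrac{3}{95}\|x\|_1>l_n^{5/6}$ for every $w\in\partial^-_{\Z^d}A_n$. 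Applying the Poisson formula, the Green's function estimate of Hypothesis \ref{h3}, and $|\psi_{\theta,s}(w')|\leq |\psi_{\theta,s}(o)|$, one obtains
$$|\psi_{\theta,s}(x)|\leq C l_{n+1}^d e^{-\gamma_n\cdot\tfrac{3}{95}\|x\|_1}|\psi_{\theta,s}(o)|\leq C(\varepsilon) e^{-\tfrac{1}{400}|\ln\varepsilon|\cdot\|x\|_1}|\psi_{\theta,s}(o)|,$$
where the polynomial prefactor $l_{n+1}^d$ is absorbed by mildly relaxing the exponential rate using $\gamma_n\geq|\ln\varepsilon|/4$ from \eqref{rate}. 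For $\|x\|_1<85l_1$, the desired estimate is trivial by enlarging $C(\varepsilon)$.

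The main obstacle is the uniformity in the first step. In the Anderson localization proof, the integer $N$ produced by Lemma \ref{zha} may in principle depend on $(\theta,E)$ because it is controlled by the polynomial-growth constant of the generalized eigenfunction under consideration. The replacement of the polynomial bound by the sharper pointwise bound $|\psi_{\theta,s}|\leq |\psi_{\theta,s}(o)|$ turns the Poisson estimate into a self-referential inequality which, absent a scale-$n$ resonance inside $Q_{50l_n}$, forces the eigenfunction to vanish at its own maximum. Combined with the universal resonant-point separation from Lemma \ref{fen}, this makes $N_0$ and all subsequent constants depend only on $\varepsilon$, and the remainder of the argument then proceeds exactly as in the Anderson localization section.
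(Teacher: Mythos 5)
Your proposal is correct and takes essentially the same approach as the paper: you normalize via the pointwise bound $|\psi_{\theta,s}|\le|\psi_{\theta,s}(o)|$ in place of polynomial growth, prove the resonance claim $S_n(\theta,\mu_{\theta,s})\cap Q_{50l_n}\neq\emptyset$ for all $n\ge 1$ by the same Poisson-formula contradiction, and then run the annular-region Green's function argument exactly as in the Anderson localization section, absorbing the polynomial prefactor by relaxing the rate from $\tfrac{3}{95}\gamma_n$ to $\tfrac{1}{400}|\ln\varepsilon|$.
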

		\begin{proof}
			Fix  $\theta\in \R$ and $\psi_{\theta,s}$ with $x_{\theta,s}=o$.	We denote $$\varphi_{\theta,s}:=\frac{\psi_{\theta,s}}{|\psi_{\theta,s}(o)|}.$$
			Since $x_{\theta,s}=o$, it follows that $\|\varphi_{\theta,s}\|_{\ell^\infty}\leq 1$ with $|\varphi_{\theta,s}(o)|=1$. To prove \eqref{619}, it suffices to prove 
			$$	|\varphi_{\theta,s}(x)|\leq C(\varepsilon) e^{-\frac{1}{400}|\ln\varepsilon|\cdot\|x\|_1}.$$
			\begin{claim}\label{cl1}
				For any $n\geq 1$, we have 	\begin{equation}\label{c}
					S_n(\theta,\mu_{\theta,s})\cap Q_{50_{l_n}}\neq \emptyset.
				\end{equation}
			\end{claim}
			Assume \eqref{c} does not hold true  for some $n\geq 1$. By Lemma \ref{kuo}, there exists a set $U$ satisfying \begin{itemize}
				\item  $Q_{20l_{n}}\subset  U \subset \{x\in \Z^d:\ \operatorname{dist}_1 (x,Q_{20l_{n}})\leq 30l_{n}\} =Q_{50l_{n}}.$
				\item $U$ is $n$-regular related to $(\theta, \mu_{\theta,s})$. 
			\end{itemize}
			Then by the negation of \eqref{c}, we have $$U\cap S_n(\theta,\mu_{\theta,s})=\emptyset.$$ Hence $U$ is $n$-good related to $(\theta, \mu_{\theta,s})$. 
			It follows from Poisson formula that 
			\begin{equation}\label{p`}
				\varphi_{\theta,s}(o)=\sum_{(w,w')\in \partial_{\Z^d}U}G_{U}^{\theta,\mu_{\theta,s}}(o,w)\varphi_{\theta,s}(w').
			\end{equation}
			For $w\in\partial_{\Z^d}^-U $, we have  $$\|o-w\|_1=\|w\|_1\geq 20l_{n}>l_{n}^\frac56.$$
			Since $\|\varphi_{\theta,s}\|_{\ell^\infty}\leq 1$, \eqref{p`} yields 
			\begin{align*}
				\nonumber|\varphi_{\theta,s}(o)|&\leq \sum_{(w,w')\in \partial_{\Z^d}U}|G_{U}^{\theta,\mu_{\theta,s}}(o,w)||\varphi_{\theta,s}(w')|\\
				\nonumber&\leq \sum_{(w,w')\in \partial_{\Z^d}U} e^{-\gamma_{n}\|o-w\|_1}\\
				&\leq 2d(200l_{n})^de^{-19l_n}\\
				&<1.
			\end{align*}
			The above inequality leads a contradiction to $|\varphi_{\theta,s}(o)|=1$. Hence we finish the proof of Claim \ref{cl1}.
			
			It follows from  Claim \ref{cl1} and Lemma \ref{fen} that for any $n\geq 1$,  \begin{equation}\label{kong`}
				(Q_{99l_{n+1}}\setminus Q_{50l_n})\cap S_n(\theta,\mu_{\theta,s})=\emptyset.
			\end{equation}
			Otherwise, the exist $x\in  S_n(\theta,\mu_{\theta,s})\cap Q_{50l_n}$, $y\in S_n(\theta,\mu_{\theta,s})\cap (Q_{99l_{n+1}}\setminus Q_{50l_n})$ with $x\neq y$ and $\|x-y\|_1\leq \|x\|_1+\|y\|_1\leq 99l_{n+1}+50l_n<100l_{n+1}$, contradicting with Lemma \ref{fen}.
			By Lemma \ref{kuo}, there exists a set $A_n$ satisfies: 
			\begin{itemize}
				\item  $Q_{98l_{n+1}}\setminus Q_{80l_n} \subset A_n \subset \{x\in \Z^d:\ \operatorname{dist}_1 (x,Q_{98l_{n+1}}\setminus Q_{80l_n})\leq 30l_n\}\subset Q_{99l_{n+1}}\setminus Q_{50l_n}.$
				\item $A_n$ is $n$-regular related to $(\theta,\mu_{\theta,s})$.
			\end{itemize}
			Since \eqref{kong`}, it follows that  $A_n$ is $n$-good related to $(\theta,\mu_{\theta,s})$. 
			Let $x\in \Z^d$ with $\|x\|_1>85l_1$. Thus  $x\in  Q_{95l_{n+1}}\setminus Q_{85l_n}$ for some $n\geq 1$.
			It follows from Poisson formula that 
			\begin{equation}\label{f`}
				\varphi_{\theta,s}(x)=\sum_{(w,w')\in \partial_{\Z^d}A_n}G_{A_n}^{\theta,\mu_{\theta,s}}(x,w)\varphi_{\theta,s}(w').
			\end{equation}
			Since for $x\in  Q_{95l_{n+1}}\setminus Q_{85l_n}$, $w\in \partial_{\Z^d}^-A_n$, one has 
			\begin{align*}
				\|x-w\|_1\geq&\min(98l_{n+1}-\|x\|_1,\|x\|_1-80l_n-1)\\
				\geq & \min(\frac{98}{95}\|x\|_1-\|x\|_1,\|x\|_1-\frac{80l_n+1}{85l_n+1}\|x\|_1)\\
				\geq & \frac{3}{95}\|x\|_1>l_n^\frac{5}{6},
			\end{align*}
			it follows that \eqref{f`},   together with  $\gamma_n\geq \frac{1}{2}\gamma_0=\frac{1}{4} |\ln\varepsilon|\geq 10$ (c.f. \eqref{rate}), $\|\varphi_{\theta,s}\|_{\ell^\infty}\leq 1$ and the good Green's function estimate for $n$-good set $A_n$, yields 
			\begin{align*}
				\nonumber|\varphi_{\theta,s}(x)|&\leq \sum_{(w,w')\in \partial_{\Z^d}A_n}|G_{A_n}^{\theta,\mu_{\theta,s}}(x,w)||\varphi_{\theta,s}(w')|\\
				\nonumber&\leq \sum_{(w,w')\in \partial_{\Z^d}A_n} e^{-\gamma_{n}\|x-w\|_1}\\
				&\leq 2d(1000l_{n+1})^de^{-\frac{3}{95}\gamma_n\|x\|_1}\\
				&\leq 2d10^{3d} \|x\|_1^{2d}e^{-\frac{1}{200}|\ln\varepsilon|\cdot\|x\|_1}\\
				&\leq C_0e^{-\frac{1}{400}|\ln\varepsilon|\cdot\|x\|_1}
			\end{align*}
			with  $$C_0=\sup_{x\in\Z^d}\frac{2d10^{3d} \|x\|_1^{2d}}{e^{\frac{1}{400}|\ln\varepsilon|\cdot\|x\|_1}}  \leq \sup_{x\in\Z^d}\frac{2d10^{3d} \|x\|_1^{2d}}{e^{\frac{1}{10}\|x\|_1}}.$$
			Hence, 
			$$|\varphi_{\theta,s}(x)|\leq \begin{cases}
				1 &\|x\|_1\leq 85l_1,\\
				C_0e^{-\frac{1}{400}|\ln\varepsilon|\cdot\|x\|_1} & \|x\|_1>85l_1.
			\end{cases}$$
			Thus we obtain $$ |\varphi_{\theta,s}(x)|\leq C(\varepsilon)e^{-\frac{1}{400}|\ln\varepsilon|\cdot\|x\|_1}$$
			with $$C(\varepsilon)=\max(e^{\frac{85}{400}|\ln\varepsilon|l_1},C_0)\leq \max(e^{\frac{85}{400}|\ln\varepsilon|^5},C_0). $$
		\end{proof}
		\begin{cor}
			With the constant  $C(\varepsilon)$ from Lemma \ref{feng},  for any $\theta\in \R$ and $\psi_{\theta,s}$, we have 
			\begin{equation}\label{802}
				|\psi_{\theta,s}(x)|\leq C(\varepsilon) |\psi_{\theta,s}(x_{\theta,s})|e^{-\frac{1}{400}|\ln\varepsilon|\cdot\|x-x_{\theta,s}\|_1}.
			\end{equation}
		\end{cor}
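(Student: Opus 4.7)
The plan is to reduce the general case to the special case already handled in Lemma~\ref{feng} via the translation covariance of the operator family $H(\theta)$. The key observation is that shifting the lattice site $x \mapsto x+p$ is equivalent, up to the phase reparametrization $\theta \mapsto \theta + p\cdot\omega$, to conjugating $H(\theta)$ by a translation. Concretely, if one sets $\tilde\psi(x) := \psi_{\theta,s}(x + x_{\theta,s})$, then a direct computation using the definition \eqref{model} shows that
\[
(H(\theta + x_{\theta,s}\cdot\omega)\tilde\psi)(x) = \varepsilon(\Delta \psi_{\theta,s})(x + x_{\theta,s}) + v(\theta + (x + x_{\theta,s})\cdot\omega)\psi_{\theta,s}(x + x_{\theta,s}) = \mu_{\theta,s}\,\tilde\psi(x),
\]
so $\tilde\psi$ is an eigenfunction of $H(\theta')$ with $\theta' := \theta + x_{\theta,s}\cdot\omega$, associated to the same eigenvalue $\mu_{\theta,s}$.

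Next I would observe that by construction $\tilde\psi$ attains its $\ell^\infty$-norm at the origin: indeed $|\tilde\psi(o)| = |\psi_{\theta,s}(x_{\theta,s})| = \|\psi_{\theta,s}\|_{\ell^\infty(\Z^d)} = \|\tilde\psi\|_{\ell^\infty(\Z^d)}$. Thus $\tilde\psi$ is exactly an eigenfunction of $H(\theta')$ of the kind to which Lemma~\ref{feng} applies, with the \emph{same} uniform constant $C(\varepsilon)$ (crucially independent of $\theta'$ and the choice of eigenfunction).

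Applying Lemma~\ref{feng} to $\tilde\psi$ yields
\[
|\tilde\psi(y)| \leq C(\varepsilon)\,|\tilde\psi(o)|\,e^{-\frac{1}{400}|\ln\varepsilon|\cdot\|y\|_1}, \qquad y \in \Z^d.
\]
Setting $y = x - x_{\theta,s}$ and translating back gives exactly \eqref{802}. I do not expect any genuine obstacle here; the only point that requires care is checking that Lemma~\ref{feng} holds uniformly in $\theta \in \R$ with a single constant $C(\varepsilon)$, which is explicitly emphasized in its statement, so the reduction goes through without loss.
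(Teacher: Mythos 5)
Your proof is correct and is essentially the same argument as the paper's: you exploit translation covariance by setting $\tilde\psi(x)=\psi_{\theta,s}(x+x_{\theta,s})$, verify it is an eigenfunction of $H(\theta+x_{\theta,s}\cdot\omega)$ peaking at the origin, and invoke the phase-uniformity of Lemma~\ref{feng}. No meaningful difference from the paper's proof.
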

		\begin{proof}
			We define $\tilde\psi\in \ell^2(\Z^d)$ by 
			$$\tilde\psi(x):=\psi_{\theta,s}(x+x_{\theta,s}).$$
			Write the equation $H(\theta)\psi_{\theta,s}=\mu_{\theta,s}\psi_{\theta,s}$ in the form of 	 $$v(\theta+x\cdot\omega)\psi_{\theta,s}(x)+\sum_{y:\|y-x\|_1=1}\psi_{\theta,s}(y)=\mu_{\theta,s}\psi_{\theta,s}(x).$$
			It follows that $$v(\theta+x\cdot\omega+x_{\theta,s}\cdot\omega)\tilde\psi(x)+\sum_{y:\|y-x\|_1=1}\tilde\psi(y)=\mu_{\theta,s}\tilde\psi(x).$$ 
			Hence $\tilde\psi$ with $|\tilde\psi(o)|=\|\tilde\psi\|_{\ell^\infty(\Z^d)}$ is an $\ell^2(\Z^d)$ eigenfunction of $H(\tilde\theta)$ with $\tilde\theta=\theta+x_{\theta,s}\cdot\omega$.
			By Lemma \ref{619}, we have \begin{equation*}\label{61}
				|\tilde\psi(x)|\leq C(\varepsilon) |\tilde\psi(o)|e^{-\frac{1}{400}|\ln\varepsilon|\cdot\|x\|_1}.
			\end{equation*}
			Hence $$	|\psi_{\theta,s}(x+x_{\theta,s})|\leq C(\varepsilon) |\psi_{\theta,s}(x_{\theta,s})|e^{-\frac{1}{400}|\ln\varepsilon|\cdot\|x\|_1}.$$ Rewriting the above inequality yields \eqref{802}.
		\end{proof}
		It follows from \eqref{802} and  Bessel inequality that 
		\begin{align}\label{x}
			\nonumber	\sum_{s:x_{\theta,s}=p}|\psi_{\theta,s}(x)|^2&\leq C(\varepsilon)^2 \sum_{s:x_{\theta,s}=p}|\psi_{\theta,s}(p)|^2e^{-\frac{1}{200}|\ln\varepsilon|\cdot\|x-p\|_1}\\
			& \leq C(\varepsilon)^2  e^{-\frac{1}{200}|\ln\varepsilon|\cdot\|x-p\|_1}.
		\end{align}
		Similarly, 
		\begin{align}\label{y}
			\nonumber	\sum_{s:x_{\theta,s}=p}|\psi_{\theta,s}(y)|^2&\leq C(\varepsilon)^2 \sum_{s:x_{\theta,s}=p}|\psi_{\theta,s}(p)|^2e^{-\frac{1}{200}|\ln\varepsilon|\cdot\|y-p\|_1}\\
			& \leq C(\varepsilon)^2  e^{-\frac{1}{200}|\ln\varepsilon|\cdot\|y-p\|_1}.
		\end{align}
		It follows from \eqref{847}, \eqref{x}, \eqref{y} that 
		\begin{align}
			\nonumber	|\langle e^{itH(\theta)}\e_x, \e_y\rangle|&\leq  C(\varepsilon)^2 \sum_{p\in \Z^d}   e^{-\frac{1}{400}|\ln\varepsilon|\cdot\|x-p\|_1} e^{-\frac{1}{400}|\ln\varepsilon|\cdot\|y-p\|_1}\\
			\nonumber	&= C(\varepsilon)^2\sum_{p\in \Z^d}  e^{-\frac{1}{400}|\ln\varepsilon|\cdot\|p\|_1}e^{-\frac{1}{400}|\ln\varepsilon|\cdot\|y-x-p\|_1}\\
			&=C(\varepsilon)^2\left(\sum_{\|p\|_1\geq \|x-y\|_1}+\sum_{\|p\|_1<\|x-y\|_1}\right)\label{907}.
		\end{align}
		The  first summation in \eqref{907} is bounded by 
		\begin{align}
			\nonumber&e^{-\frac{1}{400}|\ln\varepsilon|\cdot\|x-y\|_1}	\sum_{\|p\|_1\geq \|x-y\|_1}e^{-\frac{1}{400}|\ln\varepsilon|\cdot\|y-x-p\|_1}\\
			\leq \nonumber& e^{-\frac{1}{400}|\ln\varepsilon|\cdot\|x-y\|_1}\sum_{p\in \Z^d}e^{-\frac{1}{400}|\ln\varepsilon|\cdot\|p\|_1}\\
			\leq &C_1 e^{-\frac{1}{400}|\ln\varepsilon|\cdot\|x-y\|_1}\label{2134}
		\end{align}
		with $$C_1= \sum_{p\in \Z^d}e^{-\frac{1}{400}|\ln\varepsilon|\cdot\|p\|_1} \leq \sum_{p\in \Z^d}e^{-\frac{1}{10}\|p\|_1}.$$
		The second summation  in \eqref{907} is bounded by 
		
		\begin{align}
			\nonumber	\sum_{\|p\|_1<\|x-y\|_1}e^{-\frac{1}{400}|\ln\varepsilon|\cdot\|x-y\|_1}&\leq  (2\|x-y\|_1)^d e^{-\frac{1}{400}|\ln\varepsilon|\cdot\|x-y\|_1}\\
			&\leq C_2e^{-\frac{1}{800}|\ln\varepsilon|\cdot\|x-y\|_1}\label{2135}
		\end{align}
		with $$C_2= \sup_{p\in \Z^d}\frac{(2\|p\|_1)^d}{e^{\frac{1}{800}|\ln\varepsilon|\cdot\|p\|_1}}\leq  \sup_{p\in \Z^d}\frac{(2\|p\|_1)^d}{e^{\frac{1}{20}\|p\|_1}}.$$
		It follows from \eqref{907}, \eqref{2134} and \eqref{2135} that 
		$$\sup_{t\in\R}|\langle e^{itH(\theta)}\e_x, \e_y\rangle|\leq C(\varepsilon)^2(C_1+C_2)e^{-\frac{1}{800}|\ln\varepsilon|\cdot\|x-y\|_1}.$$
		
		This completes the proof of exponential dynamical localization. 
		
		\normalem
		
		\section*{Acknowledgments}
		H. Cao  and Z. Zhang are partially supported by the National Key R\&D Program of China under Grant 2023YFA1008801.  H. Cao is supported by NSFC  (123B2004),  Y. Shi is  partially supported by NSFC  (12271380) and Z. Zhang is  partially supported by NSFC  (12288101).    
		
		
		\section*{Data Availability}
		The manuscript has no associated data.
		\section*{Declarations}
		{\bf Conflicts of interest} \ The authors  state  that there is no conflict of interest.

	\end{document}